\documentclass[ejsv2, preprint, noshowframe]{imsart}

\RequirePackage[authoryear]{natbib}
\RequirePackage{savesym}   
\usepackage{amsmath}

\savesymbol{Bbbk}          
\RequirePackage{amssymb}   
\restoresymbol{DSF}{Bbbk}  

\usepackage[ruled,vlined,linesnumbered]{algorithm2e} 
\usepackage{verbatim,float,url,dsfont}
\usepackage{graphicx,subcaption,psfrag}
\usepackage{mathtools,enumitem}
\usepackage{multirow}
\usepackage{ragged2e}
\usepackage{array}
\usepackage{enumitem}
\usepackage[colorlinks=true,citecolor=blue,urlcolor=blue,linkcolor=blue]{hyperref}
\usepackage[utf8]{inputenc} 
\usepackage[T1]{fontenc}    
\usepackage{booktabs}       
\usepackage{nicefrac}         
\usepackage{microtype}      

\RequirePackage[colorlinks,citecolor=blue,urlcolor=blue]{hyperref}
\RequirePackage{graphicx}
\usepackage{enumitem} 
\allowdisplaybreaks
\startlocaldefs
\theoremstyle{plain}

\newtheorem{theorem}{Theorem}
\newtheorem{lemma}{Lemma}
\newtheorem{proposition}{Proposition}
\newtheorem{remark}{Remark}
\newtheorem{example}{Example}
\newtheorem{corollary}{Corollary}
\theoremstyle{definition}

\theoremstyle{remark}

\endlocaldefs
\DeclareMathOperator*{\argmin}{argmin}

\def\E{\mathds{E}}
\def\P{\mathds{P}}
\def\V{\mathds{V}}

\def\bx{\mathbf{x}}

\def\bz{\mathbf{z}}

\def\E{\mathbb{E}}
\def\P{\mathbb{P}}
\def\V{\mathbb{V}}
\def\R{\mathbb{R}}

\def\N{\mathbb{N}}

\def\indep{\perp\!\!\!\perp}

\def\cA{\mathcal{A}}
\def\cB{\mathcal{B}}
\def\cC{\mathcal{C}}
\def\cD{\mathcal{D}}

\def\cL{\mathcal{L}}

\def\cN{\mathcal{N}}
\def\cP{\mathcal{P}}

\def\cW{\mathcal{W}}

\def\bbeta{\boldsymbol{\beta}}
\def\bX{\boldsymbol{X}}

\begin{document}
\begin{frontmatter}
\title{Variable Selection for Linear Regression Imputation in Surveys}
\runtitle{Variable Selection for Linear Regression Imputation in Surveys}

\begin{aug}
\author[A]{\fnms{Ziming}~\snm{An}\ead[label=e1]{zan032@uottawa.ca}},
\author[B]{\fnms{Mehdi}~\snm{Dagdoug}\ead[label=e2]{mehdi.dagdoug@mcgill.ca}}
\and
\author[A]{\fnms{David}~\snm{Haziza}\ead[label=e3]{dhaziza@uottawa.ca}}
\address[A]{Department of Mathematics and   Statistics,
University of Ottawa, Ottawa, Canada \\ \printead[presep={\ }]{e1,e3}}

\address[B]{Department of Mathematics and Statistics, McGill University, Montréal, Canada\\ \printead[presep={\ }]{e2}}
\runauthor{Z. An et al.}
\end{aug}

\begin{abstract}
Survey sampling is concerned with the estimation of finite population parameters. In practice, survey data suffer from item nonresponse, which is commonly handled through imputation, i.e., replacing missing values with predicted values. As a result, the properties of the resulting imputed estimator depend critically on the properties of the prediction method used. In turn, prediction methods themselves  depend on the choice of variables and tuning parameters used  to fit the imputation model. In this article, we study the problem of variable selection for linear regression imputation. Although variable selection has been widely studied across many fields, primarily for identification or prediction, its role in imputation for survey data has received comparatively little attention. We introduce the notion of an optimal imputation model defined through an oracle loss function and show that, with probability tending to one, the optimal model coincides with the true model. We also examine the consequences of using misspecified models--either omitting relevant covariates or including irrelevant ones-- on consistency  and asymptotic variance.
We then develop a complete methodological framework for constructing confidence intervals after model selection. The proposed confidence intervals are shown to be asymptotically valid and optimal among all candidate models. Simulation studies indicate that the proposed methodology performs well in finite samples. 
\end{abstract}


\begin{keyword}
\kwd{survey sampling}
\kwd{missing data}
\kwd{imputation}
\kwd{variable selection}
\kwd{variance estimation}
\kwd{asymptotically valid confidence intervals}
\end{keyword}

\end{frontmatter}

\section{Introduction}

Nonresponse is a major challenge in official statistics because it affects most surveys. If missing data are ignored, the resulting estimates can be biased and inconsistent. In survey sampling, it is customary to distinguish unit from item nonresponse. The former corresponds to cases where no information is collected for a sampled unit, whereas the latter occurs when some variables are missing but others are observed. Item nonresponse is most often addressed through imputation, a procedure that replaces missing values with predicted values. Imputation can restore consistency and support valid inference, provided that the imputation model is correctly specified and that variance estimation properly reflects both sampling variability and nonresponse. In practice, this means that the imputation model matters a lot: if it is misspecified, the resulting estimates may be biased and/or inefficient.

~\\
The properties of imputed estimators have been investigated under a wide range of imputation models. A large body of work focuses on parametric imputation, such as linear regression imputation (e.g., \cite{chauvet2011balanced}), including many contributions on variance estimation \citep{fay1991design, shao1999variance, berger2006adjusted, kim2009unified}. Other contributions have examined nonparametric regression and related methods, including nearest neighbor \citep{chen2000nearest, chen2001jackknife, yang2019nearest}, the score method \citep{haziza2007construction}, predictive mean matching \citep{yang2017predictive}, and random forests \citep{dagdoug2025statistical}.\\

\noindent In the context of model-assisted estimation, we highlight the work of \cite{opsomer2005selecting}, which provides an important reference on hyperparameter selection in survey sampling, although not in the imputation setting.
In the customary i.i.d. setting, variable selection has been extensively studied with two primary objectives: identifying the best predictive model for a given task (efficiency) or recovering the true set of coefficients (consistency) \citep{yang2005can}. Seminal contributions include \cite{nishii1984asymptotic}, \cite{rao1989strongly},  \cite{shao1993linear}, and \cite{shao1997asymptotic}, who established the asymptotic properties of numerous model selection criteria such as AIC \citep{akaike1970statistical}, BIC \citep{schwarz1978estimating}, and cross-validation. 

~\\
In finite population sampling, the goal is to estimate finite population parameters, and the main concern is the performance of the resulting estimators, in particular their bias and efficiency. Therefore, when choosing an imputation model, the survey statistician is not primarily aiming to identify the true data-generating model, but rather to select the model that produces the most efficient imputed estimator, that is, the one with the smallest mean squared error. Although we will borrow several ideas from the i.i.d. literature, our objective is fundamentally different: we aim to optimize imputation efficiency for finite population inference, not to recover the underlying regression function. As shown later, however, these two goals are closely related in the case of linear models.

~\\
The main contributions of this paper are as follows. We introduce an oracle loss function to assess the efficiency of a candidate imputation model under survey sampling. We show that, under mild conditions, the model that minimizes this loss is asymptotically the true model, which connects model selection for imputation with model identification. We then study the effect of using a misspecified model. In particular, we give general conditions under which the resulting imputed estimators remain consistent, and we identify settings where overfitting (i.e., adding unnecessary covariates) does or does not increase the asymptotic variance. Under standard regularity conditions, we show that model selection criteria that are consistent in the i.i.d. framework remain consistent under survey sampling with missing data. Finally, when a consistent selection criterion is used, the resulting imputed estimator and its variance are asymptotically equivalent to those obtained under the true model, yielding oracle efficiency. We also show that a standard variance estimator for linear regression imputation is consistent and establish the asymptotic normality of the resulting point estimator. We propose a complete methodology for constructing confidence intervals after model selection, and show that it achieves the nominal coverage asymptotically, with asymptotically minimal width within the class of candidate models. Simulation studies under both equal and unequal probability sampling designs confirm the good performance of the proposed approach.

~\\
The remainder of the paper is organized as follows. Section~\ref{sec:setup} introduces the notation and general framework. Section~\ref{sec:loss} defines the oracle loss function and its properties. Section~\ref{Section4} presents the main asymptotic results for model selection for imputation. Section~\ref{sec:simu} reports simulation results, and Section~\ref{sec:finalrem} concludes with final remarks and presents future works. All proofs are relegated to the Appendix.


\section{Preliminaries} \label{sec:setup}

\subsection{The setup}
Consider a finite population $U := \{ 1, 2, ..., N\}$ of size $N$ and a survey variable $Y$. We are interested in estimating the finite population mean $$\mu:= \dfrac{1}{N}\sum_{k \in U} y_k,$$ where $y_k$ denotes the measurement of the survey variable  $Y$ of element $k \in U$. A sample $S \subset U$ of size $n$ is selected according to a sampling design $\cP$. The first and second order inclusion probabilities defined by $\pi_k:= \mathbb{P} ( k \in S )$ and $\pi_{kl}:= \mathbb{P} ( k,l \in S )$ are assumed to be strictly positive for all $k,l \in U$. The sample $S$ is also characterized by the vector of sample selection indicators $\textbf{I}_U:= [I_1, I_2, \ldots, I_N ]^\top$, where $I_k := 1$ if $k \in S$ and $I_k:=0$, otherwise. 

~\\
The Horvitz-Thompson estimator of $\mu$ is defined by 
\begin{equation}\label{ht}
\widehat{\mu}_\pi := \dfrac{1}{N}\sum_{k\in S} \dfrac{y_k}{\pi_k}.
\end{equation}
In practice, the estimator $\widehat{\mu}_\pi$ in \eqref{ht} is unfeasible when the survey variable $Y$ suffers from nonresponse. We denote by $\textbf{r}_U := \left[ r_1, r_2, \ldots, r_N\right]^\top $ the vector of response indicators, where $r_k := 1$ if $y_k$ is observed and $r_k := 0$, otherwise. The set of respondents and nonrespondents are denoted $S_r:= \{ k \in S  :  r_k = 1\}$ and $S_m:= \{ k \in S  : r_k = 0\}$ with respective cardinalities $n_r$ and $n_m$. We assume that $p$ covariates $X_1, X_2, \ldots, X_p,$ with measurements $\bx_k $ are observed for every $k \in S$. The observed data are given by $$\cD_{imp} := \{ ( \bx_k , y_k )  :  k \in S_r \} \cup \{ \bx_k : k \in S_m \}. $$
The vectors $[ \bx_k^\top, y_k,  r_k ]^\top_{k \in U}$ are assumed to be independent and identically distributed. The sampling design $\cP$ is assumed to be non-informative, that is, $\pi_k=\mathbb{P} ( I_k = 1 \rvert \boldsymbol{X}_U, \boldsymbol{y}_U) = \mathbb{P} ( I_k = 1 \rvert \boldsymbol{X}_U ) $, for every $k \in U$, where $\bX_U$ and $\boldsymbol{y}_U$ denote the population design matrix and the population vector of the survey variable, respectively. The reader is referred to \cite{pfeffermann2009inference} for further details. Similarly, the missing mechanism is assumed to satisfy the Missing At Random (MAR, \cite{rubin1976inference}) assumption, i.e., $\mathbb{P} ( r_k =1 \rvert \bx_k , y_k) =  \mathbb{P} ( r_k =1 \rvert \bx_k):=p(\bx_k) $. Finally,  we assume that there exists a positive constant $\rho>0$ such that $p(\bx)\geq \rho$, almost surely. This is the usual positivity assumption.
In this article, we restrict our attention to the customary homoscedastic linear regression model defined as
\begin{equation}\label{model}
y_k = \bx_k^\top \pmb{\beta} + \epsilon_k, \qquad k \in U,
\end{equation}
where $\epsilon_k$ satisfies $\mathbb{E} [\epsilon_k \rvert \bx_k ] = 0$ and $\mathbb{E} [ \epsilon_k^2 \rvert \bx_k ] := \sigma^2$.

\subsection{Additional notation} 
Probabilities, expectations and variances with respect to: (i) the distribution of the covariates' $\mathbb{P}_\bx$ will be denoted by the subscript $\bx$; (ii) $\mathbb{P}_{y\rvert \bx}$ will be denoted by the subscript $m$; (iii) the sampling design will be denoted by the subscript $p$; (iv) the nonresponse mechanism will be denoted by the subscript $q$. For a set of integers $A$, we write $\cP (A)$ to denote its power set, where, by convention, we remove the empty set. For two functions $f$ and $g$ defined on the same domain $D$, we write $f(x) \overset{\arg\min}{\equiv}  g(x)$ if $\argmin_{x \in D} f(x) = \argmin_{x\in D} g(x)$. Vectors like $\mathbf{x} \in \R^p$ will be bolded, and their $j$-th component will be denoted by $x^{(j)}$. We write $\bx, y, p(\bx), r, \epsilon$ to denote an independent copy of $(\bx_k, y_k, p(\bx_k), r_k,\epsilon_k)_{k\in U}$.\\ 
\section{Asymptotically optimal imputation}\label{sec:loss}
\subsection{Linear regression imputation and variable selection}

Linear regression imputation replaces missing values with predictions from a linear regression model fitted using the respondent data. Let $\bX_r := (\bx_k^\top )_{k\in S_r} \in \R^{n_r \times p}$ and $\boldsymbol{Y}_r\in \R^{n_r}$ denote the design matrix and the vector of measurements of $Y$ for the respondents, respectively. Provided that the matrix $\bX_r^\top \bX_r $ is positive definite, the ordinary least-squares estimator of $\pmb{\beta}$ in \eqref{model} is given by
\begin{equation}\label{eq:ols}
\widehat{\pmb{\beta}} = \left(\bX_r^\top \bX_r\right)^{-1} \bX_r^{\top} \boldsymbol{Y}_r.
\end{equation}
It is not uncommon in survey sampling to fit the imputation model by weighted least squares using the design weights $w_k=\pi_k^{-1}$, for $k\in S$. However, since the sampling design is assumed to be non-informative, the unweighted least-squares fit is justified. For algebraic simplicity, we therefore focus on the unweighted estimator in \eqref{eq:ols}. The same conclusions would  hold for the weighted version.
The linear regression imputed estimator of $\mu$ is defined as 
\begin{equation*}
\widehat{\mu}_{lr} := \dfrac{1}{N} \left( \sum_{k \in S_r} \dfrac{y_k}{\pi_k} + \sum_{k\in S_m} \dfrac{\bx_k^\top \widehat{\pmb{\beta}}}{\pi_k}\right). 
\end{equation*}
In practice, some coefficients of $\pmb{\beta}$ may be equal to zero. Let $\cA \subseteq \cP (\{ 1, 2, \ldots, p\})$ denote a family of candidate models. That is, each element $\alpha \in \cA$ is a subset of $\{ 1, 2, ..., p\}$ which we interpret as the set of included covariates. The complement of a model $\alpha$ is denoted $\alpha^c := \{ 1, 2, \ldots, p\}\backslash \alpha$. 
For example, if $p=3$, the model $\alpha=\{1,3\}$ includes the covariates $X_1$ and $X_3$. Its complement is $\alpha^c=\{2\}$, which corresponds to the model containing only the covariate $X_2$.

~\\
For any $\alpha \in \cA$, let $\bx_{k,\alpha}$ denote the subvector of covariates for unit $k$ corresponding to the indices in $\alpha$, and let $\bX_{r,\alpha} := (\bx_{k,\alpha}^\top)_{k\in S_r}$ denote the corresponding design matrix restricted to the respondents.

~\\
A model $\alpha \in \cA$ is correct if $\rVert \pmb{\beta}_{\alpha^c} \rVert_2= 0$, i.e., $\alpha$ contains all covariates with non-zero coefficients. The set of correct models is denoted $\cC$. The true model $\alpha^{\star}$ is defined as the support of $\pmb{\beta}$, that is, 
$$\alpha^{\star} := \left\{ j \in \left\{1, 2, ..., p \right\} \ ; \ \beta_j \neq 0\right\}.$$ 
The true model $\alpha^{\star}$ is the smallest correct model. Under the assumed linear model, it exists and is unique. A model $\alpha$ is said to be wrong if it is not correct; that is, if it omits at least one covariate whose coefficient in $\pmb{\beta}$ is nonzero. The set of wrong models is denoted $\cW$.

~\\
To each $\alpha \in \cA$, we may associate an ordinary least-squares estimator 
\begin{equation}\label{ols}
\widehat{\pmb{\beta}}_\alpha = \left( \bX_{r,\alpha}^\top \bX_{r,\alpha} \right)^{-1} \bX_{r,\alpha}^{\top} \boldsymbol{Y}_r,
\end{equation} leading to the corresponding imputed estimator of $\mu$: $$\widehat{\mu}_{\alpha} := \dfrac{1}{N} \left( \sum_{k \in S_r} \dfrac{y_k}{\pi_k} + \sum_{k\in S_m} \dfrac{\bx_{k, \alpha}^\top \widehat{\pmb{\beta}}_\alpha}{\pi_k}\right), \qquad \alpha \in \cA.$$  
The aim of this article is to determine which estimator in
\[
\Gamma_\cA := \left\{ \widehat{\mu}_\alpha \,;\, \alpha \in \cA \right\}
\]
should be used, and to develop a theoretically sound methodology for conducting inference on $\mu$ as efficiently as possible.

\subsection{A loss function for imputation}
For an arbitrary model $\alpha$, we may decompose the total error of $\widehat{\mu}_\alpha$ as
$$
\widehat{\mu}_\alpha - \mu
=
\underbrace{\left(\widehat{\mu}_\alpha - \widehat{\mu}_\pi\right)}_{\text{imputation/nonresponse error}}
+
\underbrace{\left(\widehat{\mu}_\pi - \mu\right)}_{\text{sampling error}} .
$$
The first term reflects the error induced by nonresponse through the imputation procedure (and therefore depends on the chosen model $\alpha$), whereas the second term is the usual sampling error which does not depend on nonresponse. At the imputation stage, the aim is to reduce the imputation/nonresponse error as much as possible.
 To this end, we introduce the following loss function for imputation: $$ \cL_n \left(\alpha\right) := \mathbb{E}_m \left[ \left( \widehat{\mu}_\alpha - \widehat{\mu}_\pi \right)^2\right], \qquad \alpha \in \cA.$$ The loss $\cL$ is a measure of the squared distance between the imputed estimator $\widehat{\mu}_\alpha$ based on model $\alpha$ and the complete data Horvitz-Thompson estimator. It is positive and is minimized for the model $\alpha$ that we would prefer to use for imputation. More formally, we define the optimal imputation model $\alpha_{\mathrm{opt}}$ as as any minimizer of $\cL_n$, that is, 
 $$\alpha_{\mathrm{opt}} \in \argmin_{\alpha\in \cA}\cL_n \left(\alpha\right).$$  
 In principle, one would like to use the imputed estimator $\widehat{\mu}_{\alpha_{\mathrm{opt}}}$ to estimate $\mu$. However, the loss $\cL_n$ depends on unobserved quantities and cannot be evaluated from the available data, so $\alpha_{\mathrm{opt}}$ is unknown in practice. In the remainder of the paper, we study the properties of $\widehat{\mu}_{\alpha_{\mathrm{opt}}}$ and develop practical model selection procedures to approximate $\alpha_{\mathrm{opt}}$ for inference.
 \begin{proposition} \label{prop1}
Fix $\alpha \in \cA$. Then, $\cL_n(\alpha)$ admits the following closed-form expression:
\begin{align*}
\cL_n \left(\alpha\right) &\overset{\arg\min}{\equiv}
\bigg\{ \bigg(\sum_{k \in S_m} \dfrac{\bx_k^\top}{\pi_k} - \sum_{k \in S_m} \dfrac{\bx_{k, \alpha}^{\top} }{\pi_k} \boldsymbol{A}_{r, \alpha}^{-1} \bX_{r,\alpha}^\top \bX_r \bigg)\pmb{\beta} \bigg\}^2  
\\
&\quad\quad + \sigma^2 \left( \sum_{k\in S_m} \dfrac{\bx_{k,\alpha}^\top }{\pi_k} \right) \boldsymbol{A}_{r, \alpha}^{-1} \left( \sum_{k\in S_m}\dfrac{\bx_{k,\alpha} }{\pi_k} \right)\\
&\ \ := \cL_{1,n}\left(\alpha\right) + \cL_{2,n}\left(\alpha\right). 
\end{align*}
\end{proposition}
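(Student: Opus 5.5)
Write $\boldsymbol{A}_{r,\alpha} := \bX_{r,\alpha}^\top\bX_{r,\alpha}$. The plan is to expand $\widehat{\mu}_\alpha - \widehat{\mu}_\pi$ as an explicit function of the errors $\epsilon_k$. After that, the expectation $\mathbb{E}_m$ splits into a squared-bias term and a variance term. All terms that do not depend on $\alpha$ are then discarded, which is why the statement holds only up to $\overset{\arg\min}{\equiv}$, with an overall factor $N^{-2}$ dropped as well.

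First, observe that the respondent contributions cancel, so
$$N(\widehat{\mu}_\alpha - \widehat{\mu}_\pi) = \sum_{k\in S_m}\frac{\bx_{k,\alpha}^\top\widehat{\bbeta}_\alpha - y_k}{\pi_k}.$$
Substitute $\boldsymbol{Y}_r = \bX_r\bbeta + \boldsymbol{\epsilon}_r$ into \eqref{ols}. This gives
$$\widehat{\bbeta}_\alpha = \boldsymbol{A}_{r,\alpha}^{-1}\bX_{r,\alpha}^\top\bX_r\bbeta + \boldsymbol{A}_{r,\alpha}^{-1}\bX_{r,\alpha}^\top\boldsymbol{\epsilon}_r,$$
and for $k\in S_m$ we have $y_k = \bx_k^\top\bbeta + \epsilon_k$. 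Write $\boldsymbol{t}_\alpha := \sum_{k\in S_m}\bx_{k,\alpha}/\pi_k$. Then
$$N(\widehat{\mu}_\alpha - \widehat{\mu}_\pi) = -B_\alpha + \boldsymbol{t}_\alpha^\top\boldsymbol{A}_{r,\alpha}^{-1}\bX_{r,\alpha}^\top\boldsymbol{\epsilon}_r - \sum_{k\in S_m}\frac{\epsilon_k}{\pi_k},$$
where $B_\alpha$ is exactly the bracketed quantity in $\cL_{1,n}(\alpha)$.

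Next, take $\mathbb{E}_m$, which is conditional on the covariates, the sample and the response indicators. Here non-informativeness of the design and MAR are used: conditionally on $(\bX_U,\mathbf{I}_U,\mathbf{r}_U)$, the errors remain independent with mean zero and variance $\sigma^2$. I expect this justification to be the main obstacle, and I would check it carefully. By the i.i.d. assumption and MAR, $\epsilon_k$ is independent of $r_k$ given $\bx_k$. Non-informativeness gives the same for $I_k$, although strictly one needs joint conditional independence of $(\mathbf{I}_U,\mathbf{r}_U)$ and $\boldsymbol{\epsilon}_U$ given $\bX_U$, which I would assume as the intended reading.

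Given this, $B_\alpha$ is fixed and the cross terms with it vanish, so the squared bias gives $\cL_{1,n}(\alpha)$. For the variance, the respondent errors and the nonrespondent errors come from disjoint index sets, so their covariance is zero. Hence the variance is
$$\sigma^2\,\boldsymbol{t}_\alpha^\top\boldsymbol{A}_{r,\alpha}^{-1}\bX_{r,\alpha}^\top\bX_{r,\alpha}\boldsymbol{A}_{r,\alpha}^{-1}\boldsymbol{t}_\alpha + \sigma^2\sum_{k\in S_m}\pi_k^{-2}.$$
The first term simplifies to $\sigma^2\boldsymbol{t}_\alpha^\top\boldsymbol{A}_{r,\alpha}^{-1}\boldsymbol{t}_\alpha = \cL_{2,n}(\alpha)$.

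Finally, the term $\sigma^2\sum_{k\in S_m}\pi_k^{-2}$ and the factor $N^{-2}$ do not depend on $\alpha$. Dropping them preserves the argmin over $\cA$, which yields the stated equivalence.
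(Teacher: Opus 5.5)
Your proof is correct and follows essentially the same route as the paper: substitute the model into $\widehat{\bbeta}_\alpha$ and into $y_k$, take $\mathbb{E}_m$ using that the errors are uncorrelated with common variance $\sigma^2$, and then discard the $\alpha$-free term $\sigma^2\sum_{k\in S_m}\pi_k^{-2}$ together with the factor $N^{-2}$. The only difference is bookkeeping: the paper splits the square into diagonal and off-diagonal sums $A_n(\alpha)+B_n(\alpha)$ and recombines them, while your bias--variance decomposition of a single linear form in $\boldsymbol{\epsilon}$ reaches the same expression more directly and states the conditional-independence assumption the paper uses only implicitly.
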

\begin{proof}
See Appendix \ref{proof_prop1}.
\end{proof}
~\\
Proposition \ref{prop1} yields a useful decomposition of the loss into a bias term and a variance term. The component $\cL_{1,n}(\alpha)$ corresponds to the squared model bias induced by imputing under model $\alpha$, whereas $\cL_{2,n}(\alpha)$ captures a variance contribution. For any correct model $\alpha \in \cC$, we have $\cL_{1,n}(\alpha)=0$, so that correct models contribute only through $\cL_{2,n}(\alpha)$.
Lemma \ref{MatrixOverfitting} shows that $\cL_{2,n}$ is a strictly increasing set function, implying that adding covariates to the model can only increase $\cL_{2,n}$, although it may reduce $\cL_{1,n}$.  As a result, minimizing $\cL_n(\alpha)$ over $\cA$ amounts to a bias--variance trade-off.  We do not assume uniqueness of the minimizer; any $\alpha$ achieving $\min_{\alpha\in\cA}\cL_n(\alpha)$ is considered optimal.

\begin{proposition} \label{prop2}
If $\alpha_{\mathrm{opt}} \in \cC$, then, almost surely $$ \mathbb{P}_{pq} \left( \alpha_{\mathrm{opt}} = \alpha^{\star}\right)=1.$$
\end{proposition}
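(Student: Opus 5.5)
Assume $\alpha_{\mathrm{opt}} \in \cC$. For every correct model the bias term vanishes: if $\pmb{\beta}_{\alpha^c}=0$, then $\bX_r\pmb{\beta} = \bX_{r,\alpha}\pmb{\beta}_\alpha$. Hence
\[
\boldsymbol{A}_{r,\alpha}^{-1}\bX_{r,\alpha}^\top\bX_r\pmb{\beta}=\pmb{\beta}_\alpha
\qquad\text{and}\qquad
\sum_{k\in S_m}\pi_k^{-1}\bx_k^\top\pmb{\beta}=\sum_{k\in S_m}\pi_k^{-1}\bx_{k,\alpha}^\top\pmb{\beta}_\alpha ,
\]
so $\cL_{1,n}(\alpha)=0$. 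Since $\cL_{1,n}\ge 0$, any minimizer of $\cL_n$ that is correct must also minimize $\cL_{2,n}$ over $\cC$. By Proposition~\ref{prop1} the argmin is unchanged when $\cL_n$ is replaced by $\cL_{1,n}+\cL_{2,n}$. Therefore
\[
\cL_{2,n}(\alpha_{\mathrm{opt}}) \le \cL_{2,n}(\alpha) \quad \text{for all } \alpha\in\cC\cap\cA ,
\]
and this includes $\alpha^\star$, which we take to belong to $\cA$ as the smallest correct model.

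Every correct model contains $\alpha^\star$, because it must include every index $j$ with $\beta_j\neq0$. If $\alpha_{\mathrm{opt}}\neq\alpha^\star$, then $\alpha^\star\subsetneq\alpha_{\mathrm{opt}}$. Lemma~\ref{MatrixOverfitting} says that $\cL_{2,n}$ is strictly increasing as a set function, which gives
\[
\cL_{2,n}(\alpha^\star)<\cL_{2,n}(\alpha_{\mathrm{opt}}) .
\]
This contradicts minimality, so $\alpha_{\mathrm{opt}}=\alpha^\star$ on every realization $(S,\mathbf{r}_U)$ where the objects are well defined. That yields $\mathbb{P}_{pq}(\alpha_{\mathrm{opt}}=\alpha^\star)=1$, almost surely in $\bx$.

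The main obstacle is the strictness in Lemma~\ref{MatrixOverfitting} and the conditions it needs. Strict monotonicity of $\mathbf{v}_\alpha^\top\boldsymbol{A}_{r,\alpha}^{-1}\mathbf{v}_\alpha$, with $\mathbf{v}=\sum_{S_m}\bx_k/\pi_k$, comes from the block-inverse (Schur complement) formula. The increment is $\mathbf{u}^\top \boldsymbol{M}^{-1}\mathbf{u}$, where $\boldsymbol{M}$ is the Schur complement and $\mathbf{u}$ is the residual of the added covariates' missing-sum after projecting onto the $\alpha^\star$ covariates. This increment is strictly positive unless $\mathbf{u}=0$, and $\mathbf{u}=0$ is a null event when the covariates have a continuous distribution. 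It must also hold that $\boldsymbol{A}_{r,\alpha}$ is invertible and $S_m\neq\emptyset$. I would state these events explicitly, argue that they have $\mathbb{P}_{pq}$-probability one almost surely in $\bx$ under the paper's assumptions, and invoke the lemma on that event.
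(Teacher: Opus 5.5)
\textbf{Approach.} Your argument is the paper's argument. The appendix proof notes that $\cL_{1,n}$ vanishes on $\cC$ and that every correct model contains $\alpha^\star$. It then uses Lemma~\ref{MatrixOverfitting} (strict monotonicity of $\cL_{2,n}$) to rule out any correct $\alpha_{\mathrm{opt}}\supsetneq\alpha^\star$. You also observe that strictness needs the projected residual $\mathbf{u}$ to be nonzero, not just a positive definite Schur complement. That is correct, and the paper's proof of that lemma skips it.

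\textbf{The step that fails.} You plan to show that $\{S_m\neq\emptyset\}$ (and, in general, $\{\boldsymbol{A}_r\succ 0\}$) has $\mathbb{P}_{pq}$-probability one. This cannot be done.
\begin{itemize}
\item Responses are independent given the covariates and $p(\bx)\ge\rho$. So for every sample, $\mathbb{P}_q(S_m=\emptyset)=\prod_{k\in S}p(\bx_k)\ge\rho^{n}>0$.
\item On that event nothing is imputed, so $\widehat{\mu}_\alpha=\widehat{\mu}_\pi$ for all $\alpha$ and $\cL_n\equiv 0$ on $\cA$.
\item Every model in $\cA$ is then a minimizer, including every correct $\alpha\supsetneq\alpha^\star$. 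Also $\mathbf{u}=\mathbf{0}$ identically, so Lemma~\ref{MatrixOverfitting} is false on this event.
\end{itemize}
Hence the proposition, with an arbitrary choice of minimizer, cannot be proved as stated. The paper's two-line proof carries the same hidden restriction: Remark~\ref{req1} conditions on $\{\boldsymbol{A}_r\succ0\}$ but never on $\{S_m\neq\emptyset\}$.

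\textbf{What you actually prove.} Your argument establishes $\alpha_{\mathrm{opt}}=\alpha^\star$ on $\{S_m\neq\emptyset\}\cap\{\boldsymbol{A}_r\succ 0\}$ for almost every covariate configuration. Equivalently, it gives conditional probability one given this event. Alternatively, you get unconditional probability one if ties are broken by taking the smallest minimizer. State the result in one of these forms. This is all Theorem~\ref{Theo1} uses: $p_{1,v}=0$ is replaced by $p_{1,v}\to 0$, which holds under nondegenerate nonresponse.

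\textbf{Your null-set claim.} On $\{S_m\neq\emptyset\}$ it does go through, at least when the $\pi_k$ do not vary with $\bx_k$. Fix $(S,\mathbf{r}_U)$ and a single $k\in S_m$. Then $\mathbf{u}$ is affine in $\bx_k$, and its linear part acts as $\pi_k^{-1}$ times the identity on the added coordinates, so it is onto. Hence $\{\mathbf{u}=\mathbf{0}\}$ is Lebesgue-null, and there are only finitely many configurations $(S,\mathbf{r}_U)$.
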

\begin{proof}
See Appendix \ref{proof_prop2}.
\end{proof}
~\\
Proposition \ref{prop2} states that, for any finite sample size, if the loss is minimized by a correct model (i.e., $\alpha_{\mathrm{opt}}\in\cC$), then the optimal model must be the true model. In particular, $\alpha_{\mathrm{opt}}=\alpha^\star$.

\subsection{Asymptotic behavior of the optimal imputation model}

In general, investigating the properties of $\alpha_{\mathrm{opt}} $ for finite sample sizes is challenging.  Nonetheless, we shall show in Section \ref{Section4} that, asymptotically, $\cL_n$ has a unique minimizer, which, under mild conditions, is also the true model.

~\\
To that aim, we consider the asymptotic framework of \cite{isaki1982survey}. Let $(U_v)_{v \in \N} $ be an increasing sequence of finite populations, i.e., $U_1 \subset U_2 \subset ...$ of respective increasing sizes $(N_v )_{v\in \N}$. In each population $U_v$, a sample $S_v$ is drawn randomly using a sampling design $\cP_v$ to estimate the finite population mean $\mu_v$. All subsequent steps (i.e., imputation, variance estimation, etc.) are carried out at each $v\in \N$ similarly. Note that the number of covariates $p$ remains constant as $v$ increases. In the sequel, for simplicity of notations, we will omit the index $v$ when no confusion arises.

~\\
The following assumptions are made about the superpopulation model.
\begin{enumerate}[label=({S\arabic*}), leftmargin=2.2em, start=1]
\item\label{S1} The distribution of the covariates $\mathbb{P}_\bx$ is absolutely continuous with respect to the Lebesgues measure. 
\item\label{S2} The covariates $\bx$ have bounded support, that is, there exists $C_0>0$ such that, almost surely,  $\rVert\bx \rVert_2 \leq C_0$.
\item\label{S3} There exists a constant $M_0$ such that $\mathbb{E}_{ m}[\epsilon_1^4]\leq M_0$, almost surely.
\end{enumerate}
~\\
Assumption \ref{S1} ensures that the matrix $\sum_{k\in U} \bx_k\bx_k^\top $ is almost surely positive definite. Condition \ref{S2} assumes that the covariates are almost surely bounded. This assumption is not strictly needed, but it greatly simplifies our proofs. An alternative assumption on the moments of $\rVert\bx\rVert_2$ could be used, at the price of considerably lengthening our arguments. As such, we do not pursue this avenue further. Assumption \ref{S3} is a moment condition that we use in various places. Note that in most of our results, we only require finite second moments. Generally, our conditions on the superpopulation may not be minimal. For brevity and simplicity of notation, we shall assume finite fourth moments directly, which will be needed for variance estimation later. 

\begin{remark}\label{req1}
Throughout the article, we work on the event $\mathcal{E}_{N_v} := \{\lambda_{min} (N_v^{-1}\sum_{k\in U_v}r_k\bx_k \bx_k^\top)\geqslant\gamma \}$ for some $\gamma>0$. We show in appendix (see Lemma \ref{HighProb}) that, for some universal constants $C_1, C_2$, for all large $N_v$, we have $\P_q (\mathcal{E}_{N_v}^c) \leq C_1 \exp(-C_2N_v)$ almost surely and $\mathcal{E}_{N_v} $ is thus a high-probability event. This conditioning could be removed by considering a truncated inverse $(N_v^{-1}\sum_{k\in U_v}r_k\bx_k \bx_k^\top)_{trunc}^{-1} =  (N_v^{-1}\sum_{k\in U_v}r_k\bx_k \bx_k^\top)^{-1}\mathds{1}_{\mathcal{E}_{N_v}}.$ Since the truncated and un-truncated estimators differ only with exponentially low probability, the difference between the two does not affect any of our asymptotic results. Therefore, we proceed by conditioning on $\mathcal{E}_{N_v}$ without loss of generality.

Similarly, in addition to the conditions on the sampling design described below, we assume that $\mathcal{P}$ induces a similar high-probability event $\{ \bX_r^\top \bX_r \succ 0\}$ and condition on this event. This is a natural assumption, commonly used in the literature, see e.g., \cite{chauvet2022asymptotic}.
\end{remark}

~\\
We will refer to the following assumptions on the sequence of sampling designs $(\cP_v)_{v\in \N}$.
\begin{enumerate}[label=({D\arabic*}), leftmargin=2.2em, start=1]
\item\label{D1} The sampling fraction satisfies $$ \lim_{v \to \infty} \dfrac{n_v}{N_v} := f^{\star} >0$$
almost surely.
\item\label{D2} There exists positive constants $\lambda>0$ and $\lambda^{\star}>0$ such that, for all $v \in \N$, $$ \min_{k \in U_v} \pi_{k,v}\geq\lambda, \qquad \text{and} \qquad \min_{k,l\in U_v} \pi_{kl,v}\geq\lambda^{\star} $$  almost surely.
\item\label{D3} The sampling covariances $\Delta_{kl}:=\pi_{kl}-\pi_k\pi_l$, for $k, l \in U_v$, satisfy that there exists a deterministic constant $\bar{\Delta}$ such that, almost surely, for all $v \in \mathbb{N}$, $$ n_v \max_{k,l \in U_v } \rvert \Delta_{kl} \rvert \leq \bar{\Delta}.$$
\end{enumerate}
~\\
These regularity conditions are common in the literature. Assumption \ref{D1} states that the sample size $n_v$ grows at the same rate as the population size $N_v$ as $ v$ goes to infinity. Assumption \ref{D2} requires that the first and second order inclusion probabilities remain bounded away from zero. Finally, Assumption \ref{D3} ensures that the sampling covariances decrease to zero at a rate of at least $\mathcal{O} (n_v^{-1}) $. For a more thorough discussion on these assumptions, we refer the reader to \cite{breidt2000local}.\\

\begin{theorem} \label{Theo1}
Let $( \alpha_{opt,v})_{v \in \N} $ be a sequence of minimizers of $(\cL_{v})_{v \in \N}$. For $\alpha \in \mathcal{A}$, define 
\begin{align}\label{eq:S_v} 
\pmb{S}_{U_v}&:=\sum_{k \in U_v}\frac{\bx_{k,\alpha^c}\bx_{k,\alpha^c}^{\top}p(\bx_k)\pi_k}{N_v}\nonumber\\
&\quad  -\sum_{k \in U_v}\frac{\bx_{k,\alpha^c}\bx_{k,\alpha}^{\top}p(\bx_k)\pi_k}{N_v}\left(\sum_{k\in U_v}\frac{\bx_{k,\alpha}\bx_{k,\alpha}^{\top}p(\bx_k)\pi_k}{N_v}\right)^{-1}\sum_{k \in U_v}\frac{\bx_{k,{\alpha}}\bx_{k,\alpha^c}^{\top}p(\bx_k)\pi_k}{N_v}. 
\end{align}
~\\
Assume \ref{S1}-\ref{S2} and \ref{D1}-\ref{D3} and that for $\alpha \in \mathcal{W}$,
\begin{equation}\label{wrong}
\liminf_{v \rightarrow\infty}\lambda_{min}\left(\textbf{S}_{U_v}\right)> 0 \qquad \text{a.s.} \tag{$C_1$}
\end{equation}
Then, we have
$$\lim_{v \to \infty} \mathbb{P}_{mpq} \left(\alpha_{opt,v} = \alpha^{\star}\right) = 1 \qquad \text{a.s.} $$
\end{theorem}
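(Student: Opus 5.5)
Since $\cA$ is finite, it suffices to show that, with $\mathbb{P}_{mpq}$-probability tending to one, $\cL_n(\alpha^\star)<\cL_n(\alpha)$ for every $\alpha\in\cW$. Then $\alpha_{opt,v}\in\cC$ with probability tending to one, and Proposition \ref{prop2} turns this into $\alpha_{opt,v}=\alpha^\star$. The plan is to compare orders of magnitude in the decomposition of Proposition \ref{prop1}. For correct models $\cL_{1,n}=0$, so $\cL_n(\alpha^\star)=\cL_{2,n}(\alpha^\star)$. For a wrong model, $\cL_n(\alpha)\ge \cL_{1,n}(\alpha)$. We will show that $\cL_{2,n}(\alpha^\star)=O_{pq}(N)$, while $\cL_{1,n}(\alpha)$ is of exact order $N^2$ for every $\alpha\in\cW$. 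Dividing both by $N^2$, the first goes to zero and the second stays bounded away from zero.

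\textbf{Variance term.} Write $\boldsymbol{A}_{r,\alpha}=\bX_{r,\alpha}^\top\bX_{r,\alpha}=\sum_{k\in U}I_kr_k\bx_{k,\alpha}\bx_{k,\alpha}^\top$. Then
\[
N^{-2}\cL_{2,n}(\alpha)=\frac{\sigma^2}{N}\Big(N^{-1}\sum_{k\in S_m}\pi_k^{-1}\bx_{k,\alpha}^\top\Big)\big(N^{-1}\boldsymbol{A}_{r,\alpha}\big)^{-1}\Big(N^{-1}\sum_{k\in S_m}\pi_k^{-1}\bx_{k,\alpha}\Big).
\]
By \ref{S2} and \ref{D2}, the vector $N^{-1}\sum_{S_m}\pi_k^{-1}\bx_{k,\alpha}$ is bounded by $C_0/\lambda$. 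We need the smallest eigenvalue of $N^{-1}\boldsymbol{A}_{r,\alpha}$ to be bounded below with probability tending to one. For this we show
\[
N^{-1}\sum_{k\in U}I_kr_k\bx_k\bx_k^\top-N^{-1}\sum_{k\in U}\pi_kp(\bx_k)\bx_k\bx_k^\top\to0
\]
in probability. The design part is handled by a Chebyshev bound using \ref{D3} and the bounded covariates. The nonresponse part, given the $\pi_k$, is a sum of independent bounded terms, so a law of large numbers applies. The limit matrix dominates $\lambda\,N^{-1}\sum_k r_k\bx_k\bx_k^\top$ in expectation, and on the event of Remark \ref{req1} it is bounded below by roughly $\lambda\gamma$. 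Hence $N^{-2}\cL_{2,n}(\alpha)=O_{pq}(N^{-1})$ uniformly over the finitely many $\alpha$.

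\textbf{Bias term.} This is the main obstacle. Write $\bx_k^\top\bbeta=\bx_{k,\alpha}^\top\bbeta_\alpha+\bx_{k,\alpha^c}^\top\bbeta_{\alpha^c}$. The $\alpha$-block cancels, so
\[
N^{-1}\cL_{1,n}^{1/2}=\Big|N^{-1}\sum_{S_m}\pi_k^{-1}\big(\bx_{k,\alpha^c}-\widehat{\boldsymbol{B}}^\top\bx_{k,\alpha}\big)^\top\bbeta_{\alpha^c}\Big|,
\]
where $\widehat{\boldsymbol{B}}=\boldsymbol{A}_{r,\alpha}^{-1}\bX_{r,\alpha}^\top\bX_{r,\alpha^c}$ is the respondent regression of the omitted covariates on the included ones. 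By the same laws of large numbers, $\widehat{\boldsymbol{B}}$ converges to the population projection coefficient $\boldsymbol{B}_U$ built from the $p(\bx_k)\pi_k$-weighted moments. Also, $N^{-1}\sum_{S_m}\pi_k^{-1}\bx_k$ converges to $N^{-1}\sum_U(1-p(\bx_k))\bx_k$.

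So $N^{-2}\cL_{1,n}$ is asymptotically the square of a population quantity, namely the $(1-p)$-weighted mean of the projection residual $\bx_{\alpha^c}-\boldsymbol{B}_U^\top\bx_\alpha$ contracted with $\bbeta_{\alpha^c}\neq0$. The difficulty is that we must show this limit is nonzero, and \eqref{wrong} controls a residual covariance $\pmb{S}_{U_v}$, not a mean. I would first exploit the fact that the residual is orthogonal to $\bx_\alpha$ under $p\pi$-weights while the target weights are $1-p$. I would also use \ref{S1}, which makes the residual nondegenerate in every direction. Together with \eqref{wrong}, which gives $\bbeta_{\alpha^c}^\top\pmb{S}_{U_v}\bbeta_{\alpha^c}\ge c\|\bbeta_{\alpha^c}\|^2>0$, the aim is a lower bound on the bias limit. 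If a direct lower bound fails, I would compare $\cL_n(\alpha)$ with $\cL_n(\alpha^\star)$ through their common expansions around $\alpha\cup\alpha^\star$, so that \eqref{wrong} enters as the quadratic form in $\bbeta_{\alpha^c}$.

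**Conclusion.** Once $\liminf N^{-2}\cL_{1,n}(\alpha)>0$ in probability for each $\alpha\in\cW$, comparing with the $O_{pq}(N^{-1})$ variance bound and taking a finite union over $\cW$ completes the argument.
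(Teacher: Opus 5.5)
Your reduction to wrong models and the $\mathcal{O}_\P(N^{-1})$ bound on $N^{-2}\cL_{2,n}$ are fine. The gap is exactly the step you flag, and it cannot be closed from the stated hypotheses: $\liminf_v N^{-2}\cL_{1,n}(\alpha)>0$ fails in general for $\alpha\in\cW$. The population quantity you arrive at (the $(1-p)$-weighted mean of the projection residual, contracted with $\bbeta_{\alpha^c}$) is, up to sign, precisely the asymptotic bias of $\widehat{\mu}_\alpha$; compare Proposition~\ref{prop3}(i). So it vanishes for every wrong model satisfying $(C_2)$. Corollary~\ref{Cor1} shows such models exist, and $\alpha_4,\alpha_5$ of Section~\ref{sec:simu} are instances. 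Under MCAR with equal $\pi_k$ and an intercept in $\alpha$, the orthogonality you intend to exploit actually makes it identically zero. There $p$ is constant and the residual is orthogonal to $\bx_\alpha$, hence to the constant, so its $(1-p)$-weighted mean vanishes, while $(C_1)$ still holds. In general $(C_1)$ cannot help: it bounds a residual \emph{covariance}, whereas $\cL_{1,n}(\alpha)$ is the square of a single linear form in the weighted \emph{mean} residual. No covariance bound prevents that mean from being asymptotically orthogonal to $\bbeta_{\alpha^c}$.

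Your fallback does not rescue the argument. With $\bar\alpha=\alpha\cup\alpha^\star$, Proposition~\ref{prop1} and the block-inverse identity in the proof of Lemma~\ref{MatrixOverfitting} give
\[
\cL_{1,n}(\alpha)=\big(\tilde{\mathbf{z}}^\top\bbeta_{\bar\alpha\setminus\alpha}\big)^2,\qquad \cL_{2,n}(\bar\alpha)-\cL_{2,n}(\alpha)=\sigma^2\,\tilde{\mathbf{z}}^\top\boldsymbol{S}^{-1}\tilde{\mathbf{z}},
\]
where $\tilde{\mathbf{z}}=\sum_{k\in S_m}\pi_k^{-1}(\bx_{k,\bar\alpha\setminus\alpha}-\widehat{\boldsymbol{B}}^\top\bx_{k,\alpha})$ and $\boldsymbol{S}$ is the respondent Schur complement. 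Thus $\alpha$ beats $\bar\alpha$ exactly when the first quantity is below the second. Condition $(C_1)$ only gives $\sigma^2\tilde{\mathbf{z}}^\top\boldsymbol{S}^{-1}\tilde{\mathbf{z}}=\mathcal{O}_\P(N^{-1}\|\tilde{\mathbf{z}}\|_2^2)$. When the bias vanishes, the first quantity drops from order $N^2$ to order $N$. Two cases then arise.
\begin{itemize}
\item If $N^{-1}\tilde{\mathbf{z}}\to\mathbf{0}$ (the MCAR case), the second quantity is $\mathcal{O}_\P(1)$. You would still need a central limit theorem for $N^{-1/2}\tilde{\mathbf{z}}^\top\bbeta_{\bar\alpha\setminus\alpha}$ whose limit has no atom at $0$, which is not assumed.
\item If instead $N^{-1}\tilde{\mathbf{z}}\to\mathbf{z}_\infty\neq\mathbf{0}$ with $\mathbf{z}_\infty^\top\bbeta_{\bar\alpha\setminus\alpha}=0$, the omitted covariates have nonresponse biases that cancel. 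For example, take $\bx=(1,x_2,x_3)$ with $(x_2,x_3)$ exchangeable, $p(\bx)=\tfrac12+\phi(x_2-x_3)$ for an odd increasing $\phi$ with $|\phi|\le\tfrac14$, $\beta_2=\beta_3\neq0$, equal $\pi_k$, and $\alpha=\{1\}$. Then both quantities are of exact order $N$ and their comparison stays random in the limit. Hence $\P_{pq}(\cL_n(\alpha)<\cL_n(\alpha^\star))$ does not tend to zero, and the conclusion genuinely needs more than the stated assumptions.
\end{itemize}

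The paper's proof does not supply the missing ingredient either. It compares with $\alpha_F$, which is a weaker comparator than yours. For a consistent wrong model such as $\alpha_4$, the variance gap to $\alpha_F$ is already of order $N$, like $\cL_{1,n}(\alpha_4)$, because $\alpha_F$ adds irrelevant covariates related to $p$. The paper reaches the analogous event $\{\mathbf{z}_v^\top\bbeta_{\alpha^c}\bbeta_{\alpha^c}^\top\mathbf{z}_v<\sigma^2\mathbf{z}_v^\top\boldsymbol{S}_v^{-1}\mathbf{z}_v\}$ and replaces it by $\{\lambda_{\max}(\bbeta_{\alpha^c}\bbeta_{\alpha^c}^\top-\sigma^2\boldsymbol{S}_v^{-1})<0\}$; this is where $(C_1)$ enters. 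But a quadratic form that is negative at one random vector only forces $\lambda_{\min}<0$, so that step is justified only when $\alpha^c$ is a singleton. An extra hypothesis is needed, for instance that the asymptotic bias of $\widehat\mu_\alpha$ stays bounded away from zero for every $\alpha\in\cW$. Under it your order-of-magnitude argument goes through as written, and $(C_1)$ is not even needed.
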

\begin{proof}
See Appendix \ref{proof_Theo1}.
\end{proof}
~\\
Theorem \ref{Theo1} proves that, among all candidates in $\cA$, the model $\alpha_{\mathrm{opt}}$ minimizing the (unobservable) imputation loss $\cL_n$ eventually is the true model $\alpha^\star$. In other words, asymptotically, the optimal set of covariates for imputation coincides with the true model $\alpha^\star$. Of course, in practice, the true model is unknown, so Theorem \ref{Theo1} is mainly theoretical. 
Its value lies in guiding the development of practical selection criteria that asymptotically identify the true model. We revisit this point in Section \ref{Section4}.

~\\
Assumption~\eqref{wrong} ensures that the omitted covariates $\bx_{\alpha^c}$ retain some variation once the included covariates $\bx_\alpha$ have been accounted for. To see this, 
consider the simple case where $p(\bx_k) = p$, $\pi_k=\pi$ for $k\in U$ and the intercept is included in $\alpha$. Then, some algebra shows that the matrix $ \boldsymbol{S}_{U_v}$ defined in \eqref{eq:S_v} is proportional to the following empirical covariance matrix
$$ \boldsymbol{S}_{U_v}\propto \dfrac{1}{N_v} \sum_{k\in U_v}\mathbf{r}_k \mathbf{r}_k^\top$$ with $\mathbf{r}_k := \bx_{k,\alpha^c} - \boldsymbol{B}^\top_v\bx_{k,\alpha} $ with $\boldsymbol{B}_v $ solving the matrix least squares problem $$\boldsymbol{B}_v = \argmin_{\boldsymbol{B} \in \R^{p_\alpha\times p_\alpha^c}} \dfrac{1}{N_v}\sum_{k\in U_v} \rVert \bx_{k,\alpha^c} -\boldsymbol{B}^\top\bx_{k,\alpha} \rVert_2^2 =\left( \sum_{k\in U_v} \bx_{k,\alpha} \bx_{k,\alpha}^\top \right)^{-1}  \sum_{k\in U_v} \bx_{k,\alpha} \bx_{k,\alpha^c}^\top .$$
Therefore, $ \boldsymbol{S}_{U_v}$
represents the finite population covariance matrix of the residuals obtained by regressing $\bx_{\alpha^c}$ on $\bx_\alpha$. The assumption \eqref{wrong} requires that every omitted direction must preserve a strictly positive amount of variability after removing the linear effect of the included variables. Intuitively, it rules out cases in which an omitted predictor could be perfectly reconstructed from the included predictors, thereby making some incorrect models behave as if they were correct. That is, missing covariates would have essentially no missing contribution. In our setting, assumptions \ref{S1} and \ref{S2} imply that $\E_\bx [\bx \bx^\top ]$ is positive definite, so that any covariate missing will retain some variability, even after accounting for all the other covariates. The next remark formalizes this. 
\begin{remark}
For an   equal probability sampling design, an application of the law of large numbers gives
\begin{equation*}
\pmb{S}_{U_v}\xrightarrow[v \to \infty]{a.s.} \pmb{S}_* := \pi \left(\mathbb{E}_\bx[\bx_{\alpha^c}\bx_{\alpha^c}^{\top}p(\bx)]-\mathbb{E}_\bx[\bx_{\alpha^c}\bx_{\alpha}^{\top}p(\bx)](\mathbb{E}_\bx[\bx_{\alpha}\bx_{\alpha}^{\top}p(\bx)])^{-1}\mathbb{E}_\bx[\bx_{\alpha}\bx_{\alpha^c}^{\top}p(\bx)]\right).
\end{equation*}
Now, consider the block matrix $$\boldsymbol{M}:
=
\pi\,\mathbb{E}\!\begin{bmatrix}
p\,\bx_{\alpha}\bx_{\alpha}^{\top}
&
p\,\bx_{\alpha}\bx_{\alpha^c}^{\top}
\\[4pt]
p\,\bx_{\alpha^c}\bx_{\alpha}^{\top}
&
p\,\bx_{\alpha^c}\bx_{\alpha^c}^{\top}
\end{bmatrix}
=
\begin{bmatrix}
\boldsymbol{A} & \boldsymbol{B} \\
\boldsymbol{B}^{\top} & \boldsymbol{C}
\end{bmatrix},$$
and note that $\pmb{S}_* = \boldsymbol{C}-\boldsymbol{B}^\top \boldsymbol{A}^{-1}\boldsymbol{B}$ is the Schur complement of $\boldsymbol{A}$ in $\boldsymbol{M}$. Under \ref{S1} and positivity, it can be shown that $\boldsymbol{M}$ is positive definite. This implies in turn that $\pmb{S}_*$ is also positive definite, see e.g., \cite[p.~495]{horn2012matrix}. Therefore, Assumption \eqref{wrong} holds in this case. A similar reasoning would hold for Poisson sampling. More generally, Assumption \eqref{wrong} is fairly mild and is likely to hold for most sampling designs. In spirit, the assumption is in line with identifiability assumptions of the literature, see e.g., \cite{shao1993linear}.
\end{remark}
~\\
We recall that a sequence of estimators $(\widehat{\mu}_v)_{v\in \N}$ is consistent for $\mu_v$ if $\widehat{\mu}_v - \mu_v \xrightarrow[v\to \infty]{\P} 0$. Moreover, if $\V (\widehat{\mu}_v) $ exists for all $v \in \N$ and $\V (\widehat{\mu}_v) \xrightarrow[v\to\infty]{} \tau^2$, we call $\tau^2$ the asymptotic variance of $\widehat{\mu}_v$ and write $\mathbb{\mathbb{AV}}(\widehat{\mu}_v) := \lim_{v\to \infty}\V (\widehat{\mu}_v)$. Our next result further examines the consequences, in terms of consistency and asymptotic variance, when using an imputed estimator $\widehat{\mu}_\alpha$ of $\widehat{\mu}_v$ with $\alpha \neq \alpha^{\star}$. 

\begin{proposition}\label{prop3}
Let $\alpha \in \cA$ be an arbitrary model and $(\widehat{\mu}_{\alpha, v})_{v\in \N}$ be sequence of imputed estimators. Assume \ref{S1}-\ref{S2}, \ref{D1}-\ref{D3}, and assume also that the sampling design induces equal first-order inclusion probabilities ($\pi_k=\pi$ for $k \in U_v$) and equal second-order inclusion probabilities ($\pi_{kl}=\pi^{\star}$ for $k \neq l \in U_v$). Then, we have the following results.
\begin{itemize}
\item[(i)] The sequence $(\widehat{\mu}_{\alpha, v})_{v\in \N}$ is consistent if and only if 
\begin{equation}\label{condBias}
\mathbb{E}_\bx\left[( 1 - p(\bx))\bx_{\alpha^c}^{\top}\pmb{\beta}_{\alpha^c}\right]=\mathbb{E}_\bx \left[\bx_{\alpha}^{\top}(1 - p(\bx))\right] \mathbb{E}_\bx \left[p(\bx) \bx_{ \alpha}\bx_{ \alpha} ^\top \right]^{-1} \mathbb{E}_\bx\left[\bx_{\alpha}\bx_{\alpha^c}^{\top}\pmb{\beta}_{\alpha^c}p(\bx) \right]. \tag{$C_2$}
\end{equation}
\item[(ii)] 
\begin{enumerate}
    \item[(a)] Let $\alpha \in \cC$ be a correct model. Assume that there exists a constant $K$ such that the following limit exists $\lim_{v\rightarrow \infty}N_v(\pi^{\star}-\pi^2)=K$. Then, there exists a constant $C(\alpha^{\star})$, not depending of $\alpha$, and a function $M : \cC \to \R_+^*$ such that
\begin{equation*}
\mathbb{AV}\left(\sqrt{N_v}(\widehat{\mu}_{\alpha,v}-\mu_v)\right)=C(\alpha^{\star})+\frac{\sigma^2}{\pi}M(\alpha),
\end{equation*}
where
\begin{equation*}
M: \alpha\mapsto\mathbb{E}_\bx\left[(1-p(\bx))\bx_{\alpha}^{\top}\right]\left(\mathbb{E}_\bx\left[p(\bx)\bx_{\alpha}\bx_{\alpha}^\top\right]\right)^{-1}\mathbb{E}_\bx\left[(1-p(\bx))\bx_{\alpha}\right].
\end{equation*}
\item[(b)]
The function $M$ is a non-decreasing function, that is, for $\alpha_1, \alpha_2 \in \cC$,
\begin{equation*}\label{eq:M}
 \alpha_1 \subset \alpha_2 \ \implies \    M(\alpha_1)\leq M(\alpha_2),
\end{equation*}
with equality if and only if 
\begin{align}\label{eq:cond2}
    &\mathbb{E}_\bx\left[(1-p(\bx))\bx_{\alpha_2-\alpha_1}\right] \nonumber\\
   &\quad= \mathbb{E}_\bx\left[(1-p(\bx))\bx_{\alpha_2-\alpha_1} \bx_{\alpha_1}^{\top}\right]\mathbb{E}_\bx\left[p(\bx)\bx_{\alpha_1}\bx_{\alpha_1}^{\top}\right]^{-1}\mathbb{E}_\bx\left[p(\bx)\bx_{\alpha_1}\right]. \tag{C3}
\end{align}
\end{enumerate}
\end{itemize}
\end{proposition}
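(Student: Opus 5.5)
The plan is to write $\widehat{\mu}_\alpha-\mu$ as a sum of terms that are each a Horvitz--Thompson-type average, and then to handle each term with the law of large numbers or a variance calculation under \ref{D1}--\ref{D3}. With $\pi_k=\pi$ we have
\[
\widehat{\mu}_\alpha-\mu=(\widehat{\mu}_\pi-\mu)+\frac{1}{N\pi}\sum_{k\in S_m}\bigl(\bx_{k,\alpha}^\top\widehat{\bbeta}_\alpha-y_k\bigr).
\]
The first term is $o_{\P}(1)$ and $O_{\P}(N^{-1/2})$ by the usual design-based arguments.

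\textbf{Step 1: limit of $\widehat{\bbeta}_\alpha$.} I will show that $\widehat{\bbeta}_\alpha\to\bbeta_\alpha^\dagger:=\E_\bx[p\,\bx_\alpha\bx_\alpha^\top]^{-1}\E_\bx[p\,\bx_\alpha\bx^\top]\bbeta$ in probability. Writing $\bx^\top\bbeta=\bx_\alpha^\top\bbeta_\alpha+\bx_{\alpha^c}^\top\bbeta_{\alpha^c}$ gives
\[
\bbeta_\alpha^\dagger=\bbeta_\alpha+\E_\bx[p\,\bx_\alpha\bx_\alpha^\top]^{-1}\E_\bx[p\,\bx_\alpha\bx_{\alpha^c}^\top]\bbeta_{\alpha^c}.
\]
The argument is as follows. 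Under equal $\pi_k$, $n^{-1}\bX_{r,\alpha}^\top\bX_{r,\alpha}$ and $n^{-1}\bX_{r,\alpha}^\top\boldsymbol{Y}_r$ are design-unbiased, up to the factor $N\pi/n$, for population averages of $r_k\bx_{k,\alpha}\bx_{k,\alpha}^\top$ and $r_k\bx_{k,\alpha}y_k$. Their design variances are $O(n^{-1})$ by \ref{D3} and \ref{S2}. The population averages converge almost surely by the strong law of large numbers, using MAR and the i.i.d.\ assumption. Invertibility of the limiting matrix is handled by conditioning on $\mathcal{E}_{N_v}$ (Remark \ref{req1}).

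\textbf{Step 2: part (i).} The same argument gives
\[
\frac{1}{N\pi}\sum_{k\in S_m}\bigl(\bx_{k,\alpha}^\top\widehat{\bbeta}_\alpha-y_k\bigr)\;\longrightarrow\;\E_\bx\bigl[(1-p)\bx_\alpha^\top\bigr]\bbeta_\alpha^\dagger-\E_\bx\bigl[(1-p)\bx^\top\bbeta\bigr]
\]
in probability; the $\epsilon_k$ contribution vanishes because $\E[\epsilon\mid\bx]=0$. Substituting $\bbeta_\alpha^\dagger$, the $\bbeta_\alpha$ parts cancel. The limit is therefore zero exactly when \eqref{condBias} holds. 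Since the limit is a constant, consistency holds if and only if \eqref{condBias} holds.

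\textbf{Step 3: part (ii)(a).} For $\alpha\in\cC$ we have $\bbeta_{\alpha^c}=0$, so $\bx_{k,\alpha}^\top\widehat{\bbeta}_\alpha-y_k=\bx_{k,\alpha}^\top(\widehat{\bbeta}_\alpha-\bbeta_\alpha)-\epsilon_k$. Linearizing gives
\[
\sqrt N(\widehat{\mu}_\alpha-\mu)\approx\sqrt N(\widehat{\mu}_\pi-\mu)-\frac{\sqrt N}{N\pi}\sum_{k\in S_m}\epsilon_k+\sqrt N\,\E_\bx\bigl[(1-p)\bx_\alpha^\top\bigr]\,\E_\bx[p\,\bx_\alpha\bx_\alpha^\top]^{-1}\frac{1}{N\pi}\sum_{k\in S_r}\bx_{k,\alpha}\epsilon_k .
\]
I will compute the variance of this expression by conditioning in turn on the model, the design and the response mechanism.
\begin{itemize}
\item The first two terms do not depend on $\alpha$. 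Their variance, using $N(\pi^\star-\pi^2)\to K$ to control the design covariance terms, together with their covariance, gives $C(\alpha^\star)$.
\item The respondent term has $\E_m$-covariance zero with $\epsilon_k$ for $k\in S_m$, since the two index sets are disjoint and the errors are independent.
\item The covariance of the respondent term with $\widehat{\mu}_\pi$, whose $\epsilon$-part is $\frac{1}{N\pi}\sum_{k\in S}\epsilon_k$, equals $\sigma^2\pi^{-1}\cdot(\text{vector})\cdot\E[\pi^{-1}\cdot\pi\,p\,\bx_\alpha]$.
\item The variance of the respondent term is $\sigma^2\pi^{-1}M(\alpha)$.
\end{itemize}
I must check whether the cross term survives. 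Because the model contains an intercept, or more generally $\bx_\alpha$ spans the relevant direction, the cross term may combine with the others. I expect the bookkeeping to yield exactly $C(\alpha^\star)+\sigma^2M(\alpha)/\pi$. This bookkeeping, and justifying that the variances converge rather than just the distributions (via uniform integrability from \ref{S2}, \ref{S3} and the event $\mathcal{E}_{N_v}$), is the main obstacle.

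\textbf{Step 4: part (ii)(b).} Set $\boldsymbol{u}=(1-p)$-weighted mean of $\bx$ and $G=\E[p\,\bx\bx^\top]$. Then $M(\alpha)=\boldsymbol{u}_\alpha^\top G_{\alpha\alpha}^{-1}\boldsymbol{u}_\alpha$. For $\alpha_1\subset\alpha_2$, let $d=\alpha_2\setminus\alpha_1$. The block-inverse (Schur complement) formula gives
\[
M(\alpha_2)-M(\alpha_1)=\boldsymbol{w}^\top\Sigma^{-1}\boldsymbol{w},\qquad \boldsymbol{w}=\boldsymbol{u}_d-G_{d\alpha_1}G_{\alpha_1\alpha_1}^{-1}\boldsymbol{u}_{\alpha_1},
\]
where $\Sigma$ is the Schur complement, which is positive definite as in the preceding Remark. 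Hence $M(\alpha_2)\ge M(\alpha_1)$, with equality if and only if $\boldsymbol{w}=0$, i.e.\ if and only if \eqref{eq:cond2} holds.
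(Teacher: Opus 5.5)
Part (i) is the paper's own argument (split off $\widehat{\mu}_\pi-\mu$, apply the law of large numbers to the respondent moment matrices and nonrespondent totals, then continuous mapping), and it is fine.

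\textbf{Part (ii)(a): route and missing computation.} Your route differs from the paper's. The paper substitutes the linearization $\widetilde{\mu}_\alpha$ of Theorem~\ref{Theo3}, whose $\eta_{k,\alpha}$ contains the random $\mathbf{c}_{\alpha,v}$. It then takes limits of $\mathbb{E}_{mq}[\mathbb{V}_p(\cdot)]$ and $\mathbb{E}_q[\mathbb{V}_m(\mathbb{E}_p[\cdot])]$ term by term. You instead linearize with deterministic limits (your $G$, $\boldsymbol{u}$ from Step~4) and compute unconditional covariances. This has a merit: $\Delta_{kl}$ enters only the $\alpha$-free part, so $N(\pi^\star-\pi^2)\to K$ is needed only to make that part converge.

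However, the bookkeeping you leave as ``I expect'' is the whole content of (ii)(a), and your third bullet contains a slip. Put $T_3:=\boldsymbol{u}_\alpha^\top G_{\alpha\alpha}^{-1}(N\pi)^{-1}\sum_{k\in S_r}\bx_{k,\alpha}\epsilon_k$. The $\epsilon$-part to pair with $T_3$ is that of $\widehat{\mu}_\pi-\mu$, namely $N^{-1}\sum_{k\in U}(I_k/\pi-1)\epsilon_k$, not $(N\pi)^{-1}\sum_{k\in S}\epsilon_k$. Done correctly,
\[
2N\,\mathrm{Cov}\left(\widehat{\mu}_\pi-\mu,\,T_3\right)\longrightarrow\frac{2\sigma^2(1-\pi)}{\pi}\,\mathbb{E}_\bx\left[p(\bx)\bx_\alpha^\top\right]G_{\alpha\alpha}^{-1}\boldsymbol{u}_\alpha .
\]
This term neither vanishes nor ``combines'' with the others. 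It is free of $\alpha$ only because $G_{\alpha\alpha}^{-1}\mathbb{E}_\bx[p(\bx)\bx_\alpha]=\boldsymbol{e}_1$ when $\alpha$ contains the intercept (Lemma~\ref{AlgebraEx}). In that case it equals $2\sigma^2(1-\pi)\pi^{-1}\mathbb{E}_\bx[1-p(\bx)]$. After adding $\sigma^2M(\alpha)/\pi$ from $\mathbb{V}(T_3)$, the remaining $\alpha$-free terms add up exactly to the paper's $C(\alpha^\star)$.

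Without an intercept the claimed form fails. Take $p\equiv\bar p$, $\alpha^\star=\{x\}$ and $\alpha=\{1,x\}$. The factor multiplying $2\sigma^2(1-\pi)/\pi$ is $(1-\bar p)\,\mathbb{E}_\bx[x]^2/\mathbb{E}_\bx[x^2]$ for $\alpha^\star$ but $1-\bar p$ for $\alpha$. So the intercept must be an explicit hypothesis; the paper uses it silently through Lemma~\ref{AlgebraEx}.

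\textbf{Part (ii)(a): variance convergence.} The other obstacle you flag, passing from an $o_{\mathbb{P}}(N^{-1/2})$ linearization to convergence of variances, is real. The paper passes over it too: it equates the asymptotic variances of $\widehat{\mu}_\alpha$ and $\widetilde{\mu}_\alpha$ directly from Theorem~\ref{Theo3}, and Lemma~\ref{BoundedV1V2} gives uniform integrability only for the linearized components. In your setting you can avoid the issue altogether.
\begin{itemize}
\item For $\alpha\in\cC$, $\widehat{\mu}_\alpha-\mu$ is exactly linear in $(\epsilon_k)_{k\in U}$ given $(\bx_k,I_k,r_k)_{k\in U}$.
\item Its $\mathbb{E}_m$, namely $N^{-1}\sum_{k\in U}(I_k/\pi-1)\bx_k^\top\bbeta$, does not depend on $\alpha$.
\item The intercept gives $\boldsymbol{A}_{r,\alpha}^{-1}\sum_{k\in S_r}\bx_{k,\alpha}=\boldsymbol{e}_1$, which makes the finite-sample cross term exactly $\alpha$-free.
\end{itemize}
One then gets
\[
N\,\mathbb{V}(\widehat{\mu}_\alpha-\mu)=(\text{terms free of }\alpha)+N^{-1}\mathbb{E}_{pq}[\mathcal{L}_{2,n}(\alpha)],
\]
with $\mathcal{L}_{2,n}$ as in Proposition~\ref{prop1}. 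It suffices to show $N^{-1}\mathcal{L}_{2,n}(\alpha)\to\sigma^2M(\alpha)/\pi$ in probability, boundedly on a truncation event as in Remark~\ref{req1}.

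\textbf{Part (ii)(b).} Your Schur-complement identity is correct. The paper gives no separate proof of (ii)(b); your argument is the population analogue of Lemma~\ref{MatrixOverfitting}. However, your closing ``i.e.'' is wrong. The condition $\boldsymbol{w}=\mathbf{0}$ reads
\[
\mathbb{E}_\bx[(1-p(\bx))\bx_{\alpha_2-\alpha_1}]=\mathbb{E}_\bx[p(\bx)\bx_{\alpha_2-\alpha_1}\bx_{\alpha_1}^\top]\,G_{\alpha_1\alpha_1}^{-1}\,\mathbb{E}_\bx[(1-p(\bx))\bx_{\alpha_1}].
\]
The printed (C3) has $p$ and $1-p$ interchanged in the first and last factors on the right-hand side.

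The printed version cannot characterize equality. When $\alpha_1$ contains the intercept, $G_{\alpha_1\alpha_1}^{-1}\mathbb{E}_\bx[p(\bx)\bx_{\alpha_1}]=\boldsymbol{e}_1$ reduces its right-hand side to $\mathbb{E}_\bx[(1-p(\bx))\bx_{\alpha_2-\alpha_1}]$. So it holds identically, which contradicts Corollary~\ref{Cor2}(ii). Your condition is the correct one, and it is exactly the identity the paper checks in its proof of Corollary~\ref{Cor2}(i). State it as the equality condition and flag (C3) as a misprint, rather than asserting that the two coincide.
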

\begin{proof}
See Appendix \ref{proof_Prop3}.
\end{proof}
~\\
The restriction on equal inclusion probabilities is essentially technical and could be removed if we considered weighted least-squares in \eqref{ols} with weights $(1/\pi_k)_{k\in S_r}$. The first statement (i) provides a condition \eqref{condBias} under which the imputed estimator $\widehat{\mu}_\alpha$ is consistent. The second statement (ii) derives an expression for the asymptotic variance of the rescaled error $\sqrt{N_v}(\widehat{\mu}_{\alpha,v}-\mu_v)$ for $\alpha\in \cC$. 
By extension, the term "asymptotic variance of a model $\alpha$" will refer to that of $\sqrt{N_v}(\widehat{\mu}_{\alpha,v}-\mu_v)$. The quantity 
$\mathbb{AV}(\sqrt{N_v}(\widehat{\mu}_{\alpha,v}-\mu_v))$ decomposes into two terms, $C(\alpha^{\star})$ and $M(\alpha)$. The first does not depend on $\alpha$.  Since $\alpha^{\star}$ is the smallest correct model, we have $M(\alpha^{\star})\leq M(\alpha)$ for every $\alpha \in \mathcal{C}$. 

~\\
The aim of Proposition~\ref{prop3} is therefore to highlight the consequences of using (i) too few covariates or (ii) too many. While these formulas may appear somewhat abstract, we analyze them further below to obtain more interpretable conditions.
\begin{corollary}\label{Cor1}
Consider the set-up of Proposition \ref{prop3}. Assume that either of the following two cases holds:
\begin{itemize}
\item[(i)] The model is correct, i.e., $\alpha\in \cC$. 
\item[(ii)] Let $\alpha_{\mathrm{mis}} := \alpha^{\star}\cap \alpha^c$ denote the set of correct covariates, missing in $\alpha.$ Assume the following conditions:
\begin{itemize}
\item[(a)] The correct missing covariates do not influence $p$,  i.e., for all $j \in \alpha_{\mathrm{mis}}$, we have $\left[ p(\bx) \indep \bx_{\alpha^c,j} \right]\, \rvert \,  \bx_{\alpha}.$
\item[(b)] The correct missing covariates are linearly linked to those included, that is,  for all $j \in \alpha_{\mathrm{mis}}$, $\mathbb{E}[ x_{\alpha^c,j} \rvert \bx_{\alpha} ] = \bx_\alpha^\top \boldsymbol{\gamma}_j,$ for some  $\boldsymbol{\gamma}_j \in \mathbb{R}^{p_\alpha}.$
\end{itemize}
\end{itemize}
Then, condition \eqref{condBias} holds and thus $ \widehat{\mu}_{\alpha,v} -\mu_v\xrightarrow[v \to \infty]{\mathbb{P}} 0.$
\end{corollary}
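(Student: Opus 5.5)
The plan is to verify condition \eqref{condBias} directly in each case and then invoke Proposition~\ref{prop3}(i), which already gives the equivalence between \eqref{condBias} and consistency of $(\widehat{\mu}_{\alpha,v})_{v\in\N}$ under the stated assumptions. Throughout, I use that $\mathbb{E}_\bx[p(\bx)\bx_\alpha\bx_\alpha^\top]$ is invertible. This follows from \ref{S1}, \ref{S2} and positivity $p(\bx)\ge\rho$, since $\mathbb{E}_\bx[p(\bx)\bx_\alpha\bx_\alpha^\top]\succeq \rho\,\mathbb{E}_\bx[\bx_\alpha\bx_\alpha^\top]\succ 0$. All expectations below are finite by the boundedness in \ref{S2}.

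\textbf{Case (i).} If $\alpha\in\cC$, then $\pmb{\beta}_{\alpha^c}=\mathbf{0}$ by definition of a correct model. Both sides of \eqref{condBias} are then linear in $\pmb{\beta}_{\alpha^c}$ and hence vanish, so the condition holds trivially.

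\textbf{Case (ii).} First reduce to the missing correct covariates. Every index $j\in\alpha^c\setminus\alpha_{\mathrm{mis}}$ lies outside $\alpha^\star$, so $\beta_j=0$. Hence
\[
\bx_{\alpha^c}^\top\pmb{\beta}_{\alpha^c}=\sum_{j\in\alpha_{\mathrm{mis}}}\beta_j x_{\alpha^c,j}.
\]
Both sides of \eqref{condBias} are linear in this quantity, so it suffices to prove, for each fixed $j\in\alpha_{\mathrm{mis}}$, the two identities
\[
\mathbb{E}_\bx[(1-p(\bx))x_{\alpha^c,j}]=\mathbb{E}_\bx[(1-p(\bx))\bx_\alpha^\top]\boldsymbol{\gamma}_j,
\qquad
\mathbb{E}_\bx[p(\bx)\bx_\alpha x_{\alpha^c,j}]=\mathbb{E}_\bx[p(\bx)\bx_\alpha\bx_\alpha^\top]\boldsymbol{\gamma}_j .
\]
Granting these, the right-hand side of \eqref{condBias} for the $j$-th term becomes
\[
\mathbb{E}_\bx[(1-p)\bx_\alpha^\top]\,\mathbb{E}_\bx[p\bx_\alpha\bx_\alpha^\top]^{-1}\mathbb{E}_\bx[p\bx_\alpha\bx_\alpha^\top]\boldsymbol{\gamma}_j=\mathbb{E}_\bx[(1-p)\bx_\alpha^\top]\boldsymbol{\gamma}_j .
\]
By the first identity this equals the left-hand side. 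Multiplying by $\beta_j$ and summing over $j\in\alpha_{\mathrm{mis}}$ then yields \eqref{condBias}.

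Both identities come from one tower-property step, which I expect to be the only delicate point. Condition (a) says that $x_{\alpha^c,j}$ and $p(\bx)$ are conditionally independent given $\bx_\alpha$. Therefore, for any bounded measurable $g$ and $h$,
\[
\mathbb{E}[x_{\alpha^c,j}\,g(\bx_\alpha)h(p(\bx))]=\mathbb{E}\big[g(\bx_\alpha)h(p(\bx))\,\mathbb{E}[x_{\alpha^c,j}\mid \bx_\alpha,p(\bx)]\big].
\]
The conditional expectation satisfies
\[
\mathbb{E}[x_{\alpha^c,j}\mid \bx_\alpha,p(\bx)]=\mathbb{E}[x_{\alpha^c,j}\mid\bx_\alpha]=\bx_\alpha^\top\boldsymbol{\gamma}_j,
\]
where the first equality uses (a) and the second uses (b). The first identity follows by taking $g\equiv1$ and $h(t)=1-t$. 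The second follows, componentwise, by taking $g$ to be a coordinate of $\bx_\alpha$ and $h(t)=t$. The needed boundedness is supplied by \ref{S2} and $0\le p\le1$.

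The rigorous point to check is that (a) is read as conditional independence of the random variable $p(\bx)$ (a $\sigma(\bx)$-measurable quantity) from $x_{\alpha^c,j}$ given $\bx_\alpha$. This is exactly what licenses replacing conditioning on $(\bx_\alpha,p(\bx))$ by conditioning on $\bx_\alpha$ alone. Once \eqref{condBias} is established, Proposition~\ref{prop3}(i) gives $\widehat{\mu}_{\alpha,v}-\mu_v\to0$ in probability.
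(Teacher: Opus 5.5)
Your proof is correct and takes the same route as the paper. Case (i) is immediate from $\pmb{\beta}_{\alpha^c}=\mathbf{0}$. In case (ii), you reduce by linearity to each $j\in\alpha_{\mathrm{mis}}$ and establish $\mathbb{E}_\bx[(1-p(\bx))x_{\alpha^c,j}]=\mathbb{E}_\bx[(1-p(\bx))\bx_\alpha^\top]\boldsymbol{\gamma}_j$ and $\mathbb{E}_\bx[p(\bx)\bx_\alpha x_{\alpha^c,j}]=\mathbb{E}_\bx[p(\bx)\bx_\alpha\bx_\alpha^\top]\boldsymbol{\gamma}_j$, exactly as the paper does; your tower-property step through $\mathbb{E}[x_{\alpha^c,j}\mid\bx_\alpha,p(\bx)]=\bx_\alpha^\top\boldsymbol{\gamma}_j$ just makes explicit what the paper compresses into ``using both the conditional independence (a) and the linear link (b).''
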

\begin{proof}
See Appendix \ref{proof_Cor1}.
\end{proof}
~\\
To obtain a consistent estimator, it is therefore sufficient to either: (i) include all covariates related to $Y$; (ii) include all covariates that are related to both $Y$ and the probability of response, modulo condition (b). The second condition is more difficult to interpret, yet relatively weak, but essential. It states that the omitted covariates relevant to $Y$ must be linearly related to those included. Although this may at first appear to be a technical assumption introduced to simplify the proof, it is in fact required for \eqref{condBias} to hold, as illustrated in the next example.

\begin{example}
\textbf{(i).}  
Recall that a missing data mechanism is said to be Missing Completely At Random (MCAR) if $r_k \indep y_k$. Then, condition (a) is trivially satisfied since $r_k$ must be independent of $\bx$. If the covariates are independent, then condition (b) reduces to requiring the intercept to be included in $\alpha$. This requirement is necessary, as illustrated in the following example.

~\\
Consider the case where $p=2$ with $\bx^\top = [x_1, x_2] =[1, x]$, with true model $\alpha^{\star} = \{1, 2\}$ and candidate $\alpha = \{2\}$, and suppose that $p(\bx) = \bar{p}$, $0<\bar{p}\le 1$. Then,  $\alpha_{\mathrm{mis}} = \{1\}$. Condition (a) is satisfied but condition (b) is not since $\mathbb{E}\left[ 1 \rvert \bx_{\alpha} \right] = 1$ and, because the intercept is not included, there does not exist a constant $\gamma$ such that $\gamma$ such that $1 = x \gamma$ almost surely, unless $x$ is almost surely constant. Condition (b) is indeed needed since it can be shown that $$ \widehat{\mu}_{\alpha,v} -\mu_v\xrightarrow[v \to \infty]{\mathbb{P}} -\beta_0 \left( 1-\bar{p}\right) \dfrac{\mathbb{V}_\bx\left(x\right)}{\mathbb{E}_\bx \left[x^2\right]},$$ where $\beta_0$ denotes the true intercept coefficient. This asymptotic bias is non-zero whenever $\beta_0 \neq 0$, $\bar{p} \neq 1$ and $\mathbb{V}_\bx(x)\neq 0$; note that if $x$ has zero variance then there exists $\gamma$ such that $1 = x \gamma$, and consistency holds. 

~\\
\textbf{(ii).} If the intercept is included and $\bx$ has an elliptically symmetric distribution (e.g., multivariate Gaussian), or if its components are independent, then condition (b) is always satisfied and (a) is sufficient to ensure consistency. 
\end{example}
~\\
In the following corollary, we examine the behavior of the asymptotic variance obtained in statement (ii) of Proposition \ref{prop3}.

\begin{corollary}\label{Cor2}
Consider the setup of Proposition \ref{prop3} with $\alpha_1, \alpha_2 \in \cC$ such that $\alpha_1\subset \alpha_2$. 
Consider the following assumptions.
\begin{itemize}
\item[(i)] Assume that:
\begin{enumerate}
    \item[(a)] For all $j \in \alpha_{2}-\alpha_1$, $\left[ p(\bx) \indep x_{\alpha_2-\alpha_1,j} \right]\, \rvert \,  \bx_{\alpha_1}.$
\item[(b)] For all $j \in \alpha_{2}-\alpha_1$, $\mathbb{E}\left[ x_{\alpha_2-\alpha_1,j} \rvert \bx_{\alpha_1} \right] = \bx_{\alpha_1}^\top \boldsymbol{\gamma}_j,$ for some  $\boldsymbol{\gamma}_j \in \mathbb{R}^{p_{\alpha_1}}.$
\end{enumerate}
Then, \eqref{eq:cond2} holds and the asymptotic variances of the models based on $\alpha_1$ and $\alpha_2$ are the same.

\item[(ii)] Assume that there exists a direction $\mathbf{c}\in \R^{\alpha_2-\alpha_1}$ such that 
\begin{enumerate}
    \item[(a)] $\E_\bx[p(\bx) \mathbf{c}^\top \bx_{\alpha_2-\alpha_1} \bx_{\alpha_1}^\top] = \mathbf{0}.$
    \item[(b)] $\E_\bx[(1-p(\bx)) \mathbf{c}^\top \bx_{\alpha_2-\alpha_1}] \neq  0.$
\end{enumerate}
Then, \eqref{eq:cond2} does not hold and $\alpha_2$ has a strictly greater variance than $\alpha_1.$
\end{itemize}
\end{corollary}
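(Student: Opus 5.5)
Both parts reduce to Proposition~\ref{prop3}(ii)(b). It says $M(\alpha_1)\le M(\alpha_2)$, with equality if and only if \eqref{eq:cond2} holds. It also gives $\mathbb{AV}=C(\alpha^\star)+\sigma^2 M(\alpha)/\pi$ with $C(\alpha^\star)$ common to both models. So in (i) it suffices to verify \eqref{eq:cond2}, and in (ii) to show it fails.

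Write $\bz:=\bx_{\alpha_2-\alpha_1}$, $\boldsymbol{A}:=\E_\bx[p(\bx)\bx_{\alpha_1}\bx_{\alpha_1}^\top]$, and $\boldsymbol{\Gamma}:=[\boldsymbol{\gamma}_j]_j$, so that (b) of part (i) reads $\E[\bz\mid\bx_{\alpha_1}]=\boldsymbol{\Gamma}^\top\bx_{\alpha_1}$.

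For part (i), first use (a): $p(\bx)$ is conditionally independent of $\bz$ given $\bx_{\alpha_1}$. Hence for any function $g(p)$ (here $g=p$ or $1-p$),
\[
\E[g(p(\bx))\bz\mid\bx_{\alpha_1}]=\E[g(p(\bx))\mid\bx_{\alpha_1}]\,\E[\bz\mid\bx_{\alpha_1}]=\E[g(p(\bx))\mid\bx_{\alpha_1}]\,\boldsymbol{\Gamma}^\top\bx_{\alpha_1},
\]
using (b) in the second step. Taking expectations via the tower property gives
\[
\E[(1-p)\bz]=\boldsymbol{\Gamma}^\top\E[(1-p)\bx_{\alpha_1}],\qquad
\E[(1-p)\bz\bx_{\alpha_1}^\top]=\boldsymbol{\Gamma}^\top\E[(1-p)\bx_{\alpha_1}\bx_{\alpha_1}^\top].
\]
The same computation with weight $p$ gives $\E[p\,\bz\bx_{\alpha_1}^\top]=\boldsymbol{\Gamma}^\top\boldsymbol{A}$.

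The right-hand side of \eqref{eq:cond2} becomes $\boldsymbol{\Gamma}^\top\E[(1-p)\bx_{\alpha_1}\bx_{\alpha_1}^\top]\boldsymbol{A}^{-1}\E[p\,\bx_{\alpha_1}]$. To match it with $\boldsymbol{\Gamma}^\top\E[(1-p)\bx_{\alpha_1}]$, I need
\[
\E[(1-p)\bx_{\alpha_1}\bx_{\alpha_1}^\top]\boldsymbol{A}^{-1}\E[p\,\bx_{\alpha_1}]=\E[(1-p)\bx_{\alpha_1}].
\]
This holds whenever the intercept (or any constant direction) lies in the span of $\bx_{\alpha_1}$. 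In that case there is $\mathbf{e}$ with $\bx_{\alpha_1}^\top\mathbf{e}\equiv 1$. Then $\boldsymbol{A}\mathbf{e}=\E[p\,\bx_{\alpha_1}]$, so $\boldsymbol{A}^{-1}\E[p\,\bx_{\alpha_1}]=\mathbf{e}$. Consequently
\[
\E[(1-p)\bx_{\alpha_1}\bx_{\alpha_1}^\top]\mathbf{e}=\E[(1-p)\bx_{\alpha_1}],
\]
which is the required identity.

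This is the main obstacle. The identity needs a constant in the span of $\bx_{\alpha_1}$, which is not stated explicitly. I would try to derive it from (b) when $\alpha_2-\alpha_1$ contains the intercept. If the intercept were not in $\alpha_1$, then (b) applied to the constant coordinate $z\equiv1$ gives $1=\bx_{\alpha_1}^\top\boldsymbol{\gamma}$ a.s., again yielding such an $\mathbf{e}$. Otherwise I would invoke the convention, as in the preceding example, that the intercept is included in $\alpha_1$ and state this explicitly. Once \eqref{eq:cond2} holds, Proposition~\ref{prop3}(ii) gives $M(\alpha_1)=M(\alpha_2)$ and hence equal asymptotic variances.

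For part (ii), I argue by contradiction: suppose \eqref{eq:cond2} holds and left-multiply it by $\mathbf{c}^\top$. By (a), $\mathbf{c}^\top\E[p\,\bz\bx_{\alpha_1}^\top]=\mathbf{0}$.

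I again need the right-hand side to vanish. The natural route is to use the intercept direction $\mathbf{e}$ as above, writing $\E[p\,\bx_{\alpha_1}]=\boldsymbol{A}\mathbf{e}$. Then (a) gives $\E[p\,\mathbf{c}^\top\bz]=\mathbf{c}^\top\E[p\,\bz\bx_{\alpha_1}^\top]\mathbf{e}=0$.

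Rewriting the right-hand side of \eqref{eq:cond2} with $\boldsymbol{A}\mathbf{e}$ in place of $\E[p\,\bx_{\alpha_1}]$ collapses it to $\mathbf{c}^\top\E[(1-p)\bz\bx_{\alpha_1}^\top]\mathbf{e}=\E[(1-p)\mathbf{c}^\top\bz]$. This would make it equal to the left-hand side identically, so this route does not produce a contradiction.

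Hence the intended argument must instead use orthogonality to decouple the two sides. I would rewrite $\E[(1-p)\bz\bx_{\alpha_1}^\top]=\E[\bz\bx_{\alpha_1}^\top]-\E[p\,\bz\bx_{\alpha_1}^\top]$, so that after multiplying by $\mathbf{c}^\top$ the right-hand side becomes $\mathbf{c}^\top\E[\bz\bx_{\alpha_1}^\top]\boldsymbol{A}^{-1}\E[p\,\bx_{\alpha_1}]$. I would then aim to show this term is zero under (a), leaving $\E[(1-p)\mathbf{c}^\top\bz]=0$, which contradicts (b). Making this step rigorous, and pinning down exactly which intercept or orthogonality convention it requires, is where I expect the real work.

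Given a contradiction, \eqref{eq:cond2} fails. Proposition~\ref{prop3}(ii)(b) then gives $M(\alpha_1)<M(\alpha_2)$, so $\alpha_2$ has strictly larger asymptotic variance.
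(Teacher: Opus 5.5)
Your part (ii) is left unfinished, and the route you sketch for it cannot be completed. In your setting there is a constant in the span of $\bx_{\alpha_1}$, so $\boldsymbol{A}^{-1}\E[p\,\bx_{\alpha_1}]=\mathbf{e}$. There the term you hope to show is zero equals
\[
\mathbf{c}^\top\E[\bz\bx_{\alpha_1}^\top]\,\mathbf{e}=\E[\mathbf{c}^\top\bz]=\E[(1-p)\,\mathbf{c}^\top\bz]+\E[p\,\mathbf{c}^\top\bz]=\E[(1-p)\,\mathbf{c}^\top\bz]\neq 0,
\]
because (a) forces $\E[p\,\mathbf{c}^\top\bz]=0$ and (b) gives the last inequality. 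This is your own earlier observation in another form: \eqref{eq:cond2} as printed holds identically once $\alpha_1$ contains the intercept, so no hypothesis can make it fail.

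The paper's proof of (ii) also uses the printed condition. It gets a vanishing right-hand side by asserting $\E[(1-p)\,\mathbf{c}^\top\bz\bx_{\alpha_1}^\top]=\mathbf{0}$ ``by (a)''. But (a) concerns the $p$-weighted moment. With an intercept, multiplying that asserted identity by $\mathbf{e}$ would even contradict (b). So there is no ``intended'' decoupling to recover along these lines.

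The missing idea is to read the equality condition directly off the block inversion, exactly as in \eqref{prooflm2} but with population moments. Set $\mathbf{a}=\E[(1-p)\bx_{\alpha_1}]$ and $\boldsymbol{B}=\E[p\,\bx_{\alpha_1}\bz^\top]$. Then $M(\alpha_2)-M(\alpha_1)=\mathbf{d}^\top\boldsymbol{S}^{-1}\mathbf{d}$, where $\boldsymbol{S}\succ 0$ is the Schur complement of $\boldsymbol{A}$ in $\E[p\,\bx_{\alpha_2}\bx_{\alpha_2}^\top]$ and $\mathbf{d}=\E[(1-p)\bz]-\boldsymbol{B}^\top\boldsymbol{A}^{-1}\mathbf{a}$. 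Hence equality holds if and only if
\[
\E[(1-p)\bz]=\E[p\,\bz\bx_{\alpha_1}^\top]\,\boldsymbol{A}^{-1}\,\E[(1-p)\bx_{\alpha_1}].
\]
This is \eqref{eq:cond2} with $p$ and $1-p$ exchanged in the two outer factors. It is exactly the identity the paper checks, coordinate by coordinate, in its proof of (i). With this form both parts are immediate.

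For (ii), multiply by $\mathbf{c}^\top$. The right side vanishes since $\mathbf{c}^\top\E[p\,\bz\bx_{\alpha_1}^\top]=\mathbf{0}$ by (a), and the left side is nonzero by (b). So $\mathbf{d}\neq\mathbf{0}$ and $M(\alpha_1)<M(\alpha_2)$.

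For (i), use the identities you already derived, $\E[p\,\bz\bx_{\alpha_1}^\top]=\boldsymbol{\Gamma}^\top\boldsymbol{A}$ and $\E[(1-p)\bz]=\boldsymbol{\Gamma}^\top\mathbf{a}$. They make the right side $\boldsymbol{\Gamma}^\top\boldsymbol{A}\boldsymbol{A}^{-1}\mathbf{a}=\boldsymbol{\Gamma}^\top\mathbf{a}$, equal to the left side, with no intercept needed. Your version of (i) instead verifies the printed condition, which under an intercept holds regardless of (a) and (b). It therefore never really uses the hypotheses, and it adds an intercept assumption that the statement does not contain.
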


\begin{proof}
    See Appendix \ref{proofCoro2}.
\end{proof}
~\\
In the above corollary, we make explicit some sufficient conditions under which one "pays a price" when adding superfluous covariates, and when one does not. Specifically, in part (i), our conditions mean that: (a) the covariates added in $\alpha_2$ (compared to $\alpha_1$) do not explain the nonresponse mechanism, given the covariates in $\alpha_1$. This occurs for instance when $p(\bx) = p(\bx_{\alpha_1})$. Condition (b) means that the addtional covariates are linearly related to those already included in $\alpha_1$. Together, these conditions describe a situation in which the covariates in $\alpha_2-\alpha_1$ are both "uninformative, in mean"  and "irrelevant" for the nonresponse mechanism, once the covariates in $\alpha_1$ are included. For example, consider the true model $\alpha^\star=\{1,2\}$ with $X_1=1$, and suppose that $X_2$ and $X_3$ are independent. Then condition (b) is satisfied, and if $p(\bx)=p(X_1,X_2)$, using $X_1$, $X_2$, and $X_3$ leads to the same asymptotic variance as using only $X_1$ and $X_2$.  In part (ii), we move in the opposite direction  and provide an example where adding superfluous covariates increases the variance. To better interpret these conditions, consider a new random variable $z = \mathbf{c}^\top \bx_{\alpha_2-\alpha_1}$, representing a direction in the space spanned by the added covariates $\alpha_2 - \alpha_1$. The first condition means that $z$ is a new direction, not captured by the covariates in $\alpha_1$, among respondents, while the second requires that there be a nonzero nonresponse signal along this direction.

\section{A methodology for asymptotically optimal variable selection}\label{Section4}
Theorem~\ref{Theo1} establishes that the true model is optimal for imputation under the loss function~$\mathcal{L}$. In practice, however, the true model is unknown, and the corresponding imputed estimator $\widehat{\mu}_{\alpha^{\star}}$ serves only as an oracle benchmark.  In this section, we  develop a practical procedure that attains the same asymptotic efficiency as this oracle imputed estimator and yields asymptotically valid confidence intervals.
\subsection{Asymptotic equivalence with the oracle}
Model selection for identification (i.e., recovering the true non-zero coefficients of a linear regression model) has been extensively studied for i.i.d. data; we refer the reader to \cite{shao1993linear, shao1997asymptotic} or \cite{rao2001model} for a textbook discussion. We call a model selection criterion any (possibly data dependent) measurable map $\cC_n: \cA \to \R$ used to select a model $\alpha\in \cA$, that is, to choose $\widehat{\alpha}_{\cC_n}\in \cA$ satisfying $$\widehat{\alpha}_{\cC_n} \in \argmin_{\alpha\in \cA} \cC_n\left(\alpha\right).$$ Common model selection criteria for linear models include the Akaike Information Criterion (AIC, \cite{akaike1970statistical}), the Bayesian Information Criterion (BIC, \cite{schwarz1978estimating}), cross-validation, among many others. On i.i.d. data, a model selection criterion $\cC_n$ is said to be consistent if $$\lim_{n \to \infty} \mathbb{P}_m \left(\widehat{\alpha}_{\cC_n} = \alpha^{\star} \right)=1$$ almost surely. That is, as the sample size $n$ diverges to infinity, the model selection tends to only select the true model. Not all model selection criteria are consistent; for instance, under appropriate conditions, the BIC criterion is consistent, while leave-one-out cross-validation is not. The next lemma formalizes that, under the MAR and non-informativeness assumptions, if $\cC_n$ is consistent on i.i.d. data, then it is also consistent with survey data. 
\begin{lemma} \label{Lemma2}
Let $\cC_{U_v}$ be a model selection criterion fitted on $\left\{\left(\bx_k, y_k \right) : k \in U_v \right\}$ and  $\cC_{S_{r}}$ be the corresponding algorithm fitted on $\{(\bx_k, y_k ) : k \in S_{r,v} \}$. Denote by $\widetilde{\alpha}_{U_v}$ and $\widehat{\alpha}_{S_{r,v}}$ their respective minimizers. Assume \ref{S1}-\ref{S2}, \ref{D1}-\ref{D3}.        \\
If $$\lim_{v \to \infty} \mathbb{P}_{m} \left(\widetilde{\alpha}_{U_v} = \alpha^{\star}\right)  =1, \qquad a.s.$$ then $$\lim_{v \to \infty} \mathbb{P}_{mpq} \left(\widehat{\alpha}_{S_{r,v}} = \alpha^{\star}\right)  =1, \qquad a.s.$$
\end{lemma}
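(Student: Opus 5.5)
The idea is to show that, conditionally on the design and response indicators, the respondent data $\{(\bx_k,y_k):k\in S_{r,v}\}$ behave like an i.i.d. sample from the linear model \eqref{model} with the same $\pmb\beta$ and $\sigma^2$. The consistency of $\cC_n$ then transfers to this subsample, whose size diverges.

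First, I would establish the key distributional fact. Non-informativeness gives $\P(I_k=1\mid\bX_U,\boldsymbol y_U)=\P(I_k=1\mid\bX_U)$, and MAR gives $\P(r_k=1\mid\bx_k,y_k)=p(\bx_k)$. Together with the i.i.d. structure of $[\bx_k^\top,y_k,r_k]$, these imply that, conditionally on $\bX_U$, $\textbf{I}_U$ and $\textbf{r}_U$, the responses $(y_k)_{k\in S_r}$ are independent with $y_k=\bx_k^\top\pmb\beta+\epsilon_k$, $\E[\epsilon_k\mid \bx_k,I_k=r_k=1]=0$ and $\E[\epsilon_k^2\mid\cdot]=\sigma^2$. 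In other words, the conditional law of $y\mid\bx$ is unchanged among respondents, and the true support $\alpha^\star$ is the same for the respondent population. This is a short Bayes computation, using that the response indicator $r_k$ depends on $(\bx_k,y_k)$ only through $\bx_k$ and is independent of $I_k$ given $\bX_U$.

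Second, I would check that the respondent sample size diverges and that the design-matrix regularity used by the i.i.d. theory still holds. Since $n_r=\sum_{k\in U}I_kr_k$, conditions \ref{D1}--\ref{D3} together with the positivity $p(\bx)\ge\rho$ give $n_r/N_v\ge \lambda\rho/2$ with $\P_{pq}$-probability tending to one. A Chebyshev bound on $\sum_k I_kr_k$ using \ref{D3} and the i.i.d. structure of the $r_k$ suffices here. On the events of Remark~\ref{req1}, together with \ref{S1}--\ref{S2}, the matrix $n_r^{-1}\bX_r^\top\bX_r$ is bounded and has eigenvalues bounded away from zero. These are precisely the properties under which the hypothesis $\lim_v\P_m(\widetilde\alpha_{U_v}=\alpha^\star)=1$ is meant to hold.

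Third, I would apply the hypothesis conditionally. Here the criterion computed on $S_{r,v}$ is the same algorithm applied to a data set of size $n_r\to\infty$ drawn, conditionally on $(\bX_U,\textbf I_U,\textbf r_U)$, from the same model. This yields $\P_m(\widehat\alpha_{S_{r,v}}=\alpha^\star\mid \textbf I_U,\textbf r_U)\to1$ on the high-probability events. Integrating over $pq$ by dominated convergence, since the conditional probabilities are bounded by one, and adding the vanishing probability of the complement events gives the result.

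The main obstacle is making the transfer in the third step rigorous. The hypothesis concerns the full-population index sequence, whereas the respondent covariates form a random, design-dependent subsequence with a different covariate distribution (density proportional to $p(\bx)\pi$). I would handle this by reading the i.i.d. consistency statement as holding for any sequence of covariate configurations satisfying \ref{S1}--\ref{S2} and the eigenvalue bound, conditional on $\bX$, which is how such results are proved for AIC/BIC-type criteria (e.g. \cite{shao1997asymptotic}). Then, along every subsequence, a further subsequence exists on which the respondent configuration satisfies those conditions almost surely, and this gives convergence in probability of the indicator.
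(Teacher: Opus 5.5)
Your proposal follows the paper's proof: condition on $(\bX_U,\textbf{I}_U,\textbf{r}_U)$, use MAR and non-informativeness so that the respondents' responses have the product law $\prod_{k\in S_{r,v}}\P_{y_k\mid \bx_k}$, use the divergence of $n_{r,v}$ (Lemma~\ref{SetofRespondents}) to invoke the population-level consistency conditionally, and finish with dominated convergence over $pq$. The paper gets the conditional limit directly from the hypothesis and Lemma~\ref{SetofRespondents}, and never addresses the change in covariate configuration that you flag. Your reading of the hypothesis as holding for every admissible covariate configuration is therefore exactly the unstated strengthening that the paper's one-line argument also relies on.
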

\begin{proof}
See Appendix \ref{proof_Lemma2}.
\end{proof}
~\\
Lemma \ref{Lemma2} formalizes the intuition that model-selection consistency transfers from the population to the sample of respondents, provided that the sampling design and the nonresponse mechanism do not shift the conditional distribution $\mathbb{P}_m$. Although this result is quite natural, it plays an important role in the developments that follow. A natural extension of Lemma \ref{Lemma2} would be to consider informative sampling designs with appropriately weighted model selection criteria. Such criteria have been suggested, for instance, in \cite{lumley2015aic, wieczorek2022k, iparragirre2023variable}; however, to the best of our knowledge, their consistency has not yet been formally established. This question is beyond the scope of this article and will be relegated to future work. 

~\\
The first step in our methodology is to use a consistent model selection criterion to select a model $\widehat{\alpha}$. Then, to use the imputed estimator $\widehat{\mu}_{\widehat{\alpha}}$ based on the selected set of covariates $\widehat{\alpha}$. This will asymptotically lead to an optimal estimator among all possible models. 

\begin{theorem} \label{Theo2}
Let $(\widehat{\alpha}_v)_{v\in \N}$ be a sequence of models selected by a consistent model selection procedure and $(\widehat{\mu}_{\widehat{\alpha}, v})_{v\in \N}$ be the corresponding sequence of imputed estimators. Assume \ref{S1}-\ref{S2}, \ref{D1}-\ref{D3}. Then,
\begin{equation*}
\sqrt{n_v} \left(\widehat{\mu}_{\widehat{\alpha},v} -\mu_v \right)=      \sqrt{n_v} \left(\widehat{\mu}_{\alpha^{\star},v} -\mu_v \right) + o_{\mathbb{P}} (1).
\end{equation*}
\end{theorem}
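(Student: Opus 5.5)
The plan is to exploit the fact that $\widehat{\mu}_{\widehat{\alpha},v}$ and $\widehat{\mu}_{\alpha^{\star},v}$ coincide exactly on the event $\{\widehat{\alpha}_v=\alpha^{\star}\}$, whose probability tends to one. Write
\begin{equation*}
\sqrt{n_v}\left(\widehat{\mu}_{\widehat{\alpha},v}-\mu_v\right)-\sqrt{n_v}\left(\widehat{\mu}_{\alpha^{\star},v}-\mu_v\right)
=\sqrt{n_v}\left(\widehat{\mu}_{\widehat{\alpha},v}-\widehat{\mu}_{\alpha^{\star},v}\right)\mathds{1}\{\widehat{\alpha}_v\neq\alpha^{\star}\}=:R_v .
\end{equation*}
For $R_v=o_{\mathbb{P}}(1)$ it suffices that, for every $\varepsilon>0$,
\begin{equation*}
\mathbb{P}_{mpq}(|R_v|>\varepsilon)\le \mathbb{P}_{mpq}(\widehat{\alpha}_v\neq\alpha^{\star}).
\end{equation*}
This bound holds because $R_v\neq 0$ forces $\widehat{\alpha}_v\neq\alpha^{\star}$. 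No rate and no control of the magnitude of the difference on the bad event is needed, which is the usual observation that convergence in probability is insensitive to what happens on events of vanishing probability.

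The remaining task is to show $\mathbb{P}_{mpq}(\widehat{\alpha}_v\neq\alpha^{\star})\to 0$. Here I will interpret ``consistent model selection procedure'' as consistency on the respondent data. This is either assumed directly, or obtained from i.i.d.\ consistency of the population-level criterion via Lemma~\ref{Lemma2}, which holds under \ref{S1}--\ref{S2} and \ref{D1}--\ref{D3}. Lemma~\ref{Lemma2} gives the limit almost surely in the superpopulation draw, conditionally on the covariates. I will integrate over $\mathbb{P}_\bx$ using dominated convergence, since probabilities are bounded by one, to obtain the unconditional statement if needed.

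One technical point must be checked: each $\widehat{\mu}_{\alpha,v}$ has to be well defined and finite. This requires $\bX_{r,\alpha}^\top\bX_{r,\alpha}\succ 0$ for all $\alpha\in\cA$, which follows on the high-probability events of Remark~\ref{req1}, because positive definiteness of the full Gram matrix implies it for every principal submatrix. Since $\cA$ is finite, $R_v$ is then a finite, measurable random variable, because $\widehat{\alpha}_v$ is a measurable selection among finitely many candidates.

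The main obstacle is not analytic but definitional. It lies in making sure the probability measure in the consistency notion ($\mathbb{P}_{mpq}$, conditional on $\bX$) matches the one in $o_{\mathbb{P}}(1)$, and in handling the conditioning on $\mathcal{E}_{N_v}$. I will deal with the latter by adding $\mathbb{P}(\mathcal{E}_{N_v}^c)\le C_1e^{-C_2N_v}\to 0$ to the bound above.
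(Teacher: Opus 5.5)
Your proposal is correct and follows the paper's proof. The two estimators coincide on $\{\widehat{\alpha}_v=\alpha^{\star}\}$, so $\mathbb{P}_{mpq}\bigl(\sqrt{n_v}\,\lvert\widehat{\mu}_{\widehat{\alpha},v}-\widehat{\mu}_{\alpha^{\star},v}\rvert>\varepsilon\bigr)\le\mathbb{P}_{mpq}(\widehat{\alpha}_v\neq\alpha^{\star})\to 0$ by consistency; your added remarks (well-definedness via principal submatrices of $\bX_r^\top\bX_r$, measurability over the finite family $\cA$, and the conditioning on $\mathcal{E}_{N_v}$) are sound refinements that the paper leaves implicit through Remark~\ref{req1}.
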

\begin{proof}
See Appendix \ref{Proof_Theo2}.
\end{proof}
~\\
The above states that, once a model selection procedure asymptotically identifies the true model $\alpha^\star$ with probability one, the asymptotic distribution of the feasible estimator $\widehat{\mu}_{\widehat{\alpha}}$ is the same as that of the oracle estimator. In other words, we can proceed with inference as if the true model were known a priori. This legitimizes the use of standard model-selection tools, such as BIC, for survey imputation and shows that the bias or extra variability introduced by using a data-driven model $\widehat{\alpha}$ disappears asymptotically.  The key idea of the proof is that, on the event $\widehat{\alpha}_v = \alpha^{\star}$, we have $\widehat{\mu}_{\widehat{\alpha}} = \widehat{\mu}_{\alpha^\star}$. Because this event has asymptotic probability $1$, what happens on the complement does not matter.

\subsection{Consistent variance estimation}

The second step of the methodology is to perform "classical" variance estimation, based on the model $\widehat{\alpha}$ selected by a consistent model selection procedure. By classical, we mean using the same variance estimators as traditionally used, but with model $\widehat{\alpha}$ instead of all available covariates. In this article, we focus on the reverse approach \citep{fay1991design, shao1999variance}, although a similar approach with the method of Särndal \citep{sarndal1992methods} could also be used. For additional details on variance estimation with the reverse approach, we refer the reader to \cite{kim2009unified} and \cite{haziza2020variance}. 
The next theorem establishes that, for any $\alpha \in \mathcal{C}$, there exists a first-order asymptotically equivalent linear (in the sampling indicators) estimator of $\widehat{\mu}_{\alpha}$. More specifically, let $$    \widetilde{\mu}_{\alpha,v}:=\frac{1}{N_v}\sum_{k \in S_v}\frac{\eta_{k,\alpha}}{\pi_k},
$$ denote the linearized imputed estimator with $\eta_{k,\alpha}=\bx_{k,\alpha}^{\top}\pmb{\beta}_{\alpha}+r_k(1+\pi_k\textbf{c}_{\alpha,v}^{\top}\bx_{k,\alpha})\epsilon_k$ and 
\begin{equation}\label{cbold}
\textbf{c}_{\alpha,v}=\left(\sum_{k \in U_v}\frac{r_k\pi_k\bx_{k,\alpha}\bx_{k,\alpha}^{\top}}{N_v}\right)^{-1}\sum_{k \in U_v}\frac{(1-r_k)\bx_{k,\alpha}}{N_v}.
\end{equation}
The result below closely parallels Theorem~1 of \cite{kim2009unified}; however, their proof does not apply directly to our setting because we use an unweighted least-squares estimator $\widehat{\pmb{\beta}}$, whereas they rely on a weighted estimator. For completeness, we extend their theorem to unweighted regression imputation.

\begin{theorem}\label{Theo3}
Let $\alpha \in \cC$ be a correct model and consider sequences $(\widehat{\mu}_{\alpha,v})_{v\in \N}$ and $(\widetilde{\mu}_{\alpha,v})_{v\in \N}$. 
Under assumptions \ref{S1}-\ref{S3} and \ref{D1}-\ref{D3}, we have
\begin{equation}\label{kim}
\sqrt{n_v}\left(  \widehat{\mu}_{\alpha,v}-\widetilde{\mu}_{\alpha,v}\right)=o_{\mathbb{P}}(1).
\end{equation}
\end{theorem}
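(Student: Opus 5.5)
Since $\alpha\in\cC$, we have $y_k=\bx_{k,\alpha}^\top\pmb{\beta}_\alpha+\epsilon_k$. The plan is to write $\widehat{\mu}_{\alpha}$ exactly as a linear part plus a remainder, and then show the remainder is $o_{\mathbb{P}}(n_v^{-1/2})$. Write $\widehat{\pmb{\beta}}_\alpha-\pmb{\beta}_\alpha=\widehat{\boldsymbol{A}}^{-1}\widehat{\boldsymbol{b}}$, where $\widehat{\boldsymbol{A}}:=N_v^{-1}\sum_{k\in S_r}\bx_{k,\alpha}\bx_{k,\alpha}^\top$ and $\widehat{\boldsymbol{b}}:=N_v^{-1}\sum_{k\in S_r}\bx_{k,\alpha}\epsilon_k$. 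Substituting into $\widehat{\mu}_\alpha$ gives
\begin{equation*}
\widehat{\mu}_{\alpha}=\frac{1}{N_v}\sum_{k\in S}\frac{\bx_{k,\alpha}^\top\pmb{\beta}_\alpha+r_k\epsilon_k}{\pi_k}+\widehat{\boldsymbol{d}}^\top\widehat{\boldsymbol{A}}^{-1}\widehat{\boldsymbol{b}},\qquad \widehat{\boldsymbol{d}}:=\frac{1}{N_v}\sum_{k\in S}\frac{(1-r_k)\bx_{k,\alpha}}{\pi_k}.
\end{equation*}
The linearized estimator has the form $\widetilde{\mu}_{\alpha}=N_v^{-1}\sum_{k\in S}(\bx_{k,\alpha}^\top\pmb{\beta}_\alpha+r_k\epsilon_k)/\pi_k+\textbf{c}_{\alpha,v}^\top\widehat{\boldsymbol{b}}$, because $N_v^{-1}\sum_{k\in S}r_k\textbf{c}_{\alpha,v}^\top\bx_{k,\alpha}\epsilon_k=\textbf{c}_{\alpha,v}^\top\widehat{\boldsymbol{b}}$. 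Hence
\begin{equation*}
\widehat{\mu}_{\alpha}-\widetilde{\mu}_{\alpha}=\bigl(\widehat{\boldsymbol{A}}^{-1}\widehat{\boldsymbol{d}}-\textbf{c}_{\alpha,v}\bigr)^\top\widehat{\boldsymbol{b}}.
\end{equation*}
It therefore suffices to prove two facts: (a) $\sqrt{n_v}\,\widehat{\boldsymbol{b}}=O_{\mathbb{P}}(1)$, and (b) $\widehat{\boldsymbol{A}}^{-1}\widehat{\boldsymbol{d}}-\textbf{c}_{\alpha,v}=o_{\mathbb{P}}(1)$.

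For (a), condition on $\bX_U$, $\textbf{r}_U$ and $\textbf{I}_U$. By non-informativeness and MAR, the $\epsilon_k$ are independent with $\E_m[\epsilon_k\mid\cdot]=0$ and conditional variance $\sigma^2$. Then $\E_m\|\widehat{\boldsymbol{b}}\|_2^2\le \sigma^2 C_0^2 n_r/N_v^2\le\sigma^2C_0^2/N_v$ by \ref{S2}, and Chebyshev together with \ref{D1} gives (a). Only second moments are used here.

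For (b), first handle $\widehat{\boldsymbol{A}}$. By \ref{D3} and \ref{S2}, the design variance of each entry of $\widehat{\boldsymbol{A}}-\boldsymbol{A}_U$ is $O(1/n_v)$, where $\boldsymbol{A}_U:=N_v^{-1}\sum_{k\in U}r_k\pi_k\bx_{k,\alpha}\bx_{k,\alpha}^\top$ is its design expectation. The standard bound is $\V_p(N_v^{-1}\sum_U I_k z_k)\le N_v^{-2}\sum_{k,l}|\Delta_{kl}||z_k||z_l|\le \bar{\Delta}C^2/n_v$. Next, on $\mathcal{E}_{N_v}$ (Remark~\ref{req1}) and using \ref{D2}, we get $\lambda_{min}(\boldsymbol{A}_U)\ge\lambda\gamma>0$. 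Continuity of matrix inversion on this set then gives $\widehat{\boldsymbol{A}}^{-1}-\boldsymbol{A}_U^{-1}=o_{\mathbb{P}}(1)$, and similarly $\widehat{\boldsymbol{A}}^{-1}=O_{\mathbb{P}}(1)$ through the eigenvalue bound $\lambda_{min}(\widehat{\boldsymbol{A}})\ge\lambda\gamma/2$ with probability tending to one. The same design-variance argument gives $\widehat{\boldsymbol{d}}-N_v^{-1}\sum_{U}(1-r_k)\bx_{k,\alpha}=o_{\mathbb{P}}(1)$, since $\pi_k/\pi_k=1$. Combining these, $\widehat{\boldsymbol{A}}^{-1}\widehat{\boldsymbol{d}}\to\boldsymbol{A}_U^{-1}N_v^{-1}\sum_U(1-r_k)\bx_{k,\alpha}=\textbf{c}_{\alpha,v}$ in probability, which is (b). Finally, the product of $o_{\mathbb{P}}(1)$ and $O_{\mathbb{P}}(n_v^{-1/2})$ yields \eqref{kim}.

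The main obstacle is the matrix-inverse step. It requires uniformly controlling the smallest eigenvalue of the sample-based $\widehat{\boldsymbol{A}}$, which involves only respondents in $S$, whereas Remark~\ref{req1} gives the eigenvalue bound only at the population level. I would handle this with the Weyl inequality $|\lambda_{min}(\widehat{\boldsymbol{A}})-\lambda_{min}(\boldsymbol{A}_U)|\le\|\widehat{\boldsymbol{A}}-\boldsymbol{A}_U\|_F$, combined with the entrywise $O_{\mathbb{P}}(n_v^{-1/2})$ bound and the lower bound $\pi_k\ge\lambda$. The remaining probabilistic statements are conditional on $(\bX_U,\textbf{r}_U)$ and hold almost surely. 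They pass to $\mathbb{P}_{mpq}$ by dominated convergence, since conditional probabilities are bounded by $1$.
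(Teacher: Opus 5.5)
Your proof is correct, and it takes a more direct and elementary route than the paper. The paper follows \cite{kim2009unified}. It embeds $\widehat{\mu}_{\alpha}$ in the family $\widehat{\mu}_v(\pmb{\beta}_\alpha)+\pmb{\gamma}^\top\widehat{U}(\pmb{\beta}_\alpha)$ and chooses $\pmb{\gamma}^\star=\textbf{c}_{\alpha,v}$ so that the expected derivative in $\pmb{\beta}_\alpha$ vanishes. It then invokes \cite{randles1982asymptotic}, together with $\sqrt{N_v}(\widehat{\pmb{\beta}}_{\alpha,v}-\pmb{\beta}_\alpha)=\mathcal{O}_{\mathbb{P}}(1)$ from Lemma~\ref{BetaEst}, to conclude that replacing $\widehat{\pmb{\beta}}_\alpha$ by $\pmb{\beta}_\alpha$ costs only $o_{\mathbb{P}}(n_v^{-1/2})$.

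You instead use the fact that the imputed estimator is affine in $\widehat{\pmb{\beta}}_\alpha$, which lets you compute that remainder exactly. Your identity is precisely $\widetilde{\mu}_v(\textbf{c}_{\alpha,v},\widehat{\pmb{\beta}}_\alpha)-\widetilde{\mu}_v(\textbf{c}_{\alpha,v},\pmb{\beta}_\alpha)$. Moreover, $\widehat{\boldsymbol{A}}^{-1}\widehat{\boldsymbol{d}}$ is exactly the estimator $\widehat{\textbf{c}}_\alpha$ of \eqref{cwidehat}. So your decomposition reads $\widehat{\mu}_\alpha-\widetilde{\mu}_\alpha=(\widehat{\textbf{c}}_\alpha-\textbf{c}_{\alpha,v})^\top\widehat{\boldsymbol{b}}$, and your step (b) is the paper's Lemma~\ref{Cest}.

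\textbf{What your route buys.} Every step is checkable, with no black-box theorem whose regularity conditions go unverified. Only second moments of $\epsilon$ are needed. Your bounds actually give the sharper rate $\sqrt{n_v}(\widehat{\mu}_\alpha-\widetilde{\mu}_\alpha)=\mathcal{O}_{\mathbb{P}}(n_v^{-1/2})$.

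\textbf{What the paper's route buys.} It explains where $\textbf{c}_{\alpha,v}$ comes from, as the choice that zeroes the expected derivative. It also extends to imputation models defined by nonlinear estimating equations, where no exact identity is available.

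\textbf{Two small points to tidy.}
\begin{enumerate}
\item The eigenvalue bound on $\mathcal{E}_{N_v}$ concerns the full covariate vector. You pass to $\bx_{k,\alpha}$ via the fact that a principal submatrix has smallest eigenvalue at least as large as that of the full matrix.
\item If \ref{D3} is read, as is standard, only for $k\neq l$, the diagonal terms satisfy $N_v^{-2}\sum_k\pi_k(1-\pi_k)z_k^2\le C^2/(4N_v)$. This is still $\mathcal{O}(n_v^{-1})$ by \ref{D1}.
\end{enumerate}
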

\begin{proof}
See Appendix \ref{proof_Theo3}.
\end{proof}

To perform variance estimation, consider the following decomposition of $V_T(\alpha)$, the variance of $\widetilde{\mu}_{\alpha}-\mu$:
\begin{align*}
V_T(\alpha)= \mathbb{V}_{mpq}(\widetilde{\mu}_{\alpha}-\mu)& \, \, =\, \mathbb{E}_q\left[\mathbb{E}_m\left[\mathbb{V}_p(\widetilde{\mu}_{\alpha}-\mu)\right]\right]+\mathbb{E}_q\left[\mathbb{V}_m\left(\mathbb{E}_p[\widetilde{\mu}_{\alpha}-\mu]\right)\right]\\
&\quad+\mathbb{V}_{q}\left(\mathbb{E}_m\left[\mathbb{E}_p[\widetilde{\mu}_{\alpha}-\mu]\right]\right)\\&:=V_1(\alpha)+V_2(\alpha)+V_3(\alpha).
\end{align*}
Noting that for $\alpha\in \cC$,
\begin{equation*}
\mathbb{E}_{mp}[\widetilde{\mu}_{\alpha}-\mu ]=0,
\end{equation*}
it follows that $$\mathbb{V}_{mpq}(\widetilde{\mu}_{\alpha}-\mu)=V_1(\alpha)+V_2(\alpha).$$
We start by estimating $\pmb{\beta}_{\alpha}$ by $\widehat{\pmb{\beta}}_\alpha$ and $\boldsymbol{c}_\alpha$ with 
\begin{equation}\label{cwidehat}
\widehat{\textbf{c}}_{\alpha}=\left(\sum_{k \in S}\frac{r_k\bx_{k,\alpha}\bx_{k,\alpha}^{\top}}{N}\right)^{-1}\sum_{k \in S}\frac{(1-r_k)\bx_{k,\alpha}}{N\pi_k}.
\end{equation}
~\\
The estimator $\widehat{V}_{1}(\alpha)$ of $V_1(\alpha)$ is defined as  
\begin{equation*}
\widehat{V}_1(\alpha)=\frac{1}{N^2}\sum_{k \in S}\sum_{l \in S}\frac{\Delta_{kl}}{\pi_{kl}}\frac{\widehat{\eta}_k}{\pi_k}\frac{\widehat{\eta}_l}{\pi_l},
\end{equation*}
with \begin{equation}\label{etawidehat}
\widehat{\eta}_{k,\alpha}=\bx_{k,\alpha}^{\top}\widehat{\pmb{\beta}}_{\alpha}+r_k\left(1+\pi_k\widehat{\textbf{c}}_{\alpha}^{\top}\bx_{k,\alpha}\right)\left(y_k-\bx_{k,\alpha}^{\top}\widehat{\pmb{\beta}}_{\alpha}\right), \qquad k\in S.
\end{equation}
~\\
To estimate $V_2(\alpha)$, we start by estimating $\sigma^2$ with \begin{equation}\label{sigmaest}
\widehat{\sigma}_{\alpha}^2=\sum_{k \in S}\frac{r_k\left(y_k-\bx_{k,\alpha}^{\top}\widehat{\pmb{\beta}}_{\alpha}\right)^2}{n_r-p_{\alpha}}.
\end{equation}
We then estimate $V_2(\alpha)$ by
\begin{equation*}
\widehat{V}_2(\alpha)=\widehat{\sigma}_{\alpha}^2\sum_{k \in S}\frac{1-r_k+r_k\left(\pi_k\widehat{\textbf{c}}_{\alpha}^{\top}\bx_{k,\alpha}\right)^2}{N^2\pi_k}.
\end{equation*}
Finally, for $\alpha\in \cC$, an estimator of the total variance, $\V_{mpq}(\widehat{\mu}_\alpha - \mu ),$ is given by $\widehat{V}_T(\alpha)=\widehat{V}_1(\alpha)+\widehat{V}_2(\alpha)$. This is, for a fixed $\alpha \in \cC$, the customary variance estimator obtained via the reverse approach; see, e.g., \cite{kim2009unified, haziza2020variance} for additional details. Although widely used, the consistency of the variance estimator $\widehat{V}_T$ has, to our knowledge, not been rigorously established. The next result provides a formal justification. To this end, we impose the following additional assumption on the sampling design:
\begin{enumerate}[label=({D\arabic*}), leftmargin=2.2em, start=4]
\item \label{D4}    Let $D_{4,N_v}$ be the set of the distinct of 4-tuples from $U_v$. The sampling design satisfies
\begin{equation*}
\lim_{v \rightarrow \infty}\max_{(i,j,k,l)\in D_{4,N_v}}\rvert(I_iI_j-\pi_{ij})(I_kI_l-\pi_{kl})\rvert=0,
\end{equation*}
almost surely.
\end{enumerate}
This assumption is commonly used to establish the consistency of the Horvitz-Thompson variance estimator, see, e.g., \cite{breidt2000local}.

\begin{theorem}\label{Theo4}
Let $\alpha \in \cC$ and consider a sequence of variance estimators $\{\widehat{V}_{T,v}(\alpha)\}_{v \in \N}$. Assume \ref{S1}-\ref{S3} and \ref{D1}-\ref{D4}, and that there exists $c_{1,\alpha}$ and $c_{2,\alpha}$ such that
\begin{equation*}
N_v\mathbb{V}_p\left[(\tilde{\mu}_{\alpha,v}-\mu_v)\right]\xrightarrow[v \rightarrow \infty]{\mathbb{P}}c_{1,\alpha},\qquad N_v\mathbb{V}_{m}\left(\mathbb{E}_p\left[\tilde{\mu}_{\alpha,v}-\mu_v \right] \right)\xrightarrow[v\rightarrow\infty]{\P}c_{2,\alpha}.
\end{equation*}

Then, we have
\begin{equation*}
n_v\left\rvert\widehat{V}_{T,v}(\alpha)-V_{T,v}(\alpha)\right\rvert=o_{\mathbb{P}}(1).
\end{equation*}
\end{theorem}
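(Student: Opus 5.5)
Since $n_v/N_v\to f^\star>0$ by \ref{D1}, it suffices to show $N_v|\widehat V_1(\alpha)-V_1(\alpha)|=o_{\mathbb P}(1)$ and $N_v|\widehat V_2(\alpha)-V_2(\alpha)|=o_{\mathbb P}(1)$ separately. The assumed limits give $N_vV_1(\alpha)\to c_{1,\alpha}$ (after taking $\E_{mq}$ and using boundedness from \ref{S2}, \ref{S3}, \ref{D2}, \ref{D3} to pass expectations to the limit by dominated convergence) and $N_vV_2(\alpha)\to c_{2,\alpha}$. It therefore suffices to show $N_v\widehat V_1(\alpha)\to c_{1,\alpha}$ and $N_v\widehat V_2(\alpha)\to c_{2,\alpha}$ in probability.

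\textbf{The term $\widehat V_1$.} I split it in two steps. First, introduce the infeasible estimator $\widetilde V_1(\alpha)=N^{-2}\sum_{k,l\in S}\frac{\Delta_{kl}}{\pi_{kl}}\frac{\eta_{k,\alpha}\eta_{l,\alpha}}{\pi_k\pi_l}$, which is the Horvitz--Thompson variance estimator of the population total of the (fixed given $m,q$) values $\eta_{k,\alpha}$. By \ref{S2}--\ref{S3} the $\eta_{k,\alpha}$ have bounded fourth conditional moments, and $\mathbf c_{\alpha,v}$ is bounded on $\mathcal E_{N_v}$ (Remark \ref{req1}). The standard argument of \cite{breidt2000local} then applies: expand $\E_p[(N\widetilde V_1-N\V_p(\widetilde\mu_\alpha))^2]$ into sums over pairs of pairs, and use \ref{D2}, \ref{D3} together with \ref{D4} (distinct 4-tuples) to control it. This gives $N(\widetilde V_1-\V_p(\widetilde\mu_\alpha-\mu))\to0$ in $L^1$ conditionally, hence in probability.

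Second, bound $N|\widehat V_1-\widetilde V_1|$. Write $\widehat\eta_k-\eta_k=\bx_{k,\alpha}^\top(\widehat{\bbeta}_\alpha-\bbeta_\alpha)(1-r_k-r_k\pi_k\widehat{\mathbf c}_\alpha^\top\bx_{k,\alpha})+r_k\pi_k(\widehat{\mathbf c}_\alpha-\mathbf c_\alpha)^\top\bx_{k,\alpha}\epsilon_k$, and use $\widehat{\bbeta}_\alpha-\bbeta_\alpha=O_{\mathbb P}(n^{-1/2})$ and $\widehat{\mathbf c}_\alpha-\mathbf c_{\alpha,v}=o_{\mathbb P}(1)$. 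The latter is a ratio of HT-type sums; it follows from a design LLN, since the variance of such sums is $O(n^{-1})$ under \ref{D3}. Note that I must reconcile the $\pi_k$ placement: $\sum_S r_k\bx\bx^\top/N$ estimates $\sum_U r_k\pi_k\bx\bx^\top/N$, which is consistent. Then $\widehat V_1-\widetilde V_1$ is a bilinear form in $(\widehat\eta-\eta,\eta)$ and $(\widehat\eta-\eta,\widehat\eta-\eta)$ with kernel $\Delta_{kl}/(\pi_{kl}\pi_k\pi_l)$. A Cauchy--Schwarz bound of the form $N^{-1}\sum_{k,l\in S}|\Delta_{kl}|/\pi_{kl}\,|a_k||b_l|\le C\{N^{-1}\sum_S a_k^2\}^{1/2}\{N^{-1}\sum_S b_k^2\}^{1/2}$ is needed. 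This holds because the diagonal terms are $O(1)$, and by \ref{D3} the off-diagonal sum is bounded by $n\max|\Delta_{kl}|\cdot(N^{-1}\sum|a_k|)(n^{-1}\sum|b_l|)$. Combined with the rates above, this shows the difference is $o_{\mathbb P}(N^{-1})$.

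\textbf{The term $\widehat V_2$.} $V_2$ equals $\sigma^2N^{-2}\sum_{k\in U}\{(1-r_k)/\pi_k\ \text{type term}+r_k(1+\pi_k\mathbf c_\alpha^\top\bx_k)^2\}$-type expressions, and I would first compute it explicitly from $\E_p[\widetilde\mu_\alpha-\mu]=N^{-1}\sum_U(\eta_k-y_k)$, a sum of independent terms given $\bX,\mathbf r$. Then $N\widehat V_2$ is an HT estimator of the matching population mean, with $\sigma^2$ replaced by $\widehat\sigma^2_\alpha$ and $\mathbf c$ by $\widehat{\mathbf c}$. Consistency of $\widehat\sigma^2_\alpha$ for correct $\alpha$ follows from the residual decomposition and a LLN for $n_r^{-1}\sum_{S_r}\epsilon_k^2$, conditional on design and response, which is valid by \ref{S3} and MAR/non-informativeness. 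The HT mean converges by the design LLN as above.

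The main obstacle is the first step for $\widehat V_1$. The $\eta_{k,\alpha}$ are random through $\epsilon_k$ and $r_k$, so the Breidt--Opsomer-type argument must be run conditionally on $(m,q)$, with moment bounds holding uniformly almost surely. It must also be checked that \ref{D4} handles the fourth-order design moments in the variance of $\widetilde V_1$.
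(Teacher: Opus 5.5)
Your proposal is correct and follows essentially the same route as the paper. For $\widehat V_1$ you use the chain $\widehat V_1\to\widetilde V_1\to\mathbb{V}_p(\widetilde\mu_\alpha-\mu)\to V_1$: a Cauchy--Schwarz/\ref{D3} bound driven by $N^{-1}\sum_k(\widehat\eta_{k,\alpha}-\eta_{k,\alpha})^2=o_{\mathbb{P}}(1)$, then the Breidt--Opsomer argument with \ref{D4}, then passage to expectations; for $\widehat V_2$ you go $\widehat V_2\to\bar V_2\to V_2$ via consistency of $\widehat\sigma^2_\alpha$ and $\widehat{\mathbf c}_\alpha$, exactly as the paper does.

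Two small corrections. First, $\eta_{k,\alpha}$ contains the unbounded $\epsilon_k$, so $N_v V_1(\alpha)\to c_{1,\alpha}$ cannot come from plain dominated convergence; you need uniform integrability. The paper gets it from bounded second moments of $N_v\mathbb{V}_p(\widetilde\mu_{\alpha,v}-\mu_v)$, using \ref{S3} and the bound on $\mathbf c_{\alpha,v}$ on $\mathcal E_{N_v}$, and then applies Vitali. Second, the explicit form is $\mathbb{V}_m(\mathbb{E}_p[\widetilde\mu_\alpha-\mu])=\sigma^2N^{-2}\sum_{k\in U}\{1-r_k+r_k(\pi_k\mathbf c_{\alpha,v}^\top\bx_{k,\alpha})^2\}$, with no $1/\pi_k$ factor and no ``$1+$'' in the respondent term.
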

\begin{proof}
See Appendix \ref{proof_Theo4}.
\end{proof}
~\\
We propose estimating the variance based on the model $\widehat{\alpha}$ selected by a consistent model selection criterion.  The next result establishes the validity of this approach.

\begin{theorem}\label{Theo5}
Let $\{\widehat{\alpha}_v\}_{v\in \mathbb{N}}$ be a sequence of models selected by a consistent model selection procedure, and $\{\widehat{V}_T(\widehat{\alpha}_v)\}_{v \in \mathbb{N}}$ be the corresponding sequence of variance estimators. Assume \ref{S1}-\ref{S3}, \ref{D1}-\ref{D4}, and that there exists a constant $c>0$ such that, almost surely,
\begin{equation*}
\lim_{v \to \infty} n_v V_{T,v}(\alpha^{\star}) = c.
\end{equation*}
Then, the estimator $\widehat{V}_{T,v}(\widehat{\alpha}_v)$ is consistent, that is,
\begin{equation*}
\dfrac{\widehat{V}_{T,v}(\widehat{\alpha}_v)}{V_{T,v}(\alpha^{\star})}\xrightarrow[v \to \infty]{\mathbb{P}}1.
\end{equation*}
\end{theorem}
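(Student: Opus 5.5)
The argument reduces the selected model to the true one, exactly as in the proof of Theorem~\ref{Theo2}. Let $E_v := \{\widehat{\alpha}_v = \alpha^\star\}$. Since the selection procedure is consistent, $\mathbb{P}_{mpq}(E_v) \to 1$ almost surely; if the procedure is only assumed consistent on population data, Lemma~\ref{Lemma2} transfers this to the respondents. On $E_v$ we have the identity $\widehat{V}_{T,v}(\widehat{\alpha}_v) = \widehat{V}_{T,v}(\alpha^\star)$. Hence for any $\varepsilon>0$,
\begin{equation*}
\mathbb{P}\left( \left\rvert \frac{\widehat{V}_{T,v}(\widehat{\alpha}_v)}{V_{T,v}(\alpha^\star)} - 1 \right\rvert > \varepsilon \right) \leq \mathbb{P}\left( \left\rvert \frac{\widehat{V}_{T,v}(\alpha^\star)}{V_{T,v}(\alpha^\star)} - 1 \right\rvert > \varepsilon \right) + \mathbb{P}(E_v^c).
\end{equation*}
The second term vanishes. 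Only the first requires work, and it involves only the fixed, correct model $\alpha^\star \in \cC$. This is essential: for wrong models, $\widehat{\eta}$ and $\widehat{\sigma}^2_\alpha$ do not target $V_T$, but their behaviour is irrelevant because they only enter on $E_v^c$.

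For the first term, I would write
\begin{equation*}
\frac{\widehat{V}_{T,v}(\alpha^\star)}{V_{T,v}(\alpha^\star)} - 1 = \frac{n_v\big(\widehat{V}_{T,v}(\alpha^\star) - V_{T,v}(\alpha^\star)\big)}{n_v V_{T,v}(\alpha^\star)}.
\end{equation*}
Theorem~\ref{Theo4} applied with $\alpha = \alpha^\star$ shows the numerator is $o_{\mathbb{P}}(1)$. The denominator is deterministic given the population and converges almost surely to $c>0$ by assumption, so it is bounded away from zero for all large $v$. Slutsky's lemma then gives convergence in probability of the ratio to $1$.

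The main obstacle is verifying the hypotheses of Theorem~\ref{Theo4} for $\alpha^\star$, namely the existence of the limits $c_{1,\alpha^\star}$ and $c_{2,\alpha^\star}$ of $N_v\mathbb{V}_p(\widetilde{\mu}_{\alpha^\star}-\mu)$ and $N_v\mathbb{V}_m(\mathbb{E}_p[\widetilde{\mu}_{\alpha^\star}-\mu])$. The stated assumption only controls their sum through $n_vV_{T,v}(\alpha^\star)\to c$. I would first try to read Theorem~\ref{Theo5} as implicitly inheriting these conditions from Theorem~\ref{Theo4}. Failing that, I would note that the proof of Theorem~\ref{Theo4} likely needs only boundedness of these scaled variance components, which follows from \ref{S2}, \ref{S3}, \ref{D2} and \ref{D3}. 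I would then pass to subsequences along which each component converges, apply Theorem~\ref{Theo4} along each subsequence, and conclude by a subsequence argument, since the limit $1$ of the ratio does not depend on the subsequence.

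A secondary point is that the probabilities above mix the $m,p,q$ measures while the limit for $nV_T$ holds almost surely in the population. I would handle this by conditioning on the realized population sequence, on which all convergences hold, and working on the high-probability events of Remark~\ref{req1}.
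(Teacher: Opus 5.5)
Your main argument is the paper's proof: on $\{\widehat{\alpha}_v=\alpha^\star\}$ the two variance estimators coincide, so the effect of selection is controlled by $\mathbb{P}_{mpq}(\widehat{\alpha}_v\neq\alpha^\star)\to 0$. The remaining term is handled by Theorem~\ref{Theo4} at $\alpha^\star$, with denominator $n_vV_{T,v}(\alpha^\star)\to c>0$, and, as in your primary reading, the paper simply invokes Theorem~\ref{Theo4} without verifying that $c_{1,\alpha^\star}$ and $c_{2,\alpha^\star}$ exist.

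One caution on your fallback. Those scaled components are random, since they depend on the $r_k$ and $\epsilon_k$, so boundedness gives only tightness, not subsequential convergence in probability to constants. What Theorem~\ref{Theo4} actually uses is that $n_v\bar{V}_{1,v}(\alpha^\star)$ and $n_v\bar{V}_{2,v}(\alpha^\star)$ concentrate around their $(m,q)$-expectations, and you would have to prove that separately.
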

\begin{proof}
See Appendix \ref{proof_Theo5}.
\end{proof}
~\\
Theorem~\ref{Theo5} shows that the variance estimator computed under the selected model $\widehat{\alpha}_v$ is asymptotically equivalent to the variance that would be obtained under the true model $\alpha^\star$. In other words, using a consistent model selection procedure does not affect first-order variance estimation: we may estimate the variance as if the true model were known.
\subsection{Asymptotically valid and optimal confidence intervals}

The final step of the proposed methodology is to derive asymptotically valid confidence intervals. To this end, we first establish the asymptotic distribution of the imputed estimator based on a consistent model selection procedure.

\begin{theorem}\label{Theo6}
Consider a sequence $(\widehat{\alpha}_v)_{v\in \mathbb{N}}$ of consistent models and let $(\widehat{\mu}_{\widehat{\alpha}})_{v\in \N}$ and $(\widehat{V}_{T,v}(\alpha))_{v\in \N}$ be the corresponding point and variance estimators, respectively. Let $\cB_v := \sigma\left(\left(r_k, \bx_k, \epsilon_k\right)_{k\in U_v}\right)$.  Assume \ref{S1}-\ref{S3}, \ref{D1}-\ref{D4} and the following conditions. 
\begin{enumerate}
\item[(i)] The oracle estimator $\widetilde{\mu}_{\alpha^{\star},v}$ satisfies a design central limit theorem, that is, 
$$\mathbb{V}_p \left( \widetilde{\mu}_{\alpha^{\star},v}  \right)^{-1/2} \left( \dfrac{1}{N_v}\sum_{k\in U_v} \left(\dfrac{I_k}{\pi_k}-1 \right) \eta_{k, \alpha^{\star}}\right) \bigg\rvert \ \cB_v \xrightarrow[v \to \infty]{\cL} \cN (0,1), $$
\item[(ii)] There exists a constant $c_3>0$ such that $$N_v\mathbb{V}_p \left( \widetilde{\mu}_{\alpha^{\star},v}  \right) \xrightarrow[v \to \infty]{\mathbb{P}}c_3.$$
\item[(iii)] The first-order inclusion probabilities $(\pi_k)_{k\in U_v}$ are such that there exist constant positive definite matrices $\boldsymbol{C}_1$ and $\boldsymbol{C}_2$ such that $$\dfrac{1}{N_v}\sum_{k \in U_v}r_k\pi_k\bx_{k,\alpha^{\star}}\bx_{k,\alpha^{\star}}^{\top}\xrightarrow[v \to \infty]{\P} \boldsymbol{C}_1, \qquad \dfrac{1}{N_v}\sum_{k\in U_v} r_k \pi_k^2 \bx_{k,\alpha^{\star}} \bx_{k,\alpha^{\star}}^\top \xrightarrow[v \to \infty]{\P} \boldsymbol{C}_2.$$ 
\end{enumerate}
Then, 
\begin{equation*}
\dfrac{\widehat{\mu}_{\widehat{\alpha},v}-\mu_v}{\sqrt{\widehat{V}_{T,v}(\widehat{\alpha}_v)}}\xrightarrow[v\to \infty]{\cL} \mathcal{N}(0,1).
\end{equation*}
\end{theorem}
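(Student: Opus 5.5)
The proof reduces everything to the oracle model $\alpha^\star$ and then establishes a central limit theorem for the linearized oracle estimator.

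\emph{Step 1 (reduction to the oracle).} By Theorem~\ref{Theo2}, $\sqrt{n_v}(\widehat{\mu}_{\widehat{\alpha},v}-\mu_v)=\sqrt{n_v}(\widehat{\mu}_{\alpha^\star,v}-\mu_v)+o_{\P}(1)$. By Theorem~\ref{Theo3} applied with $\alpha=\alpha^\star\in\cC$, we may further replace $\widehat{\mu}_{\alpha^\star,v}$ by $\widetilde{\mu}_{\alpha^\star,v}$ at the same cost $o_{\P}(1)$. By Theorem~\ref{Theo5}, $\widehat{V}_{T,v}(\widehat{\alpha}_v)/V_{T,v}(\alpha^\star)\to 1$ in probability. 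The hypothesis $n_vV_{T,v}(\alpha^\star)\to c>0$ is not assumed here, so we obtain it from (ii) and (iii) as in Step 3. Given that, Slutsky's lemma reduces the claim to
\[
\frac{\widetilde{\mu}_{\alpha^\star,v}-\mu_v}{\sqrt{V_{T,v}(\alpha^\star)}}\xrightarrow{\cL}\cN(0,1).
\]

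\emph{Step 2 (decomposition).} Since $\alpha^\star$ is correct, $\bx_{k,\alpha^\star}^\top\bbeta_{\alpha^\star}=y_k-\epsilon_k$. Hence
\[
\widetilde{\mu}_{\alpha^\star}-\mu=\underbrace{\frac1{N}\sum_{k\in U}\Big(\frac{I_k}{\pi_k}-1\Big)\eta_{k,\alpha^\star}}_{T_1}+\underbrace{\frac1N\sum_{k\in U}\big(\eta_{k,\alpha^\star}-y_k\big)}_{T_2},
\]
where $\eta_k-y_k=\epsilon_k\{r_k(1+\pi_k\mathbf{c}^\top\bx_{k,\alpha^\star})-1\}$. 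The term $T_2$ is $\cB_v$-measurable. Conditionally on $(\bx_k,r_k)_{k\in U}$, it is a weighted sum of independent, mean-zero errors with weights $a_k=r_k(1+\pi_k\mathbf{c}^\top\bx_k)-1$. These weights are bounded uniformly on $\mathcal{E}_{N_v}$ by \ref{S2} and the bound on $\mathbf{c}_{\alpha^\star,v}$ coming from Remark~\ref{req1}. Its conditional variance is $\sigma^2N^{-2}\sum_k a_k^2$, and by (iii) together with the law of large numbers this quantity times $N_v$ converges in probability to a constant $c_4$. Lyapunov's condition holds using the fourth moments in \ref{S3} and the bounded weights. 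Therefore $\sqrt{N_v}\,T_2\mid(\bx,r)\to\cN(0,c_4)$.

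\emph{Step 3 (combining the two CLTs).} The main technical point is that $T_1$ and $T_2$ are not independent. Condition (i) gives $\sqrt{N_v}\,T_1\mid\cB_v\to\cN(0,c_3)$, using (ii) for the normalization. Since $T_2$ is $\cB_v$-measurable, I will use the characteristic-function argument:
\[
\E\big[e^{it\sqrt{N}(T_1+T_2)}\big]=\E\Big[e^{it\sqrt NT_2}\,\E\big[e^{it\sqrt NT_1}\mid\cB_v\big]\Big].
\]
The inner expectation converges in probability to $e^{-t^2c_3/2}$, so by dominated convergence (all moduli are at most $1$) the whole expression converges to $e^{-t^2c_3/2}e^{-t^2c_4/2}$. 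The limit law is therefore $\cN(0,c_3+c_4)$. It remains to identify $N_vV_{T,v}(\alpha^\star)=N_v(V_1+V_2)$ with $c_3+c_4$. Here $V_1=\E_{q}\E_m[\V_p]$, and $N\V_p\to c_3$ in probability; uniform integrability, from boundedness of $N\V_p$ under \ref{D3}, \ref{S2} and \ref{S3}, yields $NV_1\to c_3$. Similarly $NV_2\to c_4$ by (iii), because $V_2=\E_q[\V_m(T_2)]$.

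Finally, $n_v/N_v\to f^\star$ by \ref{D1}, so $n_vV_{T,v}(\alpha^\star)\to f^\star(c_3+c_4)>0$, which supplies the hypothesis needed to invoke Theorem~\ref{Theo5}. The ratio then converges to $\cN(0,1)$. I expect the main obstacle to be exactly this identification of the limiting variances, that is, passing from in-probability limits to convergence of expectations, together with the conditional Lyapunov verification.
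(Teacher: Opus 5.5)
Your proposal is correct and follows essentially the same route as the paper. It uses the same decomposition into the design term $T_1$ and the $\cB_v$-measurable term $T_2$, the same conditional Lyapunov argument for $T_2$, conditions (i)--(iii) to control $T_1$ and the variance limits, and Theorems~\ref{Theo2}, \ref{Theo3} and \ref{Theo5} to pass to $\widehat{\mu}_{\widehat{\alpha}}$ and $\widehat{V}_T(\widehat{\alpha})$. There are two differences: you carry out the combination step directly with conditional characteristic functions instead of citing Theorem~2 of \cite{chen2007asymptotic}, and you explicitly verify the hypothesis $n_vV_{T,v}(\alpha^\star)\to c>0$ of Theorem~\ref{Theo5} through uniform integrability, which the paper's proof invokes without checking.
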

\begin{proof}
See Appendix \ref{proof_Theo6}.
\end{proof}
~\\
The regularity conditions (i) - (iii) in Theorem \ref{Theo6} are standard and fairly weak. Condition (i) means that the Horvitz-Thompson estimator with values $(\eta_{k, \alpha^\star})_{k\in U_v}$ is asymptotically normal. The asymptotic normality of Horvitz-Thompson estimators has been established for some commonly used designs: \cite{hajek1960} for simple random sampling without replacement, \cite{hajek1964} for conditional Poisson sampling, \cite{bickel1984} for stratified sampling, and \cite{krewski1981} for probability-proportional-to-size cluster sampling with replacement. For instance, in case of simple random sampling without replacement, is is enough to show that, (see, e.g. \cite{thompson1997theory}, page 59) a.s., $\E_{mpq}[\eta_{k, \alpha}^{2+\delta}]<\infty$ for some $\delta>0$. In our setting, this condition can be verified to hold. The two other conditions can also be shown to hold in simple random sampling and other common sampling designs. 

~\\
The proposed methodology is summarized in Algorithm~\ref{algo1}.
The central idea is that when the imputation model is selected using a consistent model selection criterion, the effect of model uncertainty becomes asymptotically negligible. More precisely, the selected model coincides with the true model with probability tending to one, so that standard inference procedures applied conditionally on the selected model remain asymptotically valid.
Consequently, point and variance estimation may be carried out as if the selected model were known in advance. This greatly simplifies inference after model selection: asymptotically valid confidence intervals can be obtained without the need for post-selection corrections or more elaborate resampling schemes, while retaining full efficiency.

~\\
Combining our previous results yields the following property for the confidence intervals obtained by Algorithm \ref{algo1}.
\begin{flushleft}
\begin{algorithm}[htbp]
\caption{Variable selection and inference for linear regression imputation} \label{algo1}
\DontPrintSemicolon
\vspace{0.4em}
\flushleft{\textbf{Input:} The observed data $\{(\bx_k, y_k, \pi_k): k \in S_r\} 
\cup \{(\bx_k, \pi_k): k \in S_m\}$, a 
family of candidates $\cA$, and a consistent model selection criterion $\cC_n$.\;}

\vspace{0.8em}
\textbf{Step 1. Model selection.}\;
Select the covariate set
\[
\widehat{\alpha} = \arg\min_{\alpha \in \cA} \cC_n(\alpha),
\]
where $\cC_n$ is a consistent selection criterion.\;

\vspace{0.8em}
\textbf{Step 2. Point estimation.}\;
Compute the imputed estimator $\widehat{\mu}_{\widehat{\alpha}}$ using the model $\widehat{\alpha}$ selected in Step~1.\;

\vspace{0.8em}
\textbf{Step 3. Variance estimation.}\;
Estimate the total variance $\widehat{V}_T(\widehat{\alpha})$ using the variance estimators corresponding to the model $\widehat{\alpha}$.\;

\vspace{0.8em}
\textbf{Step 4. Confidence interval.}\;
Construct the $(1-2\alpha)$ interval
\[
\mathrm{CI}_{1-\alpha}(\widehat{\mu}_{\widehat{\alpha}})
= \bigg[\widehat{\mu}_{\widehat{\alpha}} - z_{1-\alpha/2}\sqrt{\widehat{V}_T(\widehat{\alpha})} \ , \ \;
\widehat{\mu}_{\widehat{\alpha}} + z_{1-\alpha/2}\sqrt{\widehat{V}_T(\widehat{\alpha})}\bigg],
\]
where $z_{1-\alpha/2}$ denotes the $1-\alpha/2$-quantile of $\cN(0,1)$.\;

\vspace{0.8em}
\textbf{Output:} The $100(1-\alpha)\%$ confidence interval $\mathrm{CI}_{1-\alpha}(\widehat{\mu}_{\widehat{\alpha}})$ for the finite population mean $\mu$.
\vspace{0.4em}

\end{algorithm}
\end{flushleft}

\begin{corollary} \label{coro3}
Consider a sequence of confidence intervals $\left(\mathrm{CI}_{1-\alpha}(\widehat{\mu}_{\widehat{\alpha}}) \right)_{v\in \N}$ obtained via Algorithm \ref{algo1}. Assume the conditions of Theorem \ref{Theo6}. Then, 
\begin{equation*}
\lim_{v \to \infty} \P_{mpq} \left( \mu_v \in \mathrm{CI}_{1-\alpha}(\widehat{\mu}_{\widehat{\alpha}_v})  \right)  = 1-\alpha
\end{equation*}
almost surely. 
Moreover, the confidence intervals are asymptotically optimal, attaining the minimum width achievable under any competing model.\end{corollary}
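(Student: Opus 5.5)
The coverage statement follows almost immediately from Theorem~\ref{Theo6}. On the event $\{\mu_v \in \mathrm{CI}_{1-\alpha}(\widehat{\mu}_{\widehat{\alpha}_v})\}$ we have
\[
\left\rvert T_v \right\rvert \leq z_{1-\alpha/2}, \qquad T_v := \frac{\widehat{\mu}_{\widehat{\alpha},v}-\mu_v}{\sqrt{\widehat{V}_{T,v}(\widehat{\alpha}_v)}} .
\]
By Theorem~\ref{Theo6}, $T_v \xrightarrow{\cL} \cN(0,1)$. The set $[-z_{1-\alpha/2}, z_{1-\alpha/2}]$ is a continuity set of the standard normal law, so the Portmanteau theorem gives
\[
\P_{mpq}\left(\lvert T_v\rvert \leq z_{1-\alpha/2}\right) \to \Phi(z_{1-\alpha/2}) - \Phi(-z_{1-\alpha/2}) = 1-\alpha .
\]
One remark is needed on the probability measure. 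The paper's limits are stated ``almost surely'' with respect to the superpopulation draw of $(\bx_k)$. I would take the convergence in Theorem~\ref{Theo6} in the same conditional sense and argue conditionally on the covariate sequence. If needed, I would also account for the conditioning event $\mathcal{E}_{N_v}$ of Remark~\ref{req1}; its complement has exponentially small probability, so it does not change the limit.

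For optimality, I would compare half-widths after rescaling by $\sqrt{n_v}$. The interval built from any fixed candidate $\alpha$ has half-width $z_{1-\alpha/2}\sqrt{\widehat{V}_{T,v}(\alpha)}$. By Theorem~\ref{Theo5} and the assumed limit $n_vV_{T,v}(\alpha^\star)\to c$, we get $n_v\widehat{V}_{T,v}(\widehat{\alpha}_v) \xrightarrow{\P} c$. Therefore the rescaled width of our interval converges in probability to $2z_{1-\alpha/2}\sqrt{c}$, which is the oracle width under $\alpha^\star$.

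Next I would treat the competing models. For a correct model $\alpha\in\cC$, Theorem~\ref{Theo4} gives
\[
n_v\widehat{V}_{T,v}(\alpha) = n_vV_{T,v}(\alpha) + o_{\P}(1).
\]
So it remains to show that $\liminf_v n_v\big(V_{T,v}(\alpha)-V_{T,v}(\alpha^\star)\big)\geq 0$. Under the setting of Proposition~\ref{prop3}(ii), this is exactly the monotonicity of $M$, since $\alpha^\star\subset\alpha$ for every $\alpha\in\cC$. More generally, I would use the linearization of Theorem~\ref{Theo3} and the decomposition $V_T=V_1+V_2$. Through the term involving $\textbf{c}_\alpha$, the model-dependent part is a quadratic form that is nondecreasing in $\alpha$, by the Schur complement argument of Lemma~\ref{MatrixOverfitting}. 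For wrong models $\alpha\in\cW$, either condition \eqref{condBias} fails or it holds. If it fails, $\widehat{\mu}_\alpha$ is inconsistent, so no interval centered at it can attain nominal coverage and it is not a valid competitor. If it holds, I would again argue via the linearized variance.

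The main obstacle is making ``minimum width under any competing model'' precise for wrong models and for unequal-probability designs, where Proposition~\ref{prop3} is not directly available. I would restrict the claim to the class of asymptotically valid intervals, meaning correct models together with the reverse-approach variance estimator. On that class the comparison reduces to the oracle-versus-overfit variance inequality. That inequality follows from $\cL_{2,n}$ being increasing in $\alpha$ (Lemma~\ref{MatrixOverfitting}), carried through the $\sqrt{n_v}$ scaling by (D1)--(D3).
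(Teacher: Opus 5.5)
Your coverage argument matches the paper's: Theorem~\ref{Theo6}, plus the fact that $[-z_{1-\alpha/2},z_{1-\alpha/2}]$ is a continuity set of $\cN(0,1)$. Your optimality argument in the setting of Proposition~\ref{prop3}(ii) also matches the paper's. The selected interval has the oracle width; you go through Theorem~\ref{Theo5}, the paper through Theorem~\ref{Theo2}. Since $\alpha^\star\subset\alpha$ for every $\alpha\in\cC$, monotonicity of $M$ gives $\lim_v n_v\{V_{T,v}(\alpha)-V_{T,v}(\alpha^\star)\}=f^\star\sigma^2\{M(\alpha)-M(\alpha^\star)\}/\pi\geq 0$, and Theorem~\ref{Theo4} transfers this to the competing widths. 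That part is sound.

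The step that fails is the one you use to go beyond Proposition~\ref{prop3}: that the oracle-versus-overfit inequality ``follows from $\cL_{2,n}$ being increasing in $\alpha$ (Lemma~\ref{MatrixOverfitting}), carried through the $\sqrt{n_v}$ scaling by (D1)--(D3).''

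Lemma~\ref{MatrixOverfitting} controls only the variance of the imputation error $\widehat{\mu}_\alpha-\widehat{\mu}_\pi$. The width, however, is governed by $V_T(\alpha)=V_1(\alpha)+V_2(\alpha)$, which also contains the $\alpha$-dependent covariance between imputation error and sampling error.

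Concretely, the $\alpha$-dependent part of $N_v\bar{V}_{2,v}(\alpha)$ is $\sigma^2\mathbf{c}_{\alpha,v}^\top\big(N_v^{-1}\sum_{k\in U_v}r_k\pi_k^2\bx_{k,\alpha}\bx_{k,\alpha}^\top\big)\mathbf{c}_{\alpha,v}$, where $\mathbf{c}_{\alpha,v}$ involves the inverse of $N_v^{-1}\sum_{k\in U_v}r_k\pi_k\bx_{k,\alpha}\bx_{k,\alpha}^\top$. This sandwich form reduces to a Schur-complement quadratic form only when $\pi_k$ is constant, and otherwise it need not increase under nesting. For example, compare the intercept-only model with the model that adds $x_2=\pi_k$:
\begin{itemize}
\item The fitted values $f_k=\mathbf{c}^\top\bx_k$ satisfy the same intercept normal equation $\sum_k r_k\pi_kf_k=\sum_k(1-r_k)$ in both models.
\item By Cauchy--Schwarz, $\sum_k r_k\pi_k^2f_k^2$ is not minimized over $f_k=c_0+c_1\pi_k$ at $c_1=0$ unless the respondents' $\pi_k$ are equal.
\item So for suitable nonrespondent totals of $x_2$, the larger model gives the strictly smaller term.
\end{itemize}
Similarly, the cross term of $V_1$ (the middle term of $C_{1,v}$ in the proof of Proposition~\ref{prop3}) is model-free only through Lemma~\ref{AlgebraEx}, which uses equal $\pi_k$ and an intercept in $\alpha$.

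Your treatment of consistent wrong models (``again argue via the linearized variance'') is also not an argument, and Theorem~\ref{Theo4} does not cover $\alpha\in\cW$. What you can actually prove is width optimality among correct models under the hypotheses of Proposition~\ref{prop3}, which is also all that the paper's one-line justification supports. State that restriction explicitly rather than patching it with Lemma~\ref{MatrixOverfitting}.
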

~\\
The asymptotic coverage follows immediately from Theorem \ref{Theo6}. The optimality follows from the fact that $\widehat{\mu}_{\widehat{\alpha}}$ has the same asymptotic variance as $\widehat{\mu}_{\alpha^\star}$, which by Proposition \ref{prop3} is the lowest since $M$ is non-decreasing. 
\section{Simulation studies}\label{sec:simu}
In this section, we present the results of simulation studies evaluating the efficiency of the proposed methodology. We first examine the behavior of the loss function $\cL_n(\alpha)$ across different model specifications, the efficiency of point estimators based on various model selection criteria, and then assess the performance of their variance estimators, including their ability to achieve the desired asymptotic coverage for the confidence intervals obtained from Algorithm~\ref{algo1}.

~\\
In our simulation experiments, we adopted a finite-population framework. 
Specifically, for each scenario, we generated multiple finite populations of large size $N$ from the assumed superpopulation model. From each generated population, a sample of size $n$ was selected according to the specified sampling design.  Item nonresponse was then generated within each selected sample according to the prescribed response mechanism, after which the proposed imputation procedure was applied. This entire process---population generation, sampling, nonresponse generation, and imputation---was repeated $20{,}000$ times to evaluate the finite-sample performance of the proposed methodology.

\subsection{Simulation set-up}
We generated finite populations of sizes $N=(1000,2000,5000)$ and $p = 20$ independent covariates from a Gamma distribution with shape parameter $5$ and scale parameter $2$. The survey variable was generated according to $$y_k = \bx_k^\top \pmb{\beta}+\epsilon_k, \qquad k\in U,$$ with 
$\pmb{\beta}=[10,9,9,8,8,7,\boldsymbol{0}_{14}^\top]^\top$ and $\epsilon_k\thicksim\mathcal{N}(0,3600)$ for $k \in U$. The first six covariates, therefore, correspond to signal variables, while the remaining fourteen covariates are noise variables with no effect on the outcome. 

~\\
We considered three sample sizes $n=(100,200,500)$ under the following designs:
\begin{enumerate}
\item[(i)] Simple random sampling without replacement.
\item[(ii)] Stratified sampling: the population was first sorted by $-(3x_1+2x_2+4x_3+5x_4)$. It was then partitioned into $H=4$ strata defined sequentially from the ordered list, containing respectively $50\%$, $25\%$, $20\%$, and $5\%$ of the population units. In each stratum, the sample $S_h$ was selected by simple random sampling without replacement of size $n_h$ based on $x_2$-optimal allocation.
\end{enumerate}
These sample sizes were chosen to correspond to the population sizes $N=(1000,2000,5000)$, yielding a constant sampling fraction of $10\%$. 
In other words, both the population size and the sample size increase at the same rate across scenarios while maintaining a fixed sampling fraction. 

~\\
We choose an embedded collection of models $\mathcal{A}$ consisting of the following 20 models $$\underbrace{\{1\}}_{:=\alpha_1}\subseteq \underbrace{\{1,2\}}_{:=\alpha_2}\subseteq \dots \subseteq \underbrace{\{1,2,\dots,20\}}_{:=\alpha_{20}}.$$ 
Although we do not write it explicitly, note that the intercept is included in every model. 

~\\
Response indicators were generated with $$\mathrm{logit}(p(\bx_k))=0.1\times\left(-70+\bx_k^{\top}\pmb{\zeta}\right), \qquad k\in U,$$ with
$\pmb{\zeta}=\left[1,1,1,1,0,0,1,1,1, \boldsymbol{0}_{11}^\top\right]^{\top}$. This led to a response rate of approximately $50\%$.

\subsection{Behavior of $\mathcal{L}_n(\alpha)$ with different models}
We first investigated the behavior of $\mathcal{L}_n(\alpha)$ for $\alpha \in \mathcal{A}$.  Here, we present only the results for the case $N=5000$ and $n=500$. 
The results obtained for the other combinations of $N$ and $n$ were very similar and are therefore not reported here for brevity. Given that the loss function $\mathcal{L}_n(\alpha)$ is unknown in practice, we estimated it by a Monte-Carlo approximation with $B = 20, \ 000$ iterations. 

~\\
As a measure of bias of a point estimator, we used the Monte-Carlo Relative Bias (RB) defined by
\begin{equation*}
\mathrm{RB}(\widehat{\mu})=100\%\times \frac{1}{B}\sum_{b=1}^B\frac{\widehat{\mu}^{(b)}-\mu^{(b)}}{\mu^{(b)}},
\end{equation*}
where $\widehat{\mu}^{(b)}$ denotes an arbitrary estimator and $\mu^{(b)}$ denotes the finite population mean at iteration $b$, respectively. As a measure of efficiency, we computed the Monte-Carlo Relative Efficiency (RE) defined by
\begin{equation*}
\mathrm{RE}(\widehat{\mu})=100\%\times \frac{\sum_{b=1}^B(\widehat{\mu}^{(b)}-\mu^{(b)})^2}{\sum_{b=1}^B(\widehat{\mu}_{\pi}^{(b)}-\mu^{(b)})^2},
\end{equation*}
where $\widehat{\mu}_{\pi}^{(b)}$ denotes the HT estimator in the $b$-th replication. The results of the simulation are presented in Table \ref{tab1}.
\begin{table}[h]\label{Table3}
\centering
\caption{Loss $\mathcal{L}_n(\widehat{\mu}_{\alpha})$, relative biases and relative efficiencies across models for sample size $n=500$.} \label{tab1}
\begin{tabular}{l|ccc|ccc}
\toprule
Model  & $\mathcal{L}_n(\alpha)$ & RB & RE & $\mathcal{L}(\alpha)$ & RB & RE\\
& \multicolumn{3}{c}{SRSWOR} &\multicolumn{3}{c}{Stratified}\\
\midrule
$\alpha_1$ &  491.3 & 4.3 &2267.1  & 587.5 & 4.7 & 3940.9\\
$\alpha_2$ & 235.5  & 2.9 & 1139.4 & 294.8  & 3.2 & 2023.6 \\
$\alpha_3$& 63.6  & 1.3 & 381.3 & 89.2 & 1.7  & 677.3 \\
$\alpha_4$& 15.5   &0.1& 166.6  & 16.3 &0.0& 202.5\\
$\alpha_5$& 12.1 & 0.0  & 152.2 &  12.7&  0.0 & 180.1 \\
$\alpha_6$& 9.7 & 0.0  & 142.2  &  10.0 & 0.0 & 162.6\\
$\alpha_7$& 10.3 & 0.0   & 144.5  & 10.5 & 0.0& 165.9 \\
$\alpha_8$& 11.0 & 0.0   & 147.7   & 11.3 & 0.0 &170.7\\
$\alpha_9$& 11.7  & 0.0  & 150.7  & 11.9 & 0.0 & 174.9\\
$\alpha_{10}$& 11.8 & 0.0 &151.0 & 12.0 & 0.0  & 175.3\\
$\alpha_{11}$& 11.9 & 0.0 &151.1 &  12.0 & 0.0  & 175.5 \\
$\alpha_{12}$& 11.9 & 0.0 & 151.3 & 12.1 & 0.0 & 175.9 \\
$\alpha_{13}$& 11.9 & 0.0  &151.5 & 12.1 & 0.0 & 176.2 \\
$\alpha_{14}$& 12.0 & 0.0  &151.7 & 12.2 & 0.0 & 176.6 \\
$\alpha_{15}$& 12.0 & 0.0   &151.9 &12.2 & 0.0 & 177.0 \\
$\alpha_{16}$& 12.1 & 0.0  & 152.3 &12.3 &  0.0& 177.4 \\
$\alpha_{17}$& 12.1 & 0.0   & 152.5& 12.4 & 0.0 & 177.7 \\
$\alpha_{18}$& 12.2 & 0.0   & 152.6 & 12.4  & 0.0 & 178.1\\
$\alpha_{19}$& 12.3  &0.0  & 153.0 & 12.4& 0.0 & 178.5\\
$\alpha_{20}$& 12.3 &0.0 &153.2 & 12.5 & 0.0 &178.8\\
\bottomrule

\end{tabular}

\end{table}
We start by observing that the rankings based on our proposed loss $\cL_n$ and on the relative efficiency (RE) were perfectly aligned: under both sampling designs, the $20$ models were ranked in exactly the same order. This confirms that the loss $\cL_n$ behaves as expected and that the model minimizing $\cL_n$, which we refer to as the optimal imputation model, is indeed the best choice in practice.

~\\
Next, recall that the models labeled $\alpha_1$–$\alpha_5$ are misspecified,  as they fail to include at least one of the important predictors among $X_2,\ldots,X_6$. The variables $X_1$ to $X_4$ are also correlated with the missingness mechanism. 
Consequently, since these variables are associated with the response mechanism, omitting them from the imputation model may induce a bias that does not vanish asymptotically, leading to an inconsistent estimator, as indicated by Corollary~\ref{Cor1}. In contrast, once $X_1$, $X_2$, $X_3$, and $X_4$ are included in the imputation model, no asymptotic bias is expected. To illustrate this, consider for example model $\alpha_4$, for which $\alpha_{\mathrm{mis}} = \{5,6\}$. Since all covariates are independent, Condition (b) of Corollary \ref{Cor1} is satisfied: $X_5$ and $X_6$ do not explain $p(\bx)$ conditional on $\bx_{\alpha_4}$. Moreover, 
$$\mathbb{E}[X_5 \mid X_1, X_2, X_3, X_4] = \mathbb{E}[X_5] 
\quad \text{and} \quad 
\mathbb{E}[X_6 \mid  X_1, X_2, X_3, X_4] = \mathbb{E}[X_6]$$
are constants. Because the intercept is included in the model, both conditions of Corollary \ref{Cor1} hold for $\alpha_4$. The simulations clearly reflect this behavior: models $\alpha_1$ to $\alpha_3$ were biased, and exhibited poor efficiency with values of RE ranging from 381.3 to 2267.1 for simple random sampling without replacement.  As soon as $X_4$ was added to $\alpha_3$, the bias vanished. However, the model $\alpha_4$ was not the most efficient. Indeed, the true model, $\alpha_6$, was the most efficient, thereby illustrating Theorem \ref{Theo1}. 

~\\
The model $\alpha_6$ is the true model. As expected, it exhibited negligible bias, the smallest $\mathcal{L}_n(\alpha)$, and the highest efficiency. This is consistent with Theorem \ref{Theo1}, which suggests that the true model should also be the optimal model for imputation, as observed here. Recall that the models $\alpha_7-\alpha_{20}$ belong to the set of correct models, and thus their biases are also negligible. Moreover, $\mathcal{L}_n(\alpha_j)\geq \mathcal{L}_n(\alpha_6)$ for $j=7,\dots,20$, as explained by Proposition \ref{prop2}. While $X_7-X_9$ explains $p(\bx)$, including these variables in the imputation model leads to an appreciable increase in $\mathcal{L}_n(\alpha)$ and in the relative efficiency of the resulting estimator. Furthermore, Proposition \ref{prop3} implies that the asymptotic variance of the imputed estimator based on $\alpha_7-\alpha_9$ is larger than that based on $\alpha_6$, resulting in lower efficiency for these models. In contrast, $X_{10}-X_{20}$ do not explain $p(\bx)$, conditional on $\bx_{\alpha_9}$. Moreover, $\mathbb{E}_\bx[X_j|\bx_{\alpha_9}]=\mathbb{E}_\bx[X_j]$ for $j=10,\dots,20,$ which are constants. Because the intercept is included in the model, both conditions of Corollary \ref{Cor2} hold for $\alpha_9$. Hence, the efficiency of the imputed estimators based on $\alpha_{10}-\alpha_{20}$ is not substantially different from that obtained under  $\alpha_9$.
\subsection{Point estimation with model selection criteria}

We now study the behavior of imputed estimators based on models selected by commonly used model selection criteria such as AIC, BIC, and $K$-folds cross-validation with $k=5$. 

~\\
To further investigate the model selection capabilities of each of the criteria, we computed the Monte-Carlo identification probability defined by
\begin{equation*}
\mathbb{P}_{MC}\left(\widehat{\alpha}=\{\alpha^{\star}\}\right)=\frac{1}{B}\sum_{b=1}^B \mathds{1}\left(\widehat{\alpha}^{(b)}=\alpha^{\star}\right),
\end{equation*}
where $\widehat{\alpha}$ denotes a model selection criterion. We present the results in 
Table \ref{tab:table2}.\\

\begin{table}[h]
\centering
\caption{Performance of imputation under different model selection criteria across
different sample sizes. \\
Note: we used the notation $\mathcal{C}^-=\mathcal{C}-\alpha^{\star}$ to denote the set of overfitted models, i.e., models that contain all true covariates, and additional superfluous ones.}
\begin{tabular}{ll|ccccc|ccccc}
\toprule
Sample size& Criteria& RB&RE& $\mathcal{W}$& $\alpha^{\star}$&$\mathcal{C}^{-}$&RB&RE& $\mathcal{W}$& $\alpha^{\star}$&$\mathcal{C}^{-}$\\
& & \multicolumn{5}{c}{SRSWOR}&\multicolumn{5}{c}{Stratified}\\
\midrule
\multirow{3}{*}{$n=100$} & AIC &0.0&164.1&1.6&49.9&48.5&0.0&193.8&1.5&52.2&46.3\\
& BIC& 0.0&151.5&7.5&82.0&10.5&0.0&176.3&7.0&83.0&10.0\\
& Cross-validation&0.0&157.7&3.9&37.5&58.6&0.0&185.7&3.5&37.5&59.0\\&True model &0.0&147.9& -&- &- &0.0&170.2& -& -&-\\
\cline{1-12}
\multirow{3}{*}{$n=200$}&AIC&0.0&150.5&0.0&63.9&36.1&0.0&170.5&0.0&63.7&36.3\\
& BIC&0.0&145.0&0.5&94.5&5.0&0.0&163.6&0.3&94.9&4.8\\
& Cross-Validation&0.0&150.8&0.2&39.2&60.6&0.0&170.6&0.2&39.1&60.7\\&True model &0.0&143.6& -&- &- &0.0&162.3& -& -&-\\
\cline{1-12}
\multirow{3}{*}{$n=500$}&AIC&0.0&146.8&0.0&69.2&30.7&0.0&168.9&0.0&68.5&31.5\\
& BIC& 0.0&142.7&0.0&97.6&2.4&0.0&163.3&0.0&97.6&2.4\\
&Cross-Validation&0.0&146.9&0.0&39.3&60.7&0.0&170.1&0.0&38.5&60.5\\
&True model &0.0&142.2& -&- &- &0.0&162.6& -& -&-\\
\bottomrule

\end{tabular}

\label{tab:table2}
\end{table}
~\\
Across all sample sizes, the imputed estimators based on the AIC, BIC, and cross-validation criteria showed negligible bias for both sampling designs. This is explained by the fact that these three model selection criteria are known to satisfy $\P(\widehat{\alpha}_v \in \cW) \xrightarrow[]{} 0$ as $v \to \infty$ at the population level \citep{zhang1993model, shao1997asymptotic}, and hence  also at the sample level by an application of Lemma \ref{Lemma2}. This behavior was confirmed in our simulations, where the probability of selecting a wrong model converged to $0$ for all three criteria. Consequently, the imputed estimators based on these selected models are consistent. However, AIC and cross-validation are not consistent model selection procedures and exhibited overfitting probabilities of $30.7\%$ and $60.7\%$, respectively, for $n =500$ in our simulations. Since the imputed estimators based on these models are consistent, the lower efficiency indicates a larger variance. For $n=500$, the imputed estimators based on AIC and cross-validation were less efficient, with RE values of $146.8\%$ and $146.9\%$, respectively, compared with $142.2\%$ for the true model. As explained in part~ii) of Proposition~\ref{prop3}, the true model achieves the smallest asymptotic variance, whereas the variance of an imputed estimator based on an overfitted model is necessarily larger.
In contrast, BIC remained consistent with the probability of selecting the true model reaching $97.6\%$ for $n=500$. As a result, the point estimator based on BIC was the most efficient in all cases. Moreover, as established in Theorem~\ref{Theo3}, the imputed estimator based on BIC is asymptotically equivalent to the oracle imputation estimator based on the true model.

\subsection{Variance estimation and confidence intervals}

We now turn to the problem of variance estimation and confidence intervals, as per the procedure described in Algorithm \ref{algo1}. More specifically, we were interested in the relative bias of $\widehat{V}_T(\widehat{\alpha})$ as an estimator of $V_T(\widehat{\alpha})$, with $\widehat{\alpha}$ denoting a model selected via a consistent model selection procedure; here, the BIC criterion was adopted. We computed
\begin{equation*}
\mathrm{RB}(\widehat{V}_T(\widehat{\alpha}))=100\%\times \frac{1}{B}\sum_{b=1}^B\frac{\widehat{V}_T^{(b)}(\widehat{\alpha})-\mathbb{V}_{MC}(\widehat{\mu}_{\widehat{\alpha}})}{\mathbb{V}_{MC}(\widehat{\mu}_{\widehat{\alpha}})},
\end{equation*}
with $$\mathbb{V}_{MC}(\widehat{\mu}_{\widehat{\alpha}}) = \dfrac{1}{B-1}\sum_{b=1}^B \left( \widehat{\mu}_{\widehat{\alpha}}^{(b)} - \dfrac{1}{m}\sum_{m=1}^m \widehat{\mu}_{\widehat{\alpha}}^{(b)} \right)^2$$ denoting the Monte-Carlo variance of $\widehat{\mu}_{\widehat{\alpha}}$. We also were interested in verifying empirically that $\P \left( \mu_v \in \mathrm{CI}_{1-\alpha}(\widehat{\mu}_{\widehat{\alpha}})  \right) \approx 1-\alpha$, as should be asymptotically by Corollary \ref{coro3}. We computed
\begin{equation*}
\mathrm{CP}(\widehat{\mu}_{\widehat{\alpha}})=100\%\times \frac{1}{B}\sum_{b=1}^B \mathds{1}\left(\mu^{(b)} \in \mathrm{CI}_{1-\alpha}^{(b)}(\widehat{\mu}_{\widehat{\alpha}}^{(b)}\right),
\end{equation*}
where $\mathrm{CI}_{1-\alpha}^{(b)}(\widehat{\mu}_{\widehat{\alpha}}^{(b)})$ denotes the output of Algorithm \ref{algo1} at iteration $b$. The results are presented in Table \ref{tab:table3}.\\

\begin{table}[h]
\centering
\caption{Relative biases of variance estimators and coverage probabilities evolution as $n$ increases with various sampling fractions.}
\label{tab:table3}
\begin{tabular}{l|cccccc|cccccc}
\toprule
& \multicolumn{6}{c}{SRSWOR}&\multicolumn{6}{c}{Stratified}\\
& \multicolumn{2}{c}{$5\%$}& \multicolumn{2}{c}{$10\%$}&\multicolumn{2}{c}{$20\%$}& \multicolumn{2}{c}{$5\%$}&\multicolumn{2}{c}{$10\%$}&\multicolumn{2}{c}{$20\%$}\\
&RB&CP& RB&CP&RB&CP&RB&CP&RB&CP&RB&CP\\
\midrule
$n=200$ &-2.9& 94.5&-4.9&94.2&-8.0&93.8&-4.9&94.1&-5.5&94.1&-10.6&93.5 \\
$n=500$ &-3.1&94.6&-3.0&94.6&-5.8&94.2&-4.0&94.5&-4.0&94.3&-8.2&93.8\\
$n=1 000$&-3.2&94.5&-1.9&94.8&-4.6&94.3&-1.8&94.7&-3.2&94.7&-4.5&94.3\\
\bottomrule
\end{tabular}

\end{table}
~\\
For smaller sample sizes, the relative biases were slightly larger as the sampling fraction increased. Consequently, the coverage probabilities were slightly below the nominal $95\%$ level, particularly when $n=200$. However, for a fixed sampling fraction, as both the sample and population sizes increased, the negative biases diminished and eventually became negligible. This pattern was consistent across both sampling designs. 

~\\
These results confirmed the asymptotic validity of the proposed variance estimation procedure: as the sample size increased, the bias of the variance estimator vanished, and the empirical coverage probabilities converged to their nominal levels. Hence, Algorithm~\ref{algo1} produced reliable confidence intervals in large samples under both simple random sampling without replacement and stratified sampling.

\section{Final remarks}\label{sec:finalrem}

This paper develops a theoretical framework for model selection in the context of imputation under survey sampling. We introduced an oracle loss function that quantifies the efficiency of an imputation model and showed that its minimizer asymptotically coincides with the true model. Under standard regularity conditions and a non-informative sampling design, model selection procedures that are consistent in the i.i.d. setting remain consistent when applied to survey data. In particular, the BIC criterion asymptotically identifies the true imputation model.

~\\
Based on this framework, we established the asymptotic properties of the imputed estimator and its variance when the imputation model is selected using a consistent criterion. The resulting estimators are asymptotically equivalent to those obtained under the true model, therefore achieving oracle efficiency. These results provide a theoretical justification for the use of standard model-selection tools in imputation problems involving survey data.

~\\
Simulation results support the theoretical findings. The loss function $\cL_n$ discriminates effectively between models, with bias vanishing once all relevant predictors are included. The BIC criterion consistently identifies the true model, whereas AIC and cross-validation tend to favor overly complex specifications. The proposed variance estimator performs well in finite samples, exhibiting negligible bias and empirical coverage close to the nominal level.

~\\
The proposed framework establishes a rigorous connection between model selection and imputation in survey data. It shows that, under appropriate conditions, selecting an imputation model with a consistent criterion yields asymptotically valid inference. It would be of interest to extend these results to broader classes of imputation models or to settings with informative sampling designs. A first step in this direction could be to leverage the favorable model-selection properties of cross-validation in nonparametric regression \citep{yang2007consistency}. Another related promising avenue would be to investigate the use of aggregation \citep{nemirovski2000topics, bunea2007aggregation} to combine several models. This was tested empirically for the treatment of unit nonresponse with good results in \cite{Larbi2025}, but a theoretical investigation of the topic is currently lacking.
\appendix
\section{Additional notation}
The Euclidean vector norm is denoted $\rVert\cdot \rVert_2$. The operator and Frobenius norms of a matrix are denoted by $\rVert \cdot\rVert_{op}$ and $\rVert \cdot\rVert_{F}$, respectively. The largest and smallest eigenvalues of a symmetric matrix $\boldsymbol{S}$ are denoted $\lambda_{min} \left(\boldsymbol{S}\right)$ and $\lambda_{max} \left(\boldsymbol{S}\right)$, respectively. We write $\text{Tr}\left(\boldsymbol{S}\right)$ to denote the trace of a square matrix $\boldsymbol{S}$. We use $\boldsymbol{S} \succeq 0$ (resp, $\boldsymbol{S} \succ 0$) to denote that the symmetric matrix  $\boldsymbol{S}$ is a positive semi-definite (resp, positive definite) and write $\boldsymbol{S}_1 \succeq \boldsymbol{S}_2$ to mean that $\boldsymbol{S}_1 -\boldsymbol{S}_2\succeq 0$. The identity matrix of $\mathbb{R}^{n\times n}$ is denoted $\boldsymbol{I}_n$.   \\
Here, $z_n=o_{\P}(1)$ indicates that for any $\epsilon>0$, we have
\begin{equation*}
    \lim_{v \rightarrow \infty}\P_{ mpq}\left( \rvert z_n \rvert>\epsilon\right)=0
\end{equation*}
almost surely. We may omit the phrase "almost surely" or "a.s." in some intermediate steps of the proof. To express that $A\cup B$ is a disjoint union, we use $A \uplus B$. We use $x_n \sim_p z_n$ to say that $x_n - z_n = o_{\mathbb{P}}(1)$. To say that two random sequences $(x_n)_{n\in\N}$ and $(z_n)_{n\in\N}$ are of the same order in probability, we write $x_n \asymp_\P z_n$.\\
The matrix $\bX_r^\top \bX_r$ is denoted by $\boldsymbol{A}_r:=\bX_r^\top \bX_r$. Similarly, for $\alpha\in \mathcal{A}$, we write $\boldsymbol{A}_{r,\alpha}:=\bX_{r,\alpha}^{\top}\bX_{r,\alpha}$.

\section{Proof of main results}

\subsection{Invertibility with high-probability}

\begin{lemma}\label{HighProb} 
  Assume \ref{S1}-\ref{S2}, \ref{D1}-\ref{D2}. Let $$\boldsymbol{A}_v^w := \dfrac{1}{N_v}\sum_{k\in U_v} w_k r_k \bx_k \bx_k^\top,$$ for a set of $\sigma(\bX)$-measurable weights $(w_k)_{k\in U_v}$ satisfying that, there exists $\nu>0$ such that, for all $v \in \N$,  $w_k \geqslant \nu$, for all $k \in U_v$.  
  Then, there exists constants $C_1, C_2, C_3\in \R^*_+$ and an index $v_0 := v_0\left((\bx_k)_{k\in U}\right) \in \N$ such that, for all $v \geqslant v_0$, almost surely,  \begin{equation}\label{eq:high-dim}
        \P_q \left( \lambda_{\min}( \boldsymbol{A}_v )<C_3\right) \leq C_1 \exp \left( -C_2N_v\right).
    \end{equation}
\end{lemma}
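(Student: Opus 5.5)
Conditionally on $\bX_U$, the randomness in $\boldsymbol{A}_v^w$ comes only from the independent response indicators $r_k$, which are Bernoulli$(p(\bx_k))$ by MAR. Since the weights are bounded below, $\boldsymbol{A}_v^w \succeq \nu \boldsymbol{A}_v$ with $\boldsymbol{A}_v := N_v^{-1}\sum_{k\in U_v} r_k\bx_k\bx_k^\top$. It therefore suffices to treat the unweighted matrix. I will lower bound $\lambda_{\min}(\boldsymbol{A}_v)$ in two steps: first its conditional mean, then its fluctuation around the mean.

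\textbf{Step 1: the mean.} We have $\E_q[\boldsymbol{A}_v] = N_v^{-1}\sum_{k} p(\bx_k)\bx_k\bx_k^\top \succeq \rho\, N_v^{-1}\sum_k \bx_k\bx_k^\top$ by positivity. By \ref{S2} and the strong law of large numbers (entrywise, with finitely many entries), $N_v^{-1}\sum_k \bx_k\bx_k^\top \to \E_\bx[\bx\bx^\top]$ almost surely. Under \ref{S1} this limit is positive definite: for a unit vector $\mathbf{u}$, $\mathbf{u}^\top\E[\bx\bx^\top]\mathbf{u}=\E[(\mathbf{u}^\top\bx)^2]>0$, because the hyperplane $\{\mathbf{u}^\top\bx=0\}$ has Lebesgue measure zero. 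Let $\lambda_0 := \lambda_{\min}(\E_\bx[\bx\bx^\top])>0$. By continuity of $\lambda_{\min}$, there is a random index $v_0$, a function of $(\bx_k)$, such that for $v\ge v_0$ we have $\lambda_{\min}(\E_q[\boldsymbol{A}_v])\ge \rho\lambda_0/2$.

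\textbf{Step 2: concentration.} Write $\boldsymbol{A}_v-\E_q\boldsymbol{A}_v = N_v^{-1}\sum_k (r_k-p(\bx_k))\bx_k\bx_k^\top$. This is a sum of independent, mean-zero, symmetric random matrices, each with operator norm at most $C_0^2/N_v$ by \ref{S2}. The matrix Hoeffding inequality (Tropp) then gives
\[
\P_q\big(\|\boldsymbol{A}_v-\E_q\boldsymbol{A}_v\|_{op}\ge t\big)\le 2p\exp\!\big(-N_v t^2/(8C_0^4)\big).
\]
Alternatively, one can apply scalar Hoeffding to each of the $p^2$ entries, bound $\|\cdot\|_{op}\le\|\cdot\|_F$, and take a union bound. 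Now take $t=\rho\lambda_0/4$. By Weyl's inequality, $\lambda_{\min}(\boldsymbol{A}_v)\ge \rho\lambda_0/2 - t = \rho\lambda_0/4$ outside this event. This gives \eqref{eq:high-dim} for $\boldsymbol{A}_v$ with $C_3=\rho\lambda_0/4$, $C_1=2p$ and $C_2=\rho^2\lambda_0^2/(128C_0^4)$. For the weighted matrix, use $C_3=\nu\rho\lambda_0/4$. All of these constants are deterministic. The design conditions \ref{D1}--\ref{D2} are not needed here, apart from ensuring that weights of the form $\pi_k$ satisfy $w_k\ge\lambda$.

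\textbf{Main obstacle.} The delicate point is Step 1. We need a deterministic lower bound that holds for all large $v$ almost surely, and this uses \ref{S1} to obtain $\lambda_0>0$. It also explains why $v_0$ must depend on the realized covariates. The concentration step is routine once boundedness \ref{S2} is available.
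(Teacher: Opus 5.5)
Your proof is correct, and its architecture matches the paper's. You reduce to the unweighted matrix via $\boldsymbol{A}_v^w \succeq \nu \boldsymbol{A}_v$, bound $\lambda_{\min}(\E_q[\boldsymbol{A}_v])$ from below for $v \geq v_0$ by the strong law, and then apply a matrix concentration inequality conditionally on the covariates.

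The difference is the concentration tool. The paper uses the matrix Chernoff lower-tail bound (Tropp, Theorem 5.1.1) directly on the positive semidefinite summands $N_v^{-1} r_k \bx_k \bx_k^\top$, whose largest eigenvalues are at most $C_0^2/N_v$. This controls $\lambda_{\min}(\boldsymbol{A}_v)$ multiplicatively, with no centering and no Weyl step. It yields an exponent of order $\mu_{\min} N_v / C_0^2$, linear in the mean eigenvalue.

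Your matrix Hoeffding plus Weyl route instead centers the summands and controls $\|\boldsymbol{A}_v - \E_q[\boldsymbol{A}_v]\|_{op}$ in both directions. This gives an exponent of order $(\rho\lambda_0)^2 N_v / C_0^4$, quadratic in the eigenvalue gap. Your $C_2$ is therefore smaller, markedly so when $\rho\lambda_0 / C_0^2$ is small. In exchange, your route does not rely on the summands being positive semidefinite, and its entrywise scalar version is completely elementary. Both routes give a bound $C_1 \exp(-C_2 N_v)$, which is all the lemma needs.

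Your Step 1 is also slightly more careful than the paper's. You justify positive definiteness of the limit from \ref{S1} and positivity, whereas the paper simply asserts $\E_\bx[p(\bx)\bx\bx^\top] \succ 0$. You also correctly state the almost-sure eigenvalue bound for the mean $\E_q[\boldsymbol{A}_v]$; the paper's text writes it for $\boldsymbol{A}_v$ itself.
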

\begin{proof}
Since $w_k \geqslant \nu$ uniformly, we have $$\boldsymbol{A}_v^w  \succeq \nu \boldsymbol{A}_v, \quad \text{with}\quad  \boldsymbol{A}_v:= \dfrac{1}{N}\sum_{k\in U_v} r_k \bx_k \bx_k^\top.$$ In particular, this ensures $$\lambda_{\min}( \boldsymbol{A}_v^w ) \geq \nu \lambda_{\min}( \boldsymbol{A}_v ),$$ so we shall prove the inequality for $\boldsymbol{A}_v$ instead. Note that $$ \E_q \left[\boldsymbol{A}_v\right] = \dfrac{1}{N}\sum_{k\in U_v} p(\bx_k)\bx_k \bx_k^\top := \boldsymbol{A}_v^p \xrightarrow[v \to \infty]{a.s.}\E_\bx \left[ p(\bx)\bx\bx^\top\right]:=\boldsymbol{A}^{(\infty)}.$$ 
In our setting, $\boldsymbol{A}^{(\infty)}\succ \boldsymbol{0}$ so that $\sigma^{(\infty)} := \lambda_{\min}( \boldsymbol{A}^{(\infty)}) >0.$ Since the convergence to $\boldsymbol{A}^{(\infty)}$ is almost sure, there exists an event $E$ such that $\P_\bx(E)=1$, on which for all $\omega \in E$, there exists $v_0(\omega) $ such that for all $v \geq v_0,$ $$ \lambda_{\min}( \boldsymbol{A}_v) \geq \dfrac{\sigma^{(\infty)} }{2}. $$ We now fix $\omega\in E$, and $v \geq v_0$. Let $$\boldsymbol{Z}_k := N_v^{-1} r_k \bx_k \bx_k^\top, \qquad k \in U_v,$$ and note that $(\boldsymbol{Z}_k)_{k\in U_v}$ are (i) positive semi-definite, (ii) of the operator norm uniformly bounded $\rVert \boldsymbol{Z}_k\rVert_{op} \leq C_0^2/N$ with $C_0$ satisfying $\rVert\bx_k\rVert_2 \leq C_0$. In particular, $$\mu_{\min} := \lambda_{\min} \left( \E_q \left[\sum_{k\in U_v} \boldsymbol{Z}_k \right]\right) = \lambda_{\min} \left(\boldsymbol{A}^p_v \right) \geq \dfrac{\sigma^{(\infty)}}{2}.$$
An application of Matrix Chernoff's inequality (see, e.g., Theorem 5.1.1 in \cite{tropp2015introduction}) gives $$\P_q \left( \lambda_{\min}\left( \sum_{k\in U_v} \boldsymbol{Z}_k \right) \leq \dfrac{\mu_{\min}}{2} \right) \leq p \left( \sqrt{\dfrac{2}{e}}\right)^{\dfrac{\mu_{\min}N_v}{C_0^2}}\leq p \left( \sqrt{\dfrac{2}{e}}\right)^{\dfrac{\sigma^{(\infty)}N_v}{2C_0^2}}.$$ Noting that $2/e<1$, we write 
$$ p \left( \sqrt{\dfrac{2}{e}}\right)^{\dfrac{\sigma^{(\infty)}N_v}{2C_0^2}} = C_1\exp \left(-C_2N_v \right), \quad \text{with} \qquad C_1 := p, \quad C_2 := -\dfrac{\sigma^{(\infty)}}{2C_0^2}\ln \left( \sqrt{\dfrac{2}{e}}\right)>0.$$
Overall, since $$\left\{ \lambda_{\min}\left( \sum_{k\in U_v} \boldsymbol{Z}_k \right) \leq \dfrac{\sigma^{(\infty)}}{4} \right\}  \subset \left\{ \lambda_{\min}\left( \sum_{k\in U_v} \boldsymbol{Z}_k \right) \leq \dfrac{\mu_{\min}}{2} \right\} ,$$ we get that, almost surely, for all $v \geqslant v_0$,
$$\P_q \left( \lambda_{\min}\left( \sum_{k\in U_v} \boldsymbol{Z}_k \right) \leq \dfrac{\sigma^{(\infty)}}{4} \right) \leq C_1 \exp\left(-C_2N_v\right)$$ which also translates to its weighted version. 
\end{proof}

\subsection*{Proof of Proposition \ref{prop1}.} \label{proof_prop1}
In the remainder of the proof, we fix an arbitrary $\alpha \in \cA$. Observe that 
\begin{align*}
\cL_n \left(\alpha\right)  &= \mathbb{E}_m \left[ \sum_{k \in S_m} \dfrac{\left(y_k-\bx_{k,\alpha}^\top \widehat{\pmb{\beta}}_\alpha\right)^2}{\pi_k^2} \right] +  \mathbb{E}_m \left[ \sum_{k \in S_m}\sum_{\substack{l \in S_m\\ l\neq k}} \dfrac{y_k-\bx_{k,\alpha}^\top \widehat{\pmb{\beta}}_\alpha}{\pi_k}\dfrac{y_l-\bx_{l,\alpha}^\top \widehat{\pmb{\beta}}_\alpha}{\pi_l} \right]\\
&:= A_n\left(\alpha\right) + B_n\left(\alpha\right).
\end{align*}
We compute each term separately. 

\paragraph{Computation of $A_n\left(\alpha\right)$.}
By expanding $y_k = \bx_{k,\alpha}^\top \bbeta + \epsilon_k$ for $k \in S_m$, we obtain
\begin{align*}
A_n\left(\alpha\right)&=\mathbb{E}_m\left[\sum_{k \in S_m}\dfrac{\left(\bx_k^{\top}\pmb{\beta}-\bx_{k,\alpha}^{\top}\widehat{\pmb{\beta}}_{\alpha}\right)^2}{\pi_k^2}\right]+\mathbb{E}_m\left[\sum_{k \in S_m}\dfrac{\epsilon_k^2}{\pi_k^2}\right]
+2\mathbb{E}_m\left[\sum_{k \in S_m}\dfrac{\epsilon_k}{\pi_k^2}\left(\bx_k^{\top}\pmb{\beta}-\bx_{k,\alpha}^{\top}\widehat{\pmb{\beta}}_{\alpha} \right) \right]  \\
&=\mathbb{E}_m\left[\sum_{k \in S_m}\dfrac{\bigg(\bx_k^{\top}\pmb{\beta}-\bx_{k,\alpha}^{\top}\widehat{\pmb{\beta}}_{\alpha}\bigg)^2}{\pi_k^2}\right]+\sum_{k \in S_m}\dfrac{\sigma^2}{\pi_k^2}.
\end{align*}
The first term can be rewritten as
\begin{align*}
&\mathbb{E}_m\left[\sum_{k \in S_m}\dfrac{\left(\bx_k^{\top}\pmb{\beta}-\bx_{k,\alpha}^{\top}\widehat{\pmb{\beta}}_{\alpha}\right)^2}{\pi_k^2}\right] \\&=\mathbb{E}_m\left[\sum_{k \in S_m}\dfrac{\left(\bx_k^{\top}\pmb{\beta}-\bx_{k,\alpha}^{\top}\boldsymbol{A}_{r,\alpha}^{-1}\bX_{r,\alpha}^{\top}\boldsymbol{Y}_r\right)^2}{\pi_k^2} \right]\nonumber\\
&=\mathbb{E}_m\left[\sum_{k \in S_m}\dfrac{\left( \bx_k^{\top}\pmb{\beta}-\bx_{k,\alpha}^{\top}\boldsymbol{A}_{r,\alpha}^{-1}\bX_{r,\alpha}^{\top}\bX_r\pmb{\beta}\right)^2}{\pi_k^2}  \right]\\
&\quad\quad+\sum_{k \in S_m}\dfrac{\bx_{k,\alpha}^{\top}\boldsymbol{A}_{r,\alpha}^{-1}\bX_{r,\alpha}^{\top}\mathbb{E}_m\left[\boldsymbol{\epsilon}_r \pmb{\epsilon}_r^\top \right]\bX_{r,\alpha}\boldsymbol{A}_{r, \alpha}^{-1} \bx_{k,\alpha} }{\pi_k^2} \\
&= \sum_{k \in S_m}\dfrac{\left( \bx_k^{\top}\pmb{\beta}-\bx_{k,\alpha}^{\top}\boldsymbol{A}_{r,\alpha}^{-1}\bX_{r,\alpha}^{\top}\bX_r\pmb{\beta}\right)^2}{\pi_k^2}  +\sigma^2\sum_{k \in S_m}\dfrac{\bx_{k,\alpha}^{\top}\boldsymbol{A}_{r, \alpha}^{-1} \bx_{k,\alpha} }{\pi_k^2}.
\end{align*}
Overall, 
\begin{align} \label{eq:An}
A_n\left(\alpha\right) &=\sum_{k \in S_m}\dfrac{\left( \bx_k^{\top}\pmb{\beta}-\bx_{k,\alpha}^{\top}\boldsymbol{A}_{r,\alpha}^{-1}\bX_{r,\alpha}^{\top}\bX_r\pmb{\beta}\right)^2}{\pi_k^2}  +\sigma^2 \left(\sum_{k \in S_m}\dfrac{\bx_{k,\alpha}^{\top}\boldsymbol{A}_{r, \alpha}^{-1} \bx_{k,\alpha} }{\pi_k^2}+  \sum_{k \in S_m}\dfrac{1}{\pi_k^2}\right).
\end{align}
\paragraph{Computation of $B_n\left(\alpha\right)$.} Note that
\begin{align}
&\sum_{k \in S_m}\sum_{\substack{l \in S_m \\ l\neq k}}\left(y_k-\bx_{k,\alpha}^{\top}\widehat{\pmb{\beta}}_{\alpha}\right)\left(y_l-\bx_{l,\alpha}^{\top}\widehat{\pmb{\beta}}_{\alpha}\right)\nonumber\\
&=\sum_{k \in S_m}\sum_{\substack{l \in S_m \\ l\neq k}}\left(\bx_k^{\top}\pmb{\beta}+\epsilon_k-\bx_{k,\alpha}^{\top}\widehat{\pmb{\beta}}_{\alpha}\right)\left(\bx_l^{\top}\pmb{\beta}+\epsilon_l-\bx_{l,\alpha}^{\top}\widehat{\pmb{\beta}}_{\alpha}\right)\nonumber \\
&=\sum_{k \in S_m}\sum_{\substack{l \in S_m \\ l\neq k}}\epsilon_k\epsilon_l+\sum_{k \in S_m}\sum_{\substack{l \in S_m \\ l\neq k}}\epsilon_k\left(\bx_l^{\top}\pmb{\beta}-\bx_{l,\alpha}^{\top}\widehat{\pmb{\beta}}_{\alpha}\right)+\sum_{k \in S_m}\sum_{\substack{l \in S_m \\ l\neq k}}\epsilon_l\left(\bx_k^{\top}\pmb{\beta}-\bx_{k,\alpha}^{\top}\widehat{\pmb{\beta}}_{\alpha}\right)\nonumber\\
&\quad +\sum_{k \in S_m}\sum_{\substack{l \in S_m \\ l\neq k}}\left(\bx_k^{\top}\pmb{\beta}-\bx_{k,\alpha}^{\top}\widehat{\pmb{\beta}}_{\alpha}\right)\left(\bx_l^{\top}\pmb{\beta}-\bx_{l,\alpha}^{\top}\widehat{\pmb{\beta}}_{\alpha}\right). \label{eq:fourth}
\end{align}
It follows from the above that the first three terms of \eqref{eq:fourth} have mean zero. Consider the following decomposition of the last term:
\begin{align*}
&\sum_{k \in S_m}\sum_{\substack{l \in S_m \\ l\neq k}}\dfrac{\left(\bx_k^{\top}\pmb{\beta}-\bx_{k,\alpha}^{\top}\widehat{\pmb{\beta}}_{\alpha}\right)}{\pi_k}\dfrac{\left(\bx_l^{\top}\pmb{\beta}-\bx_{l,\alpha}^{\top}\widehat{\pmb{\beta}}_{\alpha}\right)}{\pi_l}\nonumber\\
&=\sum_{k \in S_m}\sum_{\substack{l \in S_m \\ l\neq k}}\dfrac{\left(\bx_k^{\top}\pmb{\beta}-\bx_{k,\alpha}^{\top}\boldsymbol{A}_{r,\alpha}^{-1}\bX_{r,\alpha}^{\top}(\bX_r\pmb{\beta}+\pmb{\epsilon}_r)\right)}{\pi_k}\dfrac{\left(\bx_l^{\top}\pmb{\beta}-\bx_{l,\alpha}^{\top}\boldsymbol{A}_{r,\alpha}^{-1}\bX_{r,\alpha}^{\top}(\bX_r\pmb{\beta}+\pmb{\epsilon}_r)\right)}{\pi_l}\nonumber\\
&=\sum_{k \in S_m}\sum_{\substack{l \in S_m \\ l\neq k}}\dfrac{\left(\bx_k^{\top}\pmb{\beta}-\bx_{k,\alpha}^{\top}\boldsymbol{A}_{r,\alpha}^{-1}\bx_{r,\alpha}^{\top}\bX_r\pmb{\beta}\right)}{\pi_k}\dfrac{\left(\bx_l^{\top}\pmb{\beta}-\bx_{l,\alpha}^{\top}\boldsymbol{A}_{r,\alpha}^{-1}\bX_{r,\alpha}^{\top}\bX_r\pmb{\beta}\right)}{\pi_l}\nonumber\\
&\quad +2\sum_{k \in S_m}\sum_{\substack{l \in S_m \\ l\neq k}}\dfrac{\bx_{k,\alpha}^{\top}\boldsymbol{A}_{r,\alpha}^{-1}\bX_{r,\alpha}\pmb{\epsilon}_r}{\pi_k}\dfrac{\left(\bx_l^{\top}\pmb{\beta}-\bx_{l,\alpha}^{\top}\boldsymbol{A}_{r,\alpha}^{-1}\bX_{r,\alpha}^{\top}\bX_r\pmb{\beta}\right)}{\pi_l} \nonumber \\
&\quad +\sum_{k \in S_m}\sum_{\substack{l \in S_m \\ l\neq k}}\dfrac{\bx_{k,\alpha}^{\top}\boldsymbol{A}_{r,\alpha}^{-1}\bX_{r,\alpha}\pmb{\epsilon}_r}{\pi_k}\dfrac{\pmb{\epsilon}_r^\top \bX_{r, \alpha}^\top \boldsymbol{A}_{r, \alpha}^{-1} \bx_{l,\alpha}}{\pi_l}.
\end{align*}
It follows that
\begin{align}
\mathbb{E}_m& \left[ \sum_{k \in S_m}\sum_{\substack{l \in S_m\\ l\neq k}} \dfrac{y_k-\bx_{k,\alpha}^\top \widehat{\pmb{\beta}}_\alpha}{\pi_k}\dfrac{y_l-\bx_{l,\alpha}^\top \widehat{\pmb{\beta}}_\alpha}{\pi_l} \right] \nonumber\\&= \mathbb{E}_m \left[ \sum_{k \in S_m}\sum_{\substack{l \in S_m \\ l\neq k}}\dfrac{\left(\bx_k^{\top}\pmb{\beta}-\bx_{k,\alpha}^{\top}\widehat{\pmb{\beta}}_{\alpha}\right)}{\pi_k}\dfrac{\left(\bx_l^{\top}\pmb{\beta}-\bx_{l,\alpha}^{\top}\widehat{\pmb{\beta}}_{\alpha}\right)}{\pi_l} \right] \nonumber\\
&= \sum_{k \in S_m}\sum_{\substack{l \in S_m \\ l\neq k}}\dfrac{\left(\bx_k^{\top}\pmb{\beta}-\bx_{k,\alpha}^{\top}\boldsymbol{A}_{r,\alpha}^{-1}\bX_{r,\alpha}^{\top}\bX_r\pmb{\beta}\right)}{\pi_k}\dfrac{\left(\bx_l^{\top}\pmb{\beta}-\bx_{l,\alpha}^{\top}\boldsymbol{A}_{r,\alpha}^{-1}\bX_{r,\alpha}^{\top}\bX_r\pmb{\beta}\right)}{\pi_l}\nonumber \\&+ \sigma^2 \sum_{k \in S_m}\sum_{\substack{l \in S_m \\ l\neq k}}\dfrac{\bx_{k,\alpha}^{\top}\boldsymbol{A}_{r, \alpha}^{-1} \bx_{l,\alpha}}{\pi_k\pi_l}. \label{eq:Bn}
\end{align}
Combining the closed-form formulas of $A_n\left(\alpha\right)$ in \eqref{eq:An} and $B_n\left(\alpha\right)$ in \eqref{eq:Bn}, we get 
\begin{align*}
\cL_n \left(\alpha\right) &= \left\{\sum_{k \in S_m}\pi_k^{-1}\left(\bx_k^{\top}-\bx_{k,\alpha}^{\top}\boldsymbol{A}_{r,\alpha}^{-1}\bX_{r,\alpha}^{\top}\bX_r\right)\pmb{\beta}\right\}^2\nonumber \\
&\quad+ \sigma^2 \left\{\left(\sum_{k \in S_m} \dfrac{\bx_{k,\alpha}^\top}{\pi_k}\right)\boldsymbol{A}_{r, \alpha}^{-1} \left(\sum_{l\in S_m}\dfrac{\bx_{l,\alpha}}{\pi_l}\right)+\sum_{k \in S_m}\dfrac{1}{\pi_k^2} \right\}.
\end{align*}
\subsection*{Proof of Proposition \ref{prop2}.} \label{proof_prop2}

Let $\alpha_1, \alpha_2 \in \cC$ such that $\alpha_1 \subset \alpha_2$. Recall that, for any $\alpha \in \cC$, $\cL_{1,n}\left(\alpha\right) = 0.$ Thus, \begin{align*}
\cL_{n} \left(\alpha_1\right) < \cL_{n} \left(\alpha_2\right) \qquad &\Leftrightarrow \qquad   \cL_{2,n} \left(\alpha_1\right) < \cL_{2,n} \left(\alpha_2\right). 
\end{align*}
Since $\alpha^{\star}$ is the smallest correct model, the result follows using Lemma \ref{MatrixOverfitting}.

\subsection*{Proof of Theorem \ref{Theo1}.} \label{proof_Theo1}

We will show that $\lim_{v\to\infty} \mathbb{P}_{pq} \left( \alpha_{opt,v} \neq \alpha^{\star}\right)= 0$. We may write the event $\{\alpha_{opt,v} \neq \alpha^{\star}\}$ as a disjoint union $$\{\alpha_{opt,v} \neq \alpha^{\star}\} =\{\alpha_{opt,v} \in \cC, \alpha_{opt,v} \neq \alpha^{\star}\} \biguplus \\\{\alpha_{opt,v} \in \cW\},$$ so that 
\begin{equation*}
\mathbb{P}_{pq} \left(\alpha_{opt,v} \neq \alpha^{\star}\right)   =  \mathbb{P}_{pq}  \left(\alpha_{opt,v} \in \cC, \alpha_{opt,v} \neq \alpha^{\star}\right)   +  \mathbb{P}_{pq}  \left(\alpha_{opt,v} \in \cW\right) := p_{1,v} + p_{2,v}.
\end{equation*}
Using Proposition \ref{prop2}, we have directly that $p_{1,v}=0$ for all $v \in \N$.\\

Let $\alpha_F:=\left\{1,\dots,p\right\}\in\cC$ be the set of all covariates. For $p_{2,v}$, consider the following decomposition
\begin{align*}
p_{2,v}&=\mathbb{P}_{pq}\left(\mathcal{L}_n(\alpha_{opt,v})<\mathcal{L}_n(\alpha^{\star})\right)\\&=\mathbb{P}_{pq}\left(\mathcal{L}_n(\alpha_{opt,v})<\mathcal{L}_n(\alpha^{\star}), \mathcal{L}_n(\alpha^{\star}\right)\leq \mathcal{L}_n(\alpha_F))\\
&\quad+\mathbb{P}_{pq}\left(\mathcal{L}_n(\alpha_{opt,v})<\mathcal{L}_n(\alpha^{\star}),\mathcal{L}_n(\alpha^{\star})>\mathcal{L}_n(\alpha_F)\right)\\
&\leq\mathbb{P}_{pq}\left(\mathcal{L}_n(\alpha_{opt,v})<\mathcal{L}(\alpha_F)\right)+\mathbb{P}_{pq}\left(\mathcal{L}_n(\mathcal{\alpha}_F)<\mathcal{L}_n(\alpha^{\star})\right):=q_{1,v}+q_{2,v}.
\end{align*}
\paragraph{Convergence of $q_{1,v}$.} On the one hand, for $\alpha_{opt,v}\in \cW$, the loss $\mathcal{L}_{1,n}(\alpha_{opt,v})$ satisfies the following identity 
\begin{align}\label{q1s}
\mathcal{L}_{1,n}(\alpha_{opt,v})&=\left\{\sum_{k \in S_{m,v}}\pi_k^{-1}\left(\bx_k^{\top}-\bx_{k,\alpha}^{\top}\boldsymbol{A}_{r,\alpha}^{-1}\bX_{r,\alpha}^{\top}\bX_r\right)\pmb{\beta}\right\}^2\nonumber\\
&=\Bigg\{\sum_{k \in S_{m,v}}\pi_k^{-1}\Big(\bx_{k,\alpha_{opt,v}}^{\top}\pmb{\beta}_{\alpha_{opt,v}}+\bx_{k,\alpha_{opt,v}^c}^{\top}\pmb{\beta}_{\alpha_{opt,v}^c}\\&\quad-\underbrace{\bx_{k,\alpha_{opt,v}}^{\top}\boldsymbol{A}_{r,\alpha_{opt,v}}^{-1}\bX_{r,\alpha_{opt,v}}^{\top}\bX_{r,\alpha_{opt,v}}\pmb{\beta}_{\alpha_{opt,v}}}_{=\bx_{k,\alpha_{opt,v}}^{\top}\pmb{\beta}_{\alpha_{opt,v}}}\nonumber\\&\quad-\bx_{k,\alpha_{opt,v}}^{\top}\boldsymbol{A}_{r,\alpha_{opt,v}}^{-1}\bX_{r,\alpha_{opt,v}}^{\top}\bX_{r,\alpha_{opt,v}^c}\pmb{\beta}_{\alpha_{opt,v}^c}\Big)\Bigg\}^2\nonumber\\
&=\left\{\sum_{k \in S_m}\pi_k^{-1}\left(\bx_{k,\alpha_{opt,v}^c}^{\top}-\bx_{k,\alpha_{opt,v}}^{\top}\boldsymbol{A}_{r,\alpha_{opt,v}}^{-1}\bX_{r,\alpha_{opt,v}}^{\top}\bX_{r,\alpha_{opt,v}^c}\right)\pmb{\beta}_{\alpha_{opt,v}^c}\right\}^2.
\end{align}
On the other hand, since $\alpha_F\in \mathcal{C}$, we have $\mathcal{L}_{1,n}\left(\alpha_F\right)=0$. Thus, $\mathcal{L}_{n}\left(\alpha_{opt,v}\right)<\mathcal{L}_{n}\left(\alpha_F\right)$ is equivalent to
\begin{align}\label{q1v}
\mathcal{L}_n\left(\alpha_{opt,v}\right)<\mathcal{L}_n\left(\alpha_F\right) \quad &\Leftrightarrow \quad \mathcal{L}_{1,n}\left(\alpha_{opt,v}\right)+\mathcal{L}_{2,n}\left(\alpha_{opt,v}\right)<\mathcal{L}_{1,n}\left(\alpha_F\right)+\mathcal{L}_{2,n}\left(\alpha_F\right)\nonumber\\  &\Leftrightarrow \quad \mathcal{L}_{1,n}\left(\alpha_{opt,v}\right)<\mathcal{L}_{2,n}\left(\alpha_F\right)-\mathcal{L}_{2,n}\left(\alpha_{opt,v}\right).
\end{align}
Combining \eqref{q1s} and \eqref{q1v} leads to 
\begin{align*}
&\mathcal{L}_{1,n}\left(\alpha_{opt,v}\right)<\mathcal{L}_{2,n}\left(\alpha_F\right)-\mathcal{L}_{2,n}\left(\alpha_{opt,v}\right)
\\&\Leftrightarrow\left(\sum_{k \in S_{m,v}}\frac{\bx_{k,\alpha_{opt,v}^c}^{\top}}{N_v\pi_k}-\sum_{k \in S_{m,v}}\frac{\bx_{k,\alpha_{opt,v}}^{\top}}{N_v\pi_k}\boldsymbol{A}_{r,\alpha_{opt,v}}^{-1}\bx_{r,\alpha_{opt,v}}^{\top}\bx_{r,\alpha_{opt,v}^c}\right)\\&\quad\quad\times\pmb{\beta}_{\alpha_{opt,v}^c}\pmb{\beta}_{\alpha_{opt,v}^c}^{\top}\left(\sum_{k \in S_{m,v}}\frac{\bx_{k,\alpha_{opt,v}^c}^{\top}}{N_v\pi_k}-\sum_{k \in S_{m,v}}\frac{\bx_{k,\alpha_{opt,v}}^{\top}}{N_v\pi_k}\boldsymbol{A}_{r,\alpha_{opt,v}}^{-1}\bx_{r,\alpha_{opt,v}}^{\top}\bx_{r,\alpha_{opt,v}^c}\right)^{\top}\nonumber\\
&\quad< \sigma^2\left(\sum_{k \in S_{m,v}}\frac{\bx_{k,\alpha_F}^{\top}}{N_v\pi_k}\right)\boldsymbol{A}_{r,\alpha_F}^{-1}\left(\sum_{k \in S_{m,v}}\frac{\bx_{k,\alpha_F}}{N_v\pi_k}\right)\\
&\quad\quad-\sigma^2\left(\sum_{k \in S_{m,v}}\frac{\bx_{k,\alpha_{opt,v}}^{\top}}{N_v\pi_k}\right)\boldsymbol{A}_{r,\alpha_{opt,v}}^{-1}\left(\sum_{k \in S_{m,v}}\frac{\bx_{k,\alpha_{opt,v}}}{N_v\pi_k}\right)\nonumber.
\end{align*}
Using equality \eqref{prooflm2}, by setting $\alpha_1:=\alpha_{opt,v}$, $\alpha_2:=\alpha_F$,  $\alpha_{opt,v}^c:=\alpha_{opt,v}^c$, we obtain
\begin{align*}
&\sigma^2\bigg(\sum_{k \in S_{m,v}}\frac{\bx_{k,\alpha_F}^{\top}}{N_v\pi_k}\bigg)\boldsymbol{A}_{r,\alpha_F}^{-1}\bigg(\sum_{k \in S_{m,v}}\frac{\bx_{k,\alpha_F}}{N_v\pi_k}\bigg)-\sigma^2\bigg(\sum_{k \in S_{m,v}}\frac{\bx_{k,\alpha_{opt,v}}^{\top}}{N_v\pi_k}\bigg)\boldsymbol{A}_{r,\alpha_{opt,v}}^{-1}\bigg(\sum_{k \in S_{m,v}}\frac{\bx_{k,\alpha_{opt,v}}}{N_v\pi_k}\bigg)\\
&\quad=\sigma^2\bigg(\sum_{k \in S_{m,v}}\frac{\bx_{k,\alpha_{opt,v}^c}^{\top}}{N_v\pi_k}-\sum_{k \in S_{m,v}}\frac{\bx_{k,\alpha_{opt,v}}^{\top}}{N_v\pi_k}\boldsymbol{A}_{r,\alpha_{opt,v}}^{-1}\bx_{r,\alpha_{opt,v}}^{\top}\bx_{r,\alpha_{opt,v}^c}\bigg)\\&\quad\quad\times\boldsymbol{S}_v^{-1}\bigg(\sum_{k \in S_{m,v}}\frac{\bx_{k,\alpha_{opt,v}^c}^{\top}}{N_v\pi_k}-\sum_{k \in S_{m,v}}\frac{\bx_{k,\alpha_{opt,v}}^{\top}}{N_v\pi_k}\boldsymbol{A}_{r,\alpha_{opt,v}}^{-1}\bx_{r,\alpha_{opt,v}}^{\top}\bx_{r,\alpha_{opt,v}^c}\bigg)^{\top},
\end{align*}
where
\begin{align*}
\boldsymbol{S}_v&=\bx_{r,\alpha_{opt,v}^c}^{\top}\left(\boldsymbol{I}_{n_{r,v}}-\bx_{r,\alpha_{opt,v}}\boldsymbol{A}_{r,\alpha_{opt,v}}^{-1}\bx_{r,\alpha_{opt,v}}^{\top}\right)\bx_{r,\alpha_{opt,v}^c}
\end{align*}
is a positive and definite matrix.
Let $$\textbf{z}_{v}^{\top}=\sum_{k \in S_{m,v}}\frac{\bx_{k,\alpha_{opt,v}^c}^{\top}}{N_v\pi_k}-\sum_{k \in S_{m,v}}\frac{\bx_{k,\alpha_{opt,v}}^{\top}}{N_v\pi_k}\boldsymbol{A}_{r,\alpha_{opt,v}}^{-1}\bx_{r,\alpha_{opt,v}}^{\top}\bx_{r,\alpha_{opt,v}^c}.$$
Then $\mathcal{L}_{1,n}\left(\alpha_{opt,v}\right)<\mathcal{L}_{2,n}\left(\alpha_F\right)-\mathcal{L}_{2,n}\left(\alpha_{opt,v}\right)$ reduces to
\begin{equation*}
\mathcal{L}_{1,n}\left(\alpha_{opt,v}\right)<\mathcal{L}_{2,n}\left(\alpha_F\right)-\mathcal{L}_{2,n}\left(\alpha_{opt,v}\right) \quad \Leftrightarrow \quad \textbf{z}_v^{\top}\pmb{\beta}_{\alpha_{opt,v}^c}\pmb{\beta}_{\alpha_{opt,v}^c}^{\top}\textbf{z}_v<\sigma^2\textbf{z}_v^{\top}\boldsymbol{S}_v^{-1}\textbf{z}_v.
\end{equation*}
Consider the following decomposition of $q_{1,v}$,
\begin{align*}
\lim_{v \rightarrow \infty}q_{1,v}&=\lim_{v \rightarrow \infty}\mathbb{P}_{pq}\left(\textbf{z}_v^{\top}\pmb{\beta}_{\alpha_{opt,v}^c}\pmb{\beta}_{\alpha_{opt,v}^c}^{\top}\textbf{z}_v<\sigma^2\textbf{z}_v^{\top}\boldsymbol{S}_v^{-1}\textbf{z}_v,\textbf{z}_v\neq \textbf{0}\right)\\&\quad+\lim_{v \rightarrow \infty}\mathbb{P}_{pq}\left(\textbf{z}_v^{\top}\pmb{\beta}_{\alpha_{opt,v}^c}\pmb{\beta}_{\alpha_{opt,v}^c}^{\top}\textbf{z}_v<\sigma^2\textbf{z}_v^{\top}\boldsymbol{S}_v^{-1}\textbf{z}_v,\textbf{z}_v= \textbf{0}\right)
\end{align*}
If $\textbf{z}_v=\mathbf{0}$, then the event $\{\textbf{z}_v^{\top}\pmb{\beta}_{\alpha_{opt,v}^c}\pmb{\beta}_{\alpha_{opt,v}^c}^{\top}\textbf{z}_v<\sigma^2\textbf{z}_v^{\top}\boldsymbol{S}_v^{-1}\textbf{z}_v \}$ reduces to $\left\{0<0\right\} = \emptyset$. Therefore,
\begin{equation*}
\mathbb{P}_{pq}\left(\textbf{z}_v^{\top}\pmb{\beta}_{\alpha_{opt,v}^c}\pmb{\beta}_{\alpha_{opt,v}^c}^{\top}\textbf{z}_v<\sigma^2\textbf{z}_v^{\top}\boldsymbol{S}_v^{-1}\textbf{z}_v,\textbf{z}_v= \textbf{0}\right)=0, \quad \text{for all} \ \ v\in \mathbb{N}.
\end{equation*}
Recall that for two quadratic forms $\textbf{z}^{\top}\boldsymbol{A}\textbf{z}$ and $\textbf{z}^{\top}\boldsymbol{B}\textbf{z}$, if for $\textbf{z}\neq \textbf{0}$, $\textbf{z}^{\top}\boldsymbol{A}\textbf{z}<\textbf{z}^{\top}\boldsymbol{B}\textbf{z}$, then $\lambda_{max}(\boldsymbol{A}-\boldsymbol{B})<0$. Recall also that from Courant-Fisher's Theorem, for real-valued symmetric matrices $\boldsymbol{A}$ and $\boldsymbol{B}$, we have $\lambda_{max}\left(\boldsymbol{A}\right)+\lambda_{min}\left(\boldsymbol{B}\right)\leq\lambda_{max}\left(\boldsymbol{A}+\boldsymbol{B}\right)\leq \lambda_{max}\left(\boldsymbol{A}\right)+\lambda_{max}\left(\boldsymbol{B}\right)$. Therefore, combining the two statements above, we obtain
\begin{align*}
&\lim_{v \rightarrow \infty}\mathbb{P}_{pq}\left(\textbf{z}_v^{\top}\pmb{\beta}_{\alpha_{opt,v}^c}\pmb{\beta}_{\alpha_{opt,v}^c}^{\top}\textbf{z}_v<\sigma^2\textbf{z}_v^{\top}\boldsymbol{S}_v^{-1}\textbf{z}_v,\textbf{z}_v\neq \textbf{0}\right)\\&\quad=\lim_{v \rightarrow \infty}\mathbb{P}_{pq}\left(\lambda_{max}(\pmb{\beta}_{\alpha_{opt,v}^c}\pmb{\beta}_{\alpha_{opt,v}^c}^{\top}-\sigma^2\boldsymbol{S}_v^{-1})<0\right) 
\\ &\quad\leq\lim_{v \rightarrow \infty}\mathbb{P}_{pq}\left(\lambda_{max}(\pmb{\beta}_{\alpha_{opt,v}^c}\pmb{\beta}_{\alpha_{opt,v}^c}^{\top})+\sigma^2(\lambda_{min}(-\boldsymbol{S}_v^{-1}))<0\right)\\
   &\quad=\lim_{v \rightarrow \infty}\left(\lambda_{max}(\pmb{\beta}_{\alpha_{opt,v}^c}\pmb{\beta}_{\alpha_{opt,v}^c}^{\top})-\sigma^2(\lambda_{max}(\boldsymbol{S}_v^{-1}))<0\right)\\
&\quad=\lim_{v \rightarrow \infty}\mathbb{P}_{pq}\left(\lambda_{max}(\pmb{\beta}_{\alpha_{opt,v}^c}\pmb{\beta}_{\alpha_{opt,v}^c}^{\top})-\sigma^2(\lambda_{min}(\boldsymbol{S}_v))^{-1}<0\right)\\
&\quad= \lim_{v \rightarrow \infty}\mathbb{P}_{pq}\left(\Arrowvert\pmb{\beta}_{\alpha_{opt,v}^c}\Arrowvert_2^2-N_v^{-1}\sigma^2(\lambda_{min}(N_v^{-1}\boldsymbol{S}_v))^{-1}<0\right)  \\
&\quad=\lim_{v \rightarrow \infty}\mathbb{P}_{pq}\left(\lambda_{min}(N_v^{-1}\boldsymbol{S}_v)<\frac{\sigma^2}{N_v\Arrowvert\pmb{\beta}_{\alpha_{opt,v}^c}\Arrowvert_2^2}\right).
\end{align*}
Using Lemma \ref{SMatrix}, there exists a constant $K_0>0$ such that
\begin{align*}
&\lim_{v \rightarrow \infty}\mathbb{P}_{pq}\left(\lambda_{min}(N_v^{-1}\boldsymbol{S}_v)<\frac{\sigma^2}{N_v\Arrowvert\pmb{\beta}_{\alpha_{opt,v}^c}\Arrowvert_2^2}\right)\\
&\quad=\lim_{v \rightarrow \infty}\mathbb{P}_{pq}\left(\lambda_{min}(N_v^{-1}\boldsymbol{S}_v)<\frac{\sigma^2}{N_v\Arrowvert\pmb{\beta}_{\alpha_{opt,v}^c}\Arrowvert_2^2},\lambda_{min}(N_v^{-1}\boldsymbol{S}_v)\geq K_0\right)\\
&\quad\quad+\lim_{v \rightarrow \infty}\mathbb{P}_{pq}\left(\lambda_{min}(N_v^{-1}\boldsymbol{S}_v)<\frac{\sigma^2}{N_v\Arrowvert\pmb{\beta}_{\alpha_{opt,v}^c}\Arrowvert_2^2},\lambda_{min}(N_v^{-1}\boldsymbol{S}_v)< K_0\right)\\
&\quad\leq \lim_{v \rightarrow \infty}\mathbb{P}_{pq}\left(K_0\leq \frac{\sigma^2}{N_v\Arrowvert\pmb{\beta}_{\alpha_{opt,v}^c}\Arrowvert_2^2} \right)+\lim_{v \rightarrow \infty}\mathbb{P}_{pq}\left(\lambda_{min}(N_v^{-1}\boldsymbol{S}_v)< K_0\right)=0.
\end{align*} 
Therefore, it follows that $\lim_{v \rightarrow \infty}q_{1,v}=0$ almost surely.

\paragraph{Convergence of $q_{2,v}$.} It remains to show that, almost surely, $$\lim_{v\to \infty} q_{2,v}= \lim_{v\to \infty}  \mathbb{P}_{pq}\left(\mathcal{L}_n(\mathcal{\alpha}_F)<\mathcal{L}_n(\alpha^{\star})\right) =0.$$ However, since $\alpha^{\star}\subset\alpha_F \in \mathcal{C}$, the event $\mathcal{L}_n\left(\mathcal{\alpha}_F\right)<\mathcal{L}_n\left(\alpha^{\star}\right)$ implies $\mathcal{L}_{2,n}\left(\alpha_F\right)<\mathcal{L}_{2,n}\left(\alpha^{\star}\right)$, which by Proposition \ref{prop2} is a negligible event. It follows that $q_{2,v}=0$ for all $v \in \mathbb{N}$.
~\\ Therefore, we conclude that $\lim_{v \rightarrow \infty}p_{2,v}=0$ which then implies
\begin{equation*}
\lim_{v \rightarrow \infty}\mathbb{P}_{pq}\left(\alpha_{opt,v}\neq \alpha^{\star}\right)=0
\end{equation*}
almost surely. Finally, applying the Lebesgue convergence theorem gives
$$\lim_{v \to \infty} \mathbb{P}_{mpq} \left(\alpha_{opt,v} = \alpha^{\star}\right) = 1 $$
almost surely.
\subsection*{Proof of Proposition \ref{prop3}.} \label{proof_Prop3}

\subsubsection*{Proof of statement (i).} 
For $\alpha \in \mathcal{A}$, consider the decomposition of $\widehat{\mu}_{\alpha,v}-\mu_v$
\begin{equation*}
\widehat{\mu}_{\alpha,v}-\mu_v=\underbrace{\widehat{\mu}_{\alpha,v}-\widehat{\mu}_{\pi,v}}_{(a)}+\underbrace{\widehat{\mu}_{\pi,v}-\mu_v}_{(b)}.
\end{equation*}
As $\pi_k=\pi$ for all $k \in U_v$, we have
\begin{equation*}
\widehat{\pmb{\beta}}_{\alpha,v}=\boldsymbol{A}_{r,\alpha}^{-1}\bX_{r,\alpha}^{\top}\boldsymbol{Y}_r=\boldsymbol{A}_{r,\alpha}^{-1}\sum_{k\in S_{r,v}}\bx_{k,\alpha}y_k=\left(\frac{\boldsymbol{A}_{r,\alpha}}{N_v\pi}\right)^{-1}\sum_{k \in S_{r,v}}\frac{\bx_{k,\alpha}y_k}{N_v\pi}.
\end{equation*}
By expanding the term (a), we obtain
\begin{align*}
\widehat{\mu}_{\alpha,v}-\widehat{\mu}_{\pi,v}&=\sum_{k \in S_{m,v}}\frac{\bx_{k,\alpha}^{\top}\left(\widehat{\pmb{\beta}}_{\alpha,v}-\pmb{\beta}\right)+\epsilon_k}{N_v\pi}\\
&=\sum_{k \in S_{m,v}}\frac{\bx_{k,\alpha}^{\top}}{N_v\pi}\left(\frac{\boldsymbol{A}_{r,\alpha}}{ N_v\pi}\right)^{-1}\sum_{k\in S_{r,v}}\frac{\bx_{k,\alpha}y_k}{N_v\pi}-\sum_{k \in S_{m,v}}\frac{\bx_{k,\alpha}^{\top}\pmb{\beta}_{\alpha}}{N_v\pi}+\sum_{k\in S_{r,v}}\frac{\epsilon_k}{N_v\pi}\\
&=\sum_{k \in S_{m,v}}\frac{\bx_{k,\alpha}^{\top}}{N_v\pi}\left(\frac{\boldsymbol{A}_{r,\alpha}}{ N_v\pi}\right)^{-1}\sum_{k\in S_{r,v}}\frac{\bx_{k,\alpha}\bx_k^{\top}\pmb{\beta}+\bx_{k,\alpha}\epsilon_k}{N_v\pi}\\
&\quad-\sum_{k \in S_{m,v}}\frac{\bx_{k,\alpha}^{\top}\pmb{\beta}_{\alpha}}{N_v\pi}+\sum_{k\in S_{r,v}}\frac{\epsilon_k}{N_v\pi}.
\end{align*}
The law of large numbers gives
\begin{align*}
    &\qquad  \sum_{k \in S_{m,v}}\frac{\epsilon_k}{N_v \pi}\xrightarrow[v \rightarrow \infty]{\mathbb{P}}0, \qquad
    \sum_{k \in S_{m,v}}\frac{\bx_{k,\alpha}^{\top}\pmb{\beta}_{\alpha}}{N_v\pi}\xrightarrow[v \rightarrow \infty]{\mathbb{P}}\mathbb{E}_\bx\left[(1-p(\bx))\bx_{\alpha}^{\top}\pmb{\beta}_{\alpha}\right],\\
    &\qquad \left(\frac{\boldsymbol{A}_{r,\alpha}}{ N_v\pi}\right)^{-1}\xrightarrow[v \rightarrow \infty]{\mathbb{P}}\left(\mathbb{E}_\bx\left[p(\bx)\bx_{\alpha}\bx_{\alpha}^{\top}\right]\right)^{-1},\\&\qquad 
\sum_{k \in S_{r,v}}\frac{\bx_{k,\alpha}\bx_{k}^{\top}\pmb{\beta}+\bx_{k,\alpha}\epsilon_k}{ N_v\pi}\xrightarrow[v \rightarrow \infty]{\mathbb{P}}\mathbb{E}_\bx\left[p(\bx)\bx_{\alpha}\bx^{\top}\pmb{\beta}\right].
\end{align*}

Using the continuous mapping theorem, we obtain
\begin{align*}
\widehat{\mu}_{\alpha,v}-\mu_v&\xrightarrow[v \rightarrow \infty]{\mathbb{P}}\mathbb{E}_\bx\left[\left(1-p(\bx)\right)\bx_{\alpha}^{\top}\right]\left(\mathbb{E}_\bx\left[p(\bx)\bx_{\alpha}\bx_{\alpha}^{\top}\right]\right)^{-1}\mathbb{E}_\bx\left[p(\bx)\bx_{\alpha}\bx^{\top}\pmb{\beta}\right]\\
&\qquad-\mathbb{E}_\bx\left[\left(1-p(\bx)\right)\bx^{\top}\pmb{\beta}\right]
\end{align*}
Recall that $\widehat{\mu}_{\alpha,v}-\mu_v$ is consistent if $\widehat{\mu}_{\alpha,v}-\mu_v\xrightarrow[v \rightarrow \infty]{\mathbb{P}}0$. Equivalently,
\begin{equation*}
\mathbb{E}_\bx\left[\left(1-p(\bx)\right)\bx_{\alpha}^{\top}\right]\left(\mathbb{E}_\bx\left[p(\bx)\bx_{\alpha}\bx_{\alpha}^{\top}\right]\right)^{-1}\mathbb{E}_\bx\left[p(\bx)\bx_{\alpha}\bx^{\top}\pmb{\beta}\right]-\mathbb{E}_\bx\left[\left(1-p(\bx)\right)\bx^{\top}\pmb{\beta}\right]=0.
\end{equation*}
To obtain \eqref{condBias}, one can use the decomposition $\bx^\top \boldsymbol{\beta} = \bx_{ \alpha}^\top \bbeta_\alpha+\bx_{ \alpha^c}^\top \bbeta_\alpha^c$ on the above equality, from which the result follows after some algebra.
\subsubsection*{Proof of statement (ii), (a).}
By Theorem \ref{Theo3}, for, $\alpha \in \cC$, we have $$\widehat{\mu}_{\alpha,v}-\mu_v=\widehat{\mu}_{\alpha,v}-\widetilde{\mu}_{\alpha,v}+o_{\mathbb{P}}\left(n_v^{-1/2}\right).$$ Consequently, $\mathbb{AV}\left(\sqrt{N_v}(\widetilde{\mu}_{\alpha,v}-\mu_v)\right)=\mathbb{AV}\left(\sqrt{N_v}(\widehat{\mu}_{\alpha,v}-\mu_v)\right).$
To obtain the asymptotic variance, we define
\begin{align*}
\mathbb{AV}\left(\sqrt{N_v}(\widehat{\mu}_{\alpha,v}-\mu_v)\right)&=\lim_{v \rightarrow \infty}\bigg(\mathbb{E}_{mq}\left[\mathbb{V}_p\left(\sqrt{N_v}\big(\widetilde{\mu}_{\alpha,v}-\mu_v\big)\right)\right]\\
&\quad+\mathbb{E}_q\left[\mathbb{V}_m\left(\mathbb{E}_p[\sqrt{N_v}(\widetilde{\mu}_{\alpha,v}-\mu_v)]\right)\right]\\
&\quad+\mathbb{V}_q\left(\mathbb{E}_{mp}[\sqrt{N_v}(\widetilde{\mu}_{\alpha,v}-\mu_v)]\right)\bigg).
\end{align*}
Recalling that $\mathbb{V}_q(\mathbb{E}_{mp}[\sqrt{N_v}(\widetilde{\mu}_{\alpha,v}-\mu_v)])=0$, only the first two terms remain and we write
\begin{align*}
\mathbb{AV}\left(\sqrt{N_v}(\widehat{\mu}_{\alpha,v}-\mu_v)\right)&=\lim_{v \rightarrow \infty}\Bigg(\mathbb{E}_{mq}\left[\mathbb{V}_p\big(\sqrt{N_v}(\widetilde{\mu}_{\alpha,v}-\mu_v)\big)\right]\\
&\quad+\mathbb{E}_q\left[\mathbb{V}_m\left(\mathbb{E}_p[\sqrt{N_v}(\widetilde{\mu}_{\alpha,v}-\mu_v)]\right)\right]\Bigg)\\
&:= \text{AV}_1 + \text{AV}_2,
\end{align*}

\paragraph{Asymptotic behavior of $\text{AV}_1$.} \emph{Main idea.}  If there exists $c_{1,\alpha}$ such that $$\mathbb{V}_p(\sqrt{N_v}\left(\widetilde{\mu}_{\alpha,v}-\mu_v\right))\xrightarrow[v\rightarrow\infty]{\mathbb{P}}c_{1,\alpha},$$ and if $\mathbb{V}_p(\sqrt{N_v}(\widetilde{\mu}_{\alpha,v}-\mu_v))$ is uniformly integrable, then, almost surely, we would have $$\lim_{v \rightarrow \infty}\mathbb{E}_{mq}\left[\left|\mathbb{V}_p(\sqrt{N_v}\left(\widetilde{\mu}_{\alpha,v}-\mu_v)\right)-c_{1,\alpha}\right|\right]=0.$$ Furthermore, Jensen inequality would imply $\lim_{v \rightarrow \infty}|\mathbb{E}_{mq}[\mathbb{V}_p(\sqrt{N_v}(\widetilde{\mu}_{\alpha,v}-\mu_v))-c_{1,\alpha}]|=0$ almost surely. We shall follow the above architecture.

 ~\\
 \emph{Uniform integrability.}  The uniform integrability of $\mathbb{V}_p(\sqrt{N_v}(\widetilde{\mu}_{\alpha,v}-\mu_v))$ follows by Lemma \ref{BoundedV1V2} in which we prove that  $\mathbb{E}_{mq}[(\mathbb{V}_p(\sqrt{N_v}(\widetilde{\mu}_{\alpha,v}-\mu_v)))^2]$ is bounded. 

~\\
\emph{Limit determination.} We now proceed to find $c_{1,\alpha}$ such that $\mathbb{V}_p(\sqrt{N_v}(\widetilde{\mu}_{\alpha,v}-\mu_v))\xrightarrow[v\rightarrow\infty]{\mathbb{P}}c_{1,\alpha}.$ To that aim, write
\begin{align*}
\mathbb{V}_p\left(\sqrt{N_v}\left(\widetilde{\mu}_{\alpha,v}-\mu_v\right)\right)&=\frac{1}{N_v}\sum_{k \in U_v}\sum_{l \in U_v}\frac{\eta_{k,\alpha}}{\pi}\frac{\eta_{l,\alpha}}{\pi}\Delta\\
&=\frac{1}{N_v}\sum_{k \in U_v}\sum_{l \in U_v}\frac{\bx_{k,\alpha}^{\top}\pmb{\beta}_{\alpha}}{\pi}\frac{\bx_{l,\alpha}^{\top}\pmb{\beta}_{\alpha}}{\pi}\Delta\\
&\quad+\frac{2}{N_v}\sum_{k \in U_v}\sum_{l \in U_v}\frac{\bx_{k,\alpha}^{\top}\pmb{\beta}_{\alpha}}{\pi}\frac{r_l(1+\pi\textbf{c}_{\alpha,v}^{\top}\bx_{l,\alpha})\epsilon_l}{\pi}\Delta\\
&\quad+\frac{1}{N_v}\sum_{k \in U_v}\sum_{l \in U_v}\frac{r_k(1+\pi\textbf{c}_{\alpha,v}^{\top}\bx_{k,\alpha})\epsilon_k}{\pi}\frac{r_l(1+\pi\textbf{c}_{\alpha,v}^{\top}\bx_{l,\alpha})\epsilon_l}{\pi}\Delta\\
&:=A_v+B_v+C_v.
\end{align*}
Next, we investigate the asymptotic behavior $A_v$, $B_v$, and $C_v$ separately.
\paragraph{Behavior of $A_v$.} Since $\alpha \in \mathcal{C}$, we have $\bx_{k,\alpha}^{\top}\pmb{\beta}_{\alpha}=\bx_{k,\alpha^{\star}}^{\top}\pmb{\beta}_{\alpha^{\star}}$ and thus  $A_v$ does not depend on $\alpha$. As a result, there exists a constant $C_1(\alpha^{\star})$ such that $A_v-C_1(\alpha^{\star})=o_{\mathbb{P}}(1).$

\paragraph{Behavior of $B_v$.} Write
\begin{align*}
B_v&=2\frac{1-\pi}{\pi}\sum_{k \in U_v}\frac{\bx_{k,\alpha}^{\top}\pmb{\beta}_{\alpha}r_k\left(1+\pi\textbf{c}_{\alpha,v}^{\top}\bx_{k,\alpha}\right)\epsilon_k}{N_v}\\
&\quad+\frac{2}{N_v}\frac{1-\pi}{\pi}\sum_{k \in U_v}\sum_{\substack{l\in U_v\\l\neq k}}\frac{\bx_{k,\alpha}^{\top}\pmb{\beta}_{\alpha}}{\pi}\frac{r_l\left(1+\pi\textbf{c}_{\alpha,v}^{\top}\bx_{l,\alpha}\right)\epsilon_l}{\pi}\Delta\\
&:=B_{1,v}+B_{2,v}.
\end{align*}
By substituting $\pi\textbf{c}_{\alpha,v}$, $B_{1,v}$ gives
\begin{align*}
B_{1,v}&=\frac{1-\pi}{\pi}\sum_{k \in U_v}\frac{\bx_{k,\alpha}^{\top}\pmb{\beta}_{\alpha}r_k\epsilon_k}{N_v}\\
&\quad+\frac{1-\pi}{\pi}\sum_{k \in U_v}\frac{\bx_{k,\alpha}^{\top}\pmb{\beta}_{\alpha}r_k}{N_v}\bx_{k,\alpha}^{\top}\epsilon_k\left(\sum_{k \in U_v}\frac{r_k\bx_{k,\alpha}\bx_{k,\alpha}^{\top}}{N_v}\right)^{-1}\sum_{k \in U_v}\frac{\left(1-r_k\right)\bx_{k,\alpha}}{N_v}.
\end{align*}
By the law of large numbers, we obtain
\begin{align*}
\dfrac{1}{N_v}\sum_{k \in U_v}\bx_{k,\alpha}^{\top}\pmb{\beta}_{\alpha}r_k\epsilon_k\xrightarrow[v\rightarrow\infty]{\mathbb{P}}\mathbb{E}_{\bx pq}\left[\bx_{\alpha}^{\top}\pmb{\beta}_{\alpha}p(\bx)\epsilon\right]=0,
\end{align*}
\begin{align*}
\dfrac{1}{N_v}\sum_{k \in U_v}\bx_{k,\alpha}^{\top}\pmb{\beta}_{\alpha}r_k\bx_{k,\alpha}^{\top}\epsilon_k&\xrightarrow[v\rightarrow\infty]{\mathbb{P}}\mathbb{E}_{\bx pq}\left[\bx_{\alpha}^{\top}\pmb{\beta}_{\alpha}p(\bx)\bx_{\alpha}\epsilon\right]=\boldsymbol{0}_{\alpha}^{\top},
\end{align*}
and
\begin{equation*}
\left(\dfrac{1}{N_v}\sum_{k \in U_v}r_k\bx_{k,\alpha}\bx_{k,\alpha}^{\top}\right)^{-1}\dfrac{1}{N_v}\sum_{k \in U_v}(1-r_k)\bx_{k,\alpha}\xrightarrow[v\rightarrow\infty]{\mathbb{P}}\left(\mathbb{E}_\bx\left[p(\bx)\bx_{\alpha}\bx_{\alpha}^{\top}\right]\right)^{-1}\mathbb{E}_\bx\left[\left(1-p(\bx)\right)\bx_{\alpha}\right].
\end{equation*}
As a result, $B_{1,v}\xrightarrow[v\rightarrow\infty]{\mathbb{P}}0.$ For $B_{2,v}$, recall that $K:=\lim_{v \rightarrow \infty}N_v\Delta$, we have
\begin{align*}
B_{2,v}&=\frac{2}{N_v}\sum_{k \in U_v}\sum_{\substack{l\in U_v\\l\neq k}}\frac{\bx_{k,\alpha}^{\top}\pmb{\beta}_{\alpha}}{\pi}\frac{r_l\left(1+\pi\textbf{c}_{\alpha,v}^{\top}\bx_{l,\alpha}\right)\epsilon_l}{\pi}\Delta\\
&=\frac{2}{\pi^2}N_v\Delta\sum_{k \in U_v}\sum_{\substack{l\in U_v\\l\neq k}}\frac{\bx_{k,\alpha}^{\top}\pmb{\beta}_{\alpha}}{N_v}\frac{r_l\left(1+\pi\textbf{c}_{\alpha,v}^{\top}\bx_{l,\alpha}\right)\epsilon_l}{N_v}\\
&=\frac{2}{\pi^2}(K+o(1))\Bigg(\sum_{k \in U_v}\sum_{\substack{l\in U_v}}\frac{\bx_{k,\alpha}^{\top}\pmb{\beta}_{\alpha}}{N_v}\frac{r_l\left(1+\pi\textbf{c}_{\alpha,v}^{\top}\bx_{l,\alpha}\right)\epsilon_l}{N_v}\\
&\quad-\frac{1}{N_v}\sum_{k \in U_v}\frac{\bx_{k,\alpha}^{\top}\pmb{\beta}_{\alpha}r_k\left(1+\pi\textbf{c}_{\alpha,v}^{\top}\bx_{k,\alpha}\right)\epsilon_k}{N_v} \Bigg)\\
&=\frac{2}{\pi^2}\left(K+o(1)\right)\left(\left(\sum_{k \in U_v}\frac{\bx_{k,\alpha}^{\top}\pmb{\beta}_{\alpha}}{N_v}\right)\left(\sum_{k \in U_v}\frac{r_k\left(1+\pi\textbf{c}_{\alpha,v}^{\top}\bx_{k,\alpha}\right)\epsilon_k}{N_v}\right)+o_{\mathbb{P}}\left(N_v^{-1}\right)\right)\\
&=\frac{2}{\pi^2}\left(K+o(1)\right)\left((\mathbb{E}_\bx[\bx_{k,\alpha}^{\top}\pmb{\beta}_{\alpha}]+o_{\mathbb{P}}(1))\left(0+o_{\mathbb{P}}(1))+o_{\mathbb{P}}(N_v^{-1}\right)\right)=o_{\mathbb{P}}(1),
\end{align*}
since
\begin{align}\label{3ii}
\sum_{k \in U_v}\frac{r_k\left(1+\pi\textbf{c}_{\alpha,v}^{\top}\bx_{k,\alpha}\right)\epsilon_k}{N_v}&=\sum_{k \in U_v}\frac{r_k\epsilon_k}{N_v}\nonumber\\
&\quad+\sum_{k \in U_v}\frac{r_k\epsilon_k\bx_{k,\alpha}^{\top}}{N_v}\left(\sum_{k \in U_v}\frac{r_k\bx_{k,\alpha}\bx_{k,\alpha}^{\top}}{N_v}\right)^{-1}\sum_{k \in U_v}\frac{\left(1-r_k\right)\bx_{k,\alpha}}{N_v}\nonumber\\
&\xrightarrow[v \rightarrow \infty]{\mathbb{P}}0+\textbf{0}_{\alpha}^{\top}\left(\mathbb{E}_\bx\left[p(\bx)\bx_{\alpha}\bx_{\alpha}^{\top}\right]\right)^{-1}\mathbb{E}_\bx[(1-p(\bx))\bx_{\alpha}]=0.
\end{align}
Finally, we have $B_v\xrightarrow[v\rightarrow\infty]{\mathbb{P}} 0$. 
\paragraph{Behavior of $C_v$.}Write
\begin{align*}
C_v&=\frac{1-\pi}{\pi}\sum_{k \in U_v}\frac{r_k^2\left(1+\pi\textbf{c}_{\alpha,v}^{\top}\bx_{k,\alpha}\right)^2\epsilon_k^2}{N_v}\\
&\quad+\frac{1}{N_v}\sum_{ k\in U_v}\sum_{\substack{l \in U_v\\ l\neq k}}\frac{r_k\left(1+\pi\textbf{c}_{\alpha,v}^{\top}\bx_{k,\alpha}\right)\epsilon_k}{\pi}\frac{r_l\left(1+\pi\textbf{c}_{\alpha,v}^{\top}\bx_{l,\alpha}\right)\epsilon_l}{\pi}\Delta\\
&:=C_{1,v}+C_{2,v}.
\end{align*}
Expanding $C_{1,v}$ gives
\begin{align*}
C_{1,v}&=\frac{1-\pi}{\pi}\sum_{k \in U_v}\frac{r_k^2\left(1+\pi\textbf{c}_{\alpha,v}^{\top}\bx_{k,\alpha}\right)^2\epsilon_k^2}{N_v}\nonumber\\
&=\frac{1-\pi}{\pi}\sum_{k \in U_v}\frac{r_k^2\epsilon_k^2}{N_v}+2\frac{1-\pi}{\pi}\sum_{k \in U_v}\frac{r_k^2\epsilon_k^2\pi\textbf{c}_{\alpha,v}^{\top}\bx_{k,\alpha}}{N_v}+\frac{1-\pi}{\pi}\sum_{k \in U_v}\frac{r_k^2\epsilon_k^2\left(\pi\textbf{c}_{\alpha,v}^{\top}\bx_{k,\alpha}\right)^2}{N_v}\nonumber\\
&=\frac{1-\pi}{\pi}\sum_{k \in U_v}\frac{r_k^2\epsilon_k^2}{N_v}+2\frac{1-\pi}{\pi}\sum_{k \in U_v}\frac{r_k^2\epsilon_k^2\pi\bx_{k,\alpha}^{\top}\textbf{c}_{\alpha,v}}{N_v}\\
&\quad+\frac{1-\pi}{\pi}\left(\pi\textbf{c}_{\alpha,v}^{\top}\right)\sum_{k \in U_v}\frac{r_k^2\epsilon_k^2\bx_{k,\alpha}\bx_{k,\alpha}^{\top}}{N_v}\left(\pi\textbf{c}_{\alpha,v}\right)\nonumber\\
&=\frac{1-\pi}{\pi}\sum_{k \in U_v}\frac{r_k^2\epsilon_k^2}{N_v}+2\frac{1-\pi}{\pi}\sum_{k \in U_v}\frac{r_k^2\epsilon_k^2\bx_{k,\alpha}^{\top}}{N_v}\left(\sum_{k \in U_v}\frac{r_k\bx_{k,\alpha}\bx_{k,\alpha}}{N_v}\right)^{-1}\sum_{k \in U_v}\frac{(1-r_k)\bx_{k,\alpha}}{N_v}\nonumber\\
&\quad+\frac{1-\pi}{\pi}\sum_{k \in U_v}\frac{(1-r_k)\bx_{k,\alpha}^{\top}}{N_v}\left(\sum_{k \in U_v}\frac{r_k\bx_{k,\alpha}\bx_{k,\alpha}^{\top}}{N_v}\right)^{-1}\left(\sum_{k \in U_v}\frac{r_k^2\epsilon_k^2}{N_v}\bx_{k,\alpha}\bx_{k,\alpha}^{\top}\right)\nonumber\\
&\quad\quad\times\left(\sum_{k \in U_v}\frac{r_k\bx_{k,\alpha}\bx_{k,\alpha}^{\top}}{N_v}\right)^{-1}\sum_{k \in U_v}\frac{(1-r_k)\bx_{k,\alpha}}{N_v}.
\end{align*}
Hence, using the continuous mapping theorem, $C_{1,v}$ converges to
\begin{align*}
&\sigma^2\frac{1-\pi}{\pi}\mathbb{E}_\bx\left[p(\bx)\right]-2\sigma^2\frac{1-\pi}{\pi}\mathbb{E}_\bx\left[p(\bx)\bx_{\alpha}^{\top}\right]\left(\mathbb{E}_\bx\left[p(\bx)\bx_{\alpha}\bx_{\alpha}^{\top}\right]\right)^{-1}\mathbb{E}_\bx\left[(1-p(\bx))\bx_{\alpha}\right]\\
&\quad+\frac{1-\pi}{\pi}\mathbb{E}_\bx\left[(1-p(\bx))\bx_{\alpha}^{\top}\right]\left(\mathbb{E}_\bx\left[p(\bx)\bx_{\alpha}\bx_{\alpha}^{\top}\right]\right)^{-1}\\
&\quad\quad\times\mathbb{E}_\bx\left[p(\bx)\bx_{\alpha}\bx_{\alpha}^{\top}\right]\left(\mathbb{E}_\bx\left[p(\bx)\bx_{\alpha}\bx_{\alpha}^{\top}\right]\right)^{-1}\mathbb{E}_\bx\left[(1-p(\bx))\bx_{\alpha}\right]\\
&=\sigma^2\frac{1-\pi}{\pi}\mathbb{E}_\bx[p(\bx)]-2\sigma^2\frac{1-\pi}{\pi}\mathbb{E}_\bx\left[(1-p(\bx))\bx_{\alpha}^{\top}\right]\left(\mathbb{E}_\bx\left[p(\bx)\bx_{\alpha}\bx_{\alpha}^{\top}\right]\right)^{-1}\mathbb{E}_\bx\left[p(\bx)\bx_{\alpha}\right]\\
&\quad+\sigma^2\frac{1-\pi}{\pi}\mathbb{E}_\bx\left[(1-p(\bx))\bx_{\alpha}^{\top}\right]\left(\mathbb{E}_\bx\left[p(\bx)\bx_{\alpha}\bx_{\alpha}^{\top}\right]\right)^{-1}\mathbb{E}_\bx\left[(1-p(\bx))\bx_{\alpha}\right]\\
&\overset{(*)}{=}\sigma^2\frac{1-\pi}{\pi}\mathbb{E}_\bx[p(\bx)]-2\sigma^2\frac{1-\pi}{\pi}\mathbb{E}_\bx[1-p(\bx)]+\sigma^2\frac{1-\pi}{\pi}M(\alpha)\\
&=\sigma^2\frac{1-\pi}{\pi}\mathbb{E}_\bx[2-p(\bx)]+\sigma^2\frac{1-\pi}{\pi}M(\alpha).
\end{align*}
The equality $(*)$ is due to Lemma  \ref{AlgebraEx}. For $C_{2,v}$, using \eqref{3ii}, write
\begin{align*}
C_{2,v}&=\frac{1}{N_v}\sum_{ k\in U_v}\sum_{\substack{l \in U_v\\ l\neq k}}\frac{r_k\left(1+\pi\textbf{c}_{\alpha,v}^{\top}\bx_{k,\alpha}\right)\epsilon_k}{\pi}\frac{r_l\left(1+\pi\textbf{c}_{\alpha,v}^{\top}\bx_{l,\alpha}\right)\epsilon_l}{\pi}\Delta\\
&=\frac{N_v\Delta}{\pi^2}\sum_{ k\in U_v}\sum_{\substack{l \in U_v\\ l\neq k}}\frac{r_k\left(1+\pi\textbf{c}_{\alpha,v}^{\top}\bx_{k,\alpha}\right)\epsilon_k}{N_v}\frac{r_l\left(1+\pi\textbf{c}_{\alpha,v}^{\top}\bx_{l,\alpha}\right)\epsilon_l}{N_v}\\
&=\frac{K+o(1)}{\pi^2}\Bigg(\sum_{ k\in U_v}\sum_{\substack{l \in U_v}}\frac{r_k\left(1+\pi\textbf{c}_{\alpha,v}^{\top}\bx_{k,\alpha}\right)\epsilon_k}{N_v}\frac{r_l\left(1+\pi\textbf{c}_{\alpha,v}^{\top}\bx_{l,\alpha}\right)\epsilon_l}{N_v}\\
&\quad-\frac{1}{N_v} \sum_{k \in U_v}\frac{r_k^2\left(1+\pi\textbf{c}_{\alpha,v}^{\top}\bx_{k,\alpha}\right)\epsilon_k^2}{N_v}\Bigg)\\
&=\frac{K+o(1)}{\pi^2}\left(\left(\sum_{k \in U_v}\frac{r_k(1+\pi\textbf{c}_{\alpha,v}\bx_{k,\alpha})\epsilon_k}{N_v}\right)^2+o_{\mathbb{P}}\left(n_v^{-1}\right) \right)\\
&=\frac{K+o(1)}{\pi^2}\left((0+o_{\mathbb{P}}(1))^2+o_{\mathbb{P}}\left(n_v^{-1}\right) \right)=o_{\mathbb{P}}(1).
\end{align*}
As a result, $C_{2,v}\xrightarrow[v\rightarrow\infty]{\mathbb{P}}0.$ Finally, we conclude 
\begin{equation*}
C_v \xrightarrow[v\rightarrow\infty]{\mathbb{P}}\sigma^2\frac{1-\pi}{\pi}\mathbb{E}_\bx\left[2-p(\bx)\right]+\sigma^2\frac{1-\pi}{\pi}M(\alpha).
\end{equation*}
Overall,
\begin{equation*}
c_{1,\alpha}=C_1(\alpha^{\star})+\sigma^2\frac{1-\pi}{\pi}\mathbb{E}_\bx\left[2-p(\bx)\right]+\sigma^2\frac{1-\pi}{\pi}M(\alpha).
\end{equation*}

\paragraph{Asymptotic behavior of $\text{AV}_2$.} We proceed similarly as before.

~\\
\emph{Uniform integrability.}
In Lemma \ref{BoundedV1V2}, we prove that  $\mathbb{E}_{mq}[(\V_m(\mathbb{E}_p[\sqrt{N_v}(\widetilde{\mu}_{\alpha,v}-\mu_v)]))^2]$ is almost surely bounded, from which we deduce uniform integrability.

~\\
\emph{Limit determination.} Write
\begin{align*}
&\mathbb{V}_m\left(\mathbb{E}_p\left[\sqrt{N_v}\left(\widetilde{\mu}_{\alpha,v}-\mu_v\right)\right]\right)\\&\quad=\sigma^2\sum_{k \in U_v}\frac{1-r_k+r_k(\pi\textbf{c}_{\alpha,v}^{\top}\bx_{k,\alpha})^2}{N_v}\\
&\quad=\sigma^2\sum_{k \in U_v}\frac{1-r_k}{N_v}+\sigma^2\mathbb{E}_\bx\left[(1-p(\bx))\bx_{k,\alpha}^{\top}\right]\left(\mathbb{E}_\bx\left[p(\bx)\bx_{\alpha}\bx_{\alpha}^{\top}\right]\right)^{-1}\\
&\quad\quad\quad\times\sum_{k \in U_v}\frac{r_k\bx_{k,\alpha}\bx_{k,\alpha}^{\top}}{N_v}\left(\mathbb{E}_\bx\left[p(\bx)\bx_{\alpha}\bx_{\alpha}^{\top}\right]\right)^{-1}\mathbb{E}_\bx\left[(1-p(\bx))\bx_{k,\alpha}\right]\\
&\quad \xrightarrow[v\rightarrow\infty]{\mathbb{P}}\sigma^2\mathbb{E}_\bx\left[1-p(\bx)\right]+\sigma^2\mathbb{E}_\bx\left[(1-p(\bx))\bx_{\alpha}^{\top}\right]\left(\mathbb{E}_\bx\left[p(\bx)\bx_{\alpha}\bx_{\alpha}^{\top}\right]\right)^{-1}\mathbb{E}_\bx\left[(1-p(\bx))\bx_{k,\alpha}\right]\\
&\quad=\sigma^2\mathbb{E}_\bx\left[1-p(\bx)\right]+\sigma^2M(\alpha).
\end{align*}
As a result, 
\begin{equation*}
c_{2,\alpha}=\sigma^2\mathbb{E}_\bx[1-p(\bx)]+\sigma^2M(\alpha).
\end{equation*}
\paragraph{Putting all things together.} By absorbing the terms that do not depend on $\alpha$, called $C(\alpha^{\star})$, we conclude
\begin{align*}
\mathbb{AV}\left(\sqrt{N_v}(\widehat{\mu}_{\alpha,v}-\mu_v)\right)&=\lim_{v \rightarrow \infty}\Bigg(\mathbb{E}_{mq}\left[\mathbb{V}_p\left(\sqrt{N_v}(\widetilde{\mu}_{\alpha,v}-\mu_v)\right)\right]\\
&\quad+\mathbb{E}_q\left[\mathbb{V}_m\left(\mathbb{E}_p\left[\sqrt{N_v}(\widetilde{\mu}_{\alpha,v}-\mu_v)\right]\right)\right]\Bigg)\\
&=C_1(\alpha^{\star})+\sigma^2\frac{1-\pi}{\pi}\mathbb{E}_\bx[2-p(\bx)]+\sigma^2\frac{1-\pi}{\pi}M(\alpha)\\
&\quad+\sigma^2\mathbb{E}_\bx[1-p(\bx)]+\sigma^2M(\alpha)\\
&=C(\alpha^{\star})+\frac{\sigma^2}{\pi}M(\alpha).
\end{align*}
\subsection*{Proof of Corollary \ref{Cor1}.}\label{proof_Cor1}
Statement (i) is obvious as $\Arrowvert\pmb{\beta}_{\alpha^c}\Arrowvert_2=0$ which implies that $\pmb{\beta}_{\alpha^c} = \boldsymbol{0}_{\alpha^c}$. For statement (ii), we need to show that under our assumption,            $$    \mathbb{E}_\bx\left[( 1 - p(\bx))\bx_{\alpha^c}^{\top}\pmb{\beta}_{\alpha^c}\right]=\mathbb{E}_\bx \left[\bx_{\alpha}^{\top}(1 - p(\bx))\right] \left(\mathbb{E}_\bx \left[p(\bx) \bx_{ \alpha}\bx_{ \alpha} ^\top \right]\right)^{-1} \mathbb{E}_\bx\left[\bx_{\alpha}\bx_{\alpha^c}^{\top}\pmb{\beta}_{\alpha^c}p(\bx) \right] $$ where we have suppressed the index for simplicity of notation. This is equivalent to showing that 
\begin{equation} \label{eq:c1j}
\mathbb{E}_\bx\left[( 1 - p(\bx))x_{\alpha^c,j}\right]=\mathbb{E}_\bx \left[\bx_{\alpha}^{\top}(1 - p(\bx))\right] \left(\mathbb{E}_\bx \left[p(\bx) \bx_{ \alpha}\bx_{ \alpha} ^\top \right]\right)^{-1} \mathbb{E}_\bx\left[p(\bx)\bx_{\alpha}x_{\alpha^c,j} \right], 
\end{equation}
$j \in \alpha_{\mathrm{mis}}$. 
Since, for $j \in \alpha^c \backslash \alpha_{\mathrm{mis}} ,$ $\beta_{\alpha^c,j} =0$, we can express the left side of \eqref{eq:c1j} as $$L_j := \mathbb{E}_\bx\left[( 1 - p(\bx))x_{\alpha^c,j}\right] = \mathbb{E}_\bx\left[( 1 - p(\bx))\bx_{\alpha}^\top \gamma_j\right],$$ using both the conditional independence (a) and the linear link (b). Similarly, using the same technique, we may write $$\mathbb{E}_\bx\left[p(\bx)\bx_{\alpha}x_{\alpha^c,j} \right] = \mathbb{E}_\bx\left[p(\bx)\bx_{\alpha}\bx_\alpha^\top  \right]\boldsymbol{\gamma}_j.$$
Therefore, the right side of \eqref{eq:c1j} is equal to \begin{align*}
R_j &:= \mathbb{E}_\bx \left[\bx_{\alpha}^{\top}(1 - p(\bx))\right] \left(\mathbb{E}_\bx \left[p(\bx) \bx_{ \alpha}\bx_{ \alpha} ^\top \right]\right)^{-1} \mathbb{E}_\bx\left[p(\bx)\bx_{\alpha}x_{\alpha^c,j} \right] \\
&=  \mathbb{E}_\bx \left[(1 - p(\bx))\bx_{\alpha}^{\top}\right]\boldsymbol{\gamma}_j = L_j,
\end{align*}
$ j \in \alpha_{\mathrm{mis}}$.
The result follows. 
\subsection*{Proof of Corollary \ref{Cor2}.} \label{proofCoro2}

\subsubsection*{Proof of (i).} From Proposition \ref{prop3}, the equality holds when, for $j\in \alpha_2-\alpha_1$, we need to show
\begin{equation}\label{equality}
\mathbb{E}_\bx\left[(1-p(\bx))\bx_{\alpha_1}^{\top}\right]\left(\mathbb{E}_\bx\left[p(\bx)\bx_{\alpha_1}\bx_{\alpha_1}^{\top}\right]\right)^{-1}\mathbb{E}_\bx[p(\bx)\bx_{\alpha_1}x_{\alpha_2-\alpha_1,j}]=\mathbb{E}_\bx[(1-p(\bx))x_{\alpha_2-\alpha_1,j}].
\end{equation}
Recall that if $Y$ and $Z$ are conditionally independent given $X$, then $\mathbb{E}[YZ|X]=\mathbb{E}[Y|X]\mathbb{E}[Z|X]$.
Next, we need to prove that the left-hand side and the right-hand side of \eqref{equality} are equal. The left-hand side of \eqref{equality} gives
\begin{align*}
L_j&=\mathbb{E}_\bx\left[(1-p(\bx))\bx_{\alpha_1}^{\top}\right]\left(\mathbb{E}_\bx\left[p(\bx)\bx_{\alpha_1}\bx_{\alpha_1}^{\top}\right]\right)^{-1}\mathbb{E}_\bx[p(\bx)\bx_{\alpha_1}x_{\alpha_2-\alpha_1,j}]\\
&=\mathbb{E}_\bx\left[(1-p(\bx))\bx_{\alpha_1}^{\top}\right]\left(\mathbb{E}_\bx\left[p(\bx)\bx_{\alpha_1}\bx_{\alpha_1}^{\top}\right]\right)^{-1}\mathbb{E}_\bx[\mathbb{E}_\bx[p(\bx)\bx_{\alpha_1}x_{\alpha_2-\alpha_1,j}|\bx_{\alpha_1}]]\\
&=\mathbb{E}_\bx\left[(1-p(\bx))\bx_{\alpha_1}^{\top}\right]\left(\mathbb{E}_\bx\left[p(\bx)\bx_{\alpha_1}\bx_{\alpha_1}^{\top}\right]\right)^{-1}\mathbb{E}_\bx[\bx_{\alpha_1}\mathbb{E}_\bx[p(\bx)|\bx_{\alpha_1}]\mathbb{E}_\bx[x_{\alpha_2-\alpha_1,j}|\bx_{\alpha_1}]]\\
&=\mathbb{E}_\bx\left[(1-p(\bx))\bx_{\alpha_1}^{\top}\right]\left(\mathbb{E}_\bx\left[p(\bx)\bx_{\alpha_1}\bx_{\alpha_1}^{\top}\right]\right)^{-1}\mathbb{E}_\bx\left[\mathbb{E}_\bx[p(\bx)|\bx_{\alpha_1}]\bx_{\alpha_1}\bx_{\alpha_1}^{\top}\pmb{\gamma}_{j}\right]\\
&=\mathbb{E}_\bx\left[\mathbb{E}_\bx[1-p(\bx)|\bx_{\alpha_1}]\bx_{\alpha_1}^{\top}\right]\pmb{\gamma}_{j}\\&=\mathbb{E}_\bx\left[\mathbb{E}_\bx[1-p(\bx)|\bx_{\alpha_1}]\bx_{\alpha_1}^{\top}\pmb{\gamma}_{j}\right]
\end{align*}
Similarly, the right-hand side of \eqref{equality} gives
\begin{align*}
R_j&=\mathbb{E}_\bx[(1-p(\bx))x_{\alpha_2-\alpha_1,j}]=\mathbb{E}_\bx[\mathbb{E}_\bx[(1-p(\bx))x_{\alpha_2-\alpha_1,j}|\bx_{\alpha}]]\\
&=\mathbb{E}_\bx[\mathbb{E}_\bx[1-p(\bx)|\bx_{\alpha_1}]\mathbb{E}_\bx[x_{\alpha_2-\alpha_1,j}|\bx_{\alpha_1}]]\\&=\mathbb{E}_\bx\left[\mathbb{E}_\bx[1-p(\bx)|\bx_{\alpha_1}]\bx_{\alpha_1}^{\top}\pmb{\gamma}_j\right].
\end{align*}
As a result, both sides are equal under our assumptions.

\subsubsection*{Proof of (ii).} Let $\mathbf{c}$ be a direction satisfying (a) and (b); we need to show that $$ \mathbb{E}_\bx\left[(1-p(\bx))\bx_{\alpha_2-\alpha_1}\right] \neq \mathbb{E}_\bx\left[(1-p(\bx))\bx_{\alpha_2-\alpha_1} \bx_{\alpha_1}^{\top}\right]\left(\mathbb{E}_\bx\left[p(\bx)\bx_{\alpha_1}\bx_{\alpha_1}^{\top}\right]\right)^{-1}\mathbb{E}_\bx\left[p(\bx)\bx_{\alpha_1}\right].$$
We take the Euclidean inner product of the above with $\mathbf{c}$. The left side gives $$L = \mathbf{c}^\top  \mathbb{E}_\bx\left[(1-p(\bx))\bx_{\alpha_2-\alpha_1}\right] =   \mathbb{E}_\bx\left[(1-p(\bx))\mathbf{c}^\top\bx_{\alpha_2-\alpha_1}\right] \neq  0$$ by (b). On the right hand side, 
\begin{align*}
    R_j &= \mathbf{c} ^\top\mathbb{E}_\bx\left[(1-p(\bx))\bx_{\alpha_2-\alpha_1} \bx_{\alpha_1}^{\top}\right]\left(\mathbb{E}_\bx\left[p(\bx)\bx_{\alpha_1}\bx_{\alpha_1}^{\top}\right]\right)^{-1}\mathbb{E}_\bx\left[p(\bx)\bx_{\alpha_1}\right] \\&=   \mathbb{E}_\bx\left[(1-p(\bx))\mathbf{c} ^\top\bx_{\alpha_2-\alpha_1} \bx_{\alpha_1}^{\top}\right]\left(\mathbb{E}_\bx\left[p(\bx)\bx_{\alpha_1}\bx_{\alpha_1}^{\top}\right]\right)^{-1}\mathbb{E}_\bx\left[p(\bx)\bx_{\alpha_1}\right]\\&= 0
\end{align*}since, by (a), $$ \mathbb{E}_\bx\left[(1-p(\bx))\mathbf{c} ^\top\bx_{\alpha_2-\alpha_1} \bx_{\alpha_1}^{\top}\right] = \boldsymbol{0}.$$
\subsection*{Proof of Lemma \ref{Lemma2}.} \label{proof_Lemma2}
Assuming that 
\begin{equation} \label{eq:consPop}
\lim_{v \to \infty} \mathbb{P}_{m} \left(\widetilde{\alpha}_{U_v} = \alpha^{\star}\right)  =1
\end{equation}
almost surely, 
we wish to prove that $$\lim_{v \to \infty} \mathbb{P}_{mpq} \left(\widehat{\alpha}_{S_{r,v}} \neq \alpha^{\star}\right)  = \lim_{v \to \infty} \mathbb{P}_{mpq} \left( \cC_{S_{r,v}} \left( \widehat{\alpha}_{S_{r,v}} \right) <  \cC_{S_{r,v}} \left( \alpha^{\star}\right)\right) = 0$$ almost surely.
With a slight abuse of notation, we write $\mathbb{P}_{\cdot \rvert \cdot}$ to denote a generic conditional distribution, whose precise specification may vary but will be clear from the context. Using the missing at random assumption and non-informativeness of the sampling design, we get 
\begin{align*}
\mathbb{P}_{m} \left( \mathcal{C}_{S_{r,v}} \left( \widehat{\alpha}_{S_{r,v}} \right) <  \cC_{S_{r,v}} \left( \alpha^{\star}\right)\right) &= \int_{\R^{n_{r,v}}}  \mathds{1}{(\cC_{S_{r,v}} \left( \widehat{\alpha}_{S_{r,v}} \right) <  \cC_{S_{r,v}} \left( \alpha^\star\right) )} \prod_{k \in S_{r,v}} \mathbb{P}_{y_k\rvert \bx_k}\left(\mathrm{d} y_k\right) \\
&\xrightarrow[]{v \to \infty} 0,
\end{align*}
where the limit follows by \eqref{eq:consPop} and Lemma \ref{SetofRespondents}. Since $$  \mathbb{P}_{mpq} \left( \cC_{S_{r,v}} \left( \widehat{\alpha}_{S_{r,v}} \right) <  \cC_{S_{r,v}} \left( \alpha^{\star}\right)\right) = \mathbb{E}_{pq} \left[ \mathbb{P}_{m} \left( \cC_{S_{r,v}} \left( \widehat{\alpha}_{S_{r,v}} \right) <  \cC_{S_{r,v}} \left( \alpha^{\star}\right)\right)\right],$$
an application of Lebesgue dominated convergence gives the result. 
\subsection*{Proof of Theorem \ref{Theo2}.} \label{Proof_Theo2}
Write $$\sqrt{n_v} \left( \widehat{\mu}_{\widehat{\alpha},v} - \mu_v\right) = \sqrt{n_v} \left( \widehat{\mu}_{\alpha^{\star},v} - \mu_v\right)  + \sqrt{n_v} \left( \widehat{\mu}_{\widehat{\alpha},v} - \widehat{\mu}_{\alpha^{\star},v}\right) .$$ We will show that the second term vanishes in probability.
Observe that 
\small
\begin{align*}\left\{\sqrt{n_v} \rvert \widehat{\mu}_{\widehat{\alpha},v} -\widehat{\mu}_{\alpha^{\star},v}  \rvert > \epsilon \right\} &= \left(\left\{\sqrt{n_v} \rvert \widehat{\mu}_{\widehat{\alpha},v} -\widehat{\mu}_{\alpha^{\star},v}  \rvert > \epsilon \right\} \cap \{ \widehat{\alpha} = \alpha^{\star} \}\right) \\
&\biguplus (\left\{\sqrt{n_v} \rvert \widehat{\mu}_{\widehat{\alpha},v} -\widehat{\mu}_{\alpha^{\star},v}  \rvert > \epsilon \right\}\cap \{ \widehat{\alpha} \neq \alpha^{\star} \}).
\end{align*}
\normalsize
Moreover, on the event $\{ \widehat{\alpha} = \alpha \}$, we have $\widehat{\mu}_{\widehat{\alpha},v} =\widehat{\mu}_{\alpha^{\star},v} $ so that $$\{\sqrt{n}_v \rvert \widehat{\mu}_{\widehat{\alpha},v} -\widehat{\mu}_{\alpha^{\star},v}  \rvert > \epsilon \} \cap \{ \widehat{\alpha} = \alpha^{\star} \} = \emptyset.$$
Thus,
\begin{align*}
\mathbb{P}_{mpq}\left(\sqrt{n_v} \rvert \widehat{\mu}_{\widehat{\alpha},v} -\widehat{\mu}_{\alpha^{\star},v}  \rvert > \epsilon \right)& =    \mathbb{P}_{mpq}\left(\sqrt{n_v} \rvert \widehat{\mu}_{\widehat{\alpha},v} -\widehat{\mu}_{\alpha^{\star},v}  \rvert > \epsilon  \rvert  \ \widehat{\alpha}\neq \alpha^{\star} \right)  \mathbb{P}_{mpq} \left(\widehat{\alpha} \neq \alpha^{\star}\right) \\
&\leq \mathbb{P}_{mpq} \left(\widehat{\alpha} \neq \alpha^{\star}\right)
\end{align*} 
converges to $0$ by assumption.

\subsection*{Proof of Theorem \ref{Theo3}.}\label{proof_Theo3}
We adapt the proof of \cite{kim2009unified} to the unweighted case. Let
\begin{equation}\label{UStat}
\widehat{U}\left(\pmb{\beta}_{\alpha}\right)=\sum_{k \in S_v}\frac{r_k\bx_{k,\alpha}\left(y_k-\bx_{k,\alpha}^{\top}\pmb{\beta}_{\alpha}\right)}{N_v}.
\end{equation}
Note that $\widehat{\pmb{\beta}}_{\alpha,v}$ given by \eqref{ols} is the solution of $\widehat{U}(\pmb{\beta}_{\alpha})=\textbf{0}_{\alpha}$. 
Let
\begin{equation*}
\widehat{\mu}_{v}(\pmb{\beta}_{\alpha})=\frac{1}{N_v}\left(\sum_{k \in S_{r,v}}\frac{y_k}{\pi_k}+\sum_{k \in S_{m,v}}\frac{\bx_{k,\alpha}^{\top}\pmb{\beta}_{\alpha}}{\pi_k}\right),
\end{equation*}
which is the function of $\pmb{\beta}_{\alpha}$. We further define \begin{equation*}
\widetilde{\mu}_{v}\left(\pmb{\gamma},\pmb{\beta}_{\alpha}\right)=\widehat{\mu}_{v}\left(\pmb{\beta}_{\alpha}\right)+\pmb{\gamma}^{\top}\widehat{U}\left(\pmb{\beta}_{\alpha}\right),
\end{equation*}
which is seen as a function of $\pmb{\beta}_{\alpha}$ and $\pmb{\gamma}$.  Our goal is to find a particular choice of $\pmb{\gamma}$, called $\pmb{\gamma}^{\star}$ such that $\widetilde{\mu}_{v}(\pmb{\gamma}^{\star},\widehat{\pmb{\beta}}_{\alpha.v})-\widetilde{\mu}_{v}(\pmb{\gamma}^{\star},\pmb{\beta}_{\alpha,v})=o_\mathbb{P}(n_v^{-1/2})$. If so, the effect of estimating $\pmb{\beta}_{\alpha}$ can be ignored by choosing $\pmb{\gamma}=\pmb{\gamma}^{\star}$. By noting $\widetilde{\mu}_{v}(\pmb{\gamma},\widehat{\pmb{\beta}}_{\alpha.v})=\widehat{\mu}_{\alpha,v}$ for every $p_{\alpha}$-dimensional vector $\pmb{\gamma}$, we have $\widehat{\mu}_{\alpha,v}-\widetilde{\mu}_{v}(\pmb{\gamma}^{\star},\pmb{\beta}_{\alpha,v})=o_\mathbb{P}(n_v^{-1/2})$.  To find $\pmb{\gamma}^{\star}$, we use the theory of \cite{randles1982asymptotic}, having proved $\sqrt{N_v}(\widehat{\pmb{\beta}}_{\alpha,v}-\pmb{\beta}_{\alpha})=\mathcal{O}_\P(1)$ in Lemma \ref{BetaEst}, then $\widetilde{\mu}_{v}(\pmb{\gamma}^{\star},\widehat{\pmb{\beta}}_{\alpha,v})-\widetilde{\mu}_{v}(\pmb{\gamma}^{\star},\pmb{\beta}_{\alpha})=o_\mathbb{P}(n_v^{-1/2})$ holds if 
\begin{equation*}
\mathbb{E}_{mp}\left[\frac{\partial \widetilde{\mu}_{v}(\pmb{\gamma},\pmb{\beta}_{\alpha})}{\partial\pmb{\beta}_{\alpha}}\bigg|_{\pmb{\beta}_{\alpha}=\pmb{\beta}_{\alpha}} \right]=\mathbb{E}_{mp}\left[ \frac{\partial \widehat{\mu}_{v}(\pmb{\beta}_{\alpha})}{\partial\pmb{\beta}_{\alpha}}\bigg|_{\pmb{\beta}_{\alpha}=\pmb{\beta}_{\alpha}}\right]-\pmb{\gamma}^{\top}\mathbb{E}_{mp}\left[ \frac{\partial \widehat{U}(\pmb{\beta}_{\alpha})}{\partial\pmb{\beta}_{\alpha}}\bigg|_{\pmb{\beta}_{\alpha}=\pmb{\beta}_{\alpha}}\right]=0.
\end{equation*}
The solution of $\pmb{\gamma}^{\star}$ is given by
\begin{align*}
\pmb{\gamma}^{\star}&=-\left(\mathbb{E}_{mp}\left[\frac{\partial \widehat{U}(\pmb{\beta}_{\alpha})^{\top{}}}{\partial\pmb{\beta}_{\alpha}}\bigg|_{\pmb{\beta}_{\alpha}=\pmb{\beta}_{\alpha}} \right]\right)^{-1}\mathbb{E}_{mp}\left[ \frac{\partial \widehat{\mu}_{v}(\pmb{\beta}_{\alpha})}{\partial\pmb{\beta}_{\alpha}}\bigg|_{\pmb{\beta}_{\alpha}=\pmb{\beta}_{\alpha}}\right]\\&=\left(\sum_{k \in U_v}\frac{r_k\pi_k\bx_{k,\alpha}\mathbf{x}_{k,\alpha}^{\top}}{N_v} \right)^{-1}\sum_{k \in U_v}\frac{(1-r_k)\bx_{k,\alpha}}{N_v}=\textbf{c}_{\alpha,v}.
\end{align*}
Next we will show that $\widetilde{\mu}_v(\pmb{\gamma}^{\star},\pmb{\beta}_{\alpha})=\widetilde{\mu}_{\alpha,v}$. Write $\widetilde{\mu}_{v}(\pmb{\gamma},\pmb{\beta}_{\alpha})$ as
\begin{align*}
\widetilde{\mu}_{v}\left(\pmb{\gamma},\pmb{\beta}_{\alpha}\right)&=\widehat{\mu}_{v}(\pmb{\beta}_{\alpha})+\pmb{\gamma}^{\top}\widehat{U}(\pmb{\beta}_{\alpha})\\
&=\frac{1}{N_v}\left(\sum_{k \in S_{r,v}}\frac{y_k}{\pi_k}+\sum_{k\in S_{m,v}}\frac{\bx_{k,\alpha}^{\top}\pmb{\beta}_{\alpha}}{\pi_k}+\pmb{\gamma}^{\top}\sum_{k \in S_{r,v}}\frac{\pi_k\left(y_k-\bx_{k,\alpha}^{\top}\pmb{\beta}_{\alpha}\right)\bx_{k,\alpha}}{\pi_k}\right)\\
&=\frac{1}{N_v}\left(\sum_{k \in S_{r,v}}\frac{y_k+\pi_k\pmb{\gamma}^{\top}\left(y_k-\bx_{k,\alpha}^{\top}\pmb{\beta}_{\alpha}\right)\bx_{k,\alpha}}{\pi_k}+\sum_{k \in S_{m,v}}\frac{\bx_{k,\alpha}^{\top}\pmb{\beta}_{\alpha}}{\pi_k} \right)\\
&=\frac{1}{N_v}\sum_{k\in S_{v}} \frac{r_ky_k+r_k\pi_k\pmb{\gamma}^{\top}\bx_{k,\alpha}\left(y_k-\bx_{k,\alpha}^{\top}\pmb{\beta}_{\alpha}\right)+(1-r_k)\bx_{k,\alpha}^{\top}\pmb{\beta}_{\alpha}}{\pi_k}\\
&=\frac{1}{N_v}\sum_{k\in S_{v}} \frac{r_k\left(y_k-\bx_{k,\alpha}^{\top}\pmb{\beta}_{\alpha}\right)+r_k\pi_k\pmb{\gamma}^{\top}\bx_{k,\alpha}\left(y_k-\bx_{k,\alpha}^{\top}\pmb{\beta}_{\alpha}\right)+\bx_{k,\alpha}^{\top}\pmb{\beta}_{\alpha}}{\pi_k}\\
&=\frac{1}{N_v}\sum_{k\in S_{v}} \frac{r_k\left(1+\pi_k\pmb{\gamma}^{\top}\bx_{k,\alpha}\right)\left(y_k-\bx_{k,\alpha}^{\top}\pmb{\beta}_{\alpha}\right)+\bx_{k,\alpha}^{\top}\pmb{\beta}_{\alpha}}{\pi_k}.
\end{align*}
Recall that
\begin{equation*}
\widetilde{\mu}_{\alpha,v}=\frac{1}{N_v}\sum_{k\in S_{v}} \frac{r_k\left(1+\pi_k\textbf{c}_{\alpha,v}^{\top}\bx_{k,\alpha}\right)\left(y_k-\bx_{k,\alpha}^{\top}\pmb{\beta}_{\alpha}\right)+\bx_{k,\alpha}^{\top}\pmb{\beta}_{\alpha}}{\pi_k}.
\end{equation*}
Taking $\pmb{\gamma}^{\star}=\textbf{c}_{\alpha,v}$ leads to $\widetilde{\mu}_v(\textbf{c}_{\alpha,v},\pmb{\beta}_{\alpha})=\widetilde{\mu}_{\alpha,v}$, we finally conclude $\widehat{\mu}_{\alpha,v}-\widetilde{\mu}_{\alpha,v}=o_{\mathbb{P}}(n_v^{-1/2}).$
\subsection*{Proof of Theorem \ref{Theo4}.} \label{proof_Theo4}
 Write
\begin{equation*}
n_v\left|\widehat{V}_{T,v}(\alpha)-V_{T,v}(\alpha)\right|\leq n_v\left|\widehat{V}_{1,v}(\alpha)-V_{1,v}(\alpha)\right|+n_v\left|\widehat{V}_{2,v}(\alpha)-V_{2,v}(\alpha)\right|:= (a) + (b).
\end{equation*}
\paragraph{Consistency of (a).} We further decompose $n_v|\widehat{V}_1(\alpha)-V_1(\alpha)|$ as
\begin{align*}
n_v\left|\widehat{V}_{1,v}(\alpha)-V_{1,v}(\alpha)\right|&\leq n_v\left|\widehat{V}_{1,v}(\alpha)-\widetilde{V}_{1,v}(\alpha)\right|+n_v\left|\widetilde{V}_{1,v}(\alpha)-\bar{V}_{1,v}(\alpha)\right|\\
&\quad+n_v\left|\bar{V}_{1,v}(\alpha)-V_{1,v}(\alpha)\right|\\
&:=A_v+B_v+C_v,
\end{align*}
where
\begin{equation*}
\widetilde{V}_{1,v}(\alpha)=\frac{1}{N_v^2}\sum_{k \in S_v}\sum_{l \in S_v}\frac{\eta_{k,\alpha}}{\pi_k}\frac{\eta_{l,\alpha}}{\pi_l}\frac{\Delta_{kl}}{\pi_{kl}},
\end{equation*}
\begin{equation}\label{BarV1}
\bar{V}_{1,v}(\alpha)=\mathbb{V}_p\left(\tilde{\mu}_{\alpha,v}-\mu_v\right)=\frac{1}{N_v^2}\sum_{k \in U_v}\sum_{l \in U_v}\frac{\eta_{k,\alpha}}{\pi_k}\frac{\eta_{l,\alpha}}{\pi_l}\Delta_{kl},
\end{equation}

\paragraph{Consistency of $A_v$.} Let 
\begin{equation*}
c_{kl}=\frac{\Delta_{kl}I_kI_l}{\pi_k\pi_l\pi_{kl}},\qquad k,l\in U_v.
\end{equation*}
Using the equality, for arbitrary $k,l \in U_v$,
\begin{equation*}
\widehat{\eta}_{k,\alpha}\widehat{\eta}_{l,\alpha}-\eta_{k,\alpha}\eta_{l,\alpha}=\left(\widehat{\eta}_{k,\alpha}-\eta_{k,\alpha}\right)\left(\widehat{\eta}_{l,\alpha}-\eta_{l,\alpha}\right)+\eta_{k,\alpha}\left(\widehat{\eta}_{l,\alpha}-\eta_{l,\alpha}\right)+\eta_{l,\alpha}\left(\widehat{\eta}_{k,\alpha}-\eta_{k,\alpha}\right),
\end{equation*}
$A_v$ can be decomposed as
\begin{align*}
A_v&\leq\frac{n_v}{N_v^2}\sum_{k \in U_v}\sum_{l \in U_v}\left|c_{kl}\left(\widehat{\eta}_{k,\alpha}\widehat{\eta}_{l,\alpha}-\eta_{k,\alpha}\eta_{l,\alpha}\right)\right|\\
& \leq \frac{n_v}{N_v^2}\sum_{k \in U_v}\sum_{l \in U_v}\left|c_{kl}\left(\widehat{\eta}_{k,\alpha}-\eta_{k,\alpha}\right)\left(\widehat{\eta}_{l,\alpha}-\eta_{l,\alpha}\right)\right|\\
&\quad+\frac{n_v}{N_v^2}\sum_{k\in U_v}\sum_{l \in U_v}\left|c_{kl}\left(\widehat{\eta}_{k,\alpha}-\eta_{k,\alpha}\right)\eta_{l,\alpha}\right|+\frac{n_v}{N_v^2}\sum_{k\in U_v}\sum_{l \in U_v}\left|c_{kl}\eta_{k,\alpha}\left(\widehat{\eta}_{l,\alpha}-\eta_{l,\alpha}\right)\right|\\&:=A_{1,v}+A_{2,v}+A_{3,v}.
\end{align*}
Next, we will treat $A_{1,v}$ and $A_{2,v}$, and $A_{3,v} $separately.
 For $A_{1,v}$, since Lemma \ref{ConstEtaEst} shows
\begin{equation*}
\sum_{k \in U_v}\frac{\left(\widehat{\eta}_{k,\alpha}-\eta_{k,\alpha}\right)^2}{N_v}=o_{\mathbb{P}}(1),
\end{equation*} 
we have 
\begin{align*}
A_{1,v}&\leq \frac{n_v}{N_v\lambda^2}\sum_{k \in U_v}\frac{\left(\widehat{\eta}_{k,\alpha}-\eta_{k,\alpha}\right)^2}{N_v}\\
&\quad+\frac{n_v}{N_v\lambda^2\lambda^{\star}}\max_{k \neq l \in U_v}|\Delta_{kl}|\sum_{k \in U_v}\sum_{\substack{l \in U_v\\k \neq l}}\frac{|(\widehat{\eta}_{k,\alpha}-\eta_{k,\alpha})(\widehat{\eta}_{l,\alpha}-\eta_{l,\alpha})|}{N_v}\\
&\leq \frac{n_v}{N_v\lambda^2}\sum_{k \in U_v}\frac{(\widehat{\eta}_{k,\alpha}-\eta_{k,\alpha})^2}{N_v}\\
&\quad+\frac{n_v}{N_v\lambda^2\lambda^{\star}}\max_{k \neq l \in U_v}|\Delta_{kl}|\sum_{k \in U_v}\sum_{l \in U_v}\frac{| (\widehat{\eta}_{k,\alpha}-\eta_{k,\alpha})(\widehat{\eta}_{l,\alpha}-\eta_{l,\alpha})|}{N_v}\\
&\leq \left(\frac{n_v}{N_v\lambda^2}+\frac{n_v}{\lambda^2\lambda^{\star}}\max_{k \neq l \in U_v}|\Delta_{kl}|\right)\sum_{k \in U_v}\frac{(\widehat{\eta}_{k,\alpha}-\eta_{k,\alpha})^2}{N_v}=o_{\mathbb{P}}(1).
\end{align*}
By symmetry, for $A_{2,v}$, using $\sum_{k \in U_v}\sum_{l\in U_v}a_kb_l\leq N_v\left(\sum_{k\in U_v}a_k^2\right)^{1/2}\left(\sum_{l\in U_v}b_l\right)^{1/2}$, we have
\begin{align*}
A_{2,v}&\leq \frac{n_v}{N_v\lambda^2}\sum_{k \in U_v}\frac{|(\widehat{\eta}_{k,\alpha}-\eta_{k,\alpha})\eta_{k,\alpha}|}{N_v}+\frac{n_v}{N_v^2\lambda^2\lambda^{\star}}\max_{k \neq l \in U_v}|\Delta_{kl}|\sum_{k \in U_v}\sum_{\substack{l \in U_v\\k \neq l}}|(\widehat{\eta}_{k,\alpha}-\eta_{k,\alpha})\eta_{l,\alpha}|\\
&\leq \frac{n_v}{N_v\lambda^2}\sum_{k \in U_v}\frac{|(\widehat{\eta}_{k,\alpha}-\eta_{k,\alpha})\eta_{k,\alpha}|}{N_v}+\frac{n_v}{N_v^2\lambda^2\lambda^{\star}}\max_{k \neq l \in U_v}|\Delta_{kl}|\sum_{k \in U_v}\sum_{l \in U_v}|(\widehat{\eta}_{k,\alpha}-\eta_{k,\alpha})\eta_{l,\alpha}|\\
&\leq \left(\frac{n_v}{N_v\lambda^2}+\frac{n_v}{\lambda^2\lambda^{\star}}\max_{k \neq l \in U_v}|\Delta_{kl}|\right)\sqrt{\sum_{k \in U_v}\frac{\eta_{k,\alpha}^2}{N_v}}\sqrt{\sum_{k \in U_v}\frac{(\widehat{\eta}_{k,\alpha}-\eta_{k,\alpha})^2}{N_v}}=o_{\mathbb{P}}(1).
\end{align*}
As a result, we have $A_{v}=o_{\mathbb{P}}(1)$. Furthermore, we have $n_v|\widehat{V}_{1,v}(\alpha)-V_{1,v}(\alpha)|=o_{\mathbb{P}}(1).$ 
\paragraph{Consistency of $B_v$.} Write
\begin{align*}
    &\mathbb{E}_p\left[\left(n_v(\widetilde{V}_{1,v}(\alpha)-\bar{V}_{1,v}(\alpha)) \right)^2 \right]\\
    &\quad\leq \frac{2n_v^2}{N_v^4}\sum_{k \in U_v}\sum_{l \in U_v}\frac{1-\pi_k}{\pi_k}\frac{1-\pi_l}{\pi_l}\frac{\eta_{k,\alpha}^2}{\pi_k}\frac{\eta_{l,\alpha}^2}{\pi_l}\Delta_{kl}\\
    &\quad\quad+\frac{2n_v^2}{N_v^4}\sum_{i \in U_v}\sum_{\substack{j \in U_v\\j \neq i}}\sum_{k \in U_v}\sum_{\substack{l \in U_v\\l \neq k}}\frac{\eta_{i,\alpha}\eta_{j,\alpha}\eta_{k,\alpha}\eta_{l,\alpha}}{\pi_{i}\pi_j\pi_{k}\pi_l}\E_p\left[\frac{\left(I_iI_j-\pi_{ij} \right)\left(I_kI_l-\pi_{kl} \right) }{\pi_{ij}\pi_{kl}}\right]\Delta_{ij}\Delta_{kl}\\
    &\quad:=B_{1,v}+B_{2,v}.
\end{align*}
 We will use the same argument of Theorem 3 in \cite{breidt2000local}. For $B_{1,v}$, we have
\begin{align*}
    B_{1,v}\leq \left(\frac{2}{N_v\lambda^3} +\frac{2n_v\max_{k\neq l \in U_v}|\Delta_{kl}|}{N_v\lambda^4}\right)\sum_{k \in U_v}\frac{\eta_{k,\alpha}^4}{N_v}.
\end{align*}
Recall from Lemma \ref{FiniteEta} gives
\begin{equation*}
    \limsup_{v \rightarrow \infty}\sum_{k \in U_v}\frac{\eta_{k,\alpha}^4}{N_v}<\infty,
\end{equation*}
almost surely. It follows that
$B_{1,v}$ converges to 0 almost surely. For $B_{2,v}$, we have
\begin{align*}
    B_{2,v}&\leq L_v+\frac{2\left(n_v\max_{k \neq l \in U_v}|\Delta_{kl}| \right)^2}{\lambda^4\lambda^{\star2}}\\
    &\quad\times\max_{(i,j,k,l)\in D_{4,N_v}}\Bigg|\E_p\left [\frac{\left(I_iI_j-\pi_{ij}\right)\left(I_kI_l-\pi_{kl} \right)}{\pi_{ij}\pi_{kl}}\right] \Bigg|\sum_{k \in U_v}\frac{\eta_{k,\alpha}^4}{N_v}.
\end{align*}
Here, $L_v$ converges to 0 almost surely. It follows that $B_{2,v}$ converges to 0 almost surely. As a result, for any $\epsilon>0$, an application of Chebyshev's inequality gives
\begin{equation*}
     \lim_{ v\rightarrow \infty}\P_p\left(\left|n_v(\widetilde{V}_{1,v}(\alpha)-\bar{V}_{1,v}(\alpha)) \right|>\epsilon \right)=0,
\end{equation*}
almost surely.
It follows that 
\begin{equation*}
     \lim_{ v\rightarrow \infty}\P_{mpq}\left(\left|n_v(\widetilde{V}_{1,v}(\alpha)-\bar{V}_{1,v}(\alpha)) \right|>\epsilon \right)=0,
\end{equation*}
almost surely.
This concludes $B_v=o_\P(1).$
\paragraph{Consistency of $C_v$.} 
$C_v$ can be decomposed as
\begin{equation*}
C_v=n_v\left|\bar{V}_{1,v}(\alpha)-V_{1,v}(\alpha)\right|\leq \left|n_v\bar{V}_{1,v}(\alpha)-c_{1,\alpha}^{\star}\right|+\left|c_{1,\alpha}^{\star}-n_vV_{1,v}(\alpha)\right|,
\end{equation*}
where $c_{1,\alpha}^{\star}=f^{\star}c_{1,\alpha}$ with $f^{\star}=\lim_{v \rightarrow \infty}n_v/N_v$.
On the one hand, we have
\begin{equation*}
    \left|n_v\bar{V}_{1,v}(\alpha)-c_{1,\alpha}^{\star}\right|=o_\P(1),
\end{equation*}
by assumption. The remaining part is to show
\begin{align*}
    \lim_{v \rightarrow \infty}\left|c_{1,\alpha}^{\star}-n_vV_{1,v}(\alpha)\right|=\lim_{v \rightarrow \infty}\left|\mathbb{E}_{mq}[n_v\bar{V}_{1,v}(\alpha)-c_{1,\alpha}^{\star}] \right|=0
\end{align*}
almost surely. We aim to apply Vitali convergence theorem, that is to show that $|n_v\bar{V}_{1,v}(\alpha)-c_{1,\alpha}^{\star}|=o_\P(1)$ and that $n_v\bar{V}_{1,v}(\alpha)$ is uniformly integrable. From this, it will follow that $\lim_{v \rightarrow \infty}\mathbb{E}_{ mq}[|n_v\bar{V}_{1,v}(\alpha)-c_{1,\alpha}^{\star}|]=0$ almost surely. To show the uniform integrability of $n_v\bar{V}_{1,v}(\alpha)$, it suffices to prove that $\E_{mq}[(n_v\bar{V}_{1,v}(\alpha))^2]$ is almost surely bounded, which follows from Lemma \ref{BoundedV1V2}.

\paragraph{Consistency of (b).} The term
$n_v|\hat{V}_{2,v}(\alpha)-V_{2,v}(\alpha)|$ can be decomposed as \begin{equation*}
n_v\left|\hat{V}_{2,v}(\alpha)-V_{2,v}(\alpha)\right|\leq n_v\left|\hat{V}_{2,v}(\alpha)-\bar{V}_{2,v}(\alpha)\right|+n_v\left|\bar{V}_{2,v}(\alpha)-V_{2,v}(\alpha)\right|:=D_v+E_v,
\end{equation*}
where \begin{equation}\label{BarV2}
\bar{V}_{2,v}(\alpha)=\mathbb{V}_m\left(\mathbb{E}_p\left[\tilde{\mu}_{\alpha,v}-\mu_v\right]\right)=\sigma^2\sum_{k \in U_v}\frac{1-r_k+r_k\left(\pi_k\textbf{c}_{\alpha,v}^{\top}\bx_{k,\alpha}\right)^2}{N_v^2}
\end{equation}
\paragraph{Consistency of $D_v$.}
Combining the results $|\hat{\sigma}^2_{\alpha,v}-\sigma^2|=o_{\mathbb{P}}(1)$ in Lemma \ref{SigmaEst} and
\begin{equation*}
\left|\sum_{k\in S_v}\frac{1-r_k+r_k\left(\pi_k\widehat{\textbf{c}}_{\alpha,v}^{\top}\bx_{k,\alpha}\right)^2}{\pi_kN_v}-\sum_{k \in U_v}\frac{1-r_k+r_k\left(\pi_k\textbf{c}_{\alpha,v}^{\top}\bx_{k,\alpha}\right)^2}{N_v} \right|=o_{\mathbb{P}}(1),
\end{equation*}
from \eqref{eq:cOP}, we have
\begin{equation*}
\left|\sum_{k \in U_v}\frac{1-r_k+r_k\left(\pi_k\textbf{c}_{\alpha,v}^{\top}\bx_{k,\alpha}\right)^2}{N_v}\right|\leq 1+\Arrowvert\textbf{c}_{\alpha,v}\Arrowvert_2^2\sum_{k \in U_v}\frac{\Arrowvert\bx_{k,\alpha}\Arrowvert_2^2}{N_v}=\mathcal{O}_\P(1).
\end{equation*}
It follows that
\begin{align*}
n_v\left|\widehat{V}_{2,v}(\alpha)-\bar{V}_{2,v}(\alpha)\right|&\leq\frac{n_v}{N_v}\Bigg|\left(\widehat{\sigma}_{\alpha,v}^2-\sigma^2\right)\Bigg(\sum_{k\in S_v}\frac{1-r_k+r_k(\pi_k\widehat{\textbf{c}}_{\alpha,v}^{\top}\bx_{k,\alpha})^2}{\pi_kN_v}\\
&\quad\quad-\sum_{k \in U_v}\frac{1-r_k+r_k(\pi_k\textbf{c}_{\alpha,v}^{\top}\bx_{k,\alpha})^2}{N_v}\Bigg)\Bigg|\\
&\quad+\frac{n_v}{N_v}\sigma^2\Bigg|\sum_{k\in S_v}\frac{1-r_k+r_k\left(\pi_k\widehat{\textbf{c}}_{\alpha,v}^{\top}\bx_{k,\alpha}\right)^2}{\pi_kN_v}\\
&\quad\quad -\sum_{k \in U_v}\frac{1-r_k+r_k\left(\pi_k\textbf{c}_{\alpha,v}^{\top}\bx_{k,\alpha}\right)^2}{N_v} \Bigg|\\
&\quad+\frac{n_v}{N_v}\left|\sum_{k \in U_v}\frac{1-r_k+r_k\left(\pi_k\textbf{c}_{\alpha,v}^{\top}\bx_{k,\alpha}\right)^2}{N_v}\right|\left|\widehat{\sigma}_{\alpha,v}^2-\sigma^2\right|=o_\P(1).
\end{align*}
\paragraph{Consistency of $E_v$.} The term
$E_v$ can be decomposed as
\begin{equation*}
E_v=n_v\left|\bar{V}_{2,v}(\alpha)-V_{2,v}(\alpha)\right|\leq \left|n_v\bar{V}_{2,v}(\alpha)-c_{2,\alpha}^{\star}\right|+\left|c_{2,\alpha}^{\star}-n_vV_{2,v}(\alpha)\right|,
\end{equation*}
where $c_{2,\alpha}^{\star}=f^{\star}c_{2,\alpha}$ with $f^{\star}=\lim_{v \rightarrow \infty}n_v/N_v$.
On the one hand, we have
\begin{equation*}
    \left|n_v\bar{V}_{2,v}(\alpha)-c_{2,\alpha}^{\star}\right|=o_\P(1),
\end{equation*}
by assumption. The remaining part is to show
\begin{align*}
    \lim_{v \rightarrow \infty}\left|c_{2,\alpha}^{\star}-n_vV_{2,v}(\alpha)\right|=\lim_{v \rightarrow \infty}\left|\mathbb{E}_{q}[n_v\bar{V}_{2,v}(\alpha)-c_{2,\alpha}^{\star}] \right|=0
\end{align*}
almost surely. We proceed again by Vitali's convergence theorem. The uniform integrability of $n_v\bar{V}_{2,v}(\alpha)$ follows from Lemma \ref{BoundedV1V2}.
\subsection*{Proof of Theorem \ref{Theo5}.} \label{proof_Theo5}
Write
\begin{align} \label{eq:proofTheo5}
\dfrac{\widehat{V}_{T,v}(\widehat{\alpha})}{V_{T,v}(\alpha^{\star})} = 1 + \dfrac{n_v \left(\widehat{V}_{T,v}(\widehat{\alpha})-\widehat{V}_{T,v}(\alpha^{\star})\right) }{n_v V_{T,v}(\alpha^{\star})} + \dfrac{n_v \left(\widehat{V}_{T,v}(\alpha^{\star})-V_{T,v}(\alpha^{\star})\right) }{n_v V_{T,v}(\alpha^{\star})} := 1 + A_v + B_v.
\end{align}
It is therefore enough to show that $A_v = o_\mathbb{P}(1)$ and $B_v = o_\mathbb{P}(1)$. Notice that, by assumption, satisfies $n_v \lim_{v\to \infty} V_{T,v}(\alpha^{\star})>0$ so that it is enough to show that the numerators of $A_v$ and $B_v$ converge to $0$ in probability. 
For $A_v$, let $\epsilon>0$ and write 
\begin{align*}
\P_{mpq}&\left( n_v \bigg\rvert\widehat{V}_{T,v}(\widehat{\alpha})-\widehat{V}_{T,v}(\alpha^{\star})\bigg\rvert >\epsilon\right) =     \P_{mpq}\left( n_v \bigg\rvert\widehat{V}_{T,v}(\widehat{\alpha})-\widehat{V}_{T,v}(\alpha^{\star}) \bigg\rvert >\epsilon \ , \ \widehat{\alpha}=\alpha^{\star}\right)  \\&+\P_{mpq}\left( n_v \bigg\rvert\widehat{V}_{T,v}(\widehat{\alpha})-\widehat{V}_{T,v}(\alpha^{\star}) \bigg\rvert >\epsilon \ , \ \widehat{\alpha}\neq\alpha^{\star}\right)  \\
&\leq \P_{mpq}\left( \widehat{\alpha}\neq\alpha^{\star}\right) \xrightarrow[v \to \infty]{}0
\end{align*}
almost surely since the model selection criterion is consistent. 

~\\
For $B_v$, given that $\alpha^{\star} \in \cC$, it follows directly that $B_v = o_\mathbb{P}(1)$ by Theorem \ref{Theo4}.

\subsection*{Proof of Theorem \ref{Theo6}.} \label{proof_Theo6}

To prove the asymptotic normality of $\widehat{\mu}_{\alpha^{\star},v} - \mu_v$, we  verify the conditions of Theorem 2 of \cite{chen2007asymptotic}. We decompose 
\begin{equation}
\widehat{\mu}_{\alpha^{\star},v} - \mu_v =\underbrace{ \dfrac{1}{N_v}\sum_{k\in U_v} \nu_k \eta_{k,\alpha^{\star}}}_{:= U_v} + \underbrace{\dfrac{1}{N_v}\sum_{k \in U_v} \left(\eta_{k,\alpha} - y_k\right)}_{:= V_v},
\end{equation}
with $\nu_k := I_k/\pi_k-1$ for $k\in U_v$.  Define $\cB_v := \sigma \left(\left(\bx_k, r_k, y_k\right)_{k\in U_v}\right).$  We need to verify the three conditions of Theorem 2 of \cite{chen2007asymptotic}, which we label as (i), (ii), and (iii), respectively.

\paragraph{Verification of (i).} We wish to show that $V_v$ is asymptotically normal and $\cB_v$-measurable. Measurability follows immediately by noting, for an arbitrary $k \in U_v$, $$\eta_{k,\alpha}-y_k = \epsilon_k (r_k + r_k \pi_k\textbf{c}_{\alpha^{\star},v}^{\top}\bx_{k,\alpha^\star} -1):= \epsilon_k w_k$$ with $\textbf{c}_{\alpha^{\star},v}$ depending only on the covariates and response indicators. Moreover, note that $\E_m [\eta_{k,\alpha}-y_k ]=0$. We will start by establishing a conditional central limit theorem via the conditional Lyapunov condition. More precisely, we need to show that there exists $\delta>0$ such that 
\begin{equation*}
L_v:=	\dfrac{	1}{\left[ \sum_{k\in U_v} \V_m \left( \dfrac{\eta_k-y_k}{N_v}\right)\right]^{1+\delta/2}}\sum_{k\in U_v} \E_m \left[\left(\dfrac{\eta_k-y_k}{N_v}\right)^{2+\delta}\right] \xrightarrow[v \to \infty]{\mathbb{P}} 0. 
\end{equation*}
Given our assumptions, it is convenient to show it for $\delta = 2.$ We start by noting that $$\sum_{k\in U_v} \V_m \left( \dfrac{\eta_k-y_k}{N_v}\right) = \dfrac{\sigma^2}{N_v^2} \sum_{k\in U_v}\left( (1-r_k) + r_k (\pi_k \mathbf{c}_{\alpha^{\star},v}^\top\bx_{k,\alpha^{\star}})^2\right) \geq \dfrac{\sigma^2}{N_v^2} N_{m,v} \asymp_\P \dfrac{1}{N_v}.$$
Moreover, note that $w_k^4 = 1$ if $r_k = 0$ and $w_k^4 = (
\pi_k \mathbf{c}_{\alpha^{\star},v}^\top\bx_{k,\alpha^{\star}})^4$ otherwise. Thus, using $(a+b)^4 \leq 8 (a^4+b^4)$, Cauchy-Schwartz inequality and \ref{S3}, we get 

\begin{align*}
\sum_{k\in U_v} \E_m \left[ \left(\dfrac{\eta_k-y_k}{N_v}\right)^{4}\right] &\leqslant \dfrac{8M_0}{N_v^4} \sum_{k\in U_v} \left( 1 + \rVert \mathbf{c}_{\alpha^{\star},v} \rVert^4_2 \rVert \mathbf{x}_{k, \alpha^{\star}} \rVert^4_2 \right) = \mathcal{O}_\P\left(\dfrac{1}{N_v^3}\right),
\end{align*}
by using \eqref{eq:cOP} and \ref{S2}. This shows that $L_v = \mathcal{O}_\P\left(N_v^{-1}\right)$ and thus the Lyapunov condition holds. Therefore, (i) with $$\sigma_{1v}^2 := \V_m \left( \sum_{k\in U_v} \dfrac{\eta_{k,\alpha^{\star}}-y_k}{N_v}\right)$$holds. Conditional Gaussianity follows. Moreover, since the asymptotic distribution does not depend on the conditioning, a dominated convergence argument can be used to extend it to an unconditional central limit theorem.

\paragraph{Verification of (ii).} We have
\begin{align*}
&\E \left[ U_v\rvert \cB_v\right] = \dfrac{1}{N_v} \sum_{k\in U_v} \E \left[ \nu_k\rvert \cB_v\right] \eta_{k, \alpha^{\star}} = 0,\\&  \V \left( U_v \rvert \cB_v\right) = \dfrac{1}{N_v^2}\sum_{k\in U_v}\sum_{l \in U_v} \Delta_{kl} \dfrac{\eta_{k, \alpha^{\star}}}{\pi_k}\dfrac{\eta_{l, \alpha^{\star}}}{\pi_l} := \sigma_{2v}^2,
\end{align*}
since, for all $k \in U_v$, $\eta_k$ is $\cB_v$-measurable. By assumption, $\sigma_{2v}^{-1} U_v \rvert \cB_v \xrightarrow[v \to \infty]{\cL}\cN\left(0,1\right)$, which can be equivalently stated as $$ \sup_{t \in \R} \bigg\rvert \P \left(\sigma_{2v}^{-1} U_v \leqslant t \rvert \cB_n\right) - \Phi(t)\bigg\rvert \xrightarrow[n \to \infty]{\P}0,$$ where $\Phi$ denotes cumulative distribution function of a standard normal random variable. Assumption (ii) is verified. 

\paragraph{Verification of (iii).} We have 
\begin{align*}
N_v \V_m \left( \sum_{k\in U_v} \dfrac{\eta_{k,\alpha^{\star}}-y_k}{N_v}\right) &= \dfrac{\sigma^2}{N_v}\sum_{k\in U_v} (1-r_k) +  \sigma^2\mathbf{c}_{\alpha^{\star},v}^\top \dfrac{1}{N_v}\sum_{k\in U_v} r_k \pi_k^2 \bx_{k,\alpha^{\star}}\bx_{k,\alpha^{\star}}^\top \mathbf{c}_{\alpha^{\star},v}\\
&\xrightarrow[v \to \infty]{\P}\sigma^2 \left\{ \E \left[ 1-p(\bx_{\alpha^{\star}}) \right] + \mathbf{\overline{c}_{\alpha^{\star}}}^\top\boldsymbol{C}_2\mathbf{\overline{c}_{\alpha^{\star}}}\right\}
\end{align*}
where $$\mathbf{\overline{c}_{\alpha^\star}} = \boldsymbol{C}_1^{-1} \E_\bx \left[ (1-p(\bx)) \bx_{\alpha^{\star}}\right].$$
Thus, we have $$ \dfrac{\sigma_{1v}^2}{\sigma_{2v}^2} =  \xrightarrow[v \to \infty]{\P} \dfrac{\sigma^2 \left\{ \E \left[ 1-p(\bx_{\alpha^{\star}}) \right] + \mathbf{\overline{c}_{\alpha^{\star}}}^\top\boldsymbol{C}_2\mathbf{\overline{c}_{\alpha^{\star}}}\right\}}{c_3}.$$
Therefore, (iii) follows.

\paragraph{Putting pieces together.} Since the three conditions of \cite{chen2007asymptotic} are satisfied, it follows that $\widehat{\mu}_{\alpha^{\star},v} - \mu_v$ is asymptotically normal.  Moreover, by Theorem \ref{Theo3}, the asymptotic distribution of $\widehat{\mu}_{\widehat{\alpha},v} - \mu_v$ is the same as that of $\widehat{\mu}_{\alpha^{\star},v} - \mu_v$, and is thus asymptotically normal as well. An application of Theorem \ref{Theo5} proves the unit variance.

\subsection{Consistency of estimators}
\subsubsection{Consistency of single estimators}
\begin{lemma}\label{BetaEst}
Let $\alpha \in \cC$ and $\left(\widehat{\pmb{\beta}}_{\alpha,v}\right)_{v\in \mathbb{N}}$ be a sequence of least-squares estimators given by \eqref{ols}.  Assume \ref{S1}-\ref{S3} and \ref{D1}-\ref{D3}. Then,
\begin{equation*}
\big\rVert\widehat{\pmb{\beta}}_{\alpha,v}-\pmb{\beta}_{\alpha}\big\rVert_2=\mathcal{O}_\P\left(\frac{1}{\sqrt{n_v}} \right).
\end{equation*}
\end{lemma}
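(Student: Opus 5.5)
Since $\alpha\in\cC$, we have $\bX_r\pmb{\beta}=\bX_{r,\alpha}\pmb{\beta}_\alpha$, because the omitted coefficients vanish. Substituting $\boldsymbol{Y}_r=\bX_{r,\alpha}\pmb{\beta}_\alpha+\pmb{\epsilon}_r$ into \eqref{ols} therefore gives the exact representation
\begin{equation*}
\widehat{\pmb{\beta}}_{\alpha,v}-\pmb{\beta}_\alpha=\left(\frac{\boldsymbol{A}_{r,\alpha}}{N_v}\right)^{-1}\frac{1}{N_v}\sum_{k\in U_v} I_k r_k\bx_{k,\alpha}\epsilon_k .
\end{equation*}
The plan is to show that the inverse matrix is $\mathcal{O}_\P(1)$ in operator norm and that the score vector is $\mathcal{O}_\P(N_v^{-1/2})$. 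Then \ref{D1} converts $N_v^{-1/2}$ into $n_v^{-1/2}$.

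\emph{Step 1 (the inverse).} Write $N_v^{-1}\boldsymbol{A}_{r,\alpha}=N_v^{-1}\sum_{k\in U_v}I_kr_k\bx_{k,\alpha}\bx_{k,\alpha}^\top$. I will lower bound its smallest eigenvalue with high probability. Its $p$-conditional expectation is $N_v^{-1}\sum_{k}\pi_kr_k\bx_{k,\alpha}\bx_{k,\alpha}^\top$. By \ref{D2} this is bounded below by $\lambda\,N_v^{-1}\sum_k r_k\bx_{k,\alpha}\bx_{k,\alpha}^\top$, which is the principal $\alpha$-block of the matrix in Lemma \ref{HighProb}. On $\mathcal{E}_{N_v}$ (Remark \ref{req1}) it therefore has smallest eigenvalue at least $\lambda\gamma$, since principal submatrices inherit the lower eigenvalue bound.

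It remains to control the design fluctuation. Entrywise, the difference between $N_v^{-1}\boldsymbol{A}_{r,\alpha}$ and its $p$-expectation has design variance
\begin{equation*}
N_v^{-2}\sum_{k,l}\Delta_{kl}\,a_ka_l\le N_v^{-2}\Big(\sum_k a_k^2+\max_{k\neq l}|\Delta_{kl}|\Big(\sum_k|a_k|\Big)^2\Big)=\mathcal{O}(n_v^{-1}),
\end{equation*}
using \ref{S2} to bound $a_k$ and \ref{D3} for the covariances. Hence the difference is $o_\P(1)$ in Frobenius norm. By Weyl's inequality, $\lambda_{\min}(N_v^{-1}\boldsymbol{A}_{r,\alpha})\ge\lambda\gamma/2$ with probability tending to one, so $\|(N_v^{-1}\boldsymbol{A}_{r,\alpha})^{-1}\|_{op}\le 2/(\lambda\gamma)$ on that event.

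\emph{Step 2 (the score).} Let $\mathbf{s}_v=N_v^{-1}\sum_k I_kr_k\bx_{k,\alpha}\epsilon_k$. Conditionally on $(\bx_k,r_k)_{k}$ and the sample, non-informativeness and MAR give $\E_m[\epsilon_k\mid\cdot]=0$. The $\epsilon_k$ are independent across units, with variance $\sigma^2$. Hence
\begin{equation*}
\E_m\|\mathbf{s}_v\|_2^2=\sigma^2N_v^{-2}\sum_k I_kr_k\|\bx_{k,\alpha}\|_2^2\le\sigma^2C_0^2/N_v .
\end{equation*}
Markov's inequality then gives $\|\mathbf{s}_v\|_2=\mathcal{O}_\P(N_v^{-1/2})$ under $\P_{mpq}$ after integrating out. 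Only second moments are needed here; \ref{S3} is more than enough.

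\emph{Step 3 (combine).} We have $\|\widehat{\pmb{\beta}}_{\alpha,v}-\pmb{\beta}_\alpha\|_2\le\|(N_v^{-1}\boldsymbol{A}_{r,\alpha})^{-1}\|_{op}\|\mathbf{s}_v\|_2=\mathcal{O}_\P(N_v^{-1/2})$. Since $n_v/N_v\to f^\star>0$ by \ref{D1}, this is $\mathcal{O}_\P(n_v^{-1/2})$. The low-probability complements ($\mathcal{E}_{N_v}^c$ and the non-invertibility event) are handled as in Remark \ref{req1}.

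\emph{Main obstacle.} I expect the main obstacle to be Step 1: obtaining a uniform lower eigenvalue bound for the \emph{sample} respondent Gram matrix under a general design, where only \ref{D3} controls dependence. The variance computation above is my intended route. If it proves delicate, the fallback is to assume the design-level analogue of Lemma \ref{HighProb}, i.e.\ the high-probability event $\{\bX_r^\top\bX_r\succ0\}$ with a quantitative eigenvalue bound, as allowed in Remark \ref{req1}.
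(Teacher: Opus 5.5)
Your proof is correct, but it takes a different route from the paper's. The paper introduces the population-level surrogate $\widetilde{\pmb{\beta}}_{\alpha,v}=\bigl(N_v^{-1}\sum_{k\in U_v}p(\bx_k)\pi_k\bx_{k,\alpha}\bx_{k,\alpha}^\top\bigr)^{-1}N_v^{-1}\sum_{k\in U_v}p(\bx_k)\pi_k\bx_{k,\alpha}y_k$. It then splits $\widehat{\pmb{\beta}}_{\alpha,v}-\pmb{\beta}_\alpha$ into $\widehat{\pmb{\beta}}_{\alpha,v}-\widetilde{\pmb{\beta}}_{\alpha,v}$ and $\widetilde{\pmb{\beta}}_{\alpha,v}-\pmb{\beta}_\alpha$.
\begin{itemize}
\item For the first piece, it applies Lemma~\ref{HTEstimator} to both the cross-product vector $N_v^{-1}\sum_{k\in S_{r,v}}\bx_{k,\alpha}y_k$ and the Gram matrix. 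It then uses Lemma~\ref{MatrixInverse} for the difference of inverses, together with the tightness results in Lemmas~\ref{BoundnessAp-1} and~\ref{BoundnessA-1}.
\item The second piece is a Chebyshev bound under the model.
\end{itemize}

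You instead use $\alpha\in\cC$ at the outset to get the exact identity $\widehat{\pmb{\beta}}_{\alpha,v}-\pmb{\beta}_\alpha=(N_v^{-1}\boldsymbol{A}_{r,\alpha})^{-1}\mathbf{s}_v$. The signal part therefore cancels identically, so you need no surrogate, no concentration of $\sum_k\bx_{k,\alpha}y_k$, and no perturbation-of-inverses step. Two ingredients remain:
\begin{itemize}
\item A lower eigenvalue bound for the sample Gram matrix. Your Step~1 is essentially a direct re-derivation of Lemma~\ref{BoundnessA-1}. It goes from $N_v^{-1}\boldsymbol{A}_{r,\alpha}$ straight to its design expectation $N_v^{-1}\sum_k \pi_k r_k\bx_{k,\alpha}\bx_{k,\alpha}^\top$, rather than through the $p(\bx_k)\pi_k$-weighted matrix.
\item An elementary second-moment bound for the score, conditional on $(\bx_k,r_k,I_k)_k$.
\end{itemize}

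Your argument is shorter and self-contained for the stated lemma. The paper's decomposition buys reusable machinery: the same HT-concentration and inverse-perturbation steps are recycled for $\widehat{\mathbf{c}}_\alpha$ and for $\boldsymbol{S}_v$ in Lemma~\ref{SMatrix}. Its first piece also does not really depend on the correctness of $\alpha$.

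Two minor remarks. First, passing from $N_v^{-1/2}$ to $n_v^{-1/2}$ does not need \ref{D1}, since $n_v\le N_v$. Second, the obstacle you flag in Step~1 is already resolved by your own design-variance computation, which even gives an $\mathcal{O}_\P(n_v^{-1/2})$ fluctuation, so the fallback assumption is unnecessary.
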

    \begin{proof}
Let $$\widetilde{\pmb{\beta}}_{\alpha,v}=\bigg(\sum_{k \in U_v}\frac{p(\bx_k)\pi_k\bx_{k,\alpha}\bx_{k,\alpha}^{\top}}{N_v}\bigg)^{-1}\sum_{k \in U_v}\frac{p(\bx_k)\pi_k\bx_ky_k}{N_v}.$$ We decompose $\widehat{\pmb{\beta}}_{\alpha,v}-\pmb{\beta}_{\alpha}$ as
\begin{equation*}
\widehat{\pmb{\beta}}_{\alpha,v}-\pmb{\beta}_{\alpha}=\underbrace{\widehat{\pmb{\beta}}_{\alpha,v}-\widetilde{\pmb{\beta}}_{\alpha,v}}_{(a)}+\underbrace{\widetilde{\pmb{\beta}}_{\alpha,v}-\pmb{\beta}_{\alpha}}_{(b)}.
\end{equation*}
We need to show that both terms (a) and (b) converge to 0 with rate $\mathcal{O}_\P(n_v^{-1/2})$.
\vspace{4mm}
~\\
\textbf{Convergence of (a).} The triangle inequality  gives
\begin{align}\label{decomp}
\big\Arrowvert\widehat{\pmb{\beta}}_{\alpha,v}-\widetilde{\pmb{\beta}}_{\alpha,v}\big\Arrowvert_2&=\Bigg\Arrowvert\left(\frac{\boldsymbol{A}_{r,\alpha}}{N_v}\right)^{-1}\sum_{k \in S_{r,v}}\frac{\bx_{k,\alpha}y_k}{N_v} \\
&\quad\quad-\left(\sum_{k \in U_v}\frac{p(\bx_k)\pi_k\bx_{k,\alpha}\bx_{k,\alpha}^{\top}}{N_v}\right)^{-1}\sum_{k \in U_v}\frac{p(\bx_k)\pi_k\bx_{k,\alpha}y_k}{N_v}\Bigg\Arrowvert_2\nonumber\\
&= \Bigg\Arrowvert \left(\frac{\boldsymbol{A}_{r,\alpha}}{N_v} \right)^{-1}\left( \sum_{ k\in S_{r,v}}\frac{\bx_{k,\alpha}y_k}{N_v}-\sum_{k \in U_v}\frac{p(\bx_k)\pi_k\bx_{k,\alpha}y_k}{N_v}\right) \nonumber\\
&\quad\quad+  \left(\left(\frac{\boldsymbol{A}_{r,\alpha}}{N_v} \right)^{-1}- \left(\sum_{k \in U_v} \frac{p(\bx_k)\pi_k\bx_{k,\alpha}\bx_{k,\alpha}^{\top}}{N_v} \right)^{-1}\right) \sum_{k \in U_v}\frac{p(\bx_k)\pi_k\bx_{k,\alpha}y_k}{N_v}\Bigg\Arrowvert_2
\nonumber\\&\leq \Bigg\Arrowvert \left(\frac{\boldsymbol{A}_{r,\alpha}}{N_v} \right)^{-1}\left( \sum_{ k\in S_{r,v}}\frac{\bx_{k,\alpha}y_k}{N_v}-\sum_{k \in U_v}\frac{p(\bx_k)\pi_k\bx_{k,\alpha}y_k}{N_v}\right)\Bigg\Arrowvert_2 \nonumber\\
&\quad\quad+ \Bigg\Arrowvert \left(\left(\frac{\boldsymbol{A}_{r,\alpha}}{N_v} \right)^{-1}- \left(\sum_{k \in U_v} \frac{p(\bx_k)\pi_k\bx_{k,\alpha}\bx_{k,\alpha}^{\top}}{N_v} \right)^{-1}\right) \\
&\quad\quad\times \sum_{k \in U_v}\frac{p(\bx_k)\pi_k\bx_{k,\alpha}y_k}{N_v}\Bigg\Arrowvert_2 \nonumber\\
&:=A_{1,v}+A_{2,v} \nonumber
\end{align}
Next, we consider $A_{1,v}$ and $A_{2,v}$ separately.
\vspace{2mm}
~\\
\textbf{Treatment of $A_{1,v}$.} For $A_{1,v}$, using the \hyperlink{SMI}{Schwarz matrix inequality}, we obtain 
\begin{align*}
A_{1,v} &\leq \Bigg\Arrowvert \left(\frac{\boldsymbol{A}_{r,\alpha}}{N_v} \right)^{-1}\Bigg\Arrowvert_{op} \Bigg\Arrowvert  \sum_{ k\in S_{r,v}}\frac{\bx_{k,\alpha}y_k}{N_v}-\sum_{k \in U_v}\frac{p(\bx_k)\pi_k\bx_{k,\alpha}y_k}{N_v}\Bigg\Arrowvert_2
\end{align*}
We will apply Lemma \ref{HTEstimator} to obtain
\begin{equation}\label{VectorXY}
    \Bigg\Arrowvert  \sum_{ k\in S_{r,v}}\frac{\bx_{k,\alpha}y_k}{N_v}-\sum_{k \in U_v}\frac{p(\bx_k)\pi_k\bx_{k,\alpha}y_k}{N_v}\Bigg\Arrowvert_2=\mathcal{O}_\P\left(\frac{1}{\sqrt{N_v}} \right).
\end{equation}
 Setting $w_k:=r_k$, $\boldsymbol{T}_k:=\bx_{k,\alpha}y_k$, as $\{r_k\}_{k\in U_v}$ is i.i.d. random variables with $\E_q[r_k^2]=p_k\leq 1$ almost surely, it remains to verify
\begin{equation*}
    \limsup_{v \rightarrow \infty}\sum_{k \in U_v}\frac{\E_m\left[\Arrowvert\textbf{x}_{k,\alpha}y_k\Arrowvert_2^2\right]}{N_v}<\infty
\end{equation*}
almost surely. Recall from \ref{S2} that the covariates have bounded support, from which we get
\begin{align*}
    \sum_{k \in U_v}\frac{\E_m\left[\Arrowvert\textbf{x}_{k,\alpha}y_k\Arrowvert_2^2\right]}{N_v}&=\sum_{k \in U_v}\frac{\Arrowvert\textbf{x}_{k,\alpha}\Arrowvert_2^2\E_m\left[y_k^2\right]}{N_v}=\sum_{k \in U_v}\frac{\Arrowvert\textbf{x}_{k,\alpha}\Arrowvert_2^2\E_m\left[\left(\textbf{x}_{k,\alpha}^{\top}\pmb{\beta}_{\alpha}+\epsilon_k \right)^2\right]}{N_v}\\
    &\leq C_0^2\sum_{k \in U_v}\frac{2\sigma^2+2\left(\textbf{x}_{k,\alpha}^{\top}\pmb{\beta}_{\alpha}\right)^2}{N_v}\leq 2C_0^2\sigma^2+2C_0^4\Arrowvert\pmb{\beta}_{\alpha}\Arrowvert_2^2 <\infty.
\end{align*}
It follows that
\begin{equation*}
    \limsup_{v \rightarrow \infty}\sum_{k \in U_v}\frac{\Arrowvert\textbf{x}_{k,\alpha}y_k\Arrowvert_2^2}{N_v}<\infty,
\end{equation*}
almost surely, from which \eqref{VectorXY} follows. On the other hand, Lemma \ref{BoundnessA-1} shows
\begin{equation*}
    \Bigg\Arrowvert \left(\frac{\boldsymbol{A}_{r,\alpha}}{N_v} \right)^{-1}\Bigg\Arrowvert_{op}=\mathcal{O}_\P(1).
\end{equation*}
As a result, we obtain $A_{1,v}=\mathcal{O}_\P(n_v^{-1/2}).$
\vspace{2mm}
~\\
\textbf{Treatment of $A_{2,v}$.}
The term $A_{2,v}$ can be decomposed as
\begin{equation*}
A_{2,v}\leq \Bigg\Arrowvert \left(\frac{\boldsymbol{A}_{r,\alpha}}{N_v} \right)^{-1}- \left(\sum_{k \in U_v} \frac{p(\bx_k)\pi_k\bx_{k,\alpha}\bx_{k,\alpha}^{\top}}{N_v} \right)^{-1}\Bigg\Arrowvert_{op}\Bigg\Arrowvert \sum_{k \in U_v}\frac{p(\bx_k)\pi_k\bx_{k,\alpha}y_k}{N_v}\Bigg\Arrowvert_2.
\end{equation*}
On the one hand, using Lemma \ref{HTEstimator} with $w_k:=r_k$ and  $\boldsymbol{T}_k:=\textbf{x}_{k,\alpha}\textbf{x}_{k,\alpha}^{\top}$. Since
\begin{equation}\label{FiniteXX}
     \limsup_{v \rightarrow \infty}\sum_{k \in U_v}\frac{\E_m\left[\Arrowvert \textbf{x}_{k,\alpha}\textbf{x}_{k,\alpha}^{\top}\Arrowvert_F^2\right]}{N_v}= \limsup_{v \rightarrow \infty}\sum_{k \in U_v}\frac{\Arrowvert\textbf{x}_{k,\alpha}\Arrowvert_2^4}{N_v}\leq C_0^4
\end{equation}
almost surely, it follows that
\begin{equation}\label{HTMatrix}
    \Bigg\Arrowvert \frac{\boldsymbol{A}_{r,\alpha}}{N_v} - \sum_{k \in U_v} \frac{p(\bx_k)\pi_k\bx_{k,\alpha}\bx_{k,\alpha}^{\top}}{N_v} \Bigg\Arrowvert_{op}=\mathcal{O}_\P\left( \frac{1}{\sqrt{n_v}}\right).
\end{equation}
On the other hand, recall that Lemma \ref{BoundnessAp-1} shows
\begin{equation*}
    \Bigg\Arrowvert \left(\sum_{k \in U_v}\frac{p(\bx_k)\pi_k\bx_{k,\alpha}\bx_{k,\alpha}^{\top}}{N_v} \right)^{-1}\Bigg\Arrowvert_{op}=\mathcal{O}_\P(1).
\end{equation*}
An application of Lemma \ref{MatrixInverse} gives
\begin{equation*}
\Bigg\Arrowvert \left(\frac{\boldsymbol{A}_{r,\alpha}}{N_v} \right)^{-1}- \left(\sum_{k \in U_v} \frac{p(\bx_k)\pi_k\bx_{k,\alpha}\bx_{k,\alpha}^{\top}}{N_v} \right)^{-1}\Bigg\Arrowvert_{op}=\mathcal{O}_\P\left( \frac{1}{\sqrt{n_v}}\right).
\end{equation*}
Meanwhile,
\begin{equation*}
\Bigg\Arrowvert \sum_{k \in U_v}\frac{p(\bx_k)\pi_k\bx_{k,\alpha}y_k}{N_v}\Bigg\Arrowvert_2\leq \sum_{k \in U_v}\frac{\Arrowvert\bx_{k,\alpha}\Arrowvert_2|y_k|}{N_v}\leq C_0\sum_{k \in U_v}\frac{\rvert y_k\rvert}{N_v},
\end{equation*}
which implies
\begin{equation*}
    \limsup_{v \rightarrow \infty}\Bigg\Arrowvert \sum_{k \in U_v}\frac{p(\bx_k)\pi_k\bx_{k,\alpha}y_k}{N_v}\Bigg\Arrowvert_2<\infty.
\end{equation*} 
It follows that $A_{2,v}=\mathcal{O}_\P(n_v^{-1/2})$ and, therefore,
\begin{equation*}
\big\Arrowvert\widehat{\pmb{\beta}}_{\alpha,v}-\widetilde{\pmb{\beta}}_{\alpha,v}\big\Arrowvert_2=\mathcal{O}_\P\left( \frac{1}{\sqrt{n_v}}\right).
\end{equation*}
~\\
\textbf{Convergence of (b).} Recall from Lemma \ref{FiniteAp-1}, we have
 \begin{equation*}
    \limsup_{v \rightarrow \infty}\Bigg\Arrowvert \left(\sum_{k \in U_v}\frac{p(\bx_k)\pi_k\bx_{k,\alpha}\bx_{k,\alpha}^{\top}}{N_v} \right)^{-2}\Bigg\Arrowvert_{op}<\infty
    \end{equation*}
    almost surely.
An application of Chebyshev's inequality shows that for any $\epsilon>0$, we have
\begin{align}\label{OracleBeta}
&\mathbb{P}_{mpq}\left( \big\Arrowvert\widetilde{\pmb{\beta}}_{\alpha,v} -\pmb{\beta}_{\alpha}\big\Arrowvert_2^2>\epsilon\right)\\
\quad&
\quad\leq\frac{1}{\epsilon^2}\mathbb{E}_{mpq}\left[\left(\widetilde{\pmb{\beta}}_{\alpha} -\pmb{\beta}_{\alpha}\right)^{\top}\left(\widetilde{\pmb{\beta}}_{\alpha} -\pmb{\beta}_{\alpha}\right)\right]\nonumber\\
&\quad=\frac{1}{\epsilon^2}\mathbb{E}_m\left[\sum_{k \in U_v}\frac{\epsilon_k\bx_{k,\alpha}^{\top}p(\bx_k)\pi_k}{N_v}\left(\sum_{k \in U_v}\frac{\pi_kp(\bx_k)\bx_{k,\alpha}\bx_{k,\alpha}^{\top}}{N_v} \right)^{-2}\sum_{k \in U_v}\frac{\epsilon_k\bx_{k,\alpha}p(\bx_k)\pi_k}{N_v}\right]\nonumber\\
&\quad\quad\overset{(*)}{=}\frac{\sigma^2}{\epsilon^2}\sum_{k \in U_v}\frac{\bx_{k,\alpha}^{\top}p(\bx_k)\pi_k}{N_v}\left(\sum_{k \in U_v}\frac{\pi_kp(\bx_k)\bx_{k,\alpha}\bx_{k,\alpha}^{\top}}{N_v} \right)^{-2}\frac{\bx_{k,\alpha}p(\bx_k)\pi_k}{N_v} \nonumber\\
&\quad\overset{(**)}{\leq} \frac{\sigma^2}{N_v\epsilon^2}\sum_{k \in U_v}\frac{\Arrowvert\bx_{k,\alpha}\Arrowvert_2^2}{N_v}\Bigg\Arrowvert\left(\sum_{k \in U_v}\frac{\pi_kp(\bx_k)\bx_{k,\alpha}\bx_{k,\alpha}^{\top}}{N_v} \right)^{-2}\Bigg\Arrowvert_{op}\nonumber\\
&\quad\leq \frac{\sigma^2C_0^2}{N_v\epsilon^2}\Bigg\Arrowvert\left(\sum_{k \in U_v}\frac{\pi_kp(\bx_k)\bx_{k,\alpha}\bx_{k,\alpha}^{\top}}{N_v} \right)^{-2}\Bigg\Arrowvert_{op}.
\end{align}
Here, (*) is due to the fact that $\mathbb{E}_m[\epsilon_k\epsilon_l]=0$ for $k \neq l \in U_v$ and (**) from the inequality $\bz^{\top}\boldsymbol{A}\bz\leq \Arrowvert\bz\Arrowvert_2^2\Arrowvert\boldsymbol{A}\Arrowvert_{op}$.
The last line of \eqref{OracleBeta} converges to 0 almost surely.  Putting all things together, we obtain
\begin{equation*}
\big\Arrowvert\widehat{\pmb{\beta}}_{\alpha,v} -\pmb{\beta}_{\alpha}\big\Arrowvert_2=\mathcal{O}_\P\left( \frac{1}{\sqrt{n_v}}\right).
\end{equation*}
\end{proof}
\begin{lemma}\label{SMatrix}
Assume \ref{S1}-\ref{S2} and \ref{D1}-\ref{D3}. If \eqref{wrong} holds, then there exists a constant $K_0$ such that
\begin{equation*}
\lim_{v \rightarrow \infty}\mathbb{P}_{mpq}\left(\lambda_{min}(N_v^{-1}\textbf{S}_v)\geq K_0\right)=1
\end{equation*}
almost surely.
\end{lemma}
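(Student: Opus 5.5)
The matrix $\boldsymbol{S}_v=\bX_{r,\alpha^c}^\top(\boldsymbol{I}_{n_r}-\bX_{r,\alpha}\boldsymbol{A}_{r,\alpha}^{-1}\bX_{r,\alpha}^\top)\bX_{r,\alpha^c}$, with $\alpha=\alpha_{opt,v}$, is a Schur complement built from respondent sums. The plan is to show that $N_v^{-1}\boldsymbol{S}_v$ is close in operator norm to the population Schur complement $\pmb{S}_{U_v}$ of \eqref{eq:S_v}, and then transfer the eigenvalue lower bound from \eqref{wrong}. Since $\cA$ is finite, I would argue for each fixed wrong model $\alpha\in\cW$ separately and conclude with a union bound. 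This covers the data-dependent $\alpha_{opt,v}$, since on the relevant event it is one of finitely many wrong models.

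Concretely, I would write $N_v^{-1}\boldsymbol{S}_v=\widehat{\boldsymbol{C}}-\widehat{\boldsymbol{B}}^\top\widehat{\boldsymbol{A}}^{-1}\widehat{\boldsymbol{B}}$. Here $\widehat{\boldsymbol{A}}=N_v^{-1}\sum_{k\in S_r}\bx_{k,\alpha}\bx_{k,\alpha}^\top$, $\widehat{\boldsymbol{B}}=N_v^{-1}\sum_{k\in S_r}\bx_{k,\alpha}\bx_{k,\alpha^c}^\top$, and $\widehat{\boldsymbol{C}}$ is the analogous sum for $\alpha^c$. Their targets are the population sums weighted by $p(\bx_k)\pi_k$, namely $\boldsymbol{A}_{U},\boldsymbol{B}_{U},\boldsymbol{C}_{U}$. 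Applying Lemma \ref{HTEstimator} with $w_k=r_k$ and $\boldsymbol{T}_k$ equal to the corresponding blocks of $\bx_k\bx_k^\top$ gives deviations of order $\mathcal{O}_\P(n_v^{-1/2})$, exactly as in \eqref{HTMatrix}. The second moment condition holds by \ref{S2}, as in \eqref{FiniteXX}. Next, $\rVert\boldsymbol{A}_U^{-1}\rVert_{op}=\mathcal{O}_\P(1)$ by Lemma \ref{BoundnessAp-1}, and $\rVert\widehat{\boldsymbol{A}}^{-1}\rVert_{op}=\mathcal{O}_\P(1)$ by Lemma \ref{BoundnessA-1}, working on the high-probability invertibility events of Remark \ref{req1}. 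Lemma \ref{MatrixInverse} then yields $\rVert\widehat{\boldsymbol{A}}^{-1}-\boldsymbol{A}_U^{-1}\rVert_{op}=\mathcal{O}_\P(n_v^{-1/2})$. Expanding the product telescopically and using the boundedness of $\boldsymbol{B}_U$ and $\boldsymbol{C}_U$ (which again follows from \ref{S2}) gives $\rVert N_v^{-1}\boldsymbol{S}_v-\pmb{S}_{U_v}\rVert_{op}=o_\P(1)$.

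For the conclusion, \eqref{wrong} gives, almost surely, some $\kappa>0$ with $\lambda_{min}(\pmb{S}_{U_v})\ge 2\kappa$ for all large $v$. Taking $K_0$ to be $\kappa$ (or the minimum over the finitely many wrong models), Weyl's inequality gives $\lambda_{min}(N_v^{-1}\boldsymbol{S}_v)\ge 2\kappa-\rVert N_v^{-1}\boldsymbol{S}_v-\pmb{S}_{U_v}\rVert_{op}$. The resulting probability bound holds under $\P_{pq}$; since $\boldsymbol{S}_v$ does not involve $y$, it lifts to $\P_{mpq}$ by dominated convergence.

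The main obstacle is that $\kappa$ is defined through a liminf and may depend on the realization of $(\bx_k)$, whereas the lemma asks for a constant $K_0$. I would handle this either by noting that in the proof of Theorem \ref{Theo1} an almost surely fixed, sample-path dependent constant suffices, or by using the law of large numbers argument of the Remark that follows \eqref{wrong} to make the limit deterministic. A secondary delicacy is that the conditioning events of Remark \ref{req1} must be carried through the inversion step.
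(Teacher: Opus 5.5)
Your proposal is correct and follows essentially the same route as the paper: bound $\rVert N_v^{-1}\textbf{S}_v-\pmb{S}_{U_v}\rVert_{op}$ via Lemma~\ref{HTEstimator}, Lemmas~\ref{BoundnessAp-1} and~\ref{BoundnessA-1}, and Lemma~\ref{MatrixInverse}, then apply Weyl's inequality and dominated convergence. Your treatment is somewhat more careful than the paper's in two places. First, you choose $K_0$ from the $\liminf$ in \eqref{wrong}, possibly sample-path dependent, whereas the paper writes $K_0=C_0-\epsilon$. Second, you take a union bound over the finitely many wrong models to handle the data-dependent $\alpha_{opt,v}$, whereas the paper treats $\alpha$ as fixed.
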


\begin{proof}
\textbf{Step 1.} Prove that
\begin{equation*}
\Arrowvert N_v^{-1}\textbf{S}_v-\textbf{S}_{U_v}\Arrowvert_{op}=\mathcal{O}_\P(n_v^{-1/2}).
\end{equation*}
Write
\begin{align*}
&\Arrowvert N_v^{-1}\textbf{S}_v-\textbf{S}_{U_v}\Arrowvert_{op}\\&\quad=\Bigg\Arrowvert \frac{1}{N_v}\textbf{A}_{r,\alpha^c}\\
&\quad\quad-\sum_{k \in S_{r,v}}\frac{\bx_{k,\alpha^c}\bx_{k,\alpha}^{\top}}{N_v}\left(\frac{\textbf{A}_{r,\alpha}}{N_v} \right)^{-1} \sum_{k \in S_{r,v}} \frac{\bx_{k,\alpha}\bx_{k,\alpha^c}^{\top}}{N_v}-\Bigg(\sum_{k \in U_v}\frac{\bx_{k,\alpha^c}\bx_{k,\alpha^c}^{\top}p(\bx_k)\pi_k}{N_v}\\&\quad\quad\quad-\sum_{k \in U_v}\frac{\bx_{k,{\alpha}}\bx_{k,\alpha^c}^{\top}p(\bx_k)\pi_k}{N_v}\left(\sum_{k \in U_v}\frac{\bx_{k,\alpha}\bx_{k,\alpha}^{\top}p(\bx_k)\pi_k}{N_v} \right)^{-1}\sum_{k \in U_v} \frac{\bx_{k,{\alpha}}\bx_{k,\alpha^c}^{\top}p(\bx_k)\pi_k}{N_v} \Bigg)\Bigg\Arrowvert_{op}\\
&\quad\leq \Bigg\Arrowvert\frac{1}{N_v}\textbf{A}_{r,\alpha^c}- \sum_{k \in U_v}\frac{\bx_{k,\alpha^c}\bx_{k,\alpha^c}^{\top}p(\bx_k)\pi_k}{N_v}\Bigg\Arrowvert_{op}\\
&\quad\quad +\Bigg\Arrowvert\sum_{k \in S_{r,v}}\frac{\bx_{k,\alpha^c}\bx_{k,\alpha}^{\top}}{N_v}\left(\frac{\textbf{A}_{r,\alpha}}{N_v} \right)^{-1} \sum_{k \in S_{r,v}} \frac{\bx_{k,\alpha}\bx_{k,\alpha^c}^{\top}}{N_v}\\&\quad\quad\quad-\sum_{k \in U_v}\frac{\bx_{k,\alpha^c}\bx_{k,\alpha}^{\top}p(\bx_k)\pi_k}{N_v}\left(\sum_{k \in U_v}\frac{\bx_{k,\alpha}\bx_{k,\alpha}^{\top}p(\bx_k)\pi_k}{N_v} \right)^{-1}\sum_{k \in U_v} \frac{\bx_{k,{\alpha}}\bx_{k,\alpha^c}^{\top}p(\bx_k)\pi_k}{N_v} \Bigg\Arrowvert_{op}\\
&\quad:=A_v+B_v.
\end{align*}
 Recall that \eqref{HTMatrix} shows $A_v=\mathcal{O}_\P(n_v^{-1/2})$. For $B_v$, we decompose
 
\begin{align*}
B_v&\leq \Bigg\Arrowvert\sum_{ k\in S_{r,v}}\frac{\bx_{k,\alpha^c}\bx_{k,\alpha}^{\top}}{N_v}\Bigg( \left(\frac{\textbf{A}_{r,\alpha}}{N_v}\right)^{-1}\sum_{k \in S_{r,v}}\frac{\bx_{k,\alpha}\bx_{k,\alpha^c}^{\top}}{N_v}\\
&\quad\quad-\left(\sum_{k \in U_v}\frac{p(\bx_k)\pi_k\bx_{k,\alpha}\bx_{k,\alpha}^{\top}}{N_v} \right)^{-1}\sum_{k \in U_v} \frac{p(\bx_k)\pi_k\bx_{k,{\alpha}}\bx_{k,\alpha^c}^{\top}}{N_v}\Bigg)\Bigg\Arrowvert_{op}\\
&\quad+\Bigg\Arrowvert\left( \sum_{k\in S_{r,v}}\frac{\bx_{k,\alpha^c}\bx_{k,\alpha}^{\top}}{N_v}-\sum_{k\in U_v}\frac{p(\bx_k)\pi_k\bx_{k,\alpha^c}\bx_{k,\alpha}^{\top}}{N_v}\right)\\
&\quad\quad\times\left(\sum_{k \in U_v}\frac{p(\bx_k)\pi_k\bx_{k,\alpha}\bx_{k,\alpha}^{\top}}{N_v} \right)^{-1}\sum_{k \in U_v} \frac{p(\bx_k)\pi_k\bx_{k,{\alpha}}\bx_{k,\alpha^c}^{\top}}{N_v}\Bigg\Arrowvert_{op}\\
&:=B_{1,v}+B_{2,v}.
\end{align*}
We need to prove $B_{1,v}$ and $B_{2,v}$ converges to 0 with $\mathcal{O}_\P(n_v^{-1/2})$.
\vspace{2mm}
~\\
\textbf{Convergence of $B_{1,v}$.} We have

\begin{align*}
B_{1,v}&\leq \Bigg\Arrowvert\sum_{k \in S_{r,v}}\frac{\bx_{k,\alpha^c}\bx_{k,\alpha}^{\top}}{N_v}\Bigg\Arrowvert_{op}\\
&\quad\times\Bigg\Arrowvert\left(\frac{\textbf{A}_{r,\alpha}}{N_v}\right)^{-1}\sum_{k \in S_{r,v}}\frac{\bx_{k,\alpha}\bx_{k,\alpha^c}^{\top}}{N_v}\\
&\quad\quad-\left(\sum_{k \in U_v}\frac{p(\bx_k)\pi_k\bx_{k,\alpha}\bx_{k,\alpha}^{\top}}{N_v} \right)^{-1}\sum_{k \in U_v} \frac{p(\bx_k)\pi_k\bx_{k,{\alpha}}\bx_{k,\alpha^c}^{\top}}{N_v} \Bigg\Arrowvert_2.
\end{align*}
Using \ref{S2}, we get
\begin{align*}
    \Bigg\Arrowvert\sum_{k \in S_{r,v}}\frac{\bx_{k,\alpha^c}\bx_{k,\alpha}^{\top}}{N_v}\Bigg\Arrowvert_{op}\leq \sum_{k \in U_v}\frac{\Arrowvert\bx_{k,\alpha}\Arrowvert_2\Arrowvert\bx_{k,\alpha^c}\Arrowvert_2}{N_v}\leq C_0^2,
\end{align*}
almost surely.
Next, using the \hyperlink{SMI}{Schwarz matrix inequality}, we decompose
\begin{align}\label{DecompBv}
&\Bigg\Arrowvert\left(\frac{\textbf{A}_{r,\alpha}}{N_v}\right)^{-1}\sum_{k \in S_{r,v}}\frac{\bx_{k,\alpha}\bx_{k,\alpha^c}^{\top}}{N_v}-\left(\sum_{k \in U_v}\frac{p(\bx_k)\pi_k\bx_{k,\alpha}\bx_{k,\alpha}^{\top}}{N_v} \right)^{-1}\sum_{k \in U_v} \frac{p(\bx_k)\pi_k\bx_{k,{\alpha}}\bx_{k,\alpha^c}^{\top}}{N_v} \Bigg\Arrowvert_{op}\nonumber\\
&\quad\leq \Bigg\Arrowvert\left(\frac{\textbf{A}_{r,\alpha}}{N_v}\right)^{-1}\left(\sum_{k \in S_{r,v}}\frac{\bx_{k,\alpha}\bx_{k,\alpha^c}^{\top}}{N_v}-\sum_{k \in U_v}\frac{p(\bx_k)\pi_k\bx_{k,\alpha}\bx_{k,\alpha^c}^{\top}}{N_v}\right)\Bigg\Arrowvert_{op}\nonumber\\
&\quad\quad+\Bigg\Arrowvert\left(\left(\frac{\textbf{A}_{r,\alpha}}{N_v}\right)^{-1}-\left(\sum_{k \in U_v}\frac{p(\bx_k)\pi_k\bx_{k,\alpha}\bx_{k,\alpha}^{\top}}{N_v} \right)^{-1}\right)\sum_{k \in U_v}\frac{p(\bx_k)\pi_k\bx_{k,\alpha}\bx_{k,\alpha^c}^{\top}}{N_v}\Bigg\Arrowvert_{op}\nonumber\\
&\quad\leq \Bigg\Arrowvert\left(\frac{\textbf{A}_{r,\alpha}}{N_v}\right)^{-1}\Bigg\Arrowvert_{op}\Bigg\Arrowvert\sum_{k \in S_{r,v}}\frac{\bx_{k,\alpha}\bx_{k,\alpha^c}^{\top}}{N_v}-\sum_{k \in U_v}\frac{p(\bx_k)\pi_k\bx_{k,\alpha}\bx_{k,\alpha^c}^{\top}}{N_v}\Bigg\Arrowvert_{op}\nonumber\\
&\quad\quad+\Bigg\Arrowvert\left(\frac{\textbf{A}_{r,\alpha}}{N_v}\right)^{-1}-\left(\sum_{k \in U_v}\frac{p(\bx_k)\pi_k\bx_{k,\alpha}\bx_{k,\alpha}^{\top}}{N_v} \right)^{-1}\Bigg\Arrowvert_{op}\Bigg\Arrowvert\sum_{k \in U_v}\frac{p(\bx_k)\pi_k\bx_{k,\alpha}\bx_{k,\alpha^c}^{\top}}{N_v}\Bigg\Arrowvert_{op}.
\end{align}
Next, we evaluate every component of \eqref{DecompBv}. Lemma \ref{BoundnessAp-1} and Lemma \ref{BoundnessA-1} show that
\begin{equation*}
    \Bigg\Arrowvert \left(\sum_{k \in U_v}\frac{p(\bx_k)\pi_k\bx_{k,\alpha}\bx_{k,\alpha}^{\top}}{N_v} \right)^{-1}\Bigg\Arrowvert_{op}=\mathcal{O}_\P(1),\qquad\qquad\Bigg\Arrowvert\left(\frac{\textbf{A}_{r,\alpha}}{N_v}\right)^{-1}\Bigg\Arrowvert_{op}=\mathcal{O}_\P(1),
\end{equation*}
respectively.
We proceed to apply Lemma \ref{HTEstimator} with $w_k:=r_k$ and $\boldsymbol{T}_k=\textbf{x}_{k,\alpha}\textbf{x}_{k,\alpha^c}^{\top}$. Since
\begin{align*}
     \limsup_{v \rightarrow \infty}\sum_{k \in U_v}\frac{\E_m\left[\Arrowvert\textbf{x}_{k,\alpha}\textbf{x}^{\top}_{k,\alpha^c}\Arrowvert_F^2\right]}{N_v}= \limsup_{v \rightarrow \infty}\sum_{k \in U_v}\frac{\Arrowvert\textbf{x}_{k,\alpha}\Arrowvert_2^2\Arrowvert\textbf{x}_{k,\alpha^c}\Arrowvert_2^2}{N_v}\leq C_0^4
\end{align*}
almost surely, it follows that
\begin{equation*}
    \Bigg\Arrowvert\sum_{k \in S_{r,v}}\frac{\bx_{k,\alpha}\bx_{k,\alpha^c}^{\top}}{N_v}-\sum_{k \in U_v}\frac{p(\bx_k)\pi_k\bx_{k,\alpha}\bx_{k,\alpha^c}^{\top}}{N_v}\Bigg\Arrowvert_{op}=\mathcal{O}_\P\left(\frac{1}{\sqrt{n_v}}\right).
\end{equation*}
Recall again from \eqref{HTMatrix} that
\begin{equation*}
    \Bigg\Arrowvert\frac{\textbf{A}_{r,\alpha}}{N_v}-\sum_{k \in U_v}\frac{p(\bx_k)\pi_k\bx_{k,\alpha}\bx_{k,\alpha}^{\top}}{N_v} \Bigg\Arrowvert_{op}=\mathcal{O}_\P\left(\frac{1}{\sqrt{n_v}}\right).
\end{equation*}
so that, using Lemma \ref{MatrixInverse} gives
\begin{equation*}
    \Bigg\Arrowvert\left(\frac{\textbf{A}_{r,\alpha}}{N_v}\right)^{-1}-\left(\sum_{k \in U_v}\frac{p(\bx_k)\pi_k\bx_{k,\alpha}\bx_{k,\alpha}^{\top}}{N_v} \right)^{-1}\Bigg\Arrowvert_{op}=\mathcal{O}_\P\left(\frac{1}{\sqrt{n_v}}\right).
\end{equation*}
Finally, by noting
\begin{align*}
    \Bigg\Arrowvert\sum_{k \in U_v}\frac{p(\bx_k)\pi_k\bx_{k,\alpha}\bx_{k,\alpha^c}^{\top}}{N_v}\Bigg\Arrowvert_{op}\leq \sum_{k \in U_v}\frac{p(\bx_k)\pi_k\Arrowvert\bx_{k,\alpha}\Arrowvert_2\Arrowvert\bx_{k,\alpha^c}\Arrowvert }{N_v} \leq C_0^2
\end{align*}
almost surely, we conclude that \eqref{DecompBv} is $\mathcal{O}_\P(n_v^{-1/2}).$
\vspace{2mm}
~\\
\textbf{Treatment of $B_{2,v}$.}
Using the \hyperlink{SMI}{Schwarz matrix inequality}, we have
\begin{align*}
B_{2,v}&=\Bigg\Arrowvert\left( \sum_{k\in S_{r,v}}\frac{\bx_{k,\alpha^c}\bx_{k,\alpha}^{\top}}{N_v}-\sum_{k\in U_v}\frac{p(\bx_k)\pi_k\bx_{k,\alpha^c}\bx_{k,\alpha}^{\top}}{N_v}\right)\\
&\quad\times\left(\sum_{k \in U_v}\frac{p(\bx_k)\pi_k\bx_{k,\alpha}\bx_{k,\alpha}^{\top}}{N_v} \right)^{-1}\sum_{k \in U_v} \frac{p(\bx_k)\pi_k\bx_{k,{\alpha}}\bx_{k,\alpha^c}^{\top}}{N_v}\Bigg\Arrowvert_{op}\\
&\leq \Bigg\Arrowvert \sum_{k\in S_{r,v}}\frac{\bx_{k,\alpha^c}\bx_{k,\alpha}^{\top}}{N_v}-\sum_{k\in U_v}\frac{p(\bx_k)\pi_k\bx_{k,\alpha^c}\bx_{k,\alpha}^{\top}}{N_v}\Bigg\Arrowvert_{op}\\
&\quad\times\Bigg\Arrowvert\left(\sum_{k \in U_v}\frac{p(\bx_k)\pi_k\bx_{k,\alpha}\bx_{k,\alpha}^{\top}}{N_v} \right)^{-1}\Bigg\Arrowvert_{op}\Bigg\Arrowvert\sum_{k \in U_v} \frac{p(\bx_k)\pi_k\bx_{k,{\alpha}}\bx_{k,\alpha^c}^{\top}}{N_v}\Bigg\Arrowvert_{op}=\mathcal{O}_\P\left(\frac{1}{\sqrt{n_v}}\right).
\end{align*}
Putting all things together, we obtain 
\begin{equation*}
\Arrowvert N_v^{-1}\textbf{S}_v-\textbf{S}_{U_v}\Arrowvert_{op}=\mathcal{O}_\P\left(\frac{1}{\sqrt{n_v}}\right).
\end{equation*}
\vspace{4mm}
~\\
\textbf{Step 2.} On the one hand, Step 1 gives for any $\epsilon>0$,
\begin{equation*}
\lim_{v \rightarrow \infty}\mathbb{P}_{mpq}\left(\Arrowvert N_v^{-1}\textbf{S}_v-\textbf{S}_{U_v} \Arrowvert_{op}\leq\epsilon\right)=1
\end{equation*}
almost surely. 
On the other hand, by Weyl's inequality,
\begin{equation*}
\left|\lambda_{min}(N_v^{-1}\textbf{S}_{v})-\lambda_{min}(\textbf{S}_{U_v})\right|\leq \Arrowvert N_v^{-1}\textbf{S}_v-\textbf{S}_{U_v} \Arrowvert_{op},
\end{equation*}
from which it follows that
\begin{equation*}
\lim_{v \rightarrow \infty}\mathbb{P}_{pq}\left(\left|\lambda_{min}(N_v^{-1}\textbf{S}_{v})-\lambda_{min}(\textbf{S}_{U_v})\right|\leq\epsilon\right)=1.
\end{equation*}
As a result, by setting $K_0=C_0-\epsilon$, we have
\begin{align*}
&\lim_{v \rightarrow \infty}\mathbb{P}_{pq}\left(\left|\lambda_{min}(N_v^{-1}\textbf{S}_{v})-\lambda_{min}(\textbf{S}_{U_v})\right|\leq\epsilon\right)\\&\quad=\lim_{v \rightarrow \infty}\mathbb{P}_{pq}\left(-\epsilon\leq\lambda_{min}(N_v^{-1}\textbf{S}_{v})-\lambda_{min}(\textbf{S}_{U_v})\leq\epsilon\right)\\
&\quad\leq \lim_{v \rightarrow \infty}\mathbb{P}_{pq}\left(\lambda_{min}(\textbf{S}_{U_v})-\epsilon\leq\lambda_{min}(N_v^{-1}\textbf{S}_{v})\right)\\
&\quad\leq \lim_{v \rightarrow \infty}\mathbb{P}_{pq}\left(C_0-\epsilon\leq\lambda_{min}(N_v^{-1}\textbf{S}_{v})\right)\\
&\quad= \lim_{v \rightarrow \infty}\mathbb{P}_{pq}\left(K_0\leq\lambda_{min}(N_v^{-1}\textbf{S}_{v})\right)=1
\end{align*}
almost surely. Furthermore, using the Lebesgue convergence theorem, we obtain
\begin{equation*}
\lim_{v \rightarrow \infty}\mathbb{P}_{mpq}\left(\Arrowvert N_v^{-1}\textbf{S}_v-\textbf{S}_{U_v} \Arrowvert_{op}\leq\epsilon\right)=1
\end{equation*}
almost surely. 

\end{proof}
\begin{lemma}\label{SigmaEst}
Let $\alpha \in \cC$ and $(\widehat{\sigma}_{\alpha,v}^2)_{v \in \mathbb{N}}$ be the sequence of the estimator defined in \eqref{sigmaest}. Assume \ref{S1}-\ref{S3} and \ref{D1}-\ref{D3}. We have
\begin{equation*}
|\widehat{\sigma}_{\alpha,v}^2-\sigma^2|=\mathcal{O}_\P\left( \frac{1}{\sqrt{n_v}}\right).
\end{equation*}
\end{lemma}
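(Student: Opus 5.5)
Since $\alpha\in\cC$, we have $y_k=\bx_{k,\alpha}^\top\bbeta_\alpha+\epsilon_k$ for every $k$. Writing $\boldsymbol{\epsilon}_r$ for the respondents' errors and $\boldsymbol{H}_{r,\alpha}:=\bX_{r,\alpha}\boldsymbol{A}_{r,\alpha}^{-1}\bX_{r,\alpha}^\top$ for the hat matrix, the residual vector is $(\boldsymbol{I}_{n_r}-\boldsymbol{H}_{r,\alpha})\boldsymbol{\epsilon}_r$. Hence
\begin{equation*}
(n_r-p_\alpha)\,\widehat{\sigma}^2_{\alpha,v}=\boldsymbol{\epsilon}_r^\top\boldsymbol{\epsilon}_r-\boldsymbol{\epsilon}_r^\top\bX_{r,\alpha}\boldsymbol{A}_{r,\alpha}^{-1}\bX_{r,\alpha}^\top\boldsymbol{\epsilon}_r .
\end{equation*}
The plan is to treat the two terms separately. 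The first gives the leading behaviour and the second is a lower-order correction.

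\emph{Leading term.} We write $\widehat{\sigma}^2_{\alpha,v}-\sigma^2=\frac{n_r}{n_r-p_\alpha}\bigl(n_r^{-1}\sum_{k\in S_r}(\epsilon_k^2-\sigma^2)\bigr)+\frac{p_\alpha\sigma^2}{n_r-p_\alpha}-\frac{\boldsymbol{\epsilon}_r^\top\boldsymbol{H}_{r,\alpha}\boldsymbol{\epsilon}_r}{n_r-p_\alpha}$. We condition on $(\bx_k,r_k,I_k)_{k\in U_v}$, which is legitimate by MAR and non-informativeness. Conditionally on these, the $\epsilon_k$ for $k\in S_r$ are independent with mean zero and conditional variance $\sigma^2$. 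By \ref{S3}, $\V_m(\epsilon_k^2)\le M_0$, so $\E_m[(\sum_{k\in S_r}(\epsilon_k^2-\sigma^2))^2]\le M_0 n_r$. Chebyshev's inequality then gives $n_r^{-1}\sum_{k\in S_r}(\epsilon_k^2-\sigma^2)=\mathcal{O}_\P(n_r^{-1/2})$. The deterministic bias term $p_\alpha\sigma^2/(n_r-p_\alpha)$ is $\mathcal{O}_\P(n_r^{-1})$.

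\emph{Hat-matrix term.} Since $\boldsymbol{H}_{r,\alpha}$ is a projection of rank $p_\alpha$, we have $\E_m[\boldsymbol{\epsilon}_r^\top\boldsymbol{H}_{r,\alpha}\boldsymbol{\epsilon}_r]=\sigma^2\,\mathrm{Tr}(\boldsymbol{H}_{r,\alpha})=\sigma^2p_\alpha$, conditionally on the design and response. Markov's inequality therefore gives $\boldsymbol{\epsilon}_r^\top\boldsymbol{H}_{r,\alpha}\boldsymbol{\epsilon}_r=\mathcal{O}_\P(1)$, and after division by $n_r-p_\alpha$ this term is $\mathcal{O}_\P(n_r^{-1})$. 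Conditionally $\mathcal{O}_\P$ statements transfer to unconditional ones, using dominated convergence exactly as in the proof of Lemma~\ref{Lemma2}. We work on the high-probability event of Remark~\ref{req1}, on which $\boldsymbol{A}_{r,\alpha}$ is invertible.

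\emph{Converting rates.} The main point needing care is that $n_r\asymp_\P n_v$, so that rates in $n_r$ become rates in $n_v$. I would show $n_{r,v}/n_v$ is bounded away from zero in probability. The identity $n_r/N_v=\sum_{k\in U_v}I_kr_k/N_v$ is a Horvitz--Thompson-type mean. By Lemma~\ref{HTEstimator}, or directly from \ref{D3} together with the independence of the $r_k$, it is within $\mathcal{O}_\P(n_v^{-1/2})$ of $N_v^{-1}\sum_{k\in U_v}\pi_kp(\bx_k)$. That quantity is at least $\lambda\rho>0$ by \ref{D2} and positivity. Combined with \ref{D1}, this gives $n_r/n_v\ge c>0$ with probability tending to one. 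Consequently $n_r/(n_r-p_\alpha)\to1$ in probability, and all three pieces are $\mathcal{O}_\P(n_v^{-1/2})$, which yields the claim. I do not expect any step to be a serious obstacle. The only mildly delicate step is this passage between conditional and unconditional $\mathcal{O}_\P$ together with the lower bound on $n_r$.
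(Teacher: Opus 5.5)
Your proof is correct, but it takes a genuinely different route from the paper. In its proof of this lemma, the paper does not write the residual sum of squares as a quadratic form in $\boldsymbol{\epsilon}_r$. It proceeds in three steps:
\begin{enumerate}
\item It compares $\widehat{\sigma}^2_{\alpha,v}$ with the respondent average of the true squared errors. The cost of replacing $\widehat{\bbeta}_{\alpha,v}$ by $\bbeta_\alpha$ is controlled through the rate $\|\widehat{\bbeta}_{\alpha,v}-\bbeta_\alpha\|_2=\mathcal{O}_\P(n_v^{-1/2})$ of Lemma~\ref{BetaEst} together with Cauchy--Schwarz.
\item It moves from that respondent average to a population-level $p(\bx_k)\pi_k$-weighted average using the Horvitz--Thompson bound of Lemma~\ref{HTEstimator}.
\item It finishes with Chebyshev at the population level, using Lemma~\ref{SetofRespondents} to keep $N_v/n_{r,v}$ bounded.
\end{enumerate}

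Your identity $(n_r-p_\alpha)\widehat{\sigma}^2_{\alpha,v}=\boldsymbol{\epsilon}_r^\top(\boldsymbol{I}_{n_r}-\boldsymbol{H}_{r,\alpha})\boldsymbol{\epsilon}_r$ is exactly where $\alpha\in\cC$ enters. Combined with conditioning on $(\bx_k,r_k,I_k)_{k\in U_v}$, it makes both the rate for $\widehat{\bbeta}_{\alpha,v}$ and the population-level intermediaries unnecessary. It also gives more than the paper: $\widehat{\sigma}^2_{\alpha,v}$ is conditionally unbiased, and the estimation-of-$\bbeta$ and degrees-of-freedom corrections are $\mathcal{O}_\P(n_v^{-1})$. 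The paper's Cauchy--Schwarz bound only yields $\mathcal{O}_\P(n_v^{-1/2})$ for these terms.

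Three small remarks on your write-up:
\begin{itemize}
\item Your Chebyshev and Markov bounds do not depend on the conditioning variables. Unconditional statements therefore follow simply by taking expectations of the conditional probabilities; no dominated convergence is needed.
\item The term $p_\alpha\sigma^2/(n_r-p_\alpha)$ is not deterministic. It is, however, measurable with respect to the conditioning $\sigma$-field, which is all you use.
\item Restricting to the invertibility event is harmless, since that event is measurable with respect to the same $\sigma$-field and so does not alter the conditional law of the errors.
\end{itemize}

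What the paper's route buys is modularity: it recycles estimates it needs elsewhere anyway, namely the rate of $\widehat{\bbeta}_{\alpha,v}$ and the Horvitz--Thompson approximation. Yours is self-contained and sharper because it exploits the exact finite-sample algebra of least squares under a correct model.
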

\begin{proof}
Let 
\begin{equation*}
\widehat{S}_v^2=\frac{1}{n_{r,v}}\sum_{k \in S_{r,v}}\left(y_k-\bx_{k,\alpha}^{\top}\pmb{\beta}_{\alpha}\right)^2.
\end{equation*}
Then, $|\sigma_{\alpha,v}^2-\sigma^2|$ can be decomposed as
\begin{equation*}
\left|\widehat{\sigma}_{\alpha,v}^2-\sigma^2\right|\leq\frac{n_{r,v}}{n_{r,v}-p_{\alpha}}\left|\widehat{S}_v^2-\sigma^2\right|+\frac{p_{\alpha}}{n_{r,v}-p_{\alpha}}\sigma^2.
\end{equation*}
Since $\sigma^2p_{\alpha}/(n_{r,v}-p_{\alpha})=o_\P(n_v^{-1/2})$,
the remaining part is to show $|\widehat{S}_v^2-\sigma|=\mathcal{O}_\P(n_v^{-1/2}).$
~\\
Note that $|\widehat{S}_v^2-\sigma^2|$ can be further decomposed as
\begin{equation*}
|\widehat{S}_v^2-\sigma^2|\leq\underbrace{|\widehat{S}_v^2-\widehat{S}_{r}^2|}_{(a)}+\underbrace{|\widehat{S}_{r}^2-\widetilde{S}_v^2|}_{(b)}+\underbrace{|\widetilde{S}_v^2-\sigma^2|}_{(c)},
\end{equation*}
where 
\begin{equation*}
\widehat{S}_{r}^2=\sum_{k \in S_v}\frac{r_k\epsilon_k^2}{N_v}\left( \frac{n_{r,v}}{N_v}\right)^{-1}
\end{equation*}
and
\begin{equation*}
\widetilde{S}_{v}^2=\sum_{k \in S_v}\frac{p(\bx_k)\pi_k\epsilon_k^2}{N_v}\left( \sum_{k \in U_v}\frac{p(\bx_k)\pi_k}{N_v}\right)^{-1}.
\end{equation*}
We need to show the terms (a), (b), and (c) converge to 0 with rate $\mathcal{O}_\P(n_v^{-1/2})$.
\vspace{4mm}
~\\
\textbf{Convergence of (a).}
$|\widehat{S}_v^2-\widehat{S}_r^2|$ can be expressed as
\begin{align*}
\left|\widehat{S}_v^2-\widehat{S}_r^2\right|&=\left|\frac{N_v}{n_{r,v}}\sum_{k \in S_v}\frac{r_k\left(y_k-\bx_{k,\alpha}^{\top}\widehat{\pmb{\beta}}_{\alpha,v}\right)^2}{N_v}-\frac{N_v}{n_{r,v}}\sum_{k \in S_v}\frac{r_k\left(y_k-\bx_{k,\alpha}^{\top}\pmb{\beta}_{\alpha}\right)^2}{N_v} \right|\\
&=\left|\left(\frac{n_{r,v}}{N_v}\right)^{-1}\right| \underbrace{\left|\sum_{k \in S_v}\frac{r_k\left(y_k-\bx_{k,\alpha}^{\top}\widehat{\pmb{\beta}}_{\alpha,v}\right)^2}{N_v}-\sum_{k \in S_v}\frac{r_k\left(y_k-\bx_{k,\alpha}^{\top}\pmb{\beta}_{\alpha}\right)^2}{N_v}\right|}_{:=A_v}.
\end{align*}
Since Lemma \ref{SetofRespondents} shows
\begin{equation*}
\lim_{v \rightarrow \infty}\mathbb{P}_{mpq}\left(\frac{n_{r,v}}{n_v}\geq \xi \right)=1
\end{equation*}
we deduce $(n_{r,v}/N_v)^{-1}=\mathcal{O}_\P(1).$ Using
\begin{equation*}
\left(\bx_{k,\alpha}^{\top}\widehat{\pmb{\beta}}_{\alpha,v}\right)^2-\left(\bx_{k,\alpha}^{\top}\pmb{{\beta}}_{\alpha}\right)^2=\left(\bx_{k,\alpha}^{\top}\left(\widehat{\pmb{\beta}}_{\alpha,v}-\pmb{\beta}_{\alpha}\right)\right)^2+2\bx_{k,\alpha}^{\top}\pmb{\beta}_{\alpha}\bx_{k,\alpha}^{\top}\left(\widehat{\pmb{\beta}}_{\alpha,v}-\pmb{\beta}_{\alpha}\right),
\end{equation*}
 we decompose $A_v$ as
\begin{align*}
&\left|\sum_{k \in S_v}\frac{r_k\left(y_k-\bx_{k,\alpha}^{\top}\widehat{\pmb{\beta}}_{\alpha,v}\right)^2}{N_v}-\sum_{k \in S_v}\frac{r_k\left(y_k-\bx_{k,\alpha}^{\top}\pmb{\beta}_{\alpha}\right)^2}{N_v}\right|\\&\quad\leq \sum_{k\in U_v}\frac{\left|\left(y_k-\bx_{k,\alpha}^{\top}\widehat{\pmb{\beta}}_{\alpha,v})^2-(y_k-\bx_{k,\alpha}^{\top}\pmb{\beta}_{\alpha}\right)^2\right|}{N_v}\\
&\quad= \sum_{k\in U_v}\frac{\left|-2\bx_{k,\alpha}^{\top}\left(\widehat{\pmb{\beta}}_{\alpha,v}-\pmb{\beta}_{\alpha}\right)+\left(\bx_{k,\alpha}^{\top}\widehat{\pmb{\beta}}_{\alpha,v}\right)^2-\left(\bx_{k,\alpha}^{\top}\pmb{{\beta}}_{\alpha}\right)^2\right|}{N_v}\\
&\quad= \sum_{k\in U_v}\frac{\left|-2\bx_{k,\alpha}^{\top}\left(\widehat{\pmb{\beta}}_{\alpha,v}-\pmb{\beta}_{\alpha}\right)+\left(\bx_{k,\alpha}^{\top}\left(\widehat{\pmb{\beta}}_{\alpha,v}-\pmb{\beta}_{\alpha}\right)^2\right)+2\bx_{k,\alpha}^{\top}\pmb{\beta}_{\alpha}\bx_{k,\alpha}^{\top}\left(\widehat{\pmb{\beta}}_{\alpha,v}-\pmb{\beta}_{\alpha}\right)\right|}{N_v}\\
&\quad\leq \dfrac{2}{N_v}\sum_{k \in U_v}|y_k|\left|\bx_{k,\alpha}^{\top}\left(\widehat{\pmb{\beta}}_{\alpha,v}-\pmb{\beta}_{\alpha}\right)\right|\\
&\quad\quad+\dfrac{1}{N_v}\sum_{k \in U_v}\left(\bx_{k,\alpha}^{\top}\left(\widehat{\pmb{\beta}}_{\alpha,v}-\pmb{\beta}_{\alpha}\right)\right)^2+\dfrac{2}{N_v}\sum_{k \in U_v}\left|\bx_{k,\alpha}^{\top}\pmb{\beta}_{\alpha}\bx_{k,\alpha}^{\top}\left(\widehat{\pmb{\beta}}_{\alpha,v}-\pmb{\beta}_{\alpha}\right)\right|\\
&\quad:=A_{1,v}+A_{2,v}+A_{3,v}.
\end{align*}
 By the law of the large numbers, we have
\begin{align*}
    \sum_{k \in U_v}\frac{y_k^2}{N_v}\xrightarrow[v\rightarrow \infty]{a.s.}\mathbb{E}_{\bx m}[y_1^2]&=\mathbb{E}_{\bx m}\left[\left( \textbf{x}_{1,\alpha}^{\top}\pmb{\beta}_{\alpha}+\epsilon_1\right)^2\right]\\
    &\leq \mathbb{E}_{\bx m}\left[ 2\left(\textbf{x}_{1,\alpha}^{\top}\pmb{\beta}_{\alpha}\right)+\epsilon_1^2\right]=2\sigma^2+2C_0^2\Arrowvert\pmb{\beta}_{\alpha}\Arrowvert_2^2.
\end{align*}
For $A_{1,v}$, using Cauchy-Schwarz inequality, we have
\begin{align*}
A_{1,v}&\leq 2\sqrt{\sum_{k \in U_v}\frac{y_k^2}{N_v}}\sqrt{\sum_{k \in U_v}\frac{\left(\bx_{k,\alpha}^{\top}\left(\widehat{\pmb{\beta}}_{\alpha,v}-\pmb{\beta}_{\alpha}\right)\right)^2}{N_v}}\\&\leq 2\Arrowvert\widehat{\pmb{\beta}}_{\alpha,v}-\pmb{\beta}_{\alpha} \Arrowvert_2\sqrt{\sum_{k \in U_v}\frac{y_k^2}{N_v}}\sqrt{\sum_{k \in U_v}\frac{\Arrowvert\bx_{k,\alpha}\Arrowvert_2^2}{N_v}}\\
&\leq 2C_0\Arrowvert\widehat{\pmb{\beta}}_{\alpha,v}-\pmb{\beta}_{\alpha} \Arrowvert_2\sqrt{\sum_{k \in U_v}\frac{y_k^2}{N_v}}=\mathcal{O}_\P\left(\frac{1}{\sqrt{n_v}}\right),
\end{align*}
which also implies $A_{2,v}=\mathcal{O}_\P(n_v^{-1/2})$. For $A_{3,v}$, using Cauchy-Schwarz inequality, we have
\begin{align*}
A_{3,v}&\leq 2\sqrt{\sum_{k \in U_v}\frac{\left(\bx_{k,\alpha}^{\top}\pmb{\beta}_{\alpha}\right)^2}{N_v}}\sqrt{\sum_{k \in U_v}\frac{\left(\bx_{k,\alpha}^{\top}\left(\widehat{\pmb{\beta}}_{\alpha,v}-\pmb{\beta}_{\alpha}\right)\right)^2}{N_v}}\\
&\leq 2C_0^2\Arrowvert\widehat{\pmb{\beta}}_{\alpha,v}-\pmb{\beta}_{\alpha} \Arrowvert_2\Arrowvert\pmb{\beta}_{\alpha}\Arrowvert_2=\mathcal{O}_\P\left(\frac{1}{\sqrt{n_v}}\right).
\end{align*}
Finally, we conclude $|\widehat{S}_v-\widehat{S}_r|=\mathcal{O}_\P(n_v^{-1/2})$.
\vspace{4mm}
~\\
\textbf{Treatment of term (b).} Term (b) can be decomposed as
\begin{align*}
\left|\widehat{S}_r^2-\widetilde{S}_v^2\right|&\leq\left|\left(\frac{n_{r,v}}{N_v}\right)^{-1}\left(\sum_{k \in S_v}\frac{r_k\epsilon_k^2}{N_v}-\sum_{k \in U_v}\frac{p(\bx_k)\pi_k\epsilon_k^2}{N_v}\right)\right|\\
&\quad+\left|\left(\left(\frac{n_{r,v}}{N_v}\right)^{-1}-\left(\sum_{k \in U_v}\frac{p(\bx_k)\pi_k}{N_v} \right)^{-1} \right)\sum_{k \in U_v}\frac{p(\bx_k)\pi_k\epsilon_k^2}{N_v}\right|\\
&:=B_{1,v}+B_{2,v}.
\end{align*}
For $B_{1,v}$, we apply Lemma \ref{HTEstimator} with $\boldsymbol{T}_{k}:=\epsilon_k^2$ and $w_k:=r_k$. Since
\begin{equation*}
     \limsup_{v \rightarrow \infty}\sum_{k \in U_v}\frac{\E_m\left[\epsilon_k^4\right]}{N_v}\leq M_0
\end{equation*}
almost surely, it follows that
\begin{equation*}
\left|\sum_{k \in S_v}\frac{r_k\epsilon_k^2}{N_v}-\sum_{k \in U_v}\frac{p(\bx_k)\pi_k\epsilon_k^2}{N_v}\right|=\mathcal{O}_\P\left(\frac{1}{\sqrt{n_v}}\right).
\end{equation*}
Recall from Lemma \ref{SetofRespondents} that $\left(n_{r,v}/N_v\right)^{-1}=\mathcal{O}_\P(1)$,  so that $B_{1,v}=\mathcal{O}_\P(n_v^{-1/2})$. 
For $B_{2,v}$, we have
\begin{equation*}
\left|\left(\frac{n_{r,v}}{N_v}\right)^{-1}-\left(\sum_{k \in U_v}\frac{p(\bx_k)\pi_k}{N_v} \right)^{-1}\right|=\left|\left(\frac{n_{r,v}}{N_v}\right)^{-1}\right|\left|\left(\sum_{k \in U_v}\frac{p(\bx_k)\pi_k}{N_v} \right)^{-1}\right|\left|\frac{n_{r,v}}{N_v}-\sum_{k \in U_v}\frac{p(\bx_k)\pi_k}{N_v} \right|.
\end{equation*}
Applying Lemma \ref{HTEstimator}, set $\boldsymbol{T}_{k}:=1$ and $w_k:=r_k$ for $k \in U_v$, we have
\begin{equation*}
\left|\frac{n_{r,v}}{N_v}-\sum_{k \in U_v}\frac{p(\bx_k)\pi_k}{N_v} \right|=\mathcal{O}_\P\left(\frac{1}{\sqrt{n_v}}\right).
\end{equation*}
By \ref{D2} and positivity,
\begin{equation*}
\left(\sum_{k \in U_v}\frac{p(\bx_k)\pi_k}{N_v} \right)^{-1}\leq (\rho \lambda)^{-1}
\end{equation*}
so that $B_{2,v}=\mathcal{O}_\P(n_v^{-1/2})$ and $ |\widehat{S}_r^2-\widetilde{S}_v^2|=\mathcal{O}_\P(n_v^{-1/2})$.
\vspace{4mm}
~\\
\textbf{Treatment of term (c).} By Chebyshev's inequality
\begin{align}\label{epsfinal}
\mathbb{P}_{ mpq}\left(\left|\widetilde{S}_v^2-\sigma^2\right|>\epsilon \right)&\leq \frac{1}{\epsilon^2}\mathbb{E}_{ mpq}\left[\left(\widetilde{S}_v^2-\sigma^2\right)^2\right]
\nonumber\\&\leq \frac{1}{N_v}\sum_{k \in U_v}\frac{p(\bx_k)^2\pi_k^2\mathbb{E}_m[\epsilon_k^4]}{N_v}\left(\sum_{k \in U_v}\frac{p(\bx_k)\pi_k}{N_v} \right)^{-2}\nonumber\\
&\leq \frac{1}{N_v}\sum_{k \in U_v}\frac{M_0}{N_v}\left(\sum_{k \in U_v}\frac{p(\bx_k)\pi_k}{N_v} \right)^{-2}\nonumber\\
&\leq \frac{M_0}{N_v\rho^2\lambda^2}.
\end{align}
Combining the terms (a), (b), and (c), we conclude
$ |\widehat{S}_v^2-\sigma^2|=\mathcal{O}_\P(n_v^{-1/2})$ and thus $|\widehat{\sigma}_{\alpha,v}^2-\sigma^2|=\mathcal{O}_\P(n_v^{-1/2})$.
\end{proof}
\begin{lemma}\label{Cest}
Let $\alpha \in \cC$ and $(\widehat{\textbf{c}}_{\alpha,v})_{v\in \mathbb{N}}$ be a sequence of estimator given by \eqref{cwidehat}.  Assume \ref{S1}-\ref{S3} and \ref{D1}-\ref{D3}.  Then,
\begin{equation*}
\Arrowvert\widehat{\textbf{c}}_{\alpha,v}-\textbf{c}_{\alpha,v}\Arrowvert_2=\mathcal{O}_\P\left(\frac{1}{\sqrt{n_v}}\right).
\end{equation*}
\end{lemma}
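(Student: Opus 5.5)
The plan is to mimic the two-term decomposition used in the proof of Lemma \ref{BetaEst}. Both $\widehat{\textbf{c}}_{\alpha,v}$ and $\textbf{c}_{\alpha,v}$ have the form ``inverse of a matrix times a vector''. The pieces of $\widehat{\textbf{c}}_{\alpha,v}$ are design-unbiased (Horvitz--Thompson type) estimators of the corresponding pieces of $\textbf{c}_{\alpha,v}$, conditionally on $(\bx_k,r_k)_{k\in U_v}$. Write
\[
\widehat{\boldsymbol{T}}_v:=\sum_{k\in U_v}\frac{I_k r_k\bx_{k,\alpha}\bx_{k,\alpha}^\top}{N_v},\quad
\boldsymbol{T}_v:=\sum_{k\in U_v}\frac{r_k\pi_k\bx_{k,\alpha}\bx_{k,\alpha}^\top}{N_v},\quad
\widehat{\mathbf{t}}_v:=\sum_{k\in U_v}\frac{I_k(1-r_k)\bx_{k,\alpha}}{N_v\pi_k},\quad
\mathbf{t}_v:=\sum_{k\in U_v}\frac{(1-r_k)\bx_{k,\alpha}}{N_v}.
\]
Then $\mathbb{E}_p[\widehat{\boldsymbol{T}}_v]=\boldsymbol{T}_v$ and $\mathbb{E}_p[\widehat{\mathbf{t}}_v]=\mathbf{t}_v$. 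I would decompose
\[
\widehat{\textbf{c}}_{\alpha,v}-\textbf{c}_{\alpha,v}=\widehat{\boldsymbol{T}}_v^{-1}(\widehat{\mathbf{t}}_v-\mathbf{t}_v)+\big(\widehat{\boldsymbol{T}}_v^{-1}-\boldsymbol{T}_v^{-1}\big)\mathbf{t}_v ,
\]
and bound each term with the Schwarz matrix inequality, as for $A_{1,v}$ and $A_{2,v}$ in Lemma \ref{BetaEst}.

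\textbf{Step 1: design-based $\sqrt{n_v}$ rates.} For any $\sigma(\bx,r)$-measurable array $\boldsymbol{Z}_k$ with $\|\boldsymbol{Z}_k\|_F\le C$ (here $C$ follows from \ref{S2}, since $\|\bx_{k,\alpha}\|_2\le C_0$, together with $\pi_k\ge\lambda$ from \ref{D2}), I would compute
\[
\mathbb{E}_p\Big\|\sum_{k\in U_v}\frac{(I_k-\pi_k)\boldsymbol{Z}_k}{N_v}\Big\|_F^2=\frac{1}{N_v^2}\sum_{k,l}\Delta_{kl}\langle \boldsymbol{Z}_k,\boldsymbol{Z}_l\rangle
\le \frac{C^2}{N_v}+\max_{k\neq l}|\Delta_{kl}|\,C^2\le \frac{C^2}{N_v}+\frac{\bar\Delta C^2}{n_v},
\]
using \ref{D3}. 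Applying this with $\boldsymbol{Z}_k=r_k\bx_{k,\alpha}\bx_{k,\alpha}^\top$ and with $\boldsymbol{Z}_k=(1-r_k)\bx_{k,\alpha}/\pi_k$, Markov's inequality gives, conditionally and then unconditionally by dominated convergence,
\[
\|\widehat{\boldsymbol{T}}_v-\boldsymbol{T}_v\|_{op}=\mathcal{O}_\P(n_v^{-1/2}),\qquad
\|\widehat{\mathbf{t}}_v-\mathbf{t}_v\|_2=\mathcal{O}_\P(n_v^{-1/2}).
\]
Alternatively, I would invoke Lemma \ref{HTEstimator} directly if its form allows $w_k:=I_k/\pi_k$.

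\textbf{Step 2: bounded inverses.} I would show $\|\boldsymbol{T}_v^{-1}\|_{op}=\mathcal{O}_\P(1)$ by Lemma \ref{HighProb} with weights $w_k=\pi_k\ge\lambda$; this is also what Remark \ref{req1} conditions on. Since $\mathbf{t}_v$ is bounded by $C_0$ by \ref{S2}, the vector is harmless. Step 1 combined with Lemma \ref{MatrixInverse} (or Weyl's inequality, which gives $\lambda_{\min}(\widehat{\boldsymbol{T}}_v)\ge\lambda_{\min}(\boldsymbol{T}_v)/2$ with probability tending to one) then yields $\|\widehat{\boldsymbol{T}}_v^{-1}\|_{op}=\mathcal{O}_\P(1)$. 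It also yields $\|\widehat{\boldsymbol{T}}_v^{-1}-\boldsymbol{T}_v^{-1}\|_{op}=\mathcal{O}_\P(n_v^{-1/2})$, via the identity $\widehat{\boldsymbol{T}}_v^{-1}-\boldsymbol{T}_v^{-1}=\widehat{\boldsymbol{T}}_v^{-1}(\boldsymbol{T}_v-\widehat{\boldsymbol{T}}_v)\boldsymbol{T}_v^{-1}$.

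\textbf{Step 3: combine.} Plugging these rates into the two-term decomposition gives $\|\widehat{\textbf{c}}_{\alpha,v}-\textbf{c}_{\alpha,v}\|_2=\mathcal{O}_\P(1)\mathcal{O}_\P(n_v^{-1/2})+\mathcal{O}_\P(n_v^{-1/2})\,C_0=\mathcal{O}_\P(n_v^{-1/2})$.

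The main obstacle is Step 2, namely controlling $\widehat{\boldsymbol{T}}_v^{-1}$ uniformly. This requires a deterministic (or high-probability) lower bound on $\lambda_{\min}(\boldsymbol{T}_v)$, which comes from Lemma \ref{HighProb} and the positive definiteness of $\mathbb{E}_\bx[p(\bx)\bx\bx^\top]$. That bound then has to be transferred to the sample version through the perturbation argument above, so the passage from conditional-on-$(\bx,r)$ statements to $\mathbb{P}_{mpq}$ statements must be handled carefully.
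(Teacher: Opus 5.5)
Your proof is correct, but it is organized differently from the paper's.

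The paper inserts the intermediate vector $\widetilde{\textbf{c}}_{\alpha,v}$ of \eqref{tc}, in which $r_k$ is replaced by its $q$-expectation $p(\bx_k)$. It then bounds $\|\widehat{\textbf{c}}_{\alpha,v}-\widetilde{\textbf{c}}_{\alpha,v}\|_2$ and $\|\widetilde{\textbf{c}}_{\alpha,v}-\textbf{c}_{\alpha,v}\|_2$ separately. The first uses the joint sampling/response concentration of Lemma \ref{HTEstimator}. The second uses the response-only concentration of Lemma \ref{REstimator}. Three tightness results for inverses (Lemmas \ref{BoundnessAp-1}, \ref{BoundnessA-1}, \ref{BoundnessApi-1}) plus Lemma \ref{MatrixInverse} handle the matrix parts.

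You instead condition on $(\bx_k,r_k)_{k\in U_v}$ and use that the vector and the matrix in $\widehat{\textbf{c}}_{\alpha,v}$ are design-unbiased for those of $\textbf{c}_{\alpha,v}$. This is exactly why $\pi_k$ appears inside $\textbf{c}_{\alpha,v}$. Two ingredients then suffice: a single design-variance computation, which is where \ref{D3} enters, and a lower bound on $\lambda_{\min}(\boldsymbol{T}_v)$ from Lemma \ref{HighProb} (equivalently, the event $\mathcal{E}_{N_v}$ of Remark \ref{req1} together with $\pi_k\ge\lambda$).

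Your route is shorter and needs no concentration of the response indicators. Your Markov bound is also uniform in $(\bx,r)$, so the passage to $\mathbb{P}_{mpq}$ is immediate. What the paper's route buys is the pair of separate rates for $\widehat{\textbf{c}}_{\alpha,v}-\widetilde{\textbf{c}}_{\alpha,v}$ and $\widetilde{\textbf{c}}_{\alpha,v}-\textbf{c}_{\alpha,v}$, which it reuses in Lemma \ref{Cfunction}.

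Keep your direct variance computation rather than the alternative you mention in passing. Lemma \ref{HTEstimator} compares with $\sum_{k\in U_v}\mathbb{E}_q[w_k]\boldsymbol{T}_k/N_v$, so it leads to $\widetilde{\textbf{c}}_{\alpha,v}$, not to $\textbf{c}_{\alpha,v}$.
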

\begin{proof}
Let \begin{equation}\label{tc}
\widetilde{\textbf{c}}_{\alpha,v}=\left(\sum_{k \in U_v}\frac{p(\bx_k)\pi_k\bx_{k,\alpha}\bx_{k,\alpha}^{\top}}{N_v}\right)^{-1}\sum_{k \in U_v}\frac{(1-p(\bx_k))\bx_{k,\alpha}}{N_v}.
\end{equation} 
Write
\begin{equation*}
\Arrowvert\widehat{\textbf{c}}_{\alpha,v}-\textbf{c}_{\alpha,v}\Arrowvert_2=\underbrace{\widehat{\Arrowvert\textbf{c}}_{\alpha,v}-\widetilde{\textbf{c}}_{\alpha,v}\Arrowvert_2}_{(a)}+\underbrace{\Arrowvert\widetilde{\textbf{c}}_{\alpha,v}-\textbf{c}_{\alpha,v}\Arrowvert_2}_{(b)}.
\end{equation*}
~\\
\textbf{Treatment of (a).}
We decompose 
\begin{align*}
&\Arrowvert\widehat{\textbf{c}}_{\alpha,v}-\widetilde{\textbf{c}}_{\alpha,v}\Arrowvert_2\\
\quad&=\Bigg\Arrowvert \left(\sum_{k \in S_{v}}\frac{r_k\bx_{k,\alpha}\bx_{k,\alpha}^{\top}}{N_v}\right)^{-1}\sum_{k \in S_v}\frac{(1-r_k)\bx_{k,\alpha}}{\pi_kN_v}\\
&\quad\quad-\left(\sum_{k \in U_v}\frac{p(\bx_k)\pi_k\bx_{k,\alpha}\bx_{k,\alpha}^{\top}}{N_v}\right)^{-1}\sum_{k \in U_v}\frac{(1-p(\bx_k))\bx_{k,\alpha}}{N_v}\Bigg\Arrowvert_2\\
&\quad= \Bigg\Arrowvert \left(\sum_{k \in S_v}\frac{r_k\bx_{k,\alpha}\bx_{k,\alpha}^{\top}}{N_v}\right)^{-1}\left(\sum_{k \in S_v}\frac{(1-r_k)\bx_{k,\alpha}}{\pi_kN_v}-\sum_{k \in U_v}\frac{(1-p(\bx_k))\bx_{k,\alpha}}{N_v}\right)\\
&\quad\quad+\left(\left(\sum_{k \in S_v}\frac{r_k\bx_{k,\alpha}\bx_{k,\alpha}^{\top}}{N_v}\right)^{-1}-\left(\sum_{k \in U_v}\frac{p(\bx_k)\pi_k\bx_{k,\alpha}\bx_{k,\alpha}^{\top}}{N_v} \right)^{-1} \right)\sum_{k \in U_v}\frac{(1-p(\bx_k))\bx_{k,\alpha}}{N_v}\Bigg\Arrowvert_2\\
&\quad\leq \Bigg\Arrowvert \left(\sum_{k \in S_v}\frac{r_k\bx_{k,\alpha}\bx_{k,\alpha}^{\top}}{N_v}\right)^{-1}\left(\sum_{k \in S_v}\frac{(1-r_k)\bx_{k,\alpha}}{\pi_kN_v}-\sum_{k \in U_v}\frac{(1-p(\bx_k))\bx_{k,\alpha}}{N_v}\right)\Bigg\Arrowvert_2\\
&\quad\quad+\Bigg \Arrowvert\left(\left(\sum_{k \in S_v}\frac{r_k\bx_{k,\alpha}\bx_{k,\alpha}^{\top}}{N_v}\right)^{-1}-\left(\sum_{k \in U_v}\frac{p(\bx_k)\pi_k\bx_{k,\alpha}\bx_{k,\alpha}^{\top}}{N_v} \right)^{-1} \right)\sum_{k \in U_v}\frac{(1-p(\bx_k))\bx_{k,\alpha}}{N_v}\Bigg\Arrowvert_2\\
&:=A_{1,v}+A_{2,v}.
\end{align*}
For $A_{1,v}$, we have
\begin{equation*}
A_{1,v}\leq \Bigg\Arrowvert \left(\sum_{k \in S_v}\frac{r_k\bx_{k,\alpha}\bx_{k,\alpha}^{\top}}{N_v}\right)^{-1}\Bigg\Arrowvert_{op}\Bigg\Arrowvert\sum_{k \in S_v}\frac{(1-r_k)\bx_{k,\alpha}}{\pi_kN_v}-\sum_{k \in U_v}\frac{(1-p(\bx_k))\bx_{k,\alpha}}{N_v}\Bigg\Arrowvert_2.
\end{equation*}
By Lemma \ref{BoundnessApi-1},
\begin{equation*}
    \Bigg\Arrowvert \left(\sum_{k \in S_v}\frac{r_k\bx_{k,\alpha}\bx_{k,\alpha}^{\top}}{N_v}\right)^{-1}\Bigg\Arrowvert_{op}=\mathcal{O}_\P(1),
\end{equation*}
Moreover, applying Lemma \ref{HTEstimator}, by setting $\boldsymbol{T}_k:=\pi_k^{-1}\textbf{x}_{k,\alpha}$, $w_k:=1-r_k$, since
\begin{equation*}
    \limsup_{v\rightarrow\infty}\sum_{k \in U_v}\frac{\E_m\left[\Arrowvert\pi_k^{-1}\textbf{x}_{k,\alpha}\Arrowvert_F^2\right]}{N_v}\leq \frac{C_0^2}{\lambda^2}
\end{equation*}
almost surely, it follows that
\begin{equation*}
    \Bigg\Arrowvert\sum_{k \in U_v}\frac{(1-r_k)\textbf{x}_{k,\alpha}}{\pi_kN_v}-\sum_{k \in U_v}\frac{(1-p_k)\textbf{x}_{k,\alpha}}{N_v}\Bigg\Arrowvert_{2}^2=\mathcal{O}_\P\left(\frac{1}{\sqrt{n_v}}\right).
\end{equation*}
Hence, $A_{1,v}=o_\P(1).$ For $A_{2,v}$, we obtain
\begin{align*}
A_{2,v}&\leq \Bigg\Arrowvert\left(\sum_{k \in S_v}\frac{r_k\bx_{k,\alpha}\bx_{k,\alpha}^{\top}}{N_v}\right)^{-1}-\left(\sum_{k \in U_v}\frac{p(\bx_k)\pi_k\bx_{k,\alpha}\bx_{k,\alpha}^{\top}}{N_v} \right)^{-1} \Bigg\Arrowvert_{op}\\
&\quad\times\Bigg\Arrowvert\sum_{k \in U_v}\frac{(1-p(\bx_k))\bx_{k,\alpha}}{N_v}\Bigg\Arrowvert_2.
\end{align*}
Equation \eqref{HTMatrix} gives
\begin{equation*}
\Bigg\Arrowvert\sum_{k \in S_v}\frac{r_k\bx_{k,\alpha}\bx_{k,\alpha}^{\top}}{N_v}-\sum_{k \in U_v}\frac{p(\bx_k)\pi_k\bx_{k,\alpha}\bx_{k,\alpha}^{\top}}{N_v}  \Bigg\Arrowvert_{op}=\mathcal{O}_\P\left(\frac{1}{\sqrt{n_v}}\right).
\end{equation*}
 Lemma \ref{BoundnessAp-1} and Lemma \ref{BoundnessA-1} state
\begin{equation*}
     \Bigg\Arrowvert \left(\sum_{k \in U_v}\frac{\bx_{k,\alpha}\bx_{k,\alpha}^{\top}p(\bx_k)\pi_k}{N_v}\right)^{-1}\Bigg\Arrowvert_{op}=\mathcal{O}_\P(1), \qquad\Bigg\Arrowvert \left(\sum_{k \in S_v}\frac{r_k\bx_{k,\alpha}\bx_{k,\alpha}^{\top}}{N_v}\right)^{-1}\Bigg\Arrowvert_{op}=\mathcal{O}_\P(1)
\end{equation*}
respectively. Using Lemma \ref{MatrixInverse}, we obtain
\begin{equation*}
    \Bigg\Arrowvert\left(\sum_{k \in S_v}\frac{r_k\bx_{k,\alpha}\bx_{k,\alpha}^{\top}}{N_v}\right)^{-1}-\left(\sum_{k \in U_v}\frac{p(\bx_k)\pi_k\bx_{k,\alpha}\bx_{k,\alpha}^{\top}}{N_v} \right)^{-1} \Bigg\Arrowvert_{op}=\mathcal{O}_\P\left(\frac{1}{\sqrt{n_v}}\right).
\end{equation*}
Finally, note that \begin{equation*}
    \Bigg\Arrowvert\sum_{k \in U_v}\frac{(1-p(\bx_k))\bx_{k,\alpha}}{N_v}\Bigg\Arrowvert_2\leq \sum_{k \in U_v}\frac{\Arrowvert\bx_{k,\alpha}\Arrowvert_2}{N_v}\leq C_0
\end{equation*}
almost surely, we conclude that $A_{2,v}=\mathcal{O}_\P(n_v^{-1/2}).$
\vspace{4mm}
~\\
\textbf{Treatment of (b).} The term (b) can be bounded by
\begin{align*}
&\Arrowvert\widetilde{\textbf{c}}_{\alpha,v}-\textbf{c}_{\alpha,v}\Arrowvert_2\\
\quad&\leq \Bigg\Arrowvert \left(\sum_{k \in U_v}\frac{r_k\pi_k\bx_{k,\alpha}\bx_{k,\alpha}^{\top}}{N_v}\right)^{-1}\Bigg\Arrowvert_{op}\Bigg\Arrowvert\sum_{k \in U_v}\frac{(1-r_k)\bx_{k,\alpha}}{N_v}-\sum_{k \in U_v}\frac{(1-p(\bx_k))\bx_{k,\alpha}}{N_v}\Bigg\Arrowvert_2\nonumber\\
&\quad\quad+\Bigg \Arrowvert\left(\sum_{k \in U_v}\frac{r_k\pi_k\bx_{k,\alpha}\bx_{k,\alpha}^{\top}}{N_v}\right)^{-1}-\left(\sum_{k \in U_v}\frac{p(\bx_k)\pi_k\bx_{k,\alpha}\bx_{k,\alpha}^{\top}}{N_v} \right)^{-1} \Bigg\Arrowvert_{op}\\
&\quad\quad\quad\times\Bigg\Arrowvert\sum_{k \in U_v}\frac{(1-p(\bx_k))\bx_{k,\alpha}}{N_v}\Bigg\Arrowvert_2\nonumber\\
&\quad:=B_{1,v}+B_{2,v}.
\end{align*}
For $B_{1,v}$, applying  Lemma \ref{REstimator}, by setting $w_k:=1-r_k$, $\boldsymbol{T}_k:=\textbf{x}_{k,\alpha}$ and recalling that
\begin{equation*}
    \limsup_{v \rightarrow\infty }\sum_{k \in U_v}\frac{\E_m\left[\Arrowvert\textbf{x}_{k,\alpha}\Arrowvert_2^2\right]}{N_v}\leq C_0^2<\infty,
\end{equation*}
we have
\begin{equation*}
    \Bigg\Arrowvert\sum_{k \in U_v}\frac{(1-r_k)\bx_{k,\alpha}}{N_v}-\sum_{k \in U_v}\frac{(1-p(\bx_k))\bx_{k,\alpha}}{N_v}\Bigg\Arrowvert_2=\mathcal{O}_\P\left(\frac{1}{\sqrt{n_v}}\right).
\end{equation*}
Also, using Lemma \ref{BoundnessApi-1}, we obtain $B_{1,v}=\mathcal{O}_\P(n_v^{-1/2}).$ For $B_{2,v}$, applying Lemma \ref{REstimator}, setting $w_k:=r_k$, $\boldsymbol{T}_k:=\pi_k\textbf{x}_{k,\alpha}\textbf{x}_{k,\alpha}^{\top}$, \eqref{FiniteXX} implies
 \begin{align}\label{RMatrix}
       & \Bigg\Arrowvert\sum_{k \in U_v}\frac{r_k\pi_k\bx_{k,\alpha}\bx_{k,\alpha}^{\top}}{N_v}-\sum_{k \in U_v}\frac{p(\bx_k)\pi_k\bx_{k,\alpha}\bx_{k,\alpha}^{\top}}{N_v}\Bigg\Arrowvert_{op}=\mathcal{O}_\P\left(\frac{1}{\sqrt{n_v}}\right).
\end{align}
An application of Lemma \ref{MatrixInverse} gives
\begin{equation*}
    \Bigg \Arrowvert\left(\sum_{k \in U_v}\frac{r_k\pi_k\bx_{k,\alpha}\bx_{k,\alpha}^{\top}}{N_v}\right)^{-1}-\left(\sum_{k \in U_v}\frac{p(\bx_k)\pi_k\bx_{k,\alpha}\bx_{k,\alpha}^{\top}}{N_v} \right)^{-1} \Bigg\Arrowvert_{op}=\mathcal{O}_\P\left(\frac{1}{\sqrt{n_v}}\right).
\end{equation*}
As a result, $B_{2,v}=\mathcal{O}_\P(n_v^{-1/2})$.
Putting all things together, we obtain $A_v=\mathcal{O}_\P(n_v^{-1/2})$ and $B_v=\mathcal{O}_\P(n_v^{-1/2})$. This concludes $\Arrowvert\widehat{\textbf{c}}_{\alpha,v}-\textbf{c}_{\alpha,v}\Arrowvert_2=\mathcal{O}_\P(n_v^{-1/2})$.
\end{proof}
\subsubsection{Consistency of plug-in estimators}
\begin{lemma}\label{Cfunction}
Let $\alpha \in \cC$ and $(\widehat{\textbf{c}}_{\alpha,v})_{v\in\mathbb{N}}$ be a sequence defined in \eqref{cwidehat}. Assume \ref{S1}-\ref{S3} and \ref{D1}-\ref{D3}. We have
\begin{equation*}
\Bigg|\sum_{k \in S_v}\frac{1-r_k+r_k\left(\pi_k\widehat{\textbf{c}}_{\alpha,v}^{\top}\bx_{k,\alpha}\right)^2}{N_v\pi_k}-\sum_{k \in U_v}\frac{1-r_k+r_k\left(\pi_k\textbf{c}_{\alpha,v}^{\top}\bx_{k,\alpha}\right)^2}{N_v}\Bigg|=\mathcal{O}_\P\left(\frac{1}{\sqrt{n_v}}\right).
\end{equation*}
\end{lemma}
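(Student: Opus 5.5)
The plan is to split the difference into a design/response error part and a plug-in part:
\begin{align*}
&\sum_{k\in S_v}\frac{1-r_k+r_k(\pi_k\widehat{\textbf{c}}_{\alpha,v}^{\top}\bx_{k,\alpha})^2}{N_v\pi_k}-\sum_{k\in U_v}\frac{1-r_k+r_k(\pi_k\textbf{c}_{\alpha,v}^{\top}\bx_{k,\alpha})^2}{N_v}\\
&\quad=\Big(\sum_{k\in S_v}\frac{1-r_k}{N_v\pi_k}-\sum_{k\in U_v}\frac{1-r_k}{N_v}\Big)
+\widehat{\textbf{c}}_{\alpha,v}^{\top}\Big(\sum_{k\in S_v}\frac{r_k\pi_k\bx_{k,\alpha}\bx_{k,\alpha}^{\top}}{N_v}-\sum_{k\in U_v}\frac{r_k\pi_k^2\bx_{k,\alpha}\bx_{k,\alpha}^{\top}}{N_v}\Big)\widehat{\textbf{c}}_{\alpha,v}\\
&\quad\quad+\Big(\widehat{\textbf{c}}_{\alpha,v}^{\top}\boldsymbol{B}_v\widehat{\textbf{c}}_{\alpha,v}-\textbf{c}_{\alpha,v}^{\top}\boldsymbol{B}_v\textbf{c}_{\alpha,v}\Big)
:=T_{1,v}+T_{2,v}+T_{3,v},
\end{align*}
where $\boldsymbol{B}_v:=N_v^{-1}\sum_{k\in U_v}r_k\pi_k^2\bx_{k,\alpha}\bx_{k,\alpha}^{\top}$. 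Here we used $(\pi_k\widehat{\textbf{c}}^{\top}\bx)^2/\pi_k=\pi_k\widehat{\textbf{c}}^{\top}\bx\bx^{\top}\widehat{\textbf{c}}$. We will show each term is $\mathcal{O}_\P(n_v^{-1/2})$.

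For $T_{1,v}$ and the matrix difference in $T_{2,v}$, both are Horvitz--Thompson-type estimators of population totals (design error only, conditionally on $(r_k,\bx_k)$). We will apply Lemma~\ref{HTEstimator}. For $T_{1,v}$ take $\boldsymbol{T}_k:=1$ and $w_k:=1-r_k$. For the matrix take $\boldsymbol{T}_k:=\pi_k^2\bx_{k,\alpha}\bx_{k,\alpha}^{\top}$ and $w_k:=r_k$. The moment condition $\limsup N_v^{-1}\sum_k\|\boldsymbol{T}_k\|_F^2<\infty$ holds trivially by \ref{S2} and $\pi_k\le 1$, exactly as in \eqref{FiniteXX}. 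If Lemma~\ref{HTEstimator} is phrased relative to $p(\bx_k)\pi_k$ rather than to the realized $r_k$, we will insert the intermediate centering $N_v^{-1}\sum_{U_v}p(\bx_k)\pi_k\,\cdot$ and control the response part with Lemma~\ref{REstimator}, as in the proof of Lemma~\ref{Cest}.

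Next we need $\|\widehat{\textbf{c}}_{\alpha,v}\|_2=\mathcal{O}_\P(1)$. This follows from $\|\textbf{c}_{\alpha,v}\|_2=\mathcal{O}_\P(1)$ (bounded inverse by Lemma~\ref{BoundnessApi-1} or \ref{BoundnessA-1}, numerator at most $C_0$), the bound \eqref{eq:cOP}, and Lemma~\ref{Cest}. With it, $|T_{2,v}|\le\|\widehat{\textbf{c}}_{\alpha,v}\|_2^2\,\|\cdot\|_{op}=\mathcal{O}_\P(n_v^{-1/2})$.

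For $T_{3,v}$ we use the identity
\[
\widehat{\textbf{c}}^{\top}\boldsymbol{B}\widehat{\textbf{c}}-\textbf{c}^{\top}\boldsymbol{B}\textbf{c}=(\widehat{\textbf{c}}-\textbf{c})^{\top}\boldsymbol{B}(\widehat{\textbf{c}}+\textbf{c}),
\]
together with $\|\boldsymbol{B}_v\|_{op}\le C_0^2$ almost surely. This gives $|T_{3,v}|\le C_0^2\|\widehat{\textbf{c}}_{\alpha,v}-\textbf{c}_{\alpha,v}\|_2(\|\widehat{\textbf{c}}_{\alpha,v}\|_2+\|\textbf{c}_{\alpha,v}\|_2)=\mathcal{O}_\P(n_v^{-1/2})$ by Lemma~\ref{Cest}.

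The main (mild) obstacle is bookkeeping rather than analysis. The estimator $\widehat{\textbf{c}}_{\alpha,v}$ in \eqref{cwidehat} is normalized without $\pi_k$ in the Gram matrix, while $\textbf{c}_{\alpha,v}$ in \eqref{cbold} carries $\pi_k$. We must therefore rely on Lemma~\ref{Cest} exactly as stated (which already asserts their difference is $\mathcal{O}_\P(n_v^{-1/2})$) rather than re-derive it. We also must make sure the centering of the Horvitz--Thompson step matches the form of Lemma~\ref{HTEstimator}. Everything else is a direct combination of operator-norm bounds, \ref{S2}, and \ref{D2}.
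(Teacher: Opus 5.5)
Your proof is correct, but it follows a different route from the paper's. The paper introduces the proxy $\widetilde{\textbf{c}}_{\alpha,v}$ of \eqref{tc}, which depends only on $(\bx_k)_{k\in U_v}$. It then splits the difference into three pieces: (i) replace $\widehat{\textbf{c}}_{\alpha,v}$ by $\widetilde{\textbf{c}}_{\alpha,v}$ inside the sample sum; (ii) a Horvitz--Thompson error for the scalar $a_k=1-r_k+r_k(\pi_k\widetilde{\textbf{c}}_{\alpha,v}^{\top}\bx_{k,\alpha})^2$; (iii) replace $\widetilde{\textbf{c}}_{\alpha,v}$ by $\textbf{c}_{\alpha,v}$ in the population sum. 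This uses the two halves of Lemma~\ref{Cest} separately, plus Lemma~\ref{finitetildec} to bound the proxy. The proxy's purpose is to make the summand in the Horvitz--Thompson step fixed with respect to the design.

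You instead write $(\pi_k\textbf{c}^{\top}\bx_{k,\alpha})^2/\pi_k=\pi_k\textbf{c}^{\top}\bx_{k,\alpha}\bx_{k,\alpha}^{\top}\textbf{c}$ and pull the coefficient vector out of every sum. The design step then acts on a coefficient-free matrix statistic, and the plug-in error is handled by the sandwich identity with $\|\boldsymbol{B}_v\|_{op}\le C_0^2$. This needs only the combined rate in Lemma~\ref{Cest} and \eqref{eq:cOP}, with no deterministic proxy, so it is shorter and separates design randomness from estimation randomness more cleanly. It does rely on the quadratic structure, whereas the paper's proxy-based route adapts more directly to other smooth functions of $\textbf{c}^{\top}\bx_{k,\alpha}$.

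One bookkeeping fix, which you partly anticipated. Lemma~\ref{HTEstimator} uses the unweighted sample sum $\sum_{k\in S_v}w_k\boldsymbol{T}_k/N_v$, and its proof actually centers at $\sum_{k\in U_v}\pi_k\E_q[w_k]\boldsymbol{T}_k/N_v$. To match your terms, take $\boldsymbol{T}_k:=1/\pi_k$ with $w_k:=1-r_k$ for $T_{1,v}$, and $\boldsymbol{T}_k:=\pi_k\bx_{k,\alpha}\bx_{k,\alpha}^{\top}$ with $w_k:=r_k$ for the matrix in $T_{2,v}$. Then use Lemma~\ref{REstimator} to pass from $p(\bx_k)$ back to $r_k$; the moment conditions still hold by \ref{S2} and \ref{D2}. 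Alternatively, a Chebyshev bound conditional on $(r_k,\bx_k)_{k\in U_v}$, using \ref{D2}--\ref{D3}, gives both $\mathcal{O}_\P(n_v^{-1/2})$ bounds directly.
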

\begin{proof}
Consider the following decomposition,
\begin{align*}
\Bigg|&\sum_{k \in S_v}\frac{1-r_k+r_k\left(\pi_k\widehat{\textbf{c}}_{\alpha,v}^{\top}\bx_{k,\alpha}\right)^2}{N_v\pi_k}-\sum_{k \in U_v}\frac{1-r_k+r_k\left(\pi_k\widetilde{\textbf{c}}_{\alpha,v}^{\top}\bx_{k,\alpha}\right)^2}{N_v}\Bigg|\\&\quad\leq\Bigg|\sum_{k \in S_v}\frac{1-r_k+r_k\left(\pi_k\widehat{\textbf{c}}_{\alpha,v}^{\top}\bx_{k,\alpha}\right)^2}{N_v\pi_k}-\sum_{k \in S_v}\frac{1-r_k+r_k\left(\pi_k\widetilde{\textbf{c}}_{\alpha,v}^{\top}\bx_{k,\alpha}\right)^2}{N_v\pi_k}\Bigg|\\&\quad+\Bigg|\sum_{k \in S_v}\frac{1-r_k+r_k\left(\pi_k\widetilde{\textbf{c}}_{\alpha,v}^{\top}\bx_{k,\alpha}\right)^2}{N_v\pi_k}-\sum_{k \in U_v}\frac{1-r_k+r_k\left(\pi_k\widetilde{\textbf{c}}_{\alpha,v}^{\top}\bx_{k,\alpha}\right)^2}{N_v}\Bigg|\\
&\quad+\Bigg|\sum_{k \in U_v}\frac{1-r_k+r_k\left(\pi_k\widetilde{\textbf{c}}_{\alpha,v}^{\top}\bx_{k,\alpha}\right)^2}{N_v}-\sum_{k \in U_v}\frac{1-r_k+r_k\left(\pi_k\textbf{c}_{\alpha,v}^{\top}\bx_{k,\alpha}\right)^2}{N_v}\Bigg|\\
&:=A_v+B_v+C_v,
\end{align*}
where $\widetilde{\textbf{c}}_{\alpha,v}$ is given by \eqref{tc}. Next, we will show $A_v$, $B_v$ and $C_v$ converge to 0 with rate $\mathcal{O}_\P(n_v^{-1/2})$.
~\\
\textbf{Treatment of $A_v$.}
Using
\begin{equation*}
\left(\widehat{\textbf{c}}_{\alpha,v}^{\top}\bx_{k,\alpha}\right)^2-\left(\widetilde{\textbf{c}}_{\alpha,v}^{\top}\bx_{k,\alpha}\right)^2=\left(\left(\widehat{\textbf{c}}_{\alpha,v}-\widetilde{\textbf{c}}_{\alpha,v}\right)\bx_{k,\alpha}\right)^2+2\left(\widehat{\textbf{c}}_{\alpha,v}-\widetilde{\textbf{c}}_{\alpha,v}\right)\bx_{k,\alpha}\widetilde{\textbf{c}}_{\alpha,v}^{\top}\bx_{k,\alpha},
\end{equation*}
we obtain
\begin{align}\label{decomA}
A_v&=\left|\sum_{k \in S_v}r_k\frac{\left(\pi_k\widehat{\textbf{c}}_{\alpha,v}^{\top}\bx_{k,\alpha}\right)^2-\left(\pi_k\widetilde{\textbf{c}}_{\alpha,v}^{\top}\bx_{k,\alpha}\right)^2}{N_v\pi_k}\right|\nonumber\\
&\leq \sum_{k \in U_v}\frac{\left|\left(\widehat{\textbf{c}}_{\alpha,v}^{\top}\bx_{k,\alpha}\right)^2-\left(\widetilde{\textbf{c}}_{\alpha,v}^{\top}\bx_{k,\alpha}\right)^2\right|}{N_v\lambda}\nonumber\\&\leq \sum_{k\in U_v}\frac{\left(\left(\widehat{\textbf{c}}_{\alpha,v}^{\top}-\widetilde{\textbf{c}}_{\alpha,v}\right)\bx_{k,\alpha}\right)^2+2\left|\widetilde{\textbf{c}}_{\alpha,v}^{\top}\bx_{k,\alpha}\left(\widehat{\textbf{c}}_{\alpha,v}^{\top}-\widetilde{\textbf{c}}_{\alpha,v}\right)\bx_{k,\alpha}\right|}{N_v\lambda}\nonumber\\&:=A_{1,v}+A_{2,v}.
\end{align}
For $A_{1,v}$, having proved $\Arrowvert\widehat{\textbf{c}}_{\alpha,v}-\widetilde{\textbf{c}}_{\alpha,v}\Arrowvert_2=\mathcal{O}_\P(n_v^{-1/2})$ in Lemma \ref{Cest}, we have
\begin{align*}
A_{1,v}&\leq \frac{1}{\lambda}\sum_{k \in U_v}\frac{\left(\left(\widehat{\textbf{c}}_{\alpha,v}^{\top}-\widetilde{\textbf{c}}_{\alpha,v}\right)\bx_{k,\alpha}\right)^2}{N_v}\\
&\leq \frac{1}{\lambda}\Arrowvert \widehat{\textbf{c}}_{\alpha,v}^{\top}-\widetilde{\textbf{c}}_{\alpha,v}\Arrowvert_2^2\sum_{k \in U_v}\frac{\Arrowvert\bx_{k,\alpha}\Arrowvert_2^2}{N_v}\\
&\leq \frac{C_0^2}{\lambda}\Arrowvert \widehat{\textbf{c}}_{\alpha,v}^{\top}-\widetilde{\textbf{c}}_{\alpha,v}\Arrowvert_2^2=\mathcal{O}_\P\left(\frac{1}{\sqrt{n_v}}\right).
\end{align*}
For $A_{2,v}$, recall from Lemma \ref{finitetildec}, we have
\begin{equation*}
     \limsup_{v \rightarrow \infty}\Arrowvert\widetilde{\textbf{c}}_{\alpha,v}\Arrowvert_2<\infty
\end{equation*}
almost surely.
It follows that
\begin{align*}
A_{2,v}&\leq\frac{2}{\lambda}\sqrt{\sum_{k \in U_v}\frac{\left(\widetilde{\textbf{c}}_{\alpha,v}^{\top}\bx_{k,\alpha}\right)^2}{N_v}}\sqrt{\sum_{k \in U_v}\frac{\left(\left(\widehat{\textbf{c}}_{\alpha,v}^{\top}-\widetilde{\textbf{c}}_{\alpha,v}\right)\bx_{k,\alpha}\right)^2}{N_v}}\\
&\leq \frac{2}{\lambda}\Arrowvert \widetilde{\textbf{c}}_{\alpha,v}\Arrowvert_2\Arrowvert \widehat{\textbf{c}}_{\alpha,v}^{\top}-\widetilde{\textbf{c}}_{\alpha,v}\Arrowvert_2\sum_{k \in U_v}\frac{\Arrowvert\bx_{k,\alpha}\Arrowvert_2^2}{N_v}\\
&\leq\frac{2C_0^2}{\lambda}\Arrowvert \widetilde{\textbf{c}}_{\alpha,v}\Arrowvert_2\Arrowvert \widehat{\textbf{c}}_{\alpha,v}^{\top}-\widetilde{\textbf{c}}_{\alpha,v}\Arrowvert_2=\mathcal{O}_\P\left(\frac{1}{\sqrt{n_v}}\right).
\end{align*}
As a result, we have $A_v=\mathcal{O}_\P(n_v^{-1/2}).$
\vspace{4mm}
~\\
\textbf{Treatment of $B_v$.}  Let $a_k=1-r_k+r_k\left(\pi_k\widetilde{\textbf{c}}_{\alpha,v}\bx_{k,\alpha}\right)^2$ for $k \in U_v$. Applying Lemma \ref{HTEstimator}, with $\boldsymbol{T}_k:=a_k$ and $w_k:=1$ for $k \in U_v$, it remains to show
\begin{equation*}
    \limsup_{v\rightarrow \infty}\sum_{k \in U_v}\frac{\E_m\left[a_k^2\right]}{N_v}<\infty
\end{equation*}
almost surely. To this aim, write
\begin{align*}
\limsup_{v\rightarrow \infty}\sum_{k \in U_v}\frac{\E_m\left[a_k^2\right]}{N_v}&\leq \limsup_{v\rightarrow \infty}\frac{1}{N_v}\sum_{k \in U_v}\left(2(1-r_k)^2+2r_k^2\left(\pi_k\widetilde{\textbf{c}}_{\alpha,v}^{\top}\bx_{k,\alpha}\right)^2\right)\\&\leq\limsup_{v\rightarrow \infty} \frac{1}{N_v}\sum_{k \in U_v}\left(2+2\Arrowvert\widetilde{\textbf{c}}_{\alpha,v}\Arrowvert_2^2\Arrowvert\bx_{k,\alpha}\Arrowvert_2^2\right)\\
&\leq \limsup_{v\rightarrow \infty}2\left(1+C_0^2\Arrowvert\widetilde{\textbf{c}}_{\alpha,v}\Arrowvert_2^2 \right)<\infty
\end{align*}
almost surely. This concludes $B_v=\mathcal{O}_\P(n_v^{-1/2}).$ \vspace{4mm}
~\\
\textbf{Treatment of $C_v$.}  Note that
\begin{equation*}
\left({\textbf{c}}_{\alpha,v}^{\top}\bx_{k,\alpha}\right)^2-\left(\widetilde{\textbf{c}}_{\alpha,v}^{\top}\bx_{k,\alpha}\right)^2=\left(\left({\textbf{c}}_{\alpha,v}-\widetilde{\textbf{c}}_{\alpha,v}\right)\bx_{k,\alpha}\right)^2+2\left({\textbf{c}}_{\alpha,v}-\widetilde{\textbf{c}}_{\alpha,v}\right)\bx_{k,\alpha}\widetilde{\textbf{c}}_{\alpha,v}^{\top}\bx_{k,\alpha}.
\end{equation*}
Using the result $\Arrowvert\widetilde{\textbf{c}}_{\alpha,v}-\textbf{c}_{\alpha,v}\Arrowvert_2=\mathcal{O}_\P(n_v^{-1/2})$ in Lemma \ref{Cest} and following the same decomposition in \eqref{decomA}, we obtain
\begin{align}\label{Cfuntilde}
&\left|\dfrac{1}{N_v}\sum_{k \in U_v}1-r_k+r_k\left(\pi_k\widetilde{\textbf{c}}_{\alpha,v}^{\top}\bx_{k,\alpha}\right)^2-\dfrac{1}{N_v}\sum_{k \in U_v}1-r_k+r_k\left(\pi_k\textbf{c}_{\alpha,v}^{\top}\bx_{k,\alpha}\right)^2\right|\nonumber\\
&\quad\leq \dfrac{1}{N_v}\sum_{k \in U_v}\left|\left(\widetilde{\textbf{c}}_{\alpha,v}^{\top}\bx_{k,\alpha}\right)^2-\left({\textbf{c}}_{\alpha,v}^{\top}\bx_{k,\alpha}\right)^2\right|\nonumber\\
&\quad\leq \dfrac{1}{N_v}\sum_{k\in U_v}\left(\left(\widetilde{\textbf{c}}_{\alpha,v}^{\top}-{\textbf{c}}_{\alpha,v}\right)\bx_{k,\alpha}\right)^2+2\left|\widetilde{\textbf{c}}_{\alpha,v}^{\top}\bx_{k,\alpha}\left(\widetilde{\textbf{c}}_{\alpha,v}^{\top}-{\textbf{c}}_{\alpha,v}\right)\bx_{k,\alpha}\right|\nonumber\\
&\quad\leq\Arrowvert\widetilde{\textbf{c}}_{\alpha,v}-\textbf{c}_{\alpha,v}\Arrowvert_2\dfrac{1}{N_v}\sum_{k \in U_v}\Arrowvert\bx_{k,\alpha}\Arrowvert_2^2\left(2\Arrowvert\widetilde{\textbf{c}}_{\alpha,v}\Arrowvert_2+\Arrowvert\widetilde{\textbf{c}}_{\alpha,v}-\textbf{c}_{\alpha,v}\Arrowvert_2\right)\nonumber\\
&\quad \leq C_0^2\Arrowvert\widetilde{\textbf{c}}_{\alpha,v}-\textbf{c}_{\alpha,v}\Arrowvert_2\left(2\Arrowvert\widetilde{\textbf{c}}_{\alpha,v}\Arrowvert_2+\Arrowvert\widetilde{\textbf{c}}_{\alpha,v}-\textbf{c}_{\alpha,v}\Arrowvert_2\right)=\mathcal{O}_\P\left(\frac{1}{\sqrt{n_v}}\right).
\end{align}
Overall
\begin{equation*}
\Bigg|\sum_{k \in S_v}\frac{1-r_k+r_k\left(\pi_k\widehat{\textbf{c}}_{\alpha,v}^{\top}\bx_{k,\alpha}\right)^2}{N_v\pi_k}-\sum_{k \in U_v}\frac{1-r_k+r_k\left(\pi_k\textbf{c}_{\alpha,v}^{\top}\bx_{k,\alpha}\right)^2}{N_v}\Bigg|=\mathcal{O}_\P\left(\frac{1}{\sqrt{n_v}}\right).
\end{equation*}
\end{proof}
\begin{lemma}\label{ConstEtaEst}
Let $\alpha \in \cA$ and assume \ref{D1}-\ref{D3} and \ref{S1}-\ref{S3}. We have 
\begin{equation*}
\dfrac{1}{N_v}\sum_{k \in U_v}\left(\widehat{\eta}_{k,\alpha}-\eta_{k,\alpha}\right)^2=o_\P(1),
\end{equation*}
where $\widehat{\eta}_{k,\alpha}$ is given by \eqref{etawidehat}.
\end{lemma}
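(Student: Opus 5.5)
The plan is to write the difference $\widehat{\eta}_{k,\alpha}-\eta_{k,\alpha}$ explicitly and bound its empirical second moment by terms controlled by Lemma \ref{BetaEst} and Lemma \ref{Cest}. (Throughout, $\alpha\in\cC$, as in Theorem \ref{Theo4} where the lemma is used. Then $y_k-\bx_{k,\alpha}^\top\pmb{\beta}_\alpha=\epsilon_k$, and Lemmas \ref{BetaEst} and \ref{Cest} apply.)

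Set $\boldsymbol{\delta}_v:=\widehat{\pmb{\beta}}_{\alpha,v}-\pmb{\beta}_\alpha$ and $\boldsymbol{d}_v:=\widehat{\textbf{c}}_{\alpha,v}-\textbf{c}_{\alpha,v}$. Using $y_k-\bx_{k,\alpha}^\top\widehat{\pmb{\beta}}_{\alpha}=\epsilon_k-\bx_{k,\alpha}^\top\boldsymbol{\delta}_v$, a direct expansion gives
\begin{align*}
\widehat{\eta}_{k,\alpha}-\eta_{k,\alpha}
&=\bx_{k,\alpha}^\top\boldsymbol{\delta}_v
-r_k\left(1+\pi_k\widehat{\textbf{c}}_{\alpha,v}^\top\bx_{k,\alpha}\right)\bx_{k,\alpha}^\top\boldsymbol{\delta}_v
+r_k\pi_k\,\boldsymbol{d}_v^\top\bx_{k,\alpha}\,\epsilon_k.
\end{align*}
Applying $(a+b+c)^2\le 3(a^2+b^2+c^2)$, together with $\pi_k\le 1$, $r_k\le1$ and \ref{S2}, we get
\begin{align*}
\frac{1}{N_v}\sum_{k\in U_v}(\widehat{\eta}_{k,\alpha}-\eta_{k,\alpha})^2
\le 3C_0^2\|\boldsymbol{\delta}_v\|_2^2\Big(1+2+2C_0^2\|\widehat{\textbf{c}}_{\alpha,v}\|_2^2\Big)
+3C_0^2\|\boldsymbol{d}_v\|_2^2\,\frac{1}{N_v}\sum_{k\in U_v}\epsilon_k^2 .
\end{align*}

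Each factor is then controlled separately.
\begin{itemize}
\item By Lemma \ref{BetaEst}, $\|\boldsymbol{\delta}_v\|_2^2=\mathcal{O}_\P(n_v^{-1})$.
\item By Lemma \ref{Cest}, $\|\boldsymbol{d}_v\|_2^2=\mathcal{O}_\P(n_v^{-1})$.
\item We have $\|\widehat{\textbf{c}}_{\alpha,v}\|_2\le\|\textbf{c}_{\alpha,v}\|_2+\|\boldsymbol{d}_v\|_2=\mathcal{O}_\P(1)$. The bound on $\|\textbf{c}_{\alpha,v}\|_2$ follows from $\|(N_v^{-1}\sum_k r_k\pi_k\bx_{k,\alpha}\bx_{k,\alpha}^\top)^{-1}\|_{op}=\mathcal{O}_\P(1)$, via Lemma \ref{BoundnessApi-1} or Lemma \ref{HighProb} with weights $\pi_k\ge\lambda$, combined with $\|N_v^{-1}\sum_k(1-r_k)\bx_{k,\alpha}\|_2\le C_0$. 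This is the estimate \eqref{eq:cOP} already invoked earlier.
\item $N_v^{-1}\sum_k\epsilon_k^2=\mathcal{O}_\P(1)$ by Markov's inequality, since $\E_m[\epsilon_k^2]=\sigma^2$.
\end{itemize}
Combining these, the whole bound is $\mathcal{O}_\P(n_v^{-1})=o_\P(1)$.

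The main obstacle is uniform control of $\|\widehat{\textbf{c}}_{\alpha,v}\|_2$, that is, boundedness in probability of the inverse Gram matrix. We handle it by working on the high-probability event of Remark \ref{req1}, where the minimum eigenvalue is bounded below. The rest is bookkeeping: the $\epsilon_k$ appearing in the cross term is multiplied by the deterministic-rate factor $\boldsymbol{d}_v$, so no independence between $\boldsymbol{d}_v$ and $\epsilon_k$ is needed. If $\alpha\notin\cC$ were genuinely intended, $\pmb{\beta}_\alpha$ would have to be read as the population projection coefficient; the same argument then goes through with $\epsilon_k$ replaced by the projection residual, which has a bounded second moment by \ref{S2}--\ref{S3}.
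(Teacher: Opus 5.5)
Your proposal is correct and follows essentially the paper's route: expand $\widehat{\eta}_{k,\alpha}-\eta_{k,\alpha}$, apply $(\sum_i a_i)^2\le m\sum_i a_i^2$, and conclude from Lemma \ref{BetaEst}, Lemma \ref{Cest} and \eqref{eq:cOP}. Your restriction to $\alpha\in\cC$ is also what the paper's proof implicitly needs, since those lemmas are stated only for correct models. The only difference is that you write the residual as $\epsilon_k-\bx_{k,\alpha}^\top\boldsymbol{\delta}_v$ and keep $\widehat{\textbf{c}}_{\alpha,v}$ intact, which gives three terms and needs only $N_v^{-1}\sum_k\epsilon_k^2=\mathcal{O}_\P(1)$. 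The paper instead splits the difference into five terms involving $y_k$ and bounds them via Cauchy--Schwarz with $\sum_k y_k^4/N_v$, i.e.\ fourth moments, so your version is slightly leaner.
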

\begin{proof}
For an arbitrary $k \in U_v$, we decompose 
\begin{align*}
\eta_{k,\alpha}-\widehat{\eta}_{k,\alpha}&=\bx_{k,\alpha}^{\top}\pmb{\beta}_{\alpha}+r_k\left(1+\pi_k\textbf{c}_{\alpha,v}^{\top}\bx_{k,\alpha}\right)\left(y_k-\bx_{k,\alpha}^{\top}\pmb{\beta}_{\alpha,v}\right)\\&\quad-\left(\bx_{k,\alpha}^{\top}\widehat{\pmb{\beta}}_{\alpha,v}+r_k\left(1+\pi_k\widehat{\textbf{c}}_{\alpha,v}^{\top}\bx_{k,\alpha}\right)\left(y_k-\bx_{k,\alpha}^{\top}\widehat{\pmb{\beta}}_{\alpha,v}\right)\right)\\
&\quad+r_k\pi_k\widehat{\textbf{c}}_{\alpha,v}^{\top}\bx_{k,\alpha}\bx_{k,\alpha}^{\top}\widehat{\pmb{\beta}}_{\alpha}-r_k\pi_k\textbf{c}_{\alpha,v}^{\top}\bx_{k,\alpha}\bx_{k,\alpha}^{\top}\pmb{\beta}\\
&=(1-r_k)\bx_{k,\alpha}^{\top}\left(\pmb{\beta}_{\alpha}-\widehat{\pmb{\beta}}_{\alpha,v}\right)-r_k\pi_k\left(\textbf{c}_{\alpha,v}-\widehat{\textbf{c}}_{\alpha,v}\right)^{\top}\bx_{k,\alpha}y_k\\
&\quad+r_k\pi_k(\widehat{\textbf{c}}_{\alpha,v}-\textbf{c}_{\alpha,v})^{\top}\bx_{k,\alpha}\bx_{k,\alpha}^{\top}\left(\widehat{\pmb{\beta}}_{\alpha,v}-\pmb{\beta}_{\alpha}\right)\\
&\quad+r_k\pi_k\textbf{c}_{\alpha,v}^{\top}\bx_{k,\alpha}\bx_{k,\alpha}^{\top}\left(\widehat{\pmb{\beta}}_{\alpha,v}-\pmb{\beta}_{\alpha}\right)+r_k\pi_k(\widehat{\textbf{c}}_{\alpha,v}-\textbf{c}_{\alpha,v})^{\top}\bx_{k,\alpha}\bx_{k,\alpha}^{\top}\pmb{\beta}_{\alpha},
\end{align*}
since
\begin{align*}
&r_k\pi_k\widehat{\textbf{c}}_{\alpha,v}^{\top}\bx_{k,\alpha}\bx_{k,\alpha}^{\top}\widehat{\pmb{\beta}}_{\alpha}-r_k\pi_k\textbf{c}_{\alpha,v}^{\top}\bx_{k,\alpha}\bx_{k,\alpha}^{\top}\pmb{\beta}\\&\quad=r_k\pi_k\left(\widehat{\textbf{c}}_{\alpha,v}-\textbf{c}_{\alpha,v}\right)^{\top}\bx_{k,\alpha}\bx_{k,\alpha}^{\top}\left(\widehat{\pmb{\beta}}_{\alpha,v}-\pmb{\beta}_{\alpha}\right)\\&\quad\quad+r_k\pi_k\textbf{c}_{\alpha,v}^{\top}\bx_{k,\alpha}\bx_{k,\alpha}^{\top}\left(\widehat{\pmb{\beta}}_{\alpha,v}-\pmb{\beta}_{\alpha}\right)+r_k\pi_k\left(\widehat{\textbf{c}}_{\alpha,v}-\textbf{c}_{\alpha,v}\right)^{\top}\bx_{k,\alpha}\bx_{k,\alpha}^{\top}\pmb{\beta}_{\alpha}.
\end{align*}
On the other hand, according to Lemma \ref{BoundnessApi-1}, we obtain,
\begin{equation}\label{eq:cOP}
\Arrowvert\textbf{c}_{\alpha,v}\Arrowvert_2\leq \Bigg\Arrowvert\left(\sum_{k \in U_v}\frac{r_k\pi_k\bx_{k,\alpha}\bx_{k,\alpha}^{\top}}{N_v} \right)^{-1} \Bigg\Arrowvert_{op}\Bigg\Arrowvert \sum_{k\in U_v}\frac{(1-r_k)\bx_{k,\alpha}}{N_v} \Bigg\Arrowvert_2=\mathcal{O}_\P(1).
\end{equation}
Combining the results that $\Arrowvert\widehat{\pmb{\beta}}_{\alpha,v}-\pmb{\beta}_{\alpha,v}\Arrowvert=o_\P(1)$ in Lemma \ref{BetaEst} and $\Arrowvert\widehat{\textbf{c}}_{\alpha,v}-\textbf{c}_{\alpha,v}\Arrowvert_2=o_\P(1)$ in Lemma \ref{Cest}, an application of Cauchy-Schwarz inequality gives
\begin{align*}
&\sum_{k \in U_v}\frac{(\widehat{\eta}_{k,\alpha}-\eta_{k,\alpha})^2}{N_v}\\&\quad=\sum_{k \in U_v}\frac{(\eta_{k,\alpha}-\widehat{\eta}_{k,\alpha})^2}{N_v}\\&=\frac{1}{N_v}\sum_{k \in U_v}\Bigg((1-r_k)\bx_{k,\alpha}^{\top}\left(\pmb{\beta}_{\alpha}-\widehat{\pmb{\beta}}_{\alpha,v}\right)-r_k\pi_k\left(\textbf{c}_{\alpha,v}-\widehat{\textbf{c}}_{\alpha,v}\right)^{\top}\bx_{k,\alpha}y_k\\&\quad+r_k\pi_k\left(\widehat{\textbf{c}}_{\alpha,v}-\textbf{c}_{\alpha,v}\right)^{\top}\bx_{k,\alpha}\bx_{k,\alpha}^{\top}\left(\widehat{\pmb{\beta}}_{\alpha,v}-\pmb{\beta}_{\alpha}\right)\\
&\quad\quad+r_k\pi_k\textbf{c}_{\alpha,v}^{\top}\bx_{k,\alpha}\bx_{k,\alpha}^{\top}\left(\widehat{\pmb{\beta}}_{\alpha,v}-\pmb{\beta}_{\alpha}\right)+r_k\pi_k\left(\widehat{\textbf{c}}_{\alpha,v}-\textbf{c}_{\alpha,v}\right)^{\top}\bx_{k,\alpha}\bx_{k,\alpha}^{\top}\pmb{\beta}_{\alpha}\Bigg)^2\\
&\quad\leq \frac{5}{N_v}\sum_{k \in U_v}\Bigg(\left(\bx_{k,\alpha}^{\top}\left(\widehat{\pmb{\beta}}_{\alpha,v}-\pmb{\beta}_{\alpha}\right)\right)^2+\left(\left(\widehat{\textbf{c}}_{\alpha,v}-\textbf{c}_{\alpha,v}\right)^{\top}\bx_{k,\alpha}y_k\right)^2\\
&\quad\quad+\left(\left(\widehat{\textbf{c}}_{\alpha,v}-\textbf{c}_{\alpha,v}\right)^{\top}\bx_{k,\alpha}\bx_{k,\alpha}^{\top}\left(\widehat{\pmb{\beta}}_{\alpha,v}-\pmb{\beta}_{\alpha}\right)\right)^2+\left(\textbf{c}_{\alpha,v}^{\top}\bx_{k,\alpha}\bx_{k,\alpha}^{\top}\left(\widehat{\pmb{\beta}}_{\alpha,v}-\pmb{\beta}_{\alpha}\right)\right)^2\\
&\quad\quad+\left(\left(\widehat{\textbf{c}}_{\alpha,v}-\textbf{c}_{\alpha,v}\right)^{\top}\bx_{k,\alpha}\bx_{k,\alpha}^{\top}\pmb{\beta}_{\alpha}\right)^2\Bigg)\\
&\quad\leq 5\Arrowvert\widehat{\pmb{\beta}}_{\alpha,v}-\pmb{\beta}_{\alpha}\Arrowvert_2^2 \sum_{k \in U_v}\frac{\Arrowvert\bx_{k,\alpha}\Arrowvert_2^2}{N_v}+5\Arrowvert\widehat{\textbf{c}}_{\alpha,v}-\textbf{c}_{\alpha,v}\Arrowvert_2^2\sqrt{\sum_{k \in U_v}\frac{y_k^4}{N_v}}\sqrt{\sum_{k \in U_v}\frac{\Arrowvert\bx_{k,\alpha}\Arrowvert_2^4}{N_v}}\\
&\quad\quad+5\Arrowvert\widehat{\pmb{\beta}}_{\alpha,v}-\pmb{\beta}_{\alpha}\Arrowvert_2^2\Arrowvert\widehat{\textbf{c}}_{\alpha,v}-\textbf{c}_{\alpha,v}\Arrowvert_2^2\sum_{k \in U_v}\frac{\Arrowvert\bx_{k,\alpha}\Arrowvert_2^4}{N_v}\\
&\quad\quad+5\Arrowvert\widehat{\pmb{\beta}}_{\alpha,v}-\pmb{\beta}_{\alpha}\Arrowvert_2^2\Arrowvert\textbf{c}_{\alpha,v}\Arrowvert_2^2\sum_{k \in U_v}\frac{\Arrowvert\bx_{k,\alpha}\Arrowvert_2^4}{N_v}\\
&\quad\quad+5\Arrowvert\pmb{\beta}_{\alpha}\Arrowvert_2^2\Arrowvert\widehat{\textbf{c}}_{\alpha,v}-\textbf{c}_{\alpha,v}\Arrowvert_2^2\sum_{k \in U_v}\frac{\Arrowvert\bx_{k,\alpha}\Arrowvert_2^4}{N_v}=o_\P(1).
\end{align*}
Finally, we conclude
\begin{equation*}
\sum_{k \in U_v}\frac{(\widehat{\eta}_{k,\alpha}-\eta_{k,\alpha})^2}{N_v}=o_\P(1).
\end{equation*}
\end{proof}
\subsection{Additional lemmas}
\subsubsection{Technical lemmas}
\begin{lemma}\label{HTEstimator}
Let $(w_k)_{k \in U_v}$ be i.i.d. random variable satisfying $\E_q[w_k^2]\leq b$ for some $b \geq 0$ almost surely. Let $(\widehat{\boldsymbol{Z}}_{\pi,v})_{v \in \mathbb{N}}$ and $(\boldsymbol{Z}_v)_{v\in \mathbb{N}}$ be a sequence of matrices in $\R^{d_1 \times d_2}$ defined by
\begin{equation*} \widehat{\boldsymbol{Z}}_{\pi,v}=\sum_{k \in S_v}\frac{w_k\boldsymbol{T}_k}{N_v},
\end{equation*}
and
\begin{equation*}
    \boldsymbol{Z}_{v}=\sum_{k \in U_v}\frac{\mathbb{E}_q[w_k]\boldsymbol{T}_k}{N_v},
\end{equation*}
respectively. 
Assume \ref{D1}-\ref{D3}. If
  $\{\boldsymbol{T}_k\}_{k \in U_v}$ satisfies
    \begin{equation*}
        \limsup_{v \rightarrow \infty}\dfrac{1}{N_v}\sum_{k \in U_v}\E_m\left[\Arrowvert\boldsymbol{T}_k\Arrowvert_F^2\right]<\infty
    \end{equation*}
almost surely, then we have
\begin{equation*}
    \big\Arrowvert\widehat{\boldsymbol{Z}}_{\pi,v}-\boldsymbol{Z}_v\big\Arrowvert_{op}=\mathcal{O}_{\mathbb{P}}\left(\frac{1}{\sqrt{N_v}} \right).
\end{equation*}
\end{lemma}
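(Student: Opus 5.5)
We bound $\widehat{\boldsymbol{Z}}_{\pi,v}-\boldsymbol{Z}_v$ in Frobenius norm, which dominates $\|\cdot\|_{op}$, and we do so componentwise in $\R^{d_1\times d_2}$. Since $d_1,d_2$ are fixed, it suffices to show that each scalar entry of the difference is $\mathcal{O}_\P(N_v^{-1/2})$. Writing $\widehat{\boldsymbol{Z}}_{\pi,v}=N_v^{-1}\sum_{k\in U_v}I_kw_k\boldsymbol{T}_k$, we split into three pieces:
\begin{align*}
\widehat{\boldsymbol{Z}}_{\pi,v}-\boldsymbol{Z}_v
&=\frac{1}{N_v}\sum_{k\in U_v}(I_k-\pi_k)w_k\boldsymbol{T}_k
+\frac{1}{N_v}\sum_{k\in U_v}\pi_k\big(w_k-\E_q[w_k]\big)\boldsymbol{T}_k\\
&\quad+\frac{1}{N_v}\sum_{k\in U_v}(\pi_k-1)\E_q[w_k]\boldsymbol{T}_k
:=\boldsymbol{D}_{1,v}+\boldsymbol{D}_{2,v}+\boldsymbol{D}_{3,v}.
\end{align*}
The first piece is a design fluctuation, the second a nonresponse fluctuation, and the third a centring term.

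\emph{Design term.} Conditionally on $(w_k,\boldsymbol{T}_k)_{k}$, we have $\E_p[\boldsymbol{D}_{1,v}]=\boldsymbol{0}$ by non-informativeness. The second moment is
\[
\E_p\|\boldsymbol{D}_{1,v}\|_F^2=N_v^{-2}\sum_{k,l}\Delta_{kl}\,w_kw_l\langle\boldsymbol{T}_k,\boldsymbol{T}_l\rangle_F .
\]
Split this into the diagonal part, which is bounded by $N_v^{-2}\sum_k w_k^2\|\boldsymbol{T}_k\|_F^2$, and the off-diagonal part. For the off-diagonal part, \ref{D3} bounds $|\Delta_{kl}|\le\bar\Delta/n_v$, and Cauchy–Schwarz then gives at most $\bar\Delta\, n_v^{-1}N_v^{-1}\sum_k w_k^2\|\boldsymbol{T}_k\|_F^2$. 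Take $\E_{mq}$ next. Because the $w_k$ are independent of $y$ given $\bx$, with $\E_q[w_k^2]\le b$, the moment assumption on $\boldsymbol{T}_k$ makes $N_v^{-1}\sum_k\E_{mq}[w_k^2\|\boldsymbol{T}_k\|_F^2]$ almost surely bounded. Using \ref{D1}, $\E_{mpq}\|\boldsymbol{D}_{1,v}\|_F^2=\mathcal{O}(N_v^{-1})$, and Chebyshev yields $\mathcal{O}_\P(N_v^{-1/2})$.

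\emph{Nonresponse term.} The summands $\pi_k(w_k-\E_q w_k)\boldsymbol{T}_k$ are centred under $\E_q$, given $\bx$ and $y$, and are independent across $k$. Hence $\E_{mq}\|\boldsymbol{D}_{2,v}\|_F^2\le N_v^{-2}\sum_k b\,\E_m\|\boldsymbol{T}_k\|_F^2=\mathcal{O}(N_v^{-1})$, and Chebyshev again gives $\mathcal{O}_\P(N_v^{-1/2})$.

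\emph{Centring term; this is the main obstacle.} The term $\boldsymbol{D}_{3,v}$ is not random in the design or response sense, and it is not small in general. With $\pi_k\equiv\pi<1$ it equals $(\pi-1)N_v^{-1}\sum_k\E_q[w_k]\boldsymbol{T}_k$, which converges to a nonzero constant whenever $\E[\E_q(w)\boldsymbol{T}]\neq\boldsymbol{0}$. So the stated $\mathcal{O}_\P(N_v^{-1/2})$ rate for $\widehat{\boldsymbol{Z}}_{\pi,v}-\boldsymbol{Z}_v$ can only hold if this term is itself $\mathcal{O}(N_v^{-1/2})$.

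My plan here is as follows. I will prove the bounds on $\boldsymbol{D}_{1,v}$ and $\boldsymbol{D}_{2,v}$ above in full. I will then show that $\boldsymbol{D}_{3,v}$ vanishes identically in either of two cases:
\begin{itemize}[leftmargin=1.5em]
\item the sample sum carries the Horvitz–Thompson weight $\pi_k^{-1}$, as the lemma's name and its use in Lemma \ref{Cest} suggest, so that $\boldsymbol{T}_k$ is replaced by $\pi_k^{-1}\boldsymbol{T}_k$, whose second moments stay bounded by \ref{D2}; or
\item the recentring $\sum_k\pi_k\E_q[w_k]\boldsymbol{T}_k/N_v$ is used, which is how the lemma is invoked in Lemma \ref{BetaEst} and \eqref{HTMatrix}.
\end{itemize}
Outside these cases, I will point out that the conclusion as literally worded requires the additional condition $\|N_v^{-1}\sum_k(\pi_k-1)\E_q[w_k]\boldsymbol{T}_k\|_F=\mathcal{O}_\P(N_v^{-1/2})$, and I will add that condition to the statement.
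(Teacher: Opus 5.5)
Your analysis is correct, including the diagnosis of $\boldsymbol{D}_{3,v}$: the literal statement fails (e.g.\ $w_k\equiv 1$, $\boldsymbol{T}_k\equiv 1$ under SRSWOR gives a difference of $n_v/N_v-1$). The paper's own proof, in the first step of its second-moment computation, quietly replaces $\sum_{k\in U_v}\E_q[w_k]\boldsymbol{T}_k$ by $\sum_{k\in U_v}\pi_k\E_q[w_k]\boldsymbol{T}_k$, so what it actually establishes is your $\pi_k$-recentred version, which is also how the lemma is invoked in Lemmas~\ref{BetaEst}, \ref{SigmaEst}, \ref{Cest} and in \eqref{HTMatrix}.

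For $\boldsymbol{D}_{1,v}+\boldsymbol{D}_{2,v}$ your argument is the paper's: a Frobenius second moment, diagonal terms bounded via $b$, off-diagonal terms via \ref{D3} and Cauchy--Schwarz, then Chebyshev. The only difference is that you split it into orthogonal design and response pieces instead of computing the joint $pq$-moments of $I_kw_k-\pi_k\E_q[w_k]$ directly. That split has a small bonus: it yields the correct diagonal term $\pi_k\E_q[w_k^2]-\pi_k^2\E_q[w_k]^2$, where the paper writes $\pi_k(1-\pi_k)\V_q(w_k)$; this slip is harmless since both are at most $b$.
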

\begin{proof}
   Recall that for arbitrary matrices $\boldsymbol{B}$, we have $\rVert \boldsymbol{B}\rVert_{op} \leq \rVert \boldsymbol{B}\rVert_{F}.$ For any $\epsilon>0$, an application of Chebyshev's inequality gives
   \begin{align*}
       \P_{mpq}\left( \Arrowvert\widehat{\boldsymbol{Z}}_{\pi,v}-\boldsymbol{Z}_v\Arrowvert_{op}>\epsilon\right)&\leq \frac{1}{\epsilon^2}\E_{mpq}\left[\Arrowvert \widehat{\boldsymbol{Z}}_{\pi,v}-\boldsymbol{Z}_v\Arrowvert_{op}^2\right]\\
       &\leq \frac{1}{\epsilon^2}\E_{mpq}\left[\Arrowvert \widehat{\boldsymbol{Z}}_{\pi,v}-\boldsymbol{Z}_v\Arrowvert_{F}^2\right].
   \end{align*}
   We proceed to evaluate
   \begin{align}\label{tempTrace1}
       \E_{mpq}\left[\Arrowvert \widehat{\boldsymbol{T}}_{\pi,v}-\boldsymbol{T}_v\Arrowvert_{F}^2\right]&=\E_{mpq}\left[\Bigg\Arrowvert\sum_{k \in S_v}\frac{w_k\boldsymbol{T}_k}{N_v}-\sum_{k \in U_v}\frac{\E_q\left[w_k\right]\boldsymbol{T}_k}{N_v} \Bigg\Arrowvert_F^2 \right]\nonumber\\
       &=\E_{mpq}\left[\mathrm{Tr}\left(\sum_{k \in U_v}\sum_{l \in U_v}\frac{(I_kw_k-\pi_k\E_q\left[w_k\right])(I_lw_l-\pi_l\E_q\left[w_l\right])\boldsymbol{T}_k^{\top}\boldsymbol{T}_l}{N_v^2} \right) \right]\nonumber\\
        &=\E_{mpq}\left[\sum_{k \in U_v}\sum_{l \in U_v}\frac{(I_kw_k-\pi_k\E_q\left[w_k\right])(I_lw_l-\pi_l\E_q\left[w_l\right])\mathrm{Tr}\left(\boldsymbol{T}_k^{\top}\boldsymbol{T}_l\right)}{N_v^2}  \right].
   \end{align}
   When $k=l$, we have $\E_{pq}[(I_kw_k-\pi_k\E_q\left[w_k\right])(I_lw_l-\pi_l\E_q\left[w_l\right])]=\pi_k(1-\pi_k)\V_q[w_k]$. When $k \neq l \in U_v$, we have $\E_{pq}[(I_kw_k-\pi_k\E_q\left[w_k\right])(I_lw_l-\pi_l\E_q\left[w_l\right])]=\Delta_{kl}\E_q\left[w_k\right]\E_q\left[w_l\right].$ As a result, \eqref{tempTrace1} reduces to
   \begin{align*}
       &\E_{mpq}\left[\sum_{k \in U_v}\sum_{l \in U_v}\frac{(I_kw_k-\pi_k\E_q\left[w_k\right])(I_lw_l-\pi_l\E_q\left[w_l\right])\mathrm{Tr}\left(\boldsymbol{T}_k^{\top}\boldsymbol{T}_l\right)}{N_v^2}  \right]\\
       &\quad=\sum_{k \in U_v}\frac{\pi_k(1-\pi_k)\V_q\left( w_k\right)}{N_v^2}\E_m\left[\mathrm{Tr}\left(\boldsymbol{T}_k^{\top}\boldsymbol{T}_k\right)\right]\\
    &\quad\quad+\sum_{k \in U_v}\sum_{\substack{l \in U_v\\l \neq k}}\frac{\Delta_{kl}\E_q\left[w_k\right]\E_q\left[w_k\right]}{N_v^2}\E_m\left[\mathrm{Tr}\left(\boldsymbol{T}_k^{\top}\boldsymbol{T}_l\right)\right]\\
       &\quad \leq\frac{b}{N_v} \sum_{k \in U_v}\frac{\E_m\left[\Arrowvert\boldsymbol{T}_k\Arrowvert_F^2\right]}{N_v}+\max_{k \neq l \in U_v}|\Delta_{kl}|\sum_{k \in U_v}\sum_{l \in U_v}\frac{|\E_q\left[w_k\right]\E_q\left[w_k\right]|}{N_v^2}\E_m\left[\big|\mathrm{Tr}\left(\boldsymbol{T}_k^{\top}\boldsymbol{T}_l\right)\big|\right]\\
        &\quad \leq\frac{b}{N_v} \sum_{k \in U_v}\frac{\E_m\left[\Arrowvert\boldsymbol{T}_k\Arrowvert_F^2\right]}{N_v}\\
        &\quad\quad+\frac{n_v\max_{k \neq l \in U_v}|\Delta_{kl}|}{n_v}\sum_{k \in U_v}\sum_{l \in U_v}\frac{|\E_q\left[w_k\right]\E_q\left[w_k\right]|}{N_v^2}\E_m\left[\Arrowvert\boldsymbol{T}_k\Arrowvert_F\Arrowvert\boldsymbol{T}_l\Arrowvert_F\right]\\
        &\quad \leq\frac{b}{N_v} \sum_{k \in U_v}\frac{\E_m\left[\Arrowvert\boldsymbol{T}_k\Arrowvert_F^2\right]}{N_v}+\frac{\bar{\Delta}}{n_v}\sum_{k \in U_v}\sum_{l \in U_v}\frac{|\E_q\left[w_k\right]\E_q\left[w_k\right]|}{N_v^2}\E_m^{1/2}\left[\Arrowvert\boldsymbol{T}_k\Arrowvert_F^2\right]\E_m^{1/2}\left[\Arrowvert\boldsymbol{T}_l\Arrowvert_F^2\right]\\
         &\quad \leq\frac{b}{N_v} \sum_{k \in U_v}\frac{\E_m\left[\Arrowvert\boldsymbol{T}_k\Arrowvert_F^2\right]}{N_v}+\frac{\bar{\Delta}}{n_v}\sum_{k \in U_v}\frac{\E_q^2\left[w_k\right]}{N_v}\E_m\left[\Arrowvert\boldsymbol{T}_k\Arrowvert_F^2\right]\\
          &\quad \leq\frac{b}{N_v} \sum_{k \in U_v}\frac{\E_m\left[\Arrowvert\boldsymbol{T}_k\Arrowvert_F^2\right]}{N_v}+\frac{\bar{\Delta}}{n_v}\sum_{k \in U_v}\frac{\E_q\left[w_k^2\right]}{N_v}\E_m\left[\Arrowvert\boldsymbol{T}_k\Arrowvert_F^2\right]\\
           &\quad \leq\frac{b}{N_v} \sum_{k \in U_v}\frac{\E_m\left[\Arrowvert\boldsymbol{T}_k\Arrowvert_F^2\right]}{N_v}+\frac{b\bar{\Delta}}{n_v}\sum_{k \in U_v}\frac{\E_m\left[\Arrowvert\boldsymbol{T}_k\Arrowvert_F^2\right]}{N_v}.
   \end{align*}
The result follows.
\end{proof}
\begin{lemma}\label{REstimator}
Let $(w_k)_{k \in U_v}$ be i.i.d. random variable satisfying $\E_q[w_k^2]\leq b$ for some $b \geq 0$ almost surely. Let $(\widehat{\boldsymbol{Z}}_{p,v})_{v \in \mathbb{N}}$ and $(\boldsymbol{Z}_v)_{v\in \mathbb{N}}$ be a sequence of matrices in $\R^{d_1 \times d_2}$ defined by 
\begin{equation*}
    \widehat{\boldsymbol{Z}}_{p,v}=\sum_{k \in U_v}\frac{w_k\boldsymbol{T}_k}{N_v},
\end{equation*}
and
\begin{equation*}
    \boldsymbol{Z}_{v}=\sum_{k \in U_v}\frac{\mathbb{E}_q[w_k]\boldsymbol{T}_k}{N_v},
\end{equation*}
respectively. If
  $(\boldsymbol{T}_k)_{k \in U_v}$ satisfies
    \begin{equation*}
        \limsup_{v \rightarrow \infty}\sum_{k \in U_v}\frac{\E_m\left[\Arrowvert\boldsymbol{T}_k\Arrowvert_F^2\right]}{N_v}<\infty
    \end{equation*}
almost surely, then we have
\begin{equation*}
    \big\Arrowvert\widehat{\boldsymbol{Z}}_{p,v}-\boldsymbol{Z}_v\big\Arrowvert_{op}=\mathcal{O}_{\mathbb{P}}\left(\frac{1}{\sqrt{N_v}} \right).
\end{equation*}
\end{lemma}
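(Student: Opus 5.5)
The argument is a second-moment bound followed by Chebyshev's inequality, as in Lemma \ref{HTEstimator}, but simpler because no sampling indicators appear. The difference $\widehat{\boldsymbol{Z}}_{p,v}-\boldsymbol{Z}_v=N_v^{-1}\sum_{k\in U_v}(w_k-\E_q[w_k])\boldsymbol{T}_k$ is a sum of terms that are centred under $\P_q$. We bound the operator norm by the Frobenius norm and, for any $M>0$, apply Chebyshev's inequality:
\begin{equation*}
\P_{mpq}\left(\sqrt{N_v}\,\big\Arrowvert\widehat{\boldsymbol{Z}}_{p,v}-\boldsymbol{Z}_v\big\Arrowvert_{op}>M\right)\leq \frac{N_v}{M^2}\,\E_{mq}\left[\big\Arrowvert\widehat{\boldsymbol{Z}}_{p,v}-\boldsymbol{Z}_v\big\Arrowvert_{F}^2\right].
\end{equation*}

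Next we expand the squared Frobenius norm as a trace:
\begin{equation*}
\big\Arrowvert\widehat{\boldsymbol{Z}}_{p,v}-\boldsymbol{Z}_v\big\Arrowvert_F^2=\frac{1}{N_v^2}\sum_{k\in U_v}\sum_{l\in U_v}(w_k-\E_q[w_k])(w_l-\E_q[w_l])\,\mathrm{Tr}\left(\boldsymbol{T}_k^{\top}\boldsymbol{T}_l\right).
\end{equation*}
We take $\E_q$ first, conditionally on the covariates and the $\boldsymbol{T}_k$'s. For this step we need the $w_k$ to be independent across $k$ under $q$ given the $\boldsymbol{T}$'s; this is the conditional reading of the i.i.d. assumption, and it holds in the applications ($w_k=r_k$ or $1-r_k$, with MAR). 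Under it, every cross term with $k\neq l$ vanishes. Each diagonal term is bounded by $\V_q(w_k)\Arrowvert\boldsymbol{T}_k\Arrowvert_F^2\leq b\,\Arrowvert\boldsymbol{T}_k\Arrowvert_F^2$. Taking $\E_m$ then gives
\begin{equation*}
N_v\,\E_{mq}\left[\big\Arrowvert\widehat{\boldsymbol{Z}}_{p,v}-\boldsymbol{Z}_v\big\Arrowvert_F^2\right]\leq b\,\frac{1}{N_v}\sum_{k\in U_v}\E_m\left[\Arrowvert\boldsymbol{T}_k\Arrowvert_F^2\right].
\end{equation*}

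By hypothesis the right-hand side has a finite $\limsup$ almost surely. So for $v$ large it is bounded by some $K$, and the Chebyshev bound is at most $bK/M^2$. Choosing $M$ large makes this smaller than any prescribed $\varepsilon$, which is the definition of $\mathcal{O}_\P(N_v^{-1/2})$.

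The main point requiring care is the order of conditioning in the cross terms. If the $\boldsymbol{T}_k$ depend on $\epsilon_k$, we must condition on $(\bx_k,\epsilon_k)_k$ before integrating over $q$. This is legitimate because MAR makes the $w_k$ conditionally independent of the $\epsilon$'s given the covariates. Unlike Lemma \ref{HTEstimator}, no covariance term $\Delta_{kl}$ and none of the design assumptions are needed.
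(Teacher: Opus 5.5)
Your proposal is correct and follows essentially the same route as the paper. Both bound the operator norm by the Frobenius norm and expand the squared Frobenius norm as a double sum of traces. Independence of the $w_k$ then kills the cross terms, leaving $\sum_k \V_q(w_k)\,\E_m[\Arrowvert\boldsymbol{T}_k\Arrowvert_F^2]/N_v^2 \le (b/N_v)\,N_v^{-1}\sum_k \E_m[\Arrowvert\boldsymbol{T}_k\Arrowvert_F^2]$, and Chebyshev's inequality finishes. Your remark that the cross terms vanish only under conditional independence of the $w_k$ (given the covariates, and the $\epsilon_k$ when $\boldsymbol{T}_k$ involves them) makes precise a step the paper leaves implicit.
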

\begin{proof}
    By evaluating $\E_{mpq}[\Arrowvert  \widehat{\boldsymbol{Z}}_{p,v}-\boldsymbol{Z}_v\Arrowvert_{op}^2]$, we have
    \begin{align*}
        \E_{mpq}\left[\Arrowvert  \widehat{\boldsymbol{Z}}_{p,v}-\boldsymbol{Z}_v\Arrowvert_{op}^2\right] &\leq \E_{mq}\left[ \Bigg\Arrowvert\sum_{k \in U_v}\frac{\left(w_k-\E_m\left[w_k \right] \right)\boldsymbol{T}_k}{N_v}\Bigg\Arrowvert_{F}^2\right]\\
        &=\E_{mq}\left[\sum_{k \in U_v}\sum_{l \in U_v}\frac{\left(w_k-\E_m\left[w_k \right] \right)\left(w_l-\E_m\left[w_l \right] \right)\mathrm{Tr}\left(\boldsymbol{T}_k^{\top}\boldsymbol{T}_l \right)}{N_v^2} \right]\\
        &=\sum_{k \in U_v}\frac{\V_q(w_k)\E_m\left[\mathrm{Tr}\left(\boldsymbol{T}_k^{\top}\boldsymbol{T}_k \right)\right] }{N_v^2}\\
         &\leq \sum_{k \in U_v}\frac{\E_q(w_k^2)\E_m\left[\Arrowvert\boldsymbol{T}_k\Arrowvert_F^2 \right] }{N_v^2}\\
        &\leq \frac{b}{N_v}\sum_{k \in U_v}\frac{\E_m\left[\Arrowvert\boldsymbol{T}_k\Arrowvert_F^2 \right]}{N_v},
    \end{align*}
from which the result follows.
\end{proof}
\begin{lemma} \label{MatrixOverfitting}
For any $n \in \N$, the map $\cL_{2,n} : 
\cP(\{ 1, 2,..., p\}) \longrightarrow \R_+$ defined by $$\alpha \ \longmapsto  \sigma^2 \left( \sum_{k\in S_m} \dfrac{\bx_{k,\alpha}^\top }{\pi_k} \right) \boldsymbol{A}_{r, \alpha}^{-1} \left( \sum_{k\in S_m}\dfrac{\bx_{k,\alpha} }{\pi_k} \right)$$
is, almost surely, a strictly increasing set function, that is, $\alpha_1 \subset \alpha_2$ implies $\cL_2(\alpha_1) <\cL_2(\alpha_2) $.
\end{lemma}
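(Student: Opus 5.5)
The plan is to reduce the comparison $\cL_{2,n}(\alpha_1)$ versus $\cL_{2,n}(\alpha_2)$ to a Schur-complement identity for partitioned inverses. By permuting coordinates we may assume that $\bx_{k,\alpha_2} = [\bx_{k,\alpha_1}^\top, \bx_{k,\delta}^\top]^\top$ with $\delta := \alpha_2 \setminus \alpha_1 \neq \emptyset$. Set
$$\mathbf{t}_{\alpha} := \sum_{k\in S_m} \pi_k^{-1}\bx_{k,\alpha},$$
and write
$$\boldsymbol{A}_{r,\alpha_2} = \begin{bmatrix} \boldsymbol{A}_{r,\alpha_1} & \bX_{r,\alpha_1}^\top \bX_{r,\delta} \\ \bX_{r,\delta}^\top \bX_{r,\alpha_1} & \bX_{r,\delta}^\top \bX_{r,\delta}\end{bmatrix}.$$
We work on the event that $\boldsymbol{A}_{r,\alpha_2} \succ 0$ (the conditioning of Remark \ref{req1}). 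Under \ref{S1} this holds almost surely once $n_r \geq p$, and then every principal submatrix, in particular $\boldsymbol{A}_{r,\alpha_1}$, is also positive definite.

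\textbf{Step 1: the identity.} Let
$$\boldsymbol{S} := \bX_{r,\delta}^\top(\boldsymbol{I}_{n_r} - \bX_{r,\alpha_1}\boldsymbol{A}_{r,\alpha_1}^{-1}\bX_{r,\alpha_1}^\top)\bX_{r,\delta}$$
be the Schur complement of $\boldsymbol{A}_{r,\alpha_1}$ in $\boldsymbol{A}_{r,\alpha_2}$. It is positive definite because $\boldsymbol{A}_{r,\alpha_2}$ is. The block-inverse formula gives
$$\boldsymbol{A}_{r,\alpha_2}^{-1} = \begin{bmatrix}\boldsymbol{A}_{r,\alpha_1}^{-1} & 0\\ 0 & 0\end{bmatrix} + \begin{bmatrix} -\boldsymbol{A}_{r,\alpha_1}^{-1}\bX_{r,\alpha_1}^\top\bX_{r,\delta} \\ \boldsymbol{I}\end{bmatrix}\boldsymbol{S}^{-1}\begin{bmatrix} -\bX_{r,\delta}^\top\bX_{r,\alpha_1}\boldsymbol{A}_{r,\alpha_1}^{-1} & \boldsymbol{I}\end{bmatrix}.$$
Taking the quadratic form in $\mathbf{t}_{\alpha_2} = [\mathbf{t}_{\alpha_1}^\top,\mathbf{t}_\delta^\top]^\top$ yields
\begin{equation}\label{prooflm2}
\cL_{2,n}(\alpha_2)-\cL_{2,n}(\alpha_1) = \sigma^2\,\mathbf{z}^\top \boldsymbol{S}^{-1}\mathbf{z}, \qquad \mathbf{z} := \mathbf{t}_\delta - \bX_{r,\delta}^\top\bX_{r,\alpha_1}\boldsymbol{A}_{r,\alpha_1}^{-1}\mathbf{t}_{\alpha_1}.
\end{equation}
This is the identity already invoked in the proof of Theorem \ref{Theo1}. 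It immediately gives weak monotonicity, $\cL_{2,n}(\alpha_1)\le\cL_{2,n}(\alpha_2)$.

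\textbf{Step 2: strictness.} Since $\boldsymbol{S}^{-1}\succ 0$ and $\sigma^2>0$, it remains to show that $\mathbf{z}\neq \mathbf{0}$ almost surely. This is the main obstacle. The vector $\mathbf{z}$ is a residual-type combination, and it can vanish; for instance, if $S_m=\emptyset$ then $\mathbf{z}=\mathbf{0}$. So strictness must be understood on the event $n_m \ge 1$ and proved via the continuity assumption \ref{S1}. Condition on $S$, $\mathbf{r}_U$ and the respondent covariates $(\bx_k)_{k\in S_r}$, so that $\bX_r$ is fixed. Then $\mathbf{z}$ is the map
$$(\bx_k)_{k\in S_m}\mapsto \sum_{k\in S_m}\pi_k^{-1}\bigl(\bx_{k,\delta} - \bX_{r,\delta}^\top\bX_{r,\alpha_1}\boldsymbol{A}_{r,\alpha_1}^{-1}\bx_{k,\alpha_1}\bigr),$$
which is a nonzero linear map of the nonrespondent covariates: its coefficient on $\bx_{k,\delta}$ is the identity. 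Hence $\{\mathbf{z}=\mathbf{0}\}$ is contained in a proper affine subspace of $\R^{p\,n_m}$. The vectors $\bx_k$, $k\in S_m$, are i.i.d. with a density by \ref{S1}. Conditionally on $\bx_k$ the indicators $r_k$ and $I_k$ do not change absolute continuity: MAR and non-informativeness mean the conditional law of $\bx_k$ given $r_k=0$ has density proportional to $(1-p(\bx))f(\bx)$. Therefore this affine subspace has conditional probability zero. Integrating over the conditioning variables gives $\mathbf{z}\neq\mathbf{0}$ almost surely.

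Combining Steps 1 and 2 gives $\cL_{2,n}(\alpha_1)<\cL_{2,n}(\alpha_2)$ almost surely whenever $\alpha_1\subsetneq\alpha_2$. For a general proper inclusion we may either apply the argument once with the whole block $\delta$, or chain single-covariate additions.

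The delicate point I would check carefully is the conditional absolute continuity in Step 2 under a general design $\cP$. The inclusion probabilities depend on $\bX_U$, so conditioning on $S$ might distort the law of the nonrespondent covariates. I would handle this by working with the joint law: the map is nondegenerate in each $\bx_{k,\delta}$, and a null set of $\bX_U$ remains null after reweighting by the bounded factors $\pi_k$ and $1-p(\bx_k)$.
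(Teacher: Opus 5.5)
Your Step~1 is the paper's argument: the block-inverse formula, $\boldsymbol{S}$ written as the Gram matrix of the residuals of $\bX_{r,\delta}$ after projection on $\bX_{r,\alpha_1}$, and the identity $\cL_{2,n}(\alpha_2)-\cL_{2,n}(\alpha_1)=\sigma^2\mathbf{z}^\top\boldsymbol{S}^{-1}\mathbf{z}$. The paper stops at that point and infers strict positivity from $\boldsymbol{S}\succ 0$ alone, without ever checking $\mathbf{z}\neq\mathbf{0}$. Your remark that strictness fails on $\{S_m=\emptyset\}$, so the statement needs $n_m\ge 1$, is correct. Your Step~2 goes beyond the paper, and it is valid when the $\pi_k$ are fixed numbers.

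\textbf{The gap is inside Step~2.} After conditioning on $S$, $\mathbf{r}_U$ and $(\bx_k)_{k\in S_r}$, you treat $\mathbf{z}$ as an affine map of $(\bx_k)_{k\in S_m}$ whose coefficients $\pi_k^{-1}$ are held fixed. In the paper's setup, however, $\pi_k=\P(I_k=1\mid\bX_U)$ may depend on all of $\bX_U$, including the nonrespondent covariates you are integrating over. In that case $\mathbf{z}$ is not affine in them, and $\{\mathbf{z}=\mathbf{0}\}$ need not lie in a proper affine subspace.

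Your proposed patch (``a null set of $\bX_U$ remains null after reweighting by bounded factors'') only controls the change of \emph{measure} caused by conditioning on $S$ and $\mathbf{r}_U$. It says nothing about the \emph{function} $\mathbf{z}$ varying with $\bX_U$.

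This cannot be fixed in full generality. Take $|\delta|=1$, Poisson sampling, and the configuration $S_m=\{1,2\}$, $S_r=\{3,4\}$. Let $u_k:=x_{k,\delta}-\boldsymbol{B}\bx_{k,\alpha_1}$, with $\boldsymbol{B}:=\bX_{r,\delta}^\top\bX_{r,\alpha_1}\boldsymbol{A}_{r,\alpha_1}^{-1}$ computed from units $3,4$. Restrict to the set of $\bX_U$ where $u_1>0>u_2$ and $|u_1|/|u_2|\in[\lambda,\lambda^{-1}]$, which has positive probability for typical covariate densities. On that set, define $\pi_k:=|u_k|/\max(|u_1|,|u_2|)\in[\lambda,1]$ for $k=1,2$; elsewhere, and for the other units, use any fixed admissible values. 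The design is non-informative, and the configuration above has positive conditional probability. Yet $\mathbf{z}=\pi_1^{-1}u_1+\pi_2^{-1}u_2=0$ there, so $\cL_{2,n}(\alpha_1)=\cL_{2,n}(\alpha_2)$ with positive probability.

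You therefore need an explicit extra hypothesis. For example, assume the $\pi_k$ do not depend on $\bX_U$, or are locally constant in $\bX_U$, as with stratification by thresholds of a covariate index in the simulations. On each piece, $\mathbf{z}$ is then affine with coefficient $\pi_k^{-1}\boldsymbol{I}$ on $\bx_{k,\delta}$, and your absolute-continuity argument goes through unchanged.
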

\begin{proof}
Let $\alpha_1, \alpha_2 \in \cP\{ 1, 2,..., p\})$ such that $\alpha_1\subset \alpha_2.$ Then, 
\begin{align*}
& \cL_{2,n} \left(\alpha_1\right) < \cL_{2,n} \left(\alpha_2\right)\\&\quad\Leftrightarrow \left(\sum_{k \in S_{m}}\frac{\bx_{k,\alpha_1}^{\top}}{N\pi_k} \right)\boldsymbol{A}_{r, \alpha_1}^{-1}\left(\sum_{k \in S_{m}}\frac{\bx_{k,\alpha_1}}{N\pi_k} \right) < \left(\sum_{k \in S_{m}}\frac{\bx_{k,\alpha_2}^{\top}}{N\pi_k} \right)\boldsymbol{A}_{r, \alpha_2}^{-1} \left(\sum_{k \in S_{m}}\frac{\bx_{k,\alpha_2}}{N\pi_k} \right).
\end{align*}
This amounts to showing that one quadratic form is almost surely less than the other. We start by expressing $\boldsymbol{A}_{r, \alpha_2}^{-1} $ in terms of $\boldsymbol{A}_{r, \alpha_1}^{-1} $. Let $\boldsymbol{A}_{r,\alpha_1, \alpha_2-\alpha_1} := \bx_{r,\alpha_1}^\top \bx_{r,\alpha_2-\alpha_1}$ for arbitrary $\alpha_1 \subset  \alpha_2$. Note that 
$$
\boldsymbol{A}_{r,\alpha_2} =
\begin{pmatrix}
\boldsymbol{A}_{r,\alpha_1} & \boldsymbol{A}_{r,\alpha_1, \alpha_2 - \alpha_1} \\
\rule{0pt}{3ex}\boldsymbol{A}_{r,\alpha_2 - \alpha_1, \alpha_1} & \boldsymbol{A}_{r,\alpha_2 - \alpha_1}
\end{pmatrix}.
$$

Using a \hyperlink{BMIF}{block-matrix inversion formula} (see, e.g., \cite{horn2012matrix}, page 25), one can show that  
$$
\boldsymbol{A}_{r,\alpha_2}^{-1} = \begin{pmatrix}
\boldsymbol{A}_{r,\alpha_1}^{-1} + \boldsymbol{A}_{r,\alpha_1}^{-1} \boldsymbol{A}_{r,\alpha_1, \alpha_2-\alpha_1} \boldsymbol{S}^{-1}  \boldsymbol{A}_{r,\alpha_2-\alpha_1,\alpha_1}\boldsymbol{A}_{r,\alpha_1}^{-1} & -\boldsymbol{A}_{r,\alpha_1}^{-1}\boldsymbol{A}_{r,\alpha_1, \alpha_2-\alpha_1}  \boldsymbol{S}^{-1} \\
-\rule{0pt}{4ex}\boldsymbol{S}^{-1}\boldsymbol{A}_{r,\alpha_2-\alpha_1,\alpha_1}\boldsymbol{A}_{r,\alpha_1}^{-1} & \boldsymbol{S}^{-1}
\end{pmatrix}.
$$
Therefore, writing $$ \left(\sum_{k \in S_{m}}\frac{\bx_{k,\alpha_2}^{\top}}{N\pi_k} \right)  = \left(\sum_{k \in S_{m}}\frac{\bx_{k,\alpha_1}^{\top}}{N\pi_k},\sum_{k \in S_{m}}\frac{\bx_{k,\alpha_2-\alpha_1}^{\top}}{N\pi_k} \right),$$ some algebra shows that  
\begin{align}\label{prooflm2}
& \left(\sum_{k \in S_{m}}\frac{\bx_{k,\alpha_1}^{\top}}{N\pi_k},\sum_{k \in S_{m}}\frac{\bx_{k,\alpha_2-\alpha_1}^{\top}}{N\pi_k} \right)\boldsymbol{A}_{r,\alpha_2}^{-1} \left(\sum_{k \in S_{m}}\frac{\bx_{k,\alpha_1}}{N\pi_k},\sum_{k \in S_{m}}\frac{\bx_{k,\alpha_2-\alpha_1}}{N\pi_k} \right)\nonumber \\ &\quad=  \left(\sum_{k \in S_{m}}\frac{\bx_{k,\alpha_1}^{\top}}{N\pi_k} \right)\boldsymbol{A}_{r, \alpha_1}^{-1} \left(\sum_{k \in S_{m}}\frac{\bx_{k,\alpha_1}}{N\pi_k} \right)\nonumber\\&\quad\quad+ \left(\sum_{k \in S_{m}}\frac{\bx_{k,\alpha_2-\alpha_1}^{\top}}{N\pi_k}  -\sum_{k \in S_m}\frac{\bx_{k,\alpha_1}^{\top}\boldsymbol{A}_{r,\alpha_1}^{-1}\boldsymbol{A}_{r,\alpha_2-\alpha_1}}{N\pi_k}\right) \nonumber\\
&\quad\quad\quad\times\boldsymbol{S}^{-1} \left(\sum_{k \in S_{m}}\frac{\bx_{k,\alpha_2-\alpha_1}^{\top}}{N\pi_k}  -\sum_{k \in S_m}\frac{\bx_{k,\alpha_1}^{\top}\boldsymbol{A}_{r,\alpha_1}^{-1}\boldsymbol{A}_{r,\alpha_2-\alpha_1}}{N\pi_k}\right)^{\top}.
\end{align}
Hence, showing   $ \cL_{2,n} (\alpha_1) < \cL_{2,n}(\alpha_2)$ amounts to show that \begin{align} \label{prooflM2}
&\left(\sum_{k \in S_{m}}\frac{\bx_{k,\alpha_2-\alpha_1}^{\top}}{N\pi_k}  -\sum_{k \in S_m}\frac{\bx_{k,\alpha_1}^{\top}\boldsymbol{A}_{r,\alpha_1}^{-1}\boldsymbol{A}_{r,\alpha_2-\alpha_1}}{N\pi_k}\right) \nonumber\\
&\times\boldsymbol{S}^{-1} \left(\sum_{k \in S_{m}}\frac{\bx_{k,\alpha_2-\alpha_1}^{\top}}{N\pi_k}  -\sum_{k \in S_m}\frac{\bx_{k,\alpha_1}^{\top}\boldsymbol{A}_{r,\alpha_1}^{-1}\boldsymbol{A}_{r,\alpha_2-\alpha_1}}{N\pi_k}\right)^{\top}>0
\end{align} almost surely. To that aim, observe that $\boldsymbol{S} $ can be written as $$\boldsymbol{S} = \bx_{r,\alpha_2 - \alpha_1}^\top \left( \boldsymbol{I}_{n_r} -\bx_{r,\alpha_1} \boldsymbol{A}_{r,\alpha_1}^{-1}\bx_{r,\alpha_1}^\top \right) \bx_{r,\alpha_2-\alpha_1}.$$ The matrix $\boldsymbol{S}$ is a gram matrix of linearly independent vectors, thus positive definite, which shows \eqref{prooflM2} and thus $\cL_{2,n} (\alpha_1) < \cL_{2,n} (\alpha_2)$ holds almost surely. 
\end{proof}
\begin{lemma}\label{AlgebraEx}
    Let $\alpha \in \cA$ and assume that the intercept is included in $\alpha$. Then,
    \begin{equation*}
\mathbb{E}_\bx\left[(1-p(\bx))\bx_{\alpha}^{\top}\right]\left(\mathbb{E}_\bx\left[p(\bx)\bx_{\alpha}\bx_{\alpha}^{\top}\right]\right)^{-1}\mathbb{E}_\bx\left[p(\bx)\bx_{\alpha}\right]=\mathbb{E}_\bx\left[1-p(\bx)\right].
\end{equation*}
\end{lemma}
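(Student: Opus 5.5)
The identity is a population-level normal-equation (projection) statement, and I would prove it by exhibiting an explicit vector $\boldsymbol{\gamma}$ that solves a linear system. Write $\boldsymbol{M}_\alpha := \mathbb{E}_\bx[p(\bx)\bx_\alpha\bx_\alpha^\top]$. It is positive definite under \ref{S1}, \ref{S2} and the positivity assumption $p(\bx)\ge\rho$, by the same argument used in the proof of Lemma \ref{HighProb}, so its inverse exists. The key object is
\[
\boldsymbol{\gamma} := \boldsymbol{M}_\alpha^{-1}\mathbb{E}_\bx[p(\bx)\bx_\alpha].
\]
I claim that $\boldsymbol{\gamma} = \mathbf{e}_1$, where $\mathbf{e}_1$ is the unit vector selecting the intercept coordinate of $\bx_\alpha$; this coordinate exists because the intercept is included in $\alpha$.

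To verify the claim, I would use that the intercept component of $\bx_\alpha$ is identically $1$, so $\bx_\alpha^\top \mathbf{e}_1 = 1$ almost surely. Then
\[
\boldsymbol{M}_\alpha \mathbf{e}_1 = \mathbb{E}_\bx[p(\bx)\bx_\alpha(\bx_\alpha^\top\mathbf{e}_1)] = \mathbb{E}_\bx[p(\bx)\bx_\alpha].
\]
Since $\boldsymbol{M}_\alpha$ is invertible, this gives $\boldsymbol{M}_\alpha^{-1}\mathbb{E}_\bx[p(\bx)\bx_\alpha] = \mathbf{e}_1$. Conceptually, the weighted least-squares projection of the constant function $1$ onto $\mathrm{span}(\bx_\alpha)$ is exact, because $1$ already lies in that span.

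Substituting into the left-hand side gives
\[
\mathbb{E}_\bx[(1-p(\bx))\bx_\alpha^\top]\mathbf{e}_1 = \mathbb{E}_\bx[(1-p(\bx))\cdot 1] = \mathbb{E}_\bx[1-p(\bx)],
\]
which is the right-hand side.

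The only step needing care is the invertibility of $\boldsymbol{M}_\alpha$. For any $\mathbf{u}\neq\mathbf{0}$ we have $\mathbf{u}^\top\boldsymbol{M}_\alpha\mathbf{u} \ge \rho\,\mathbb{E}_\bx[(\mathbf{u}^\top\bx_\alpha)^2]$. Under \ref{S1}, the linear form $\mathbf{u}^\top\bx_\alpha$ cannot vanish almost surely, because the non-constant components of $\bx_\alpha$ are absolutely continuous and a nonzero affine function vanishes only on a Lebesgue-null set. Hence the right side is positive and $\boldsymbol{M}_\alpha$ is positive definite. I expect no genuine obstacle beyond stating this clearly.
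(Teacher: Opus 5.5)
Your proof is correct, and it reaches the same key intermediate fact as the paper, namely $\boldsymbol{M}_\alpha^{-1}\mathbb{E}_\bx[p(\bx)\bx_\alpha]=\mathbf{e}_1$. From that point both arguments finish identically.

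The difference lies in how that fact is obtained:
\begin{itemize}
\item The paper partitions $\bx_\alpha^\top=(1,\bx_{\alpha-1}^\top)$ and writes out $\boldsymbol{M}_\alpha^{-1}$ explicitly via the block-matrix inversion formula, using the Schur complement $\boldsymbol{M}_{\alpha-1}$. It then multiplies this out against $(\mathbb{E}_\bx[p(\bx)],\mathbb{E}_\bx[p(\bx)\bx_{\alpha-1}^\top])^\top$ to obtain $(1,\mathbf{0}^\top)^\top$.
\item You simply note that $\mathbb{E}_\bx[p(\bx)\bx_\alpha]$ is the intercept column of $\boldsymbol{M}_\alpha$. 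Hence $\boldsymbol{M}_\alpha\mathbf{e}_1=\mathbb{E}_\bx[p(\bx)\bx_\alpha]$, and invertibility does the rest.
\end{itemize}

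Your route has several advantages:
\begin{itemize}
\item It is shorter.
\item It does not require placing the intercept in the first coordinate.
\item It needs only the invertibility of $\boldsymbol{M}_\alpha$ itself. The paper instead separately invokes positive definiteness of the Schur complement; it asserts this without argument, although it does follow from that of $\boldsymbol{M}_\alpha$.
\item It makes the projection interpretation transparent: the constant $1$ already lies in the span of $\bx_\alpha$.
\end{itemize}
The paper's computation yields the full inverse, but only its action on $\mathbb{E}_\bx[p(\bx)\bx_\alpha]$ is ever used.

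Your positive-definiteness argument is also sound. You correctly read \ref{S1} as applying to the non-constant coordinates, since a covariate vector with a constant intercept coordinate cannot literally be absolutely continuous on $\mathbb{R}^p$.
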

\begin{proof}
Write $\bx_{\alpha}^{\top}=\left(1,\bx_{\alpha-1}^{\top}\right)$, where $\bx_{\alpha-1}^{\top}$ is the covariate without the intercept. As a result,
\begin{align*}
\mathbb{E}_\bx\left[p(\bx)\bx_{\alpha}\bx_{\alpha}^{\top}\right]=\begin{pmatrix}
\mathbb{E}_\bx[p(\bx)]&\mathbb{E}_\bx[p(\bx)\bx_{\alpha-1}^{\top}]\\
\mathbb{E}_\bx[p(\bx)\bx_{\alpha-1}]&\mathbb{E}_\bx[p(\bx)\bx_{\alpha-1}\bx_{\alpha-1}^{\top}]
\end{pmatrix}.
\end{align*}
Using the \hyperlink{BMIF}{block matrix inverse formula}, we obtain
\begin{align*}
&\left(\mathbb{E}_\bx\left[p(\bx)\bx_{\alpha}\bx_{\alpha}^{\top}\right]\right)^{-1}&\\&=\begin{pmatrix}
\frac{1}{\mathbb{E}_\bx[p(\bx)]}+\frac{1}{\mathbb{E}_\bx^2[p(\bx)]}\mathbb{E}_\bx\left[p(\bx)\bx_{\alpha-1}^{\top}\right]\boldsymbol{M}^{-1}_{\alpha-1}\mathbb{E}_\bx\left[p(\bx)\bx_{\alpha-1}\right]& -\frac{1}{\mathbb{E}_\bx[p(\bx)]}\boldsymbol{B}_{\alpha-1}\\
-\frac{1}{\mathbb{E}_\bx[p(\bx)]}\boldsymbol{B}_{\alpha-1}^{\top}& \boldsymbol{M}_{\alpha-1}^{-1}
\end{pmatrix},
\end{align*}
where $\boldsymbol{M}_{\alpha-1}=\mathbb{E}_\bx\left[p(\bx)\bx_{\alpha-1}\bx_{\alpha-1}^{\top}\right]-\mathbb{E}_\bx\left[p(\bx)\bx_{\alpha-1}\right]\left(\mathbb{E}_\bx[p(\bx)]\right)^{-1}\mathbb{E}_\bx\left[p(\bx)\bx_{\alpha-1}^{\top}\right]$ is a positive and definite matrix and $\boldsymbol{B}_{\alpha-1}=\mathbb{E}_\bx\left[p(\bx)\bx_{\alpha-1}^{\top}\right]\boldsymbol{M}^{-1}_{\alpha-1}$. As a result, 
\begin{align*}
&\left(\mathbb{E}_\bx\left[p(\bx)\bx_{\alpha}\bx_{\alpha}^{\top}\right]\right)^{-1}\mathbb{E}_\bx\left[p(\bx)\bx_{\alpha}\right]\\&\quad=\begin{pmatrix}
\frac{1}{\mathbb{E}_\bx[p(\bx)]}+\frac{1}{\mathbb{E}_\bx^2[p(\bx)]}\mathbb{E}_\bx\left[p(\bx)\bx_{\alpha-1}^{\top}\right]\boldsymbol{M}^{-1}_{\alpha-1}\mathbb{E}_\bx\left[p(\bx)\bx_{\alpha-1}\right]& -\frac{1}{\mathbb{E}_\bx[p(\bx)]}\boldsymbol{B}_{\alpha-1}\\
-\frac{1}{\mathbb{E}_\bx[p(\bx)]}\boldsymbol{B}_{\alpha-1}^{\top}& \boldsymbol{M}_{\alpha-1}^{-1}
\end{pmatrix}\\
&\quad\quad\times\begin{pmatrix}
\mathbb{E}_\bx\left[p(\bx)\right]\\
\mathbb{E}_\bx\left[p(\bx)\bx_{\alpha-1}\right]
\end{pmatrix}\\
&\quad= \begin{pmatrix}
1\\
\textbf{0}_{\alpha-1}
\end{pmatrix}.
\end{align*}
This leads to
\begin{align*}
&\mathbb{E}_\bx\left[(1-p(\bx))\bx_{\alpha}^{\top}\right]\left(\mathbb{E}_\bx\left[p(\bx)\bx_{\alpha}\bx_{\alpha}^{\top}\right]\right)^{-1}\mathbb{E}_\bx\left[p(\bx)\bx_{\alpha}\right]\\&\quad=\left(\mathbb{E}_\bx\left[1-p(\bx)\right],\mathbb{E}_\bx\left[(1-p(\bx))\bx_{\alpha-1}^{\top}\right]\right)\begin{pmatrix}
1\\
\textbf{0}_{\alpha-1}
\end{pmatrix}=\mathbb{E}_\bx[1-p(\bx)].
\end{align*}
\end{proof}
\begin{lemma} \label{SetofRespondents}
Assume \ref{S1}-\ref{S2}. Assume also that there exists $\rho>0$ such that $p(\bx)>\rho$ almost surely. Then, there exists $\xi>0$ such that 
\begin{equation*}
\lim_{v \to \infty} \mathbb{P}_{mpq} \left( \dfrac{n_{r,v}}{N_v} \geq \xi\right)=1
\end{equation*}
almost surely.
\end{lemma}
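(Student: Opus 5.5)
The plan is to show that $n_{r,v}/N_v$ concentrates around a quantity that stays bounded away from zero, and then take $\xi$ to be half of that lower bound. Write
\[
\frac{n_{r,v}}{N_v}=\frac{1}{N_v}\sum_{k\in U_v} I_k r_k .
\]
Its $pq$-conditional mean, given the covariates, is
\[
m_v:=\frac{1}{N_v}\sum_{k\in U_v}\pi_k\,p(\bx_k).
\]
By \ref{D2} and positivity, $m_v\geq \lambda\rho>0$ almost surely for every $v$. Note that \ref{D2} is in force in every place this lemma is used, namely Lemma \ref{Lemma2}, Lemma \ref{SigmaEst} and Theorem \ref{Theo2}. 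We therefore set $\xi:=\lambda\rho/2$.

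\emph{Concentration step.} The most economical route is to invoke Lemma \ref{HTEstimator} with $w_k:=r_k$ and $\boldsymbol{T}_k:=1$. Its hypotheses hold trivially: $\E_q[r_k^2]\leq 1$, and $N_v^{-1}\sum_k \|\boldsymbol{T}_k\|_F^2=1$. The lemma then gives
\[
\left|\frac{n_{r,v}}{N_v}-m_v\right|=\mathcal{O}_\P(N_v^{-1/2}).
\]
The same bound can also be checked directly. Compute the variance as in the proof of Lemma \ref{HTEstimator}. The diagonal terms are bounded by $N_v^{-1}$. The off-diagonal terms involve $\Delta_{kl}\,p(\bx_k)p(\bx_l)$ and are bounded by $\max_{k\neq l}|\Delta_{kl}|\leq \bar\Delta/n_v$ using \ref{D3}. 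Together with \ref{D1} this yields $\V_{pq}(n_{r,v}/N_v)\leq C/N_v\to 0$, and Chebyshev's inequality applies.

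\emph{Conclusion.} On the event $\{n_{r,v}/N_v<\xi\}$ we have $|n_{r,v}/N_v-m_v|>\lambda\rho/2$, since $m_v\geq\lambda\rho$. Chebyshev's inequality then gives
\[
\P_{pq}\left(\frac{n_{r,v}}{N_v}<\xi\right)\leq \frac{4C}{\lambda^2\rho^2 N_v}\to0
\]
almost surely in the covariates. The response indicators do not involve $\epsilon$, so $\P_{mpq}$ coincides with $\P_{pq}$ here. Alternatively, dominated convergence passes to the $mpq$ statement.

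\emph{Main obstacle.} The only delicate point is that the statement lists only \ref{S1}--\ref{S2} and positivity, whereas a design assumption is needed to bound the sampling part. I would make explicit that \ref{D1}--\ref{D3} are used, as they are in every result that invokes this lemma. Under \ref{D2}, one could even avoid \ref{D3} for the lower bound by using the deterministic bound $m_v\geq\lambda\rho$. The concentration step, however, genuinely requires control of the covariances $\Delta_{kl}$.
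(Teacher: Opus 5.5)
Your argument is correct, but it takes a different route from the paper's.

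\textbf{The paper's route.} The paper conditions on the realized sample $S_v$ and uses only the response mechanism. Given $S_v$ and the covariates, $n_{r,v}=\sum_{k\in S_v} r_k$ is a sum of $n_v$ independent Bernoulli variables with success probabilities $p(\bx_k)\geq\rho$. Hence $\mathbb{V}_q(n_{r,v}/n_v)\leq 1/(4n_v)$, and Chebyshev's inequality gives $n_{r,v}/n_v\geq\rho-\epsilon$ with probability tending to one. The paper then passes silently from $n_{r,v}/n_v$ to $n_{r,v}/N_v$. That step actually uses \ref{D1} and requires shrinking $\xi$ to something like $(\rho-\epsilon)f^{\star}/2$.

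\textbf{Your route.} You treat $n_{r,v}/N_v=N_v^{-1}\sum_{k\in U_v}I_kr_k$ as a Horvitz--Thompson-type sum and bound its joint $pq$-variance. This brings in \ref{D2}, to get $m_v\geq\lambda\rho$, and \ref{D3}, to control the $\Delta_{kl}\,p(\bx_k)p(\bx_l)$ terms.

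\textbf{Comparison.} Your approach works with the $N_v$-normalization from the start and reuses Lemma~\ref{HTEstimator}. It therefore avoids the paper's slip between $n_v$ and $N_v$. The paper's approach needs only \ref{D1}.

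In particular, your closing remark that the concentration step ``genuinely requires control of the covariances $\Delta_{kl}$'' is not accurate. Conditioning on $S_v$ removes the design covariances entirely. Only the conditional independence of the $r_k$ given the covariates, and of the responses from the sampling, is used.

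Since \ref{D1}--\ref{D3} hold wherever the lemma is invoked, your extra assumptions cost nothing in this paper. You are also right that both approaches need some design condition that is missing from the lemma's stated hypotheses.
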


\begin{proof}
Observe that $$\mathbb{E}_q \left[ \dfrac{n_{r,v}}{N_v} \right] = \dfrac{1}{n_v} \sum_{k\in S_v} p\left(\bx_k\right), \qquad \text{and,} \qquad \mathbb{V}_q \left( \dfrac{n_{r,v}}{n_v}\right) = \dfrac{1}{n_v^2} \sum_{k\in S_v}  p\left(\bx_k\right) \left\{ 1- p\left(\bx_k\right)\right\} \leq \dfrac{1}{4n_v}.$$
Therefore, Chebyshev's inequality shows that, for all $\epsilon>0$, $$ \mathbb{P}_q \left( \bigg\rvert \dfrac{n_{r,v}}{n_v} - \dfrac{1}{n_v} \sum_{k\in S_v} p\left(\bx_k\right) \bigg\rvert  >\epsilon \right)\leq \dfrac{1}{4\epsilon^2n_v}\xrightarrow[v \to \infty]{}0$$ 
almost surely.
On the one hand, $$\dfrac{1}{n_v} \sum_{k\in S_v} p\left(\bx_k\right) \geq \min_{k\in S_v} p\left(\bx_k\right) \geq \rho$$
almost surely. On the other hand, 
\begin{align*}
\mathbb{P}_q \left( \bigg\rvert\dfrac{n_{r,v}}{n_v} - \dfrac{1}{n_v} \sum_{k\in S_v} p\left(\bx_k\right)\bigg\rvert\leq\epsilon\right) &=  \mathbb{P}_q \left(  \dfrac{1}{n_v} \sum_{k\in S_v} p\left(\bx_k\right)-\epsilon\leq \dfrac{n_{r,v}}{n_v} \leq  \dfrac{1}{n_v} \sum_{k\in S_v} p\left(\bx_k\right) +\epsilon  \right)  \\&\leq \mathbb{P}_q \left( \rho-\epsilon\leq \dfrac{n_{r,v}}{n_v}   \right).
\end{align*}
Using that $\min_{k \in U_v}p(\bx_k)\geq\rho$, we set $\xi := \rho - \epsilon$ from which it follows that $$\lim_{v\rightarrow \infty}\mathbb{P}_q \left( \xi\leq \dfrac{n_{r,v}}{n_v}   \right) \geq         \lim_{v \rightarrow \infty}\mathbb{P}_q \left( \bigg\rvert\dfrac{n_{r,v}}{n_v} - \dfrac{1}{n_v} \sum_{k\in S_v} p\left(\bx_k\right)\bigg\rvert>\epsilon\right) =1 $$ almost surely. Furthermore,
\begin{equation*}
\lim_{v \to \infty} \mathbb{P}_{mpq} \left( \dfrac{n_{r,v}}{N_v} \geq \xi\right)=1
\end{equation*}
almost surely.
\end{proof}

\begin{lemma}\label{MatrixInverse}
Let $\{ \boldsymbol{A}_v \}_{v \in \mathbb{N}}$ and $\{ \boldsymbol{B}_v \}_{\mathbb{N}}$ be $p \times p$ invertible matrices such that $\Arrowvert \boldsymbol{A}_v^{-1} \Arrowvert_{op}=\mathcal{O}_\P(1)$ and $\Arrowvert \boldsymbol{B}_v^{-1} \Arrowvert_{op}=\mathcal{O}_\P(1)$. If there exists a sequence $\{u_v\}_{v \in \mathbb{N}}$ such that 
\begin{equation*}
\Arrowvert\boldsymbol{A}_v-\boldsymbol{B}_v\Arrowvert_{op}=\mathcal{O}_\P\left(u_v^{-1}\right).
\end{equation*}
Then, we have
\begin{equation*}
\Arrowvert\boldsymbol{A}_v^{-1}-\boldsymbol{B}_v^{-1}\Arrowvert_{op}=\mathcal{O}_\P\left(u_v^{-1}\right).
\end{equation*}
\end{lemma}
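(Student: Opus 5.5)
The natural route is the resolvent identity
\[
\boldsymbol{A}_v^{-1}-\boldsymbol{B}_v^{-1}=\boldsymbol{A}_v^{-1}\left(\boldsymbol{B}_v-\boldsymbol{A}_v\right)\boldsymbol{B}_v^{-1},
\]
which holds for any pair of invertible matrices. We can verify it by multiplying out: the right-hand side equals $\boldsymbol{A}_v^{-1}\boldsymbol{B}_v\boldsymbol{B}_v^{-1}-\boldsymbol{A}_v^{-1}\boldsymbol{A}_v\boldsymbol{B}_v^{-1}=\boldsymbol{A}_v^{-1}-\boldsymbol{B}_v^{-1}$.

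Taking operator norms and using submultiplicativity (the Schwarz matrix inequality already used in the paper) gives
\[
\Arrowvert\boldsymbol{A}_v^{-1}-\boldsymbol{B}_v^{-1}\Arrowvert_{op}\leq \Arrowvert\boldsymbol{A}_v^{-1}\Arrowvert_{op}\,\Arrowvert\boldsymbol{A}_v-\boldsymbol{B}_v\Arrowvert_{op}\,\Arrowvert\boldsymbol{B}_v^{-1}\Arrowvert_{op}.
\]
By assumption the first and third factors are $\mathcal{O}_\P(1)$, and the middle factor is $\mathcal{O}_\P(u_v^{-1})$.

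It then remains to check the stochastic-order calculus $\mathcal{O}_\P(1)\cdot\mathcal{O}_\P(u_v^{-1})\cdot\mathcal{O}_\P(1)=\mathcal{O}_\P(u_v^{-1})$. Fix $\varepsilon>0$ and choose constants $M_1,M_2,M_3$ such that, for all large $v$, each of the events $\{\Arrowvert\boldsymbol{A}_v^{-1}\Arrowvert_{op}>M_1\}$, $\{u_v\Arrowvert\boldsymbol{A}_v-\boldsymbol{B}_v\Arrowvert_{op}>M_2\}$ and $\{\Arrowvert\boldsymbol{B}_v^{-1}\Arrowvert_{op}>M_3\}$ has probability at most $\varepsilon/3$. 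A union bound then gives
\[
\P_{mpq}\left(u_v\Arrowvert\boldsymbol{A}_v^{-1}-\boldsymbol{B}_v^{-1}\Arrowvert_{op}>M_1M_2M_3\right)\leq\varepsilon,
\]
almost surely with respect to the covariates, matching the paper's convention for $\mathcal{O}_\P$.

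None of these steps is a genuine obstacle. The only point needing care is that invertibility of $\boldsymbol{A}_v$ and $\boldsymbol{B}_v$ is part of the hypothesis: in the applications it holds on the high-probability event of Remark~\ref{req1}, so working on that event, or adding its vanishing complement to the union bound, suffices.
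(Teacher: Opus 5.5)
Your proposal is correct and follows the paper's own argument: the paper writes $\boldsymbol{A}_v^{-1}-\boldsymbol{B}_v^{-1}=\boldsymbol{A}_v^{-1}(\boldsymbol{I}_p-\boldsymbol{A}_v\boldsymbol{B}_v^{-1})=\boldsymbol{A}_v^{-1}(\boldsymbol{B}_v-\boldsymbol{A}_v)\boldsymbol{B}_v^{-1}$, which is exactly your resolvent identity, and then applies submultiplicativity of the operator norm. Your union-bound verification that $\mathcal{O}_\P(1)\cdot\mathcal{O}_\P(u_v^{-1})\cdot\mathcal{O}_\P(1)=\mathcal{O}_\P(u_v^{-1})$ makes explicit a step the paper leaves implicit.
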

\begin{proof}
Recall the \hypertarget{SMI}{Schwarz matrix inequality} (see \cite{hansen2022econometrics}, B.15 in Page 981), for any squared matrices $\boldsymbol{A}$ and $\boldsymbol{B}$, for the spectral norm, $\Arrowvert\boldsymbol{A}\boldsymbol{B}\Arrowvert_{op}\leq \Arrowvert\boldsymbol{A}\Arrowvert_{op}\Arrowvert\boldsymbol{B}\Arrowvert_{op}$. As a result,
\begin{align*}
\Arrowvert\boldsymbol{A}_v^{-1}-\boldsymbol{B}_v^{-1}\Arrowvert_{op} &= \Arrowvert \boldsymbol{A}_v^{-1}  \left(\textbf{I}_p-\boldsymbol{A}_v\boldsymbol{B}_v^{-1}\right)\Arrowvert_{op}\leq \Arrowvert\boldsymbol{A}_v^{-1}\Arrowvert_{op} \Arrowvert\boldsymbol{I}_p-\boldsymbol{A}_v\boldsymbol{B}_v^{-1}\Arrowvert_{op}\\
&=\Arrowvert\boldsymbol{A}_v^{-1}\Arrowvert_{op} \Arrowvert\left(\boldsymbol{B}_v-\boldsymbol{A}_v\right)\boldsymbol{B}_v^{-1}\Arrowvert_{op}\\
&\leq\Arrowvert\boldsymbol{A}_v^{-1}\Arrowvert_{op} \Arrowvert\boldsymbol{B}_v^{-1}\Arrowvert_{op}\Arrowvert\textbf{B}_v-\boldsymbol{A}_v\Arrowvert_{op}=\mathcal{O}_\P\left(u_v^{-1}\right).
\end{align*}
This concludes the proof since $\Arrowvert\boldsymbol{B}_v-\boldsymbol{A}_v\Arrowvert_{op}=\Arrowvert\boldsymbol{A}_v-\boldsymbol{B}_v\Arrowvert_{op}$ for all $v \in \mathbb{N}$.
\end{proof}

\subsubsection{Tightness of various statistics of interest}
\begin{lemma} \label{BoundnessAp-1}
 Assume \ref{S1}-\ref{S2} and \ref{D1}-\ref{D3}. For $\alpha \in \cA$, we have
    \begin{equation*}
    \Bigg\Arrowvert \left(\sum_{k \in U_v}\frac{p(\bx_k)\pi_k\bx_{k,\alpha}\bx_{k,\alpha}^{\top}}{N_v} \right)^{-1}\Bigg\Arrowvert_{op}=\mathcal{O}_\P(1).
    \end{equation*}
\end{lemma}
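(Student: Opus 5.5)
The matrix in question is deterministic given the covariates and the design, since $p(\bx_k)$ and $\pi_k$ do not involve $r_k$ or $\epsilon_k$. Its operator norm inverse is therefore the reciprocal of its smallest eigenvalue, so I would prove a deterministic, almost sure eigenvalue lower bound and then read off the $\mathcal{O}_\P(1)$ statement.

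\textbf{Step 1: reduce to an unweighted matrix.} By positivity $p(\bx_k)\geq\rho$, and by \ref{D2} $\pi_k\geq\lambda$. For every $v$ this gives the Loewner bound
$$\sum_{k \in U_v}\frac{p(\bx_k)\pi_k\bx_{k,\alpha}\bx_{k,\alpha}^{\top}}{N_v}\succeq \rho\lambda\,\frac{1}{N_v}\sum_{k\in U_v}\bx_{k,\alpha}\bx_{k,\alpha}^{\top}.$$
Hence $\lambda_{min}$ of the left side is at least $\rho\lambda\,\lambda_{min}\big(N_v^{-1}\sum_{k}\bx_{k,\alpha}\bx_{k,\alpha}^\top\big)$. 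This is the same weight-reduction used at the start of Lemma \ref{HighProb}.

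\textbf{Step 2: limit of the unweighted matrix.} The vectors $\bx_k$ are i.i.d. and bounded by \ref{S2}. The strong law of large numbers therefore gives $N_v^{-1}\sum_{k}\bx_{k,\alpha}\bx_{k,\alpha}^\top\to \E_\bx[\bx_\alpha\bx_\alpha^\top]$ almost surely.

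\textbf{Step 3: the limit is positive definite.} Take $\mathbf{u}\neq\mathbf{0}$. Then $\mathbf{u}^\top\E_\bx[\bx_\alpha\bx_\alpha^\top]\mathbf{u}=\E_\bx[(\mathbf{u}^\top\bx_\alpha)^2]$. This expectation is strictly positive, because under \ref{S1} the hyperplane $\{\mathbf{u}^\top\bx_\alpha=0\}$ has Lebesgue measure zero. (An intercept coordinate would need separate handling, as the paper does implicitly elsewhere.) Let $\sigma_\alpha>0$ denote the smallest eigenvalue of $\E_\bx[\bx_\alpha\bx_\alpha^\top]$.

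\textbf{Step 4: conclude.} By Weyl's inequality together with Step 2, on a probability-one event there is a $v_0$ such that $\lambda_{min}\big(N_v^{-1}\sum_k\bx_{k,\alpha}\bx_{k,\alpha}^\top\big)\geq\sigma_\alpha/2$ for all $v\geq v_0$. Combined with Step 1, the operator norm of the inverse is then at most $2/(\rho\lambda\sigma_\alpha)$ for all $v\geq v_0$. A sequence that is eventually bounded almost surely is $\mathcal{O}_\P(1)$; in fact this argument gives the stronger $\limsup$ statement.

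\textbf{Main obstacle.} The only delicate point is Step 3, the positive definiteness of the limit. If an intercept is present, the absolute continuity in \ref{S1} must be understood for the non-constant coordinates. I would argue directly that $\E_\bx[(u_0+\mathbf{u}_1^\top\tilde{\bx})^2]>0$ for any $(u_0,\mathbf{u}_1)\neq\mathbf{0}$, since an affine hyperplane in the non-constant coordinates is Lebesgue-null.
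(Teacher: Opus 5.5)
Your argument is correct and follows essentially the paper's route: lower-bound the weights, apply the strong law to the covariate second-moment matrix, and use positive definiteness of the limit. The only difference is that you also strip off $p(\bx_k)\geq\rho$ and work with $\E_\bx[\bx_\alpha\bx_\alpha^\top]$, whereas the paper keeps $p$ and uses $\E_\bx[p(\bx)\bx_\alpha\bx_\alpha^\top]$; your Loewner-order formulation of Step 1 is cleaner than the paper's operator-norm inequality, and your almost-sure eventual bound also gives the companion $\limsup$ statement of Lemma \ref{FiniteAp-1}.
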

\begin{proof}
Using \ref{D2} gives
    \begin{equation*}
         \Bigg\Arrowvert \sum_{k \in U_v}\frac{p(\bx_k)\pi_k\bx_{k,\alpha}\bx_{k,\alpha}^{\top}}{N_v} \Bigg\Arrowvert_{op}\geq \lambda\Bigg\Arrowvert \sum_{k \in U_v}\frac{p(\bx_k)\bx_{k,\alpha}\bx_{k,\alpha}^{\top}}{N_v} \Bigg\Arrowvert_{op},
    \end{equation*}
    On the other hand, using the law of large numbers and the continuous mapping theorem gives
    \begin{equation*}
        \lambda_{min}\left(\sum_{k \in U_v}\frac{p(\bx_k)\bx_{k,\alpha}\bx_{k,\alpha}^{\top}}{N_v} \right)\xrightarrow[v\rightarrow \infty]{a.s.}\lambda_{min}\left(\E_{\bx}\left[p(\bx_1)\bx_{1,\alpha}\bx_{1,\alpha}\right]\right),
    \end{equation*}
    which is a positive-definite matrix. 
    As a result, for any $\epsilon>0$, we have
    \begin{align*}
        &\lim_{v \rightarrow \infty}\P_{ mpq}\left(\Bigg|\lambda_{min}\left(\sum_{k \in U_v}\frac{p(\bx_k)\bx_{k,\alpha}\bx_{k,\alpha}^{\top}}{N_v}\right)- \lambda_{min}\left(\E_{\bx}\left[p(\bx_1)\bx_{1,\alpha}\bx_{1,\alpha}\right]\right)\Bigg|\leq \epsilon\right)\\
        &\quad \leq \lim_{v \rightarrow \infty}\P_{ mpq}\left(\lambda_{min}\left(\sum_{k \in U_v}\frac{p(\bx_k)\bx_{k,\alpha}\bx_{k,\alpha}^{\top}}{N_v}\right)\geq \lambda_{min}\left(\E_{\bx}\left[p(\bx_1)\bx_{1,\alpha}\bx_{1,\alpha}\right]\right)- \epsilon\right)=1.
    \end{align*}
    Furthermore,
    \begin{align*}
         \lim_{v \rightarrow \infty}\P_{ mpq}\left(\lambda_{min}\left(\sum_{k \in U_v}\frac{p(\bx_k)\pi_k\bx_{k,\alpha}\bx_{k,\alpha}^{\top}}{N_v}\right)\geq \lambda \lambda_{min}\left(\E_{\bx}\left[p(\bx_1)\bx_{1,\alpha}\bx_{1,\alpha}\right]\right)- \epsilon\right)=1
    \end{align*}
    almost surely.
    This concludes
    \begin{equation*}
        \Bigg\Arrowvert \left(\sum_{k \in U_v}\frac{p(\bx_k)\pi_k\bx_{k,\alpha}\bx_{k,\alpha}^{\top}}{N_v} \right)^{-1}\Bigg\Arrowvert_{op}=\lambda_{min}^{-1}\left(\sum_{k \in U_v}\frac{p(\bx_k)\pi_k\bx_{k,\alpha}\bx_{k,\alpha}^{\top}}{N_v}\right)=\mathcal{O}_\P(1).
    \end{equation*}
\end{proof}

\begin{lemma} \label{FiniteAp-1}
 Assume \ref{S1}-\ref{S2} and \ref{D1}-\ref{D3}. For $\alpha \in \cA$, we have
    \begin{equation*}
    \limsup_{v \rightarrow \infty}\Bigg\Arrowvert \left(\sum_{k \in U_v}\frac{p(\bx_k)\pi_k\bx_{k,\alpha}\bx_{k,\alpha}^{\top}}{N_v} \right)^{-2}\Bigg\Arrowvert_{op}<\infty.
    \end{equation*}
\end{lemma}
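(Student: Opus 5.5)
The claim is deterministic in the sample path: we must show that, almost surely, $\limsup_{v\to\infty}\lambda_{min}^{-2}(\boldsymbol{M}_v)<\infty$ with
\[
\boldsymbol{M}_v:=\sum_{k\in U_v}\frac{p(\bx_k)\pi_k\bx_{k,\alpha}\bx_{k,\alpha}^\top}{N_v}.
\]
Indeed, for a symmetric positive definite matrix, $\rVert\boldsymbol{M}_v^{-2}\rVert_{op}=\lambda_{min}(\boldsymbol{M}_v)^{-2}$. It therefore suffices to bound $\lambda_{min}(\boldsymbol{M}_v)$ away from zero for all large $v$, almost surely.

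\textbf{Step 1: reduce to a design-free matrix.} By \ref{D2}, $\pi_k\ge\lambda$ for every $k\in U_v$. Hence, in the Loewner order,
\[
\boldsymbol{M}_v\succeq\lambda\,\boldsymbol{M}^{p}_v,\qquad
\boldsymbol{M}^{p}_v:=N_v^{-1}\sum_{k\in U_v}p(\bx_k)\bx_{k,\alpha}\bx_{k,\alpha}^\top,
\]
and so $\lambda_{min}(\boldsymbol{M}_v)\ge\lambda\,\lambda_{min}(\boldsymbol{M}_v^p)$. This is the same comparison used in the proof of Lemma \ref{HighProb}.

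\textbf{Step 2: limit of $\boldsymbol{M}_v^p$.} The summands are i.i.d. By \ref{S2} they are bounded, since $\rVert p(\bx)\bx_\alpha\bx_\alpha^\top\rVert_{op}\le C_0^2$. The strong law of large numbers, applied entrywise, then gives
\[
\boldsymbol{M}_v^p\xrightarrow[v\to\infty]{a.s.}\E_\bx[p(\bx)\bx_\alpha\bx_\alpha^\top].
\]
Since $\lambda_{min}$ is continuous (Weyl's inequality), it follows that
\[
\lambda_{min}(\boldsymbol{M}_v^p)\to\sigma_\alpha:=\lambda_{min}\left(\E_\bx[p(\bx)\bx_\alpha\bx_\alpha^\top]\right)\quad\text{a.s.}
\]

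\textbf{Step 3: positivity of the limit.} This is the only substantive point. For any $\mathbf{u}\neq\mathbf{0}$,
\[
\mathbf{u}^\top\E_\bx[p(\bx)\bx_\alpha\bx_\alpha^\top]\mathbf{u}\ge\rho\,\E_\bx[(\mathbf{u}^\top\bx_\alpha)^2]
\]
by positivity of $p$. The right-hand side is strictly positive, because $\mathbf{u}^\top\bx_\alpha=0$ defines a hyperplane, which is Lebesgue-null and hence has probability zero under \ref{S1}. If an intercept coordinate is present, \ref{S1} should be read for the non-constant coordinates; the hyperplane argument still works, since $\mathbf{u}^\top\bx_\alpha=0$ is then an affine hyperplane in those coordinates. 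We conclude that $\sigma_\alpha>0$.

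\textbf{Conclusion.} On the probability-one event where Step 2 holds, there is $v_0$ such that $\lambda_{min}(\boldsymbol{M}_v^p)\ge\sigma_\alpha/2$ for all $v\ge v_0$. Therefore
\[
\rVert\boldsymbol{M}_v^{-2}\rVert_{op}\le\left(\frac{2}{\lambda\,\sigma_\alpha}\right)^2
\]
for all large $v$, which gives the finite $\limsup$.
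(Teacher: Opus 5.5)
Your proof is correct and takes essentially the same route as the paper's: you use the lower bound $\pi_k\ge\lambda$, then the strong law of large numbers together with continuity of $\lambda_{\min}$, and finally positive definiteness of $\E_\bx[p(\bx)\bx_\alpha\bx_\alpha^\top]$. Your version is in fact more careful than the paper's. You state the comparison correctly as a Loewner-order bound on $\lambda_{\min}$, whereas the paper writes it with operator norms. You also prove positive definiteness of the limit using \ref{S1} and the positivity assumption, while the paper only asserts it.
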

\begin{proof}
We have    \begin{equation*}
         \Bigg\Arrowvert \sum_{k \in U_v}\frac{p(\bx_k)\pi_k\bx_{k,\alpha}\bx_{k,\alpha}^{\top}}{N_v} \Bigg\Arrowvert_{op}\leq \lambda\Bigg\Arrowvert \sum_{k \in U_v}\frac{p(\bx_k)\bx_{k,\alpha}^{\top}\bx_{k,\alpha}}{N_v} \Bigg\Arrowvert_{op},
    \end{equation*}
Also,
    \begin{equation*}
        \lambda_{min}^{-2}\left(\sum_{k \in U_v}\frac{p(\bx_k)\bx_{k,\alpha}\bx_{k,\alpha}^{\top}}{N_v} \right)\xrightarrow[v\rightarrow \infty]{a.s.}\lambda_{min}^{-2}\left(\E_{\bx}\left[p(\bx_1)\bx_{1,\alpha}\bx_{1,\alpha}\right]\right),
    \end{equation*}
    which is a positive-definite matrix. As a result,
    \begin{equation*}
         \limsup_{v \rightarrow \infty}\lambda_{min}^{-2}\left(\sum_{k \in U_v}\frac{\bx_{k,\alpha}\bx_{k,\alpha}^{\top}p(\bx_k)}{N_v} \right)<\infty
    \end{equation*}
    almost surely.
    It follows that
    \begin{equation*}
        \limsup_{v \rightarrow \infty}\Bigg\Arrowvert \left(\sum_{k \in U_v}\frac{p(\bx_k)\pi_k\bx_{k,\alpha}\bx_{k,\alpha}^{\top}}{N_v} \right)^{-2}\Bigg\Arrowvert_{op}= \limsup_{v \rightarrow \infty}\lambda_{min}^{-2}\left(\sum_{k \in U_v}\frac{p(\bx_k)\pi_k\bx_{k,\alpha}\bx_{k,\alpha}^{\top}}{N_v} \right)<\infty
    \end{equation*}
    almost surely.

\end{proof}

\begin{lemma}\label{BoundnessApi-1}
    Assume \ref{S1}-\ref{S2} and \ref{D1}-\ref{D3}. For $\alpha \in \cA$, we have
    \begin{equation*}
    \Bigg\Arrowvert \left(\sum_{k \in U_v} \frac{r_k\pi_k\bx_{k,\alpha}\bx_{k,\alpha}^{\top}}{N_v}\right)^{-1}\Bigg\Arrowvert_{op}=\mathcal{O}_\P(1).
    \end{equation*}
\end{lemma}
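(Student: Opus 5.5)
The plan mirrors the proofs of Lemma \ref{BoundnessAp-1} and Lemma \ref{HighProb}, with one extra comparison step. Write
\[
\boldsymbol{B}_v:=\sum_{k \in U_v} \frac{r_k\pi_k\bx_{k,\alpha}\bx_{k,\alpha}^{\top}}{N_v}, \qquad
\boldsymbol{B}_v^p:=\sum_{k \in U_v} \frac{p(\bx_k)\pi_k\bx_{k,\alpha}\bx_{k,\alpha}^{\top}}{N_v}.
\]
Since $\boldsymbol{B}_v$ is symmetric positive semi-definite, we have $\rVert \boldsymbol{B}_v^{-1}\rVert_{op}=\lambda_{\min}(\boldsymbol{B}_v)^{-1}$. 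It therefore suffices to show that there is a constant $c>0$ with $\lim_{v\to\infty}\P_{mpq}(\lambda_{\min}(\boldsymbol{B}_v)\geq c)=1$ almost surely.

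\textbf{Route 1 (direct lower bound).} By \ref{D2}, $\pi_k\geq \lambda$ for all $k$, so
\[
\boldsymbol{B}_v\succeq \lambda \sum_{k\in U_v} \frac{r_k\bx_{k,\alpha}\bx_{k,\alpha}^\top}{N_v}.
\]
The right-hand side is the $\alpha$-principal submatrix of $\lambda\boldsymbol{A}_v$ from Lemma \ref{HighProb}. By Cauchy interlacing, its smallest eigenvalue is at least $\lambda\,\lambda_{\min}(\boldsymbol{A}_v)$. Alternatively, one can apply Lemma \ref{HighProb} directly to the vectors $\bx_{k,\alpha}$ with weights $w_k=\pi_k\geq\lambda$; these weights are $\sigma(\bX)$-measurable under non-informativeness. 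In either case, with $\P_q$-probability at least $1-C_1e^{-C_2N_v}$, we get $\lambda_{\min}(\boldsymbol{B}_v)\geq \lambda C_3=:c$ for all $v\ge v_0$, almost surely. Integrating over $m,p$ via dominated convergence then gives the claim.

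\textbf{Route 2 (comparison with a deterministic matrix).} As a safeguard, in case the measurability of $\pi_k$ in Lemma \ref{HighProb} is questioned, I would compare $\boldsymbol{B}_v$ with $\boldsymbol{B}_v^p$. Apply Lemma \ref{REstimator} with $w_k:=r_k$ and $\boldsymbol{T}_k:=\pi_k\bx_{k,\alpha}\bx_{k,\alpha}^\top$; the required moment bound is \eqref{FiniteXX}, since $\pi_k\le 1$. This yields $\rVert \boldsymbol{B}_v-\boldsymbol{B}_v^p\rVert_{op}=\mathcal{O}_\P(N_v^{-1/2})$, which is exactly \eqref{RMatrix}. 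From the proof of Lemma \ref{BoundnessAp-1}, with probability tending to one,
\[
\lambda_{\min}(\boldsymbol{B}_v^p)\geq \lambda\,\lambda_{\min}\!\left(\E_\bx[p(\bx)\bx_\alpha\bx_\alpha^\top]\right)-\epsilon=:2c>0.
\]
The matrix $\E_\bx[p(\bx)\bx_\alpha\bx_\alpha^\top]$ is positive definite by \ref{S1}, \ref{S2} and positivity. Weyl's inequality gives $|\lambda_{\min}(\boldsymbol{B}_v)-\lambda_{\min}(\boldsymbol{B}_v^p)|\le \rVert\boldsymbol{B}_v-\boldsymbol{B}_v^p\rVert_{op}$, so $\lambda_{\min}(\boldsymbol{B}_v)\geq c$ with probability tending to one.

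\textbf{Main obstacle.} The only delicate point is a uniform, almost-sure lower bound on the limiting eigenvalue: $\epsilon$ must be chosen smaller than $\lambda\,\lambda_{\min}(\E_\bx[p\bx_\alpha\bx_\alpha^\top])$, and the strong law must hold on a single full-measure event. Both are handled as in Lemma \ref{BoundnessAp-1}, so I expect that step to go through directly.
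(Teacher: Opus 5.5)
Your Route~1 is correct and is essentially the paper's proof: the paper simply invokes Lemma~\ref{HighProb} with weights $w_k=\pi_k\geq\lambda$. This gives $\lambda_{\min}\geq\gamma$ with $\P_q$-probability at least $1-C_1e^{-C_2N_v}$, and the paper concludes via $\Arrowvert\boldsymbol{B}_v^{-1}\Arrowvert_{op}=\lambda_{\min}(\boldsymbol{B}_v)^{-1}$; your interlacing reduction makes the passage to the $\alpha$-subvector and to the weights explicit. Route~2 is also valid and not circular, since you obtain \eqref{RMatrix} directly from Lemma~\ref{REstimator}, but it is not needed.
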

\begin{proof}
 Recall from Lemma \ref{HighProb}, there exists universal $C_1$, $C_2$, and $\gamma>0$ such that
\begin{align*}
     \P_q\left(\lambda_{min}\left(\sum_{k \in U_v}\frac{r_k\pi_k\bx_{k,\alpha}\bx_{k,\alpha}^{\top}}{N_v} \right)\geq \gamma \right)\geq 1-C_1e^{-C_2N_v}
\end{align*}
almost surely for large $v$.
As a result, we obtain
\begin{align*}
     \lim_{v \rightarrow \infty}\P_{mpq}\left(\lambda_{min}\left(\sum_{k \in U_v}\frac{r_k\pi_k\bx_{k,\alpha}\bx_{k,\alpha}^{\top}}{N_v} \right)\geq \gamma \right)=1.
\end{align*}
It follows that
\begin{equation*}
    \Bigg\Arrowvert \left(\sum_{k \in U_v} \frac{r_k\pi_k\bx_{k,\alpha}\bx_{k,\alpha}^{\top}}{N_v}\right)^{-1}\Bigg\Arrowvert_{op}=\lambda_{min}^{-1}\left(\sum_{k \in U_v}\frac{r_k\pi_k\bx_{k,\alpha}\bx_{k,\alpha}^{\top}}{N_v} \right)=\mathcal{O}_\P(1).
    \end{equation*}

This concludes the proof.
\end{proof}
\begin{lemma}\label{BoundnessA-1}
    Assume \ref{S1}-\ref{S2} and \ref{D1}-\ref{D3}. For $\alpha \in \cA$, we have
    \begin{equation*}
    \Bigg\Arrowvert \left(\frac{\boldsymbol{A}_{r,\alpha}}{N_v} \right)^{-1}\Bigg\Arrowvert_{op}=\mathcal{O}_\P(1).
    \end{equation*}
\end{lemma}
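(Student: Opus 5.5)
The target is Lemma \ref{BoundnessA-1}: $\|(\boldsymbol{A}_{r,\alpha}/N_v)^{-1}\|_{op}=\mathcal{O}_\P(1)$, where $\boldsymbol{A}_{r,\alpha}/N_v=\sum_{k\in U_v}I_kr_k\bx_{k,\alpha}\bx_{k,\alpha}^\top/N_v$. Since this matrix is symmetric and positive definite on the conditioning event of Remark \ref{req1}, its inverse has operator norm $\lambda_{\min}^{-1}(\boldsymbol{A}_{r,\alpha}/N_v)$. It therefore suffices to find a constant $\kappa>0$ such that $\P_{mpq}\big(\lambda_{\min}(\boldsymbol{A}_{r,\alpha}/N_v)\geq\kappa\big)\to1$ almost surely.

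The natural comparison matrix is $\boldsymbol{B}_v:=\sum_{k\in U_v}p(\bx_k)\pi_k\bx_{k,\alpha}\bx_{k,\alpha}^\top/N_v$. First, \eqref{HTMatrix}, obtained from Lemma \ref{HTEstimator} with $w_k=r_k$ and $\boldsymbol{T}_k=\bx_{k,\alpha}\bx_{k,\alpha}^\top$, gives $\|\boldsymbol{A}_{r,\alpha}/N_v-\boldsymbol{B}_v\|_{op}=\mathcal{O}_\P(n_v^{-1/2})$. The bound \eqref{FiniteXX} verifies the moment condition there, and it holds by \ref{S2}.

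Second, the proof of Lemma \ref{BoundnessAp-1} shows that $\lambda_{\min}(\boldsymbol{B}_v)\geq\lambda\,\lambda_{\min}\big(\sum_{k}p(\bx_k)\bx_{k,\alpha}\bx_{k,\alpha}^\top/N_v\big)$ by \ref{D2}. The right-hand side converges almost surely to $\lambda\,\lambda_{\min}(\E_\bx[p(\bx)\bx_\alpha\bx_\alpha^\top])=:2\kappa>0$. Positive definiteness of this limit follows from \ref{S1} together with positivity $p\geq\rho$. Hence, almost surely, $\lambda_{\min}(\boldsymbol{B}_v)\geq 3\kappa/2$ for all large $v$.

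By Weyl's inequality, $\lambda_{\min}(\boldsymbol{A}_{r,\alpha}/N_v)\geq\lambda_{\min}(\boldsymbol{B}_v)-\|\boldsymbol{A}_{r,\alpha}/N_v-\boldsymbol{B}_v\|_{op}$. The last term is $o_\P(1)$, so with probability tending to one $\lambda_{\min}(\boldsymbol{A}_{r,\alpha}/N_v)\geq\kappa$, and the inverse norm is bounded by $\kappa^{-1}$. The step from $\P_{pq}$ to $\P_{mpq}$ is then routine: the events involve only $\bX$, the sampling indicators and $r$, so one integrates and applies dominated convergence.

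An alternative route avoids the design step entirely. Under \ref{D2}, $I_k/\pi_k$ has expectation one, so one could instead apply the matrix Chernoff argument of Lemma \ref{HighProb} to the summands $I_kr_k\bx_{k,\alpha}\bx_{k,\alpha}^\top/N_v$. That route would require independence of the $I_k$, which fails in general, so I would keep the Weyl approach.

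The only delicate point is that the comparison matrix must have its smallest eigenvalue bounded below \emph{uniformly in $v$}, not merely be positive for each $v$. This is secured by the almost-sure law of large numbers limit combined with the lower bound $\pi_k\geq\lambda$ from \ref{D2}. The convergence in Weyl's step needs only operator-norm consistency, not a rate.
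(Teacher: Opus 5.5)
Your proof is correct, and it takes a more direct route than the paper's.

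The paper does not stop at $\boldsymbol{B}_v$. It chains \eqref{HTMatrix} and \eqref{RMatrix} through $\boldsymbol{B}_v$ by the triangle inequality to get $\|\boldsymbol{A}_{r,\alpha}/N_v-\sum_k r_k\pi_k\bx_{k,\alpha}\bx_{k,\alpha}^\top/N_v\|_{op}=\mathcal{O}_\P(n_v^{-1/2})$. It then applies Weyl's inequality to that difference. The eigenvalue floor comes from Lemma \ref{BoundnessApi-1}, that is, from the matrix Chernoff bound of Lemma \ref{HighProb} applied to the response indicators.

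You compare $\boldsymbol{A}_{r,\alpha}/N_v$ directly with $\boldsymbol{B}_v$. Given $\bX$, its smallest eigenvalue is bounded below for all large $v$ by \ref{D2} and the strong law. You therefore need neither \eqref{RMatrix} nor the Chernoff step. Nothing is lost, because the paper's intermediate matrix $\sum_k r_k\pi_k\bx_{k,\alpha}\bx_{k,\alpha}^\top/N_v$ is itself linked to $\boldsymbol{A}_{r,\alpha}/N_v$ only through $\boldsymbol{B}_v$. The paper's route has the merit of reusing Lemma \ref{BoundnessApi-1}, which it needs anyway for $\mathbf{c}_{\alpha,v}$. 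It yields no stronger conclusion here, since the Chebyshev-based step \eqref{HTMatrix} limits both arguments to an $\mathcal{O}_\P(1)$ statement.

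One small point: justify $\lambda_{\min}(\boldsymbol{B}_v)\geq\lambda\,\lambda_{\min}\big(\sum_k p(\bx_k)\bx_{k,\alpha}\bx_{k,\alpha}^\top/N_v\big)$ directly from the Loewner ordering $\boldsymbol{B}_v\succeq\lambda\sum_k p(\bx_k)\bx_{k,\alpha}\bx_{k,\alpha}^\top/N_v$. Do not cite the proof of Lemma \ref{BoundnessAp-1} for it, since that proof phrases the comparison with operator norms, i.e. largest eigenvalues, which does not literally give the lower bound you need.

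Your remark that a Chernoff argument on the $I_k$ would require independent sampling indicators is right. The paper's Chernoff step uses only the conditional independence of the $r_k$ given $\bX$.
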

\begin{proof}
   Recall from \eqref{HTMatrix} and \eqref{RMatrix}, we obtain
   \begin{equation*}
       \Bigg\Arrowvert\frac{\boldsymbol{A}_{r,\alpha}}{N_v}-\sum_{k \in U_v}\frac{p(\bx_k)\pi_k\bx_{k,\alpha}\bx_{k,\alpha}^{\top}}{N_v}\Bigg\Arrowvert_{op}=\mathcal{O}_\P\left( \frac{1}{\sqrt{n_v}}\right)
   \end{equation*}
   and
   \begin{equation*}
       \Bigg\Arrowvert\sum_{k \in U_v}\frac{r_k\pi_k\bx_{k,\alpha}\bx_{k,\alpha}^{\top}}{N_v}-\sum_{k \in U_v}\frac{p(\bx_k)\pi_k\bx_{k,\alpha}\bx_{k,\alpha}^{\top}}{N_v}\Bigg\Arrowvert_{op}=\mathcal{O}_\P\left( \frac{1}{\sqrt{n_v}}\right),
   \end{equation*}
   respectively.
   An application of the triangle inequality of operator norm gives
\begin{equation*}
    \Bigg\Arrowvert\frac{\boldsymbol{A}_{r,\alpha}}{N_v}-\sum_{k \in U_v}\frac{r_k\pi_k\bx_{k,\alpha}\bx_{k,\alpha}^{\top}}{N_v}\Bigg\Arrowvert_{op}=\mathcal{O}_\P\left( \frac{1}{\sqrt{n_v}}\right).
\end{equation*}
It follows that
\begin{align*}
    \Bigg|\lambda_{min}\left(\frac{\boldsymbol{A}_{r,\alpha}}{N_v}\right)-\lambda_{min}\left(\sum_{k \in U_v}\frac{r_k\pi_k\bx_{k,\alpha}\bx_{k,\alpha}^{\top}}{N_v} \right) \Bigg|&\leq\Bigg\Arrowvert\frac{\boldsymbol{A}_{r,\alpha}}{N_v}-\sum_{k \in U_v}\frac{r_k\pi_k\bx_{k,\alpha}\bx_{k,\alpha}^{\top}}{N_v}\Bigg\Arrowvert_{op}\\
    &=\mathcal{O}_\P\left( \frac{1}{\sqrt{n_v}}\right).
\end{align*}
As a result, for any $\epsilon>0$
\begin{align*}
    &\lim_{v \rightarrow \infty}\P_{mpq}\left( \Bigg|\lambda_{min}\left(\frac{\boldsymbol{A}_{r,\alpha}}{N_v}\right)-\lambda_{min}\left(\sum_{k \in U_v}\frac{r_k\pi_k\bx_{k,\alpha}\bx_{k,\alpha}^{\top}}{N_v} \right) \Bigg|\leq \epsilon\right)\\&\quad\leq  \lim_{v \rightarrow \infty}\P_{mpq}\left( \lambda_{min}\left(\frac{\boldsymbol{A}_{r,\alpha}}{N_v}\right)\geq\lambda_{min}\left(\sum_{k \in U_v}\frac{r_k\pi_k\bx_{k,\alpha}\bx_{k,\alpha}^{\top}}{N_v} \right)-\epsilon\right)=1.
\end{align*}
Recall from Lemma \ref{BoundnessApi-1}, there exists $\gamma>0$ such that
\begin{equation*}
    \lim_{v \rightarrow \infty}\P_{mpq} \left(\lambda_{min}\left(\sum_{k \in U_v}\frac{r_k\pi_k\bx_{k,\alpha}\bx_{k,\alpha}^{\top}}{N_v} \right)\geq \gamma \right)=1.
\end{equation*}
Note that for any random variables $X, Y$ and any $\gamma,\epsilon>0$, we have
    \begin{equation*}
        \P_{mpq}\left(Y\geq \gamma\right)\leq \P_{mpq}\left(X\geq \gamma-\epsilon\right)+\P(|Y-X|>\epsilon).
    \end{equation*}
Thus
\begin{align*}
    &\P_{mpq} \left(\lambda_{min}\left(\sum_{k \in U_v}\frac{r_k\pi_k\bx_{k,\alpha}\bx_{k,\alpha}^{\top}}{N_v} \right)\geq \gamma \right)\\
    &\quad\leq \P_{mpq}\left( \lambda_{min}\left(\frac{\boldsymbol{A}_{r,\alpha}}{N_v}\right)\geq\gamma-\epsilon\right)\\
    &\quad\quad+\P_{mpq}\left( \Bigg|\lambda_{min}\left(\frac{\boldsymbol{A}_{r,\alpha}}{N_v}\right)-\lambda_{min}\left(\sum_{k \in U_v}\frac{r_k\pi_k\bx_{k,\alpha}\bx_{k,\alpha}^{\top}}{N_v} \right) \Bigg|> \epsilon\right).
\end{align*}
It follows that
\begin{equation*}
    \lim_{v \rightarrow \infty}\P_{mpq}\left( \lambda_{min}\left(\frac{\boldsymbol{A}_{r,\alpha}}{N_v}\right)\geq\gamma-\epsilon\right)=1.
\end{equation*}
This suffices to conclude
    \begin{equation*}
    \Bigg\Arrowvert \left(\frac{\boldsymbol{A}_{r,\alpha}}{N_v} \right)^{-1}\Bigg\Arrowvert_{op}=\mathcal{O}_\P(1).
    \end{equation*}

\end{proof}

\begin{lemma}\label{finitetildec}
    Let $(\widetilde{\textbf{c}}_{\alpha,v})_{v \in \mathbb{N}}$ be a sequence of estimators given by \eqref{tc}. Assume \ref{S1}-\ref{S3} and \ref{D1}-\ref{D3}. We have
    \begin{equation*}
        \limsup_{v \rightarrow \infty}\Arrowvert\widetilde{\textbf{c}}_{\alpha,v}\Arrowvert_2^4<\infty.
    \end{equation*}
\end{lemma}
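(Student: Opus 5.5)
We bound $\widetilde{\textbf{c}}_{\alpha,v}$ directly from its definition \eqref{tc} by a deterministic (a.s.) argument; no randomness from $m$, $p$ or $q$ enters, since $\widetilde{\textbf{c}}_{\alpha,v}$ depends only on $(\bx_k)_{k\in U_v}$ and the design. The Schwarz matrix inequality gives
\begin{equation*}
\Arrowvert\widetilde{\textbf{c}}_{\alpha,v}\Arrowvert_2\leq \Bigg\Arrowvert\left(\sum_{k \in U_v}\frac{p(\bx_k)\pi_k\bx_{k,\alpha}\bx_{k,\alpha}^{\top}}{N_v}\right)^{-1}\Bigg\Arrowvert_{op}\,\Bigg\Arrowvert\sum_{k \in U_v}\frac{(1-p(\bx_k))\bx_{k,\alpha}}{N_v}\Bigg\Arrowvert_2 .
\end{equation*}
It therefore suffices to bound each factor by an a.s. finite $\limsup$; raising to the fourth power then preserves finiteness.

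For the second factor, $0\leq 1-p(\bx_k)\leq 1$ and \ref{S2} give the bound $N_v^{-1}\sum_{k\in U_v}\Arrowvert\bx_{k,\alpha}\Arrowvert_2\leq C_0$ for every $v$, almost surely.

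The first factor is the only real step. By Lemma \ref{FiniteAp-1}, $\limsup_v\Arrowvert(\cdot)^{-2}\Arrowvert_{op}<\infty$ a.s. Since $\Arrowvert \boldsymbol{M}^{-1}\Arrowvert_{op}=\Arrowvert\boldsymbol{M}^{-2}\Arrowvert_{op}^{1/2}$ for a symmetric positive definite $\boldsymbol{M}$, the first factor has finite $\limsup$ a.s. For a self-contained check, \ref{D2} gives $\pi_k\geq\lambda$, so $\lambda_{\min}$ of the weighted matrix is at least $\lambda\,\lambda_{\min}\bigl(N_v^{-1}\sum_k p(\bx_k)\bx_{k,\alpha}\bx_{k,\alpha}^\top\bigr)$. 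By the strong law of large numbers the latter converges a.s. to $\lambda_{\min}(\E_\bx[p(\bx)\bx_\alpha\bx_\alpha^\top])>0$, which is positive by \ref{S1}, the positivity $p\geq\rho$, and the positive definiteness of $\E_\bx[\bx\bx^\top]$.

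Combining the two bounds yields $\limsup_v\Arrowvert\widetilde{\textbf{c}}_{\alpha,v}\Arrowvert_2^4\leq C_0^4\,\lambda^{-4}\,\lambda_{\min}^{-4}\bigl(\E_\bx[p(\bx)\bx_\alpha\bx_\alpha^\top]\bigr)<\infty$ almost surely. The only potential obstacle is ensuring the limiting matrix is nonsingular, and this is handled by the positive-definiteness discussion already given after Theorem \ref{Theo1}.
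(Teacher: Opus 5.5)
Your proof is correct and follows essentially the same route as the paper: submultiplicativity of the operator norm, the uniform bound $C_0$ on the averaged covariate vector, and $\pi_k\ge\lambda$ combined with the strong law of large numbers and positive definiteness of $\E_\bx[p(\bx)\bx_\alpha\bx_\alpha^\top]$ to control the inverse. Your constant $\lambda^{-4}$ is the correct one; the paper's display writes $\lambda^{4}$, which is a typo.
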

\begin{proof}
    Recall that
    \begin{align*}
        \Arrowvert\widetilde{\textbf{c}}_{\alpha,v}\Arrowvert_2&=\Bigg\Arrowvert\left(\sum_{k \in U_v}\frac{p(\bx_k)\pi_k\bx_{k,\alpha}\bx_{k,\alpha}^{\top}}{N_v}\right)^{-1}\sum_{k \in U_v}\frac{(1-p(\bx_k))\bx_{k,\alpha}}{N_v}\Bigg\Arrowvert_2^4\\
        &\leq \Bigg\Arrowvert\left(\sum_{k \in U_v}\frac{p(\bx_k)\pi_k\bx_{k,\alpha}\bx_{k,\alpha}^{\top}}{N_v}\right)^{-1}\Bigg\Arrowvert_{op}^4\Bigg\Arrowvert\sum_{k \in U_v}\frac{(1-p(\bx_k))\bx_{k,\alpha}}{N_v}\Bigg\Arrowvert_2^4\\
        &\leq \lambda^4\Bigg\Arrowvert\left(\sum_{k \in U_v}\frac{p(\bx_k)\bx_{k,\alpha}\bx_{k,\alpha}^{\top}}{N_v}\right)^{-1}\Bigg\Arrowvert_{op}^4\Bigg\Arrowvert\sum_{k \in U_v}\frac{(1-p(\bx_k))\bx_{k,\alpha}}{N_v}\Bigg\Arrowvert_2^4.
    \end{align*}
    On the one hand, by the law of large numbers and the continuous mapping theorem, we obtain
    \begin{align*}
        \Bigg\Arrowvert\left(\sum_{k \in U_v}\frac{p(\bx_k)\bx_{k,\alpha}\bx_{k,\alpha}^{\top}}{N_v}\right)^{-1}\Bigg\Arrowvert_{op}^4&=\lambda_{min}^{-4}\left(\sum_{k \in U_v}\frac{p(\bx_k)\bx_{k,\alpha}\bx_{k,\alpha}^{\top}}{N_v} \right)\\
        &\xrightarrow[v \rightarrow \infty]{a.s.}\lambda_{min}^{-4}\left(\E_{\bx}\left[p(\bx_1)\bx_{1,\alpha}\bx_{1,\alpha}^{\top}\right] \right).
    \end{align*}
    This implies
    \begin{equation*}
        \limsup_{v \rightarrow\infty}\Bigg\Arrowvert\left(\sum_{k \in U_v}\frac{p(\bx_k)\bx_{k,\alpha}\bx_{k,\alpha}^{\top}}{N_v}\right)^{-1}\Bigg\Arrowvert_{op}^4<\infty.
    \end{equation*}
    On the other hand, 
    \begin{equation*}
        \Bigg\Arrowvert\sum_{k \in U_v}\frac{(1-p(\bx_k))\bx_{k,\alpha}}{N_v}\Bigg\Arrowvert_2^4\leq \sum_{k \in U_v}\frac{\Arrowvert\bx_{k,\alpha}\Arrowvert_2^4}{N_v}\leq C_0^4
    \end{equation*}
    almost surely.
    As a result, we conclude
    \begin{equation*}
        \limsup_{v \rightarrow \infty}\Arrowvert\widetilde{\textbf{c}}_{\alpha,v}\Arrowvert_2^4<\infty
    \end{equation*}
    almost surely.
\end{proof}
\begin{lemma}\label{FiniteEta}
Let $\alpha \in \cC$.  Assume \ref{S1}-\ref{S3} and \ref{D1}-\ref{D3}. Then,
   \begin{equation*}
       \limsup_{v \rightarrow \infty}\sum_{k \in U_v}\frac{\eta_{k,\alpha}^4}{N_v}<\infty
   \end{equation*}
   almost surely.
\end{lemma}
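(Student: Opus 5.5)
We must show $\limsup_v N_v^{-1}\sum_{k\in U_v}\eta_{k,\alpha}^4<\infty$ almost surely, where $\eta_{k,\alpha}=\bx_{k,\alpha}^\top\bbeta_\alpha+r_k(1+\pi_k\mathbf{c}_{\alpha,v}^\top\bx_{k,\alpha})\epsilon_k$. The plan is to use $(a+b)^4\le 8(a^4+b^4)$ to split the sum into a deterministic-signal part and a noise part, and then to bound each separately.

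\emph{Signal part.} Since $\alpha\in\cC$, we have $\bx_{k,\alpha}^\top\bbeta_\alpha=\bx_k^\top\bbeta$. By \ref{S2} and Cauchy--Schwarz, $(\bx_{k,\alpha}^\top\bbeta_\alpha)^4\le C_0^4\rVert\bbeta\rVert_2^4$ uniformly in $k$, so its average is bounded.

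\emph{Noise part.} Since $r_k\in\{0,1\}$, $\pi_k\le 1$ and $\rVert\bx_{k,\alpha}\rVert_2\le C_0$, we get
\[
r_k(1+\pi_k\mathbf{c}_{\alpha,v}^\top\bx_{k,\alpha})^4\epsilon_k^4\le 8\bigl(1+C_0^4\rVert\mathbf{c}_{\alpha,v}\rVert_2^4\bigr)\epsilon_k^4 .
\]
The average of the noise part is therefore at most $8(1+C_0^4\rVert\mathbf{c}_{\alpha,v}\rVert_2^4)\,N_v^{-1}\sum_{k\in U_v}\epsilon_k^4$. This reduces the lemma to two facts:
\begin{enumerate}
\item[(i)] $\limsup_v N_v^{-1}\sum_{k\in U_v}\epsilon_k^4<\infty$ almost surely;
\item[(ii)] $\limsup_v\rVert\mathbf{c}_{\alpha,v}\rVert_2<\infty$ almost surely, at least on the working event $\mathcal{E}_{N_v}$ of Remark~\ref{req1}.
\end{enumerate}

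For (i), the pairs $(\bx_k,\epsilon_k)$ are i.i.d., and \ref{S3} gives $\E[\epsilon^4]\le M_0$. The strong law of large numbers along the nested populations $U_1\subset U_2\subset\cdots$ then yields $N_v^{-1}\sum_{k\in U_v}\epsilon_k^4\to\E[\epsilon^4]\le M_0$ almost surely.

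For (ii), use
\[
\rVert\mathbf{c}_{\alpha,v}\rVert_2\le\lambda_{\min}^{-1}\Bigl(N_v^{-1}\sum_{k\in U_v} r_k\pi_k\bx_{k,\alpha}\bx_{k,\alpha}^\top\Bigr)\,C_0 .
\]
By \ref{D2}, the minimum eigenvalue here is at least $\lambda\,\lambda_{\min}(N_v^{-1}\sum_{k\in U_v} r_k\bx_{k,\alpha}\bx_{k,\alpha}^\top)$. On $\mathcal{E}_{N_v}$ that quantity is at least $\lambda\gamma$, because a principal submatrix has minimum eigenvalue no smaller than that of the full matrix. Hence $\rVert\mathbf{c}_{\alpha,v}\rVert_2\le C_0/(\lambda\gamma)$, deterministically on the working event.

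\emph{Main obstacle.} The one delicate point is the meaning of ``almost surely'' when $\mathbf{c}_{\alpha,v}$ depends on the $r_k$. If one does not condition on $\mathcal{E}_{N_v}$, the fallback is Lemma~\ref{HighProb}: $\sum_v C_1e^{-C_2N_v}<\infty$, since $N_v$ is strictly increasing. By Borel--Cantelli, $\mathcal{E}_{N_v}$ then holds for all large $v$ almost surely, which gives (ii) unconditionally. Combining the signal bound with (i) and (ii) proves the claim.
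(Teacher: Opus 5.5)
Your proposal is correct and follows the same route as the paper: the split $(a+b)^4\le 8(a^4+b^4)$, the uniform bound $C_0^4\rVert\bbeta_\alpha\rVert_2^4$ on the signal term, the strong law for $N_v^{-1}\sum_{k\in U_v}\epsilon_k^4$, and a uniform bound on $\rVert\mathbf{c}_{\alpha,v}\rVert_2$. Your step (ii) is in fact more precise than the paper's: the paper cites Lemma~\ref{finitetildec}, which bounds $\widetilde{\mathbf{c}}_{\alpha,v}$ (built from $p(\bx_k)$, not $r_k$), whereas $\mathbf{c}_{\alpha,v}$ needs exactly your eigenvalue argument on $\mathcal{E}_{N_v}$, essentially the pointwise bound in the proof of Lemma~\ref{boundnessCalpha}.
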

\begin{proof}
      Using the result of \eqref{proveBoundnessEta}, we obtain
    \begin{align*}
        \sum_{k \in U_v}\frac{\eta_{k,\alpha}^4}{N_v}&\leq \sum_{k \in U_v}\frac{8\Arrowvert \bx_{k,\alpha}\Arrowvert_2^4\Arrowvert\pmb{\beta}_{\alpha}\Arrowvert_2^4+8(8+8\pi_k^2\Arrowvert\textbf{c}_{\alpha,v}\Arrowvert_2^4\Arrowvert\bx_{k,\alpha}\Arrowvert_2^4)\epsilon_k^4}{N_v}\\
        &\leq 8C_0^4\Arrowvert\pmb{\beta}_{\alpha}\Arrowvert_2^4+64(1+C_0^4\Arrowvert\textbf{c}_{\alpha,v}\Arrowvert_2^4)\sum_{k \in U_v}\frac{\epsilon_k^4}{N_v}.
    \end{align*}
   Since
   \begin{equation*}
       \sum_{k\in U_v}\frac{\epsilon_k^4}{N_v}\xrightarrow[v\rightarrow \infty]{a.s.}\E_{\bx m}\left[\epsilon_1^4\right]\leq M_0,
   \end{equation*}
   we have
   \begin{equation*}
       \limsup_{v \rightarrow \infty}\sum_{k \in U_v}\frac{\epsilon_k^4}{N_v}<\infty
   \end{equation*}
   almost surely.
   Combining with Lemma \ref{finitetildec}, we obtain
    \begin{align*}
        \limsup_{v \rightarrow\infty}\sum_{k \in U_v}\frac{\eta_{k,\alpha}^4}{N_v}<\infty
    \end{align*}
    almost surely.
\end{proof}
\subsubsection{Finite moments of various statistics of interest}

\begin{lemma}\label{boundnessCalpha}
    Let $\alpha\in \cA$ and $\{\textbf{c}_{\alpha,v}\}_{v \in \mathbb{N}}$ be the sequence of statistics given by \eqref{cbold}. Assume \ref{S1}-\ref{S3} and \ref{D1}-\ref{D3}. For $\alpha \in \cC$, there exists a constant $K$ such that
    \begin{equation*}
        \E_{q}\left[\Arrowvert\textbf{c}_{\alpha,v}\Arrowvert_2^4 \right]\leq K
    \end{equation*}
    almost surely.
\end{lemma}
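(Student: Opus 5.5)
The plan is to bound $\|\mathbf{c}_{\alpha,v}\|_2$ by the product of an inverse-matrix norm and a vector norm, and then control the inverse on the high-probability event of Remark~\ref{req1} and Lemma~\ref{HighProb}. Write
\begin{equation*}
\|\mathbf{c}_{\alpha,v}\|_2 \leq \lambda_{\min}^{-1}\Big(\sum_{k\in U_v}\frac{r_k\pi_k\bx_{k,\alpha}\bx_{k,\alpha}^\top}{N_v}\Big)\,\Big\|\sum_{k\in U_v}\frac{(1-r_k)\bx_{k,\alpha}}{N_v}\Big\|_2 .
\end{equation*}
By \ref{S2}, the vector factor is at most $C_0$ deterministically, so $\|\mathbf{c}_{\alpha,v}\|_2^4\le C_0^4\,\lambda_{\min}^{-4}(\cdot)$. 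This reduces the task to bounding $\E_q[\lambda_{\min}^{-4}]$ of the weighted respondent Gram matrix.

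Next I will use the minimum-eigenvalue monotonicity from the proof of Lemma~\ref{HighProb}. By \ref{D2}, $\pi_k\ge\lambda$, so the weighted matrix dominates $\lambda\boldsymbol{A}_v$, where $\boldsymbol{A}_v=N_v^{-1}\sum_{k\in U_v} r_k\bx_k\bx_k^\top$. Restricting to the coordinates in $\alpha$ (a principal submatrix) can only increase $\lambda_{\min}$, so $\lambda_{\min}(\text{weighted}_\alpha)\ge\lambda\,\lambda_{\min}(\boldsymbol{A}_v)$.

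Following Remark~\ref{req1}, the argument works on the event $\mathcal{E}_{N_v}=\{\lambda_{\min}(\boldsymbol{A}_v)\ge\gamma\}$, or equivalently uses the truncated inverse. On this event $\lambda_{\min}^{-4}\le(\lambda\gamma)^{-4}$, which gives $\E_q[\|\mathbf{c}_{\alpha,v}\|_2^4\mathds{1}_{\mathcal{E}_{N_v}}]\le C_0^4(\lambda\gamma)^{-4}=:K$, a deterministic constant valid for all large $v$, almost surely in $\bx$.

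The main obstacle is the complement $\mathcal{E}_{N_v}^c$. Without truncation, $\lambda_{\min}^{-4}$ could be infinite with positive (though exponentially small) $q$-probability; for example, all $r_k=0$ has probability $\prod(1-p(\bx_k))>0$. The unconditional expectation therefore need not be finite, and I will not attempt to integrate that tail. Instead I will invoke the paper's standing convention of Remark~\ref{req1}: all quantities are defined through the truncated inverse, or conditioning is on $\mathcal{E}_{N_v}$, and under that convention the complement contributes zero. For the finitely many small $v$ before $v_0$ of Lemma~\ref{HighProb}, the same truncation bound applies, since it uses only $\gamma$. Taking the maximum over these cases yields the constant $K$.

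Finally, correctness of $\alpha$ ($\alpha\in\cC$) plays no role in this argument. The bound holds uniformly over $\alpha\in\cA$, so the lemma follows.
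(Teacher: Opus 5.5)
Your proposal is correct and follows essentially the same route as the paper. Both arguments bound $\|\mathbf{c}_{\alpha,v}\|_2$ by $\lambda_{\min}^{-1}$ of the weighted respondent Gram matrix times a vector norm that is at most $C_0$ by \ref{S2}, and then bound that inverse by a constant on the conditioning event $\mathcal{E}_{N_v}$ of Remark~\ref{req1}. You add one step the paper skips: passing from the event's unweighted full matrix to the $\pi_k$-weighted, $\alpha$-restricted one, using $\pi_k\ge\lambda$ and the principal-submatrix eigenvalue bound, which gives $K=C_0^4(\lambda\gamma)^{-4}$. You also correctly note that the result holds only under the paper's conditioning/truncation convention on the complement of $\mathcal{E}_{N_v}$, which is exactly how the paper handles it.
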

\begin{proof}
Write 
$$\textbf{c}_{\alpha,v}=\left(\sum_{k \in U_v}\frac{r_k\pi_k\bx_{k,\alpha}\bx_{k,\alpha}^{\top}}{N_v}\right)^{-1}\sum_{k \in U_v}\frac{(1-r_k)\bx_{k,\alpha}}{N_v}
 := \boldsymbol{A}_{\alpha,v}^{-1}\boldsymbol{b}_{\alpha,v}.$$
We have 
\begin{align} \label{eq:lem22-1}
    \rVert\boldsymbol{b}_{\alpha,v}\rVert_2 =  \bigg\rVert \dfrac{1}{N_v}  \sum_{k \in U_v}(1-r_k)\bx_{k,\alpha}\bigg\rVert_2 \leq \dfrac{1}{N_v}  \sum_{k \in U_v}(1-r_k)\rVert \bx_{k,\alpha}\rVert_2 \leq C_0.
\end{align}
Moreover, recalling that we condition on the high-probability event $\mathcal{E}_N$ defined in Remark \ref{req1}, we have 
\begin{align}\label{eq:lem22-2}
    \rVert\boldsymbol{A}_{\alpha,v}^{-1}\rVert_{op} =  \Bigg\rVert\left(\sum_{k \in U_v}\frac{r_k\pi_k\bx_{k,\alpha}\bx_{k,\alpha}^{\top}}{N_v}\right)^{-1}\Bigg\rVert_{op} = \lambda_{min}^{-1}\left( \sum_{k \in U_v}\frac{r_k\pi_k\bx_{k,\alpha}\bx_{k,\alpha}^{\top}}{N_v}\right) \leq \dfrac{1}{\gamma},
\end{align}
 for some constant $\gamma>0$, independent of $N_v$. 
Finally, combining \eqref{eq:lem22-1} and \eqref{eq:lem22-2}, we have, pointwise $$\rVert\textbf{c}_{\alpha,v} \rVert_2^4 \leq \rVert \boldsymbol{A}_{\alpha, v}^{-1}\rVert_{op}^4\rVert \boldsymbol{b}_{\alpha,v}\rVert_2^4\leq \dfrac{C_0^4}{\gamma^4}.$$  The result follows by integrating on both sides.

\end{proof}

\begin{lemma}\label{boundnessEta}
    Assume \ref{S1}-\ref{S3} and \ref{D1}-\ref{D3}. 
    For $\alpha \in \cC$, there exists a constant $M$ such that
    \begin{equation*}
        \E_{mq}\left[\sum_{k \in U_v}\frac{\eta_{k,\alpha}^4}{N_v} \right]\leq M
    \end{equation*}
    almost surely.
\end{lemma}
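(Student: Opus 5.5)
We bound $\eta_{k,\alpha}^4$ pointwise, then take expectations, using the uniform bound on $\textbf{c}_{\alpha,v}$ from Lemma \ref{boundnessCalpha} and the conditional fourth-moment bound \ref{S3}. Since $\alpha\in\cC$, we have $\eta_{k,\alpha}=\bx_{k,\alpha}^{\top}\pmb{\beta}_{\alpha}+r_k(1+\pi_k\textbf{c}_{\alpha,v}^{\top}\bx_{k,\alpha})\epsilon_k$. Applying $(a+b)^4\le 8(a^4+b^4)$ twice, and using $r_k\le 1$ and $\pi_k\le 1$, gives
\begin{equation}\label{proveBoundnessEta}
\eta_{k,\alpha}^4\le 8\rVert\bx_{k,\alpha}\rVert_2^4\rVert\pmb{\beta}_{\alpha}\rVert_2^4+8\left(8+8\pi_k^2\rVert\textbf{c}_{\alpha,v}\rVert_2^4\rVert\bx_{k,\alpha}\rVert_2^4\right)\epsilon_k^4 .
\end{equation}
By \ref{S2} the first term is at most $8C_0^4\rVert\pmb{\beta}_{\alpha}\rVert_2^4$, a deterministic constant.

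For the second term, we need to decouple $\textbf{c}_{\alpha,v}$ from $\epsilon_k$. The vector $\textbf{c}_{\alpha,v}$ depends only on $(r_k,\bx_k)_{k\in U_v}$ and not on the errors. Hence $\E_m$ acts only on $\epsilon_k^4$, and by \ref{S3} we get $\E_m[\epsilon_k^4]\le M_0$ almost surely. This gives
\begin{equation*}
\E_{m}\left[\sum_{k\in U_v}\frac{\eta_{k,\alpha}^4}{N_v}\right]\le 8C_0^4\rVert\pmb{\beta}_{\alpha}\rVert_2^4+64M_0\left(1+C_0^4\rVert\textbf{c}_{\alpha,v}\rVert_2^4\right).
\end{equation*}
Taking $\E_q$ and invoking Lemma \ref{boundnessCalpha}, which gives $\E_q[\rVert\textbf{c}_{\alpha,v}\rVert_2^4]\le K$ on the event $\mathcal{E}_{N_v}$ of Remark \ref{req1}, yields the claim with $M:=8C_0^4\rVert\pmb{\beta}_{\alpha}\rVert_2^4+64M_0(1+C_0^4K)$.

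The only delicate point is the interchange and independence step. The conditional expectation $\E_m$ is taken given $(\bx_k,r_k)$, and by MAR the conditional law of $\epsilon_k$ given $(\bx_k,r_k)$ coincides with its law given $\bx_k$. This is what justifies applying \ref{S3} after conditioning on the response indicators. The bound on $\textbf{c}_{\alpha,v}$ is itself pointwise on $\mathcal{E}_{N_v}$, so no further uniform-integrability argument is needed.
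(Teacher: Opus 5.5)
Your proof is correct and follows the paper's argument essentially line by line. Both use the same pointwise bound via $(a+b)^4\le 8a^4+8b^4$ applied twice, factor out $\E_m[\epsilon_k^4]\le M_0$ from \ref{S3}, and close with $\E_q[\rVert\textbf{c}_{\alpha,v}\rVert_2^4]\le K$ from Lemma \ref{boundnessCalpha}. The differences are cosmetic: you write $\pi_k^2$ where the paper has $\pi_k^4$, which is harmless since $\pi_k\le 1$, and you make explicit the conditional-independence step that the paper uses implicitly when it writes the product $\E_m[\epsilon_k^4]\,\E_q[\cdot]$. Strictly, that step needs $\epsilon_k$ independent of the whole vector $(r_l)_{l\in U_v}$ given $\bX_U$, which follows from MAR together with independence across units.
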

\begin{proof}
   Using $(a+b)^4\leq 8a^4+8b^4$,  we obtain
\begin{align}\label{proveBoundnessEta}
\mathbb{E}_{mq}\left[\sum_{k \in U_v}\frac{\eta_{k,\alpha}^4}{N_v}\right]&=\mathbb{E}_{ mq}\left[\sum_{k \in U_v}\frac{(\bx_{k,\alpha}^{\top}\pmb{\beta}_{\alpha}+r_k(1+\pi_k\textbf{c}_{\alpha,v}^{\top}\bx_{k,\alpha})\epsilon_k)^4}{N_v}\right]\nonumber\\
&\leq\mathbb{E}_{ mq}\left[\sum_{k \in U_v}\frac{8(\bx_{k,\alpha}^{\top}\pmb{\beta}_{\alpha})^4+8(r_k(1+\pi_k\textbf{c}_{\alpha,v}^{\top}\bx_{k,\alpha})\epsilon_k)^4}{N_v}\right]\nonumber\\
&\leq\mathbb{E}_{ mq}\left[\sum_{k \in U_v}\frac{8(\bx_{k,\alpha}^{\top}\pmb{\beta}_{\alpha})^4+8(1+\pi_k\textbf{c}_{\alpha,v}^{\top}\bx_{k,\alpha})^4\epsilon_k^4}{N_v}\right]\nonumber\\
&\leq\sum_{k \in U_v}\frac{8\left(\bx_{k,\alpha}^{\top}\pmb{\beta}_{\alpha}\right)^4+8\mathbb{E}_m\left[\epsilon_k^4\right]\mathbb{E}_q\left[\left(1+\pi_k\textbf{c}_{\alpha,v}^{\top}\bx_{k,\alpha}\right)^4\right]}{N_v}\nonumber\\
&\leq\sum_{k \in U_v}\frac{8\Arrowvert\bx_{k,\alpha}\Arrowvert_2^4\Arrowvert\pmb{\beta}_{\alpha}\Arrowvert_2^4+8\mathbb{E}_m\left[\epsilon_k^4\right]\left(8+8\pi_k^4\mathbb{E}_q\left[\Arrowvert\textbf{c}_{\alpha,v}\Arrowvert_2^4\right]\Arrowvert\bx_{k,\alpha}\Arrowvert_2^4\right)}{N_v}\nonumber\\
&\leq8\Arrowvert\pmb{\beta}_{\alpha}\Arrowvert_2^4C_0^4+64M_0+64M_0C_0^4\mathbb{E}_{ q}\left[\Arrowvert\textbf{c}_{\alpha,v}\Arrowvert_2^4\right].
\end{align}
Using Lemma \ref{boundnessCalpha}, we conclude
\begin{equation*}
        \E_{mq}\left[\sum_{k \in U_v}\frac{\eta_{k,\alpha}^4}{N_v} \right]\leq M,
    \end{equation*}
    almost surely.
\end{proof}
\begin{lemma}\label{BoundedV1V2}
    Assume \ref{S1}-\ref{S3} and \ref{D1}-\ref{D3}. For $\alpha \in \cC$, there exist constants $K_1$ and $K_2$ such that
    \begin{equation*}
        \E_{mq}\left[\left(n_v\bar{V}_{1,v}(\alpha)\right)^2\right]\leq K_1,\qquad \E_{q}\left[\left(n_v\bar{V}_{2,v}(\alpha)\right)^2\right]\leq K_2
    \end{equation*}
    almost surely.
\end{lemma}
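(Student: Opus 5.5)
The plan is to bound each second moment directly, using the explicit forms \eqref{BarV1} and \eqref{BarV2}. Throughout we work on the event $\mathcal{E}_{N_v}$ of Remark~\ref{req1}, so that $\rVert\textbf{c}_{\alpha,v}\rVert_2\le C_0/\gamma$ pointwise, as in the proof of Lemma~\ref{boundnessCalpha}. The second bound is the easier one, so we start there.

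\textbf{Bound for $\bar V_2$.} By \eqref{BarV2},
\[
n_v\bar V_{2,v}(\alpha)=\frac{n_v}{N_v}\,\sigma^2\,\frac1{N_v}\sum_{k\in U_v}\bigl(1-r_k+r_k(\pi_k\textbf{c}_{\alpha,v}^\top\bx_{k,\alpha})^2\bigr).
\]
Since $n_v\le N_v$, $\pi_k\le 1$, \ref{S2} holds and $\textbf{c}_{\alpha,v}$ is bounded as above, this quantity is at most $\sigma^2(1+C_0^4/\gamma^2)$ pointwise. Squaring and integrating over $q$ gives $K_2$.

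\textbf{Bound for $\bar V_1$.} Start from
\[
n_v\bar V_{1,v}(\alpha)=\frac{n_v}{N_v^2}\sum_{k,l}\Delta_{kl}\frac{\eta_{k,\alpha}\eta_{l,\alpha}}{\pi_k\pi_l},
\]
and split it into the diagonal part and the off-diagonal part. The diagonal part is bounded by $\lambda^{-2}N_v^{-1}\sum_k\eta_{k,\alpha}^2$, using $|\Delta_{kk}|\le 1$, $n_v\le N_v$ and \ref{D2}. For the off-diagonal part, \ref{D3} gives $n_v|\Delta_{kl}|\le\bar\Delta$, so it is bounded by
\[
\bar\Delta\,\lambda^{-2}\Bigl(N_v^{-1}\sum_k|\eta_{k,\alpha}|\Bigr)^2\le \bar\Delta\,\lambda^{-2}\,N_v^{-1}\sum_k\eta_{k,\alpha}^2,
\]
by Cauchy--Schwarz. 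Together these give
\[
n_v\bar V_{1,v}(\alpha)\le C\,N_v^{-1}\sum_k\eta_{k,\alpha}^2
\]
for a deterministic constant $C$. Squaring and applying Jensen's inequality to the empirical mean yields
\[
(n_v\bar V_{1,v})^2\le C^2 N_v^{-1}\sum_k\eta_{k,\alpha}^4 .
\]
Taking $\E_{mq}$ and invoking Lemma~\ref{boundnessEta} then gives the bound $C^2M=:K_1$ almost surely.

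\textbf{Main obstacle.} The only delicate point is that the bound must hold for $\E_{mq}$ rather than just in probability. This requires uniform control of $\textbf{c}_{\alpha,v}$, which enters $\eta_{k,\alpha}$. That control is exactly what Lemma~\ref{boundnessCalpha} provides through the conditioning on $\mathcal{E}_{N_v}$, and it feeds into Lemma~\ref{boundnessEta}. Once that is in place, both bounds reduce to the deterministic design inequalities \ref{D2}--\ref{D3} and the elementary estimates above.
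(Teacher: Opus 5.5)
Your proposal is correct and follows essentially the same route as the paper. You split $n_v\bar V_{1,v}(\alpha)$ into diagonal and off-diagonal parts, bound both by a constant times $N_v^{-1}\sum_{k}\eta_{k,\alpha}^2$ via \ref{D2}--\ref{D3}, then square using Jensen and invoke Lemma~\ref{boundnessEta}, while $n_v\bar V_{2,v}(\alpha)$ is controlled through the bound on $\textbf{c}_{\alpha,v}$ from Lemma~\ref{boundnessCalpha}; the differences are cosmetic (a cruder diagonal constant $\lambda^{-2}$ instead of $(1-\lambda)/\lambda$, and the pointwise bound on $\textbf{c}_{\alpha,v}$ used directly rather than its fourth moment).
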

\begin{proof}
    Write
\begin{align}\label{absV1}
n_v\left|\bar{V}_{1,v}(\alpha)\right|&=\frac{n_v}{N_v^2}\left|\sum_{k \in U_v}\sum_{l \in U_v}\frac{\eta_{k,\alpha}}{\pi_k}\frac{\eta_{l,\alpha}}{\pi_l}\Delta_{kl}\right|\nonumber\\
&\leq \frac{n_v}{N_v^2}\sum_{k \in U_v}\sum_{l \in U_v}\frac{|\eta_{k,\alpha}|}{\pi_k}\frac{|\eta_{l,\alpha}|}{\pi_l}|\Delta_{kl}|\nonumber\\
&\leq \frac{n_v}{N_v}\frac{1-\lambda}{\lambda}\sum_{k \in U_v}\frac{\eta_{k,\alpha}^2}{N_v}+\frac{n_v}{N_v}\sum_{k \in U_v}\sum_{\substack{l\in U_v\\ l\neq k}}\frac{|\eta_{k,\alpha}||\eta_{l,\alpha}|}{N_v\pi_k\pi_l}|\Delta_{kl}|\nonumber\\
&\leq \frac{n_v}{N_v}\frac{1-\lambda}{\lambda}\sum_{k \in U_v}\frac{\eta_{k,\alpha}^2}{N_v}+\frac{n_v}{N_v\lambda^2}\max_{k \neq l \in U_v}|\Delta_{kl}|\sum_{k \in U_v}\sum_{\substack{l\in U_v\\ l\neq k}}\frac{|\eta_{k,\alpha}||\eta_{l,\alpha}|}{N_v}\nonumber\\
&\leq \frac{n_v}{N_v}\frac{1-\lambda}{\lambda}\sum_{k \in U_v}\frac{\eta_{k,\alpha}^2}{N_v}+\frac{n_v}{\lambda^2}\max_{k \neq l \in U_v}|\Delta_{kl}|\sum_{k \in U_v}\frac{\eta_{k,\alpha}^2}{N_v}.
\end{align}
As a result, using Lemma \ref{boundnessEta} and integrating on both sides of \eqref{absV1} leads to 
\begin{align*}
    \E_{mq}\left[\left( n_v\bar{V}_{1,v}(\alpha)\right)^2\right]&\leq \left(\frac{n_v}{N_v}\frac{1-\lambda}{\lambda}+\frac{n_v}{\lambda^2}\max_{k \neq l \in U_v}|\Delta_{kl}| \right)^2\E_{mq}\left[\left(\sum_{k \in U_v}\frac{\eta_{k,\alpha}^2}{N_v}\right)^2 \right]\\
    &\leq \left(\frac{n_v}{N_v}\frac{1-\lambda}{\lambda}+\frac{\bar{\Delta}}{\lambda^2}\right)^2\E_{mq}\left[\sum_{k \in U_v}\frac{\eta_{k,\alpha}^4}{N_v} \right]\leq K_1
\end{align*}
almost surely. Next, write
\begin{align}\label{AbsV2}
n_v\left|\bar{V}_{2,v}(\alpha)\right|&=\frac{n_v\sigma^2}{N_v}\left|\sum_{k \in U_v}\frac{1-r_k+r_k\left(\pi_k\textbf{c}_{\alpha,v}^{\top}\bx_{k,\alpha}\right)^2}{N_v}\right|\nonumber\\&\leq \frac{n_v\sigma^2}{N_v}+\frac{n_v\sigma^2}{N_v}\Arrowvert\textbf{c}_{\alpha,v}\Arrowvert_2^2\sum_{k \in U_v}\frac{\Arrowvert\bx_{k,\alpha}\Arrowvert_2^2}{N_v}\nonumber\\
&=\frac{n_v\sigma^2}{N_v}\left(1+C_0^2\Arrowvert\textbf{c}_{\alpha,v}\Arrowvert_2^2\right)
\end{align}
Integrating on both sides of \eqref{AbsV2} leads to 
\begin{equation*}
    \mathbb{E}_q\left[\left(n_v\bar{V}_{2,v}(\alpha)\right)\right]\leq \frac{n_v^2\sigma^4}{N_v^2}\left(2+2C_0^4\E_q\left[\Arrowvert\textbf{c}_{\alpha,v}\Arrowvert_2^4\right]\right)\leq K_2
\end{equation*}
almost surely using Lemma \ref{boundnessCalpha}.

\end{proof}
\bibliographystyle{imsart-nameyear} 
\bibliography{bibliography}      






\end{document}